\theoremstyle{thmstyleone}%
\newtheorem{theorem}{Theorem}
\newtheorem{proposition}[theorem]{Proposition}%
\newtheorem{corollary}[theorem]{Corollary}
\theoremstyle{thmstyletwo}%
\theoremstyle{thmstylethree}%
\newtheorem{definition}{Definition}%
\theoremstyle{thmstyleone}%
\newtheorem{thm}{Theorem}
\newtheorem{prop}[theorem]{Proposition}%
\DeclareMathOperator{\interior}{int}
\begin{document}

\title{Convergence-divergence models: Generalizations of phylogenetic trees modeling gene flow over time}


\author*[1,2]{\fnm{Jonathan D.}\sur{Mitchell}\orcidlink{0000-0003-3945-8150}}\email{jonathanmitchell88@gmail.com}

\author[1,2]{\fnm{Barbara R.}\sur{Holland}\orcidlink{0000-0002-4628-7938}}

\affil[1]{\orgdiv{School of Natural Sciences (Mathematics)}, \orgname{University of Tasmania}, \city{Hobart}, \state{TAS}, \country{Australia}}

\affil[2]{\orgdiv{ARC Centre of Excellence for Plant Success in Nature and Agriculture}, \orgname{University of Tasmania}, \city{Hobart}, \state{TAS}, \country{Australia}}



\abstract{Phylogenetic trees are simple models of evolutionary processes. They describe conditionally independent divergent evolution of taxa from common ancestors. Phylogenetic trees commonly do not have enough flexibility to adequately model all evolutionary processes. For example, introgressive hybridization, where genes can flow from one taxon to another. Phylogenetic networks model evolution not fully described by a phylogenetic tree. However, many phylogenetic network models assume ancestral taxa merge instantaneously to form ``hybrid'' descendant taxa. In contrast, our convergence-divergence models retain a single underlying ``principal'' tree, but permit gene flow over arbitrary time frames. Alternatively, convergence-divergence models can describe other biological processes leading to taxa becoming more similar over a time frame, such as replicated evolution. Here we present novel maximum likelihood-based algorithms to infer most aspects of $N$-taxon convergence-divergence models, many consistently, using a quartet-based approach. The algorithms can be applied to multiple sequence alignments restricted to genes or genomic windows or to gene presence/absence datasets.}

\keywords{phylogenetics, convergence-divergence models, phylogenetic networks, gene flow, convergence, Markov models}



\maketitle

\begin{bibunit}

\section{Convergence-divergence models as alternatives to phylogenetic networks}

By representing evolution on a phylogenetic tree, taxa are assumed to evolve conditionally independently from common ancestors. Independently evolving taxa diverge, becoming more different over time. From the beginning of quantitative inference of phylogenetic trees \citep{michener1957quantitative}, the fundamental assumption was that evolutionary divergence is the product of evolutionary rate and time. Therefore, ``degree of difference can give an estimate of evolutionary divergence'' from a common ancestor. For example, \cite{zuckerkandl1965evolutionary} introduced the notation of a ``molecular evolutionary clock'', which describes evolutionary time as being proportional to the number of sequence differences. This assumption of taxa evolving conditionally independently from common ancestors at a constant rate is often too simplistic.

Various biological assumptions do not meet the assumptions underpinning the representation of evolution with a phylogenetic tree. Evolutionary processes may not be independent nor divergent. Phylogenetic trees can be poor models when these assumptions are violated, for example, in the presence of gene flow \citep{leache2014influence}. Introgressive hybridization, horizontal gene transfer, recombination and replicated evolution --- the independent evolution of similar characteristics due to similar selective pressures \citep{james2023replicated} --- can violate these assumptions. Despite evolving independently, taxa undergoing genotypic replicated evolution display similarities in their genomes due to molecular convergence, for example, sites in multiple sequence alignments where the taxa have the same nucleotide. An alternative to phylogenetic trees is required to adequately model these biological processes.

There is a burgeoning body of literature on phylogenetics networks to address the limitations of phylogenetic trees. Phylogenetic networks model evolutionary processes leading to gene flow, for example, introgressive hybridization, horizontal gene transfer and recombination. See \cite{kong2022classes} for a thorough review of the classes of phylogenetic networks. Phylogenetic networks have ``hybrid'' nodes modeling gene flow --- not necessarily hybridization --- between taxa. However, on phylogenetic networks gene flow is assumed to be instantaneous at hybrid nodes. Many evolutionary processes can cause gene flow over a time interval. For example, introgressive hybridization, where hybrids of two taxa are repeatedly backcrossed with at least one of the taxa. Over a sufficiently long time interval, introgressive hybridization can lead to ``de-speciation'', where the species can no longer be distinguished from each other. One such example is two three-spined stickleback species in Enos Lake, British Columbia \citep{seehausen2008speciation}. Morphological and genetic analyses revealed the progressive de-speciation of the two species into a single hybrid species \citep{kraak2001increased,taylor2006speciation}. Such a scenario is not adequately modeled by most phylogenetic networks, which do not model gene flow over a time interval.

Alternatives to phylogenetic networks have been developed to model gene flow over a time interval. Isolation with migration models \citep{hey2010isolation} permit migration of individuals across otherwise isolated populations at some rate over a time interval. However, they are limited to datasets with several individuals per population. Furthermore, the algorithms tend to be very slow on datasets with many taxa. The ABBA-BABA test \citep{green2010draft} analyzes biallelic --- ``A'' for ancestral and ``B'' for derived --- sites of multiple sequence alignments (MSAs) for evidence of gene flow. A significant difference between counts of ``ABBA'' and ``BABA'' patterns across four taxa is interpreted as support for gene flow between taxa. However, the tests are limited to only $4$-taxon sets.

\bigskip

\emph{Convergence-divergence models} (CDMs) are an alternative to phylogenetic networks. They generalize phylogenetic trees in a different way to how phylogenetic networks generalize phylogenetic trees. Phylogenetic networks introduce hybrid nodes; the phylogenetic network does not generally display a single phylogenetic tree. In contrast, CDMs have a single underlying phylogenetic tree.

Distinct from phylogenetic trees, CDMs permit non-independent \emph{convergence}\footnote{Convergence, as defined here, is a generic term for any biological process that causes taxa to become more similar over a time interval. This includes, but is not limited to, convergent evolution.} of some subsets of taxa. As with some phylogenetic tree models, CDMs have an associated Markov model and rate matrices prescribing rates of substitutions. On a CDM, a single rate matrix prescribes the rates of substitutions between \emph{combinations} of states across the multiple converging taxa. For a set of converging taxa, only substitutions that take an arbitrary combination of states to identical states are allowed. For example, suppose two taxa are converging and have the combination of states $AT$ --- $A$ for taxon $1$ and $T$ for taxon $2$ --- at some site in an MSA. Then only substitutions to $AA$ or $TT$ are permitted by the rate matrix. By only permitting substitutions to identical states for converging taxa, the converging taxa become more similar in their associated sequence alignments over time.

CDMs build on the phylogenetic \emph{epoch models} of \cite{sumner2012algebra}, which envisage evolution occurring in a series of time intervals or epochs. Similar to isolation with migration models and in contrast to phylogenetic networks, CDMs model gene flow between otherwise isolated taxa over a time interval. In contrast to isolation with migration models, CDMs do not require multiple individuals per taxon; CDMs can be inferred from datasets with a single individual per taxon. Distinct from ABBA-BABA tests, CDMs can be inferred from datasets with any number of taxa. For ABBA-BABA tests, rejection of the null hypothesis --- a phylogenetic tree --- is assumed to be due to gene flow, with no explicit model of gene flow. In contrast, CDMs are explicit models of convergence, with the flexibility to model datasets of one to multiple individuals and many taxa, with gene flow between some taxa over a time period.

CDMs can also model replicated evolution. If there is genotypic replicated evolution in a system, then some parts of the genome evolve in similar ways in the taxa under similar selective pressures. Thus, these parts of the genome become more similar over time in those taxa and are modeled on the CDM as converging.

\bigskip

The mathematical properties of CDMs on three and four taxa were explored by \cite{mitchell2016distinguishing} and \cite{mitchell2018distinguishing}. \cite{holland2024distance} explored distance metric properties of hypothetical convergence models, without an explicit model assumed. They assume that converging taxa have smaller distances between them than if they had always diverged. In this article we develop algorithms to generalize inference of CDMs from previous studies. Inference is generalized by: 1) not assuming a molecular clock, 2) using the $2$-state general Markov model instead of the binary symmetric model and 3) increasing from $3$- or $4$-taxon to $N$-taxon datasets. These algorithms make CDM inference more widely accessible on empirical datasets. CDMs can be applied to a range of biological datasets, including gene or genomic window MSAs, gene (or gene family) presence/absence datasets and Diversity Arrays Technology (DArT) datasets \citep{jaccoud2001diversity}.

\section{Modeling convergence on convergence-divergence models}
\label{prelim}

We start with a brief discussion on the development of CDMs in previous articles. We briefly describe what convergence is and how CDMs generalize tree models by incorporating convergence.

On a phylogenetic tree, a Markov model defines rates of substitutions between states of the state space, for example, between nucleotides at each site in a sequence. Here we consider a continuous-time Markov model, with the flexibility of having possibly different rate matrices on distinct edges of the tree. As is standard \citep{felsenstein2004}, the edge lengths, rate matrices and root probabilities define probabilities of combinations of states across all taxa at the time corresponding to the leaves, which we call the leaf taxa.

Under the standard formulation on binary phylogenetic trees, \emph{speciation events} instantaneously split a single ancestral edge in one \emph{epoch} into two descendant edges in the following epoch. Suppose there are $k$ edges present in some epoch before a speciation event. There is some collection of probabilities of combinations of states on the $k$ edges instantaneously before the speciation event. This collection of probabilities is represented by a vector. Then in the epoch after the speciation event there are $k+1$ edges. The vector of probabilities before the speciation event must be modified to represent probabilities of combinations of states on $k+1$ edges after the event.

One of the $k$ edges before the speciation event splits into two edges after the speciation event. The edge that is split is modeled by the Markov model in the epoch before the speciation event, with $n$ independent and identically evolving random variables, for example, a nucleotide sequence of length $n$. The speciation event duplicates each random variable associated with the split edge so there are $n$ pairs of identical random variables instantaneously after the speciation event. For example, for MSAs the two edges after the split correspond to identical sequences. These two edges, with the property of only identical states existing for an arbitrary random variable --- for example, site in an MSA --- are \emph{identical} edges. After the speciation event, the two edges are again modeled by the Markov model. The two edges independently diverge in the epoch after the speciation event and are no longer identical after any time has passed.

The ``splitting operator'' \citep{sumner2005entanglement} is the matrix that converts the vector of probabilities of combinations of states on the $k$ edges instantaneously before the speciation event to the vector on the $k+1$ edges instantaneously after. After marginalizing out the $k-1$ edges not involved in the speciation event, the probabilities of identical states on the identical edges instantaneously after the speciation event equal the probabilities of the states on the single ancestral edge instantaneously before the speciation event. For example, if the probability of state $i$ on the ancestral edge instantaneously before the speciation event is $p_i$, then the probability of state $i$ on the two descendant edges instantaneously after the speciation event is $p_{ii}=p_i$.

\cite{sumner2012algebra} recognized that splitting operators on phylogenetic trees could be ``pushed back'' above the root; equivalent expressions for the probabilities are obtained by assuming that all splitting operators act above the root, with some edges remaining identical from the root until instantaneously after the point where the speciation event was. See Figures~\ref{a}~and~\ref{b} for a graphical depiction of pushing back the splitting operators. Similarly, the ``$N$-taxon process'' of \cite{bryant2009hadamard} accounts for speciation events to determine probabilities of combinations of states at the leaves.

\begin{figure}[!htb]
\centering
\hspace*{\fill}
\begin{subfigure}{0.32\linewidth}
\centering
\begin{tikzpicture}[scale=0.83]
\draw[thick] (0,1.5) -- (0,0);
\draw[thick] (0,0) -- (-2,-2);
\draw[thick] (0,0) -- (2,-2);
\draw[thick] (1,-1) -- (0,-2);
\draw[fill=black](0,0) circle (2 pt) node [right] {$\delta$};
\draw[fill=black](1,-1) circle (2 pt) node [right] {$\delta$};
\end{tikzpicture}
\caption{Before pushing back the splitting operator}
\label{a}
\end{subfigure}
\hfill
\begin{subfigure}{0.32\linewidth}
\centering
\begin{tikzpicture}[scale=0.83]
\draw[thick] (0.09375,1.5) -- (0.09375,1);
\draw[thick] (0.09375,1) -- (0,0.75);
\draw[thick] (0.09375,1) -- (0.1875,0.75);
\draw[thick] (0,0.75) -- (0,0);
\draw[thick] (0.1875,0.75) -- (0.1875,0.5);
\draw[thick] (0.1875,0.5) -- (0.125,0.25);
\draw[thick] (0.1875,0.5) -- (0.25,0.25);
\draw[thick] (0.125,0.25) -- (0.125,0);
\draw[thick] (0.25,0.25) -- (0.25,0);
\draw[fill=black](0.09375,1) circle (2 pt) node [right] {$\delta$};
\draw[fill=black](0.1875,0.5) circle (2 pt) node [right] {$\delta$};
\draw[thick] (0,0) -- (-2,-2);
\draw[thick] (0.125,0) -- (1.125,-1);
\draw[thick] (1.125,-1) -- (0.125,-2);
\draw[thick] (0.25,0) -- (2.25,-2);
\end{tikzpicture}
\caption{After pushing back the splitting operator}
\label{b}
\end{subfigure}
\hfill
\begin{subfigure}{0.32\linewidth}
\centering
\begin{tikzpicture}[scale=0.83]
\draw[thick] (0.09375,1.5) -- (0.09375,1);
\draw[thick] (0.09375,1) -- (0,0.75);
\draw[thick] (0.09375,1) -- (0.1875,0.75);
\draw[thick] (0,0.75) -- (0,0);
\draw[thick] (0.1875,0.75) -- (0.1875,0.5);
\draw[thick] (0.1875,0.5) -- (0.125,0.25);
\draw[thick] (0.1875,0.5) -- (0.25,0.25);
\draw[thick] (0.125,0.25) -- (0.125,0);
\draw[thick] (0.25,0.25) -- (0.25,0);
\draw[fill=black](0.09375,1) circle (2 pt) node [right] {$\delta$};
\draw[fill=black](0.1875,0.5) circle (2 pt) node [right] {$\delta$};
\draw[thick] (0,0) -- (-2,-2);
\draw[thick] (0.125,0) -- (1.125,-1);
\draw[thick] (1.125,-1) -- (0.125,-2);
\draw[thick] (0.25,0) -- (2.75,-2.5);
\draw[thick,red] (-2,-2) to[out=350,in=100] (-1.5,-2.5);
\draw[thick,red] (0.125,-2) to[out=190,in=80] (-0.625,-2.5);
\end{tikzpicture}
\caption{Modeling convergence with the splitting operator}
\label{c}
\end{subfigure}
\hspace*{\fill}
\caption{Two representations (a, b) of a phylogenetic tree with equivalent probability distributions at the leaves. $\delta$ is the splitting operator representing speciation events. (a) Splitting operators have not been pushed back. (b) Splitting operators have been pushed back. Parallel edges separated by small gaps are identical edges. (c) The rate matrix that keeps two identical edges identical models convergence between two diverged edges, represented by the two curved edges}
\label{splittingoperator}
\end{figure}
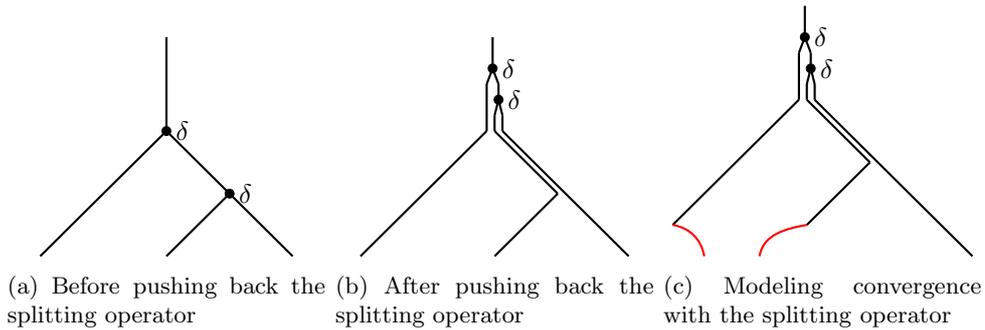

After pushing back all splitting operators, we consider all leaf taxa to be present at all times below the root on the phylogenetic tree. A set of identical edges corresponds to a set of leaf taxa, with each edge having one descendant leaf taxon. We call this set of leaf taxa \emph{identical}. When there is no ambiguity, we simply refer to the leaf taxa as ``taxa''. Taxa that are not identical are diverging at that time.

After pushing back splitting operators, at an arbitrary time below the root each taxon that is diverging from all others has its own rate matrix. In contrast, a single rate matrix models a \emph{set} of identical taxa. For sets of identical taxa, rate matrices model substitutions between combinations of states. Substitutions resulting in different states for identical taxa are not permitted. Thus, after pushing back splitting operators, identical taxa must remain identical until after the speciation event. After the speciation event separate --- but possibly identical --- rate matrices act independently on two subsets of the identical taxa and they diverge. Note that in Section~\ref{defs} and beyond we combine the rate matrices for all taxa at an arbitrary time --- an epoch --- to form a single rate matrix. This rate matrix models all permitted substitutions between combinations of states for all taxa at that time.

\cite{sumner2012algebra} recognized that rate matrices that keep identical taxa identical after pushing back the splitting operator can model ``convergence'' of diverged taxa. The rate matrices only permit substitutions to combinations of states where the converging taxa are all in the same state. See Figure~\ref{c} for a graphical depiction of convergence of diverged taxa. In contrast to independently diverging taxa, converging taxa become more similar over time. If the rate matrix for keeping identical taxa identical is instead applied to diverged taxa, substitutions lead to progressively more random variables that are identical for the taxa, for example, more sites in an MSA that are invariant across the converging taxa. In the limit, converging taxa become identical; see Theorem~\ref{convedges} of Section~\ref{cedges}. Taxa that are converging at the tips --- for example, in Figure~\ref{c} --- have identical MSAs in the limit as the time the taxa are converging increases. These rate matrices form the basis of convergence in our CDMs.

\section{Definitions and assumptions}
\label{defs}

\subsection{CDMs as data generating objects}
\label{cdmsdatagen}

We require several definitions to describe CDMs and our algorithms for inferring them. These assumptions are sufficient for consistent inference of many aspects of a CDM with our algorithms. In this section we introduce the definitions required to describe CDMs as data generating objects representing a combination of divergent and convergent evolutionary processes.

In contrast to a phylogenetic network, a CDM has a single underlying tree describing the ``tree-like'' part of the CDM.

\begin{definition}
\emph{Principal tree $T$} is a binary rooted leaf-labeled phylogenetic tree, ultrametric in time, with all edges having positive lengths.
\end{definition}

The principal tree is defined here before splitting operators are ``pushed back'' above the root. For simplicity, after pushing back splitting operators, we still refer to the resulting object as the principal tree.

We define the root as the node with outdegree $2$ that is the most recent common ancestor of all leaf taxa. It is useful to define the root as having indegree $1$ when considering splitting operators --- see Figure~\ref{splittingoperator} --- and defining the root as having indegree $0$ otherwise --- see Figure~\ref{4taxonCDMs2}. It is of no consequence which of the two ways the root is defined if the root distribution is specified by some model, for example, a Markov model.

We envisage evolution as occurring on the principal tree in discrete \emph{epochs} according to some Markov model. Epochs are separated by \emph{events} where the evolutionary processes change. We say that an epoch or event is \emph{before} another if it is between the root and the other epoch or event along a directed path. The other epoch or event is \emph{after} the epoch or event before it. Similarly, if two edges of the principal tree lie on a directed path from the root to a tip, the edge closest to the root is \emph{ancestral} to the \emph{descendant} edge, which is closest to the tip. The epoch that is after all other epochs is the \emph{tip epoch}. Each event has an associated \emph{event time}; the time along any directed path from the root of the principal tree to the event. Events with the same event time are \emph{concurrent}. Events are \emph{successive} if they occur at different event times, with no events between them. The \emph{epoch interval} $\tau=\left(a,b\right)$, with $b>a$, is the ordered pair of event times for two successive events. The \emph{epoch length} $t=b-a>0$ is the time between the two successive events.

Our notion of epochs on principal trees is not sufficient to describe evolution on CDMs. On CDMs, in some epochs taxa become more similar over time.

\begin{definition}
A set of distinct taxa on principal tree $T$ are \emph{converging} in an epoch if and only if probabilities under the continuous-time Markov model of identical states for all taxa in the set are increasing with time and probabilities of all other combinations of states are decreasing.
\end{definition}

For example, if the dataset is an MSA then a set of distinct taxa are converging if and only if probabilities of site patterns where all taxa in the set have the same state are increasing and all other probabilities are decreasing. Note that a set of identical taxa are not converging since for identical taxa the probabilities of identical states cannot increase.

In light of splitting operators being able to be pushed back above the root, in any epoch there is a one-to-one correspondence between the edges and the leaf taxa. Thus, we define epochs as specific partitions of the leaf taxon set. On a phylogenetic tree, each epoch corresponds to a specific partition of the leaf taxon set. An event corresponding to a speciation splits one part of the partition into two.

For CDMs we require an additional type of partition to standard partitions. A \emph{decorated partition} of the leaf taxon set $X$ is a partition $P$ of non-empty sets and partitions, where each set in $P$ is a strict subset of $X$, each partition in $P$ is a partition of a strict subset of $X$ and each taxon in $X$ appears in exactly one set or partition in $P$. On CDMs, decorated partitions correspond to epochs where there is convergence. For example, if the leaf taxa are labeled $a$, $b$ and $c$ from left to right on the CDM of Figure~\ref{c}, then the three epochs below the root are, in order, $\left\{a\right\}|\left\{b,c\right\}$ (partition), $\left\{a\right\}|\left\{b\right\}|\left\{c\right\}$ (partition) and $\left\{\left\{a\right\},\left\{b\right\}\right\}|\left\{c\right\}$ (decorated partition). Note that partitions in decorated partitions are represented as sets of sets to avoid confusion. For example, the partition in the decorated partition corresponding to the third epoch is represented as $\left\{\left\{a\right\},\left\{b\right\}\right\}$.

A \emph{divergence group} is ``tree-like'', corresponding to a subset of leaf taxon set $X$ in a partition or decorated partition. All taxa in a divergence group are identical and independently diverging from all taxa not in the divergence group. A \emph{convergence group} is ``non-tree-like'', corresponding to a partition in a decorated partition. As with a divergence group, all taxa in a convergence group are independently diverging from all taxa not in the convergence group. Taxa in the convergence group in the same subset of the partition are identical. A pair of taxa in the partition in distinct subsets are not identical, but are converging. Divergence groups and convergence groups collectively form \emph{convergence-divergence groups}.

Note that by defining epochs as corresponding to partitions and decorated partitions, we have already excluded scenarios where leaf taxa belong to multiple convergence-divergence groups in an epoch. For example, if, in some epoch, $a$ and $b$ are converging and $b$ and $c$ are converging, then $a$, $b$ and $c$ must all be in the same convergence group and therefore all converging. Thus, we have excluded the scenario where, in one epoch, $a$ and $b$ are converging and $b$ and $c$ are converging, but $a$ and $c$ are not converging.

The events at epoch boundaries fall into three possible classes. \emph{Speciation events} occur at nodes of the principal tree where splitting operators are before they are pushed back above the root. They take a subset of taxa that occur in the same subset in every partition or decorated partition before an event and split them across two subsets of the partition or decorated partition in the epoch after the event. The remaining two types of events are more broadly \emph{convergence-divergence events}. \emph{Divergence events} are events where at least one subset in a convergence group --- at least one subset of leaf taxa in a partition of a decorated partition --- in the epoch before the event forms a divergence group in the epoch after the event and no new convergence groups are formed. This includes, but is not limited to, scenarios where the epoch after the event corresponds to a partition and the epoch before the event corresponds to a decorated partition. \emph{Convergence events} are events where new convergence groups are formed, possibly with some convergence groups in the epoch before the event not existing in the epoch after the event. Note that there may be multiple convergence groups in an epoch. Note that our definition of convergence-divergence events does not include all possible scenarios of modeling convergence on CDMs. Furthermore, the assumptions that follow in Section~\ref{ass} further restrict convergence scenarios on CDMs. For an example of events on a CDM, consider CDM $5$ in Figure~\ref{4cdms5}. The event at the root and the second and third events are speciation events. The fourth and sixth events are convergence events. The fifth and seventh events are divergence events.

\bigskip

Our algorithms infer an $N$-taxon CDM from inferred $4$-taxon CDMs. Thus, we define CDMs describing the evolutionary history of a strict subset $X'$ of the full leaf taxon set $X$. Suppose $P_{\mathcal{N}}$ is the set of ordered partitions and decorated partitions corresponding to $\mathcal{N}$, ordered from the root to the leaves. For each partition or decorated partition of $P_{\mathcal{N}}$, suppose we delete all taxa in $X\setminus{}X'$ and subsequently delete any empty subsets of the partition or decorated partition. Next, suppose we recursively delete any partitions or decorated partitions identical to the previous one. Then the resulting ordered partitions and decorated partitions $P_{\mathcal{N}'}$ corresponds to \emph{displayed CDM} $\mathcal{N}'$ of $\mathcal{N}$.

Our CDMs have a lot of flexibility in modeling convergence. Some convergence scenarios are challenging to infer, particularly when they involve some of the same converging taxa as other convergence scenarios. In these scenarios a $4$-taxon CDM displayed on an $N$-taxon CDM may have a convergence group that appears in multiple epochs. For example, suppose the partitions and decorated partitions of a $5$-taxon CDM include $\left\{a\right\}|\left\{\left\{b\right\},\left\{c,d\right\}\right\}|\left\{e\right\}$ and $\left\{a\right\}|\left\{\left\{b\right\},\left\{c\right\}\right\}|\left\{d\right\}|\left\{e\right\}$. Then for the $4$-taxon displayed CDM on taxa $\left\{a,b,c,e\right\}$ both decorated partitions become $\left\{a\right\}|\left\{\left\{b\right\},\left\{c\right\}\right\}|\left\{e\right\}$. With the decorated partition repeated on the $4$-taxon displayed CDM, it is difficult to infer both decorated partitions on the $5$-taxon CDM from the $4$-taxon CDMs. With this scenario in mind, a convergence group $C_2$ of CDM $\mathcal{N}$ is \emph{nested} in convergence group $C_1$ of $\mathcal{N}$ if $C_1$ is before $C_2$ and there exists some displayed CDM $\mathcal{N}'$ of $\mathcal{N}$ where the two convergence groups restricted to the taxa displayed on $\mathcal{N}'$ belong to identical decorated partitions.

Furthermore, convergence involving closely related taxa can be challenging to infer. It can be difficult to distinguish from a scenario where the taxa only diverged, but over a shorter time interval. With this in mind, suppose two directed edges of the principal tree are $\left(u,v\right)$ and $\left(w,x\right)$, where $u$ and $w$ are the \emph{parent} nodes, ancestral to $v$ and $x$, the \emph{child} nodes. The two edges are \emph{sister} edges if and only if $u=w$. The corresponding leaf taxa are \emph{sister} taxa. Convergence involving sister taxa is \emph{sister convergence}. A convergence group with at least one pair of converging sister taxa is a \emph{sister convergence group}. Convergence that is not sister convergence is \emph{non-sister convergence} and convergence groups that are not sister convergence groups are \emph{non-sister convergence groups}.

\bigskip

On a phylogenetic tree, rate matrices model the conditionally independent divergence of taxa from common ancestors. Rate matrices are typically assigned to individual edges of the tree. For CDMs, rate matrices are instead assigned to each convergence-divergence group of an epoch. In general, substitution rates differ between each convergence-divergence group in the epoch. These rate matrices are combined across all convergence-divergence groups in an epoch to form a single rate matrix for the epoch, as in \cite{sumner2012algebra}. This rate matrix, defined by a continuous-time Markov model, describes all evolutionary processes in the epoch.

Recall that in each epoch we consider there to be a one-to-one correspondence between the edges and the taxa after pushing back splitting operators. Thus, for $N$ taxa and a state space with $m$ states, all rate matrices are of dimension $m^N\times{}m^N$. Each element of the rate matrix represents a substitution from one of the $m^N$ \emph{combinations of states} across the taxa to another. For example, suppose $N=3$ and $m=2$, with the state space $\left\{0,1\right\}$. We let indices of rate matrices be in binary form. Row $i$ corresponds to $i_1i_2\ldots{}i_N$ and column $j$ corresponds to $j_1j_2\ldots{}j_N$, where $i_a,j_a\in\left\{0,1\right\}$ for all $a\in\left\{1,2,\ldots{},N\right\}$, $i=1+\sum_{a=1}^{N}2^{N-a}i_a$ and $j=1+\sum_{a=1}^{N}2^{N-a}j_a$. Then one of the $2^3\times{}2^3=64$ elements of the rate matrix describes the substitution rate from combination of states $010$ to $011$. That is, the substitution from only the second taxon being in state $1$ to both the second and third taxa being in state $1$. As an example, suppose the rate matrix represents the decorated partition $\left\{\left\{a\right\},\left\{b\right\}\right\}|\left\{c\right\}$ corresponding to the tip epoch of the CDM in Figure~\ref{c}. $\left[\boldsymbol{Q}\right]_{ij}$ is the rate of substitution from $j_1j_2j_3$ to $i_1i_2i_3$, where $i_k,j_k\in\left\{0,1\right\}$, $k\in\left\{1,2,3\right\}$ represent the states of taxa $a$, $b$ and $c$ respectively. Suppose the convergence group $\left\{\left\{a\right\},\left\{b\right\}\right\}$ has rate of substitution from $00$, $01$ or $10$ to $11$ of $\alpha_1>0$ and rate from $01$, $10$ or $11$ to $00$ of $\beta_1>0$. Suppose the divergence group $\left\{c\right\}$ has rate from $0$ to $1$ of $\alpha_2>0$ and rate from $1$ to $0$ of $\beta_2>0$.

Then
\begin{align}
\label{ratematrix}
\boldsymbol{Q}=\left[\begin{array}{c|cccccccc}
 & 000 & 001 & 010 & 011 & 100 & 101 & 110 & 111 \\
\hline
000 & \ast & \beta_2 & \beta_1 & 0 & \beta_1 & 0 & \beta_1 & 0 \\
001 & \alpha_2 & \ast & 0 & \beta_1 & 0 & \beta_1 & 0 & \beta_1 \\
010 & 0 & 0 & \ast & \beta_2 & 0 & 0 & 0 & 0 \\
011 & 0 & 0 & \alpha_2 & \ast & 0 & 0 & 0 & 0 \\
100 & 0 & 0 & 0 & 0 & \ast & \beta_2 & 0 & 0 \\
101 & 0 & 0 & 0 & 0 & \alpha_2 & \ast & 0 & 0 \\
110 & \alpha_1 & 0 & \alpha_1 & 0 & \alpha_1 & 0 & \ast & \beta_2 \\
111 & 0 & \alpha_1 & 0 & \alpha_1 & 0 & \alpha_1 & \alpha_2 & \ast \\
\end{array}
\right],
\end{align}
where each $\ast$ ensures that a column of $\boldsymbol{Q}$ sums to $0$. (Note that many authors use the row sum convention instead.) Column headings represent the initial combination of states, while row headings represent the final combination of states. Observe that since $a$ and $b$ are in the same convergence group, the only permitted substitutions result in a change of state for only $c$ or a change of state for at least one of $a$ and $b$ such that $a$ and $b$ are in the same final state. Note that substitutions involving both the convergence group and the divergence group at the same time --- for example, $000$ to $111$ --- are not permitted. Rate matrices for the $2$-state general Markov model are presented explicitly in \cite{sumner2012algebra}.

\bigskip

Finally, we can define CDMs.

\begin{definition}
\label{CDMa}
A \emph{convergence-divergence model $\mathcal{N}=\left(T,\boldsymbol{\Pi},\mathfrak{E},\mathfrak{Q},\mathfrak{t}\right)$} comprises principal tree $T$, root distribution $\boldsymbol{\Pi}$ and set of partitions and decorated partitions $\mathfrak{E}$ corresponding to epochs ordered from the root to the tip epoch, with $E_i\in\mathfrak{E}$ having associated rate matrix $\boldsymbol{Q}_i\in\mathfrak{Q}$ and epoch interval $\tau_i\in\mathfrak{t}$.
\end{definition}

Root distribution $\boldsymbol{\Pi}$ is the probability vector of states for the single root taxon. Alternatively, after pushing back the splitting operator, it is the probability tensor --- represented as a vector --- of combinations of states across the taxa at the root. Since all taxa must be identical at the root, only the combinations of states where all taxa have the same state have non-zero probabilities.

For the algorithms that follow in Sections~\ref{topprincipaltree}-\ref{metricCDM}, we consider a special type of CDM.

\begin{definition}
\label{CDM}
A \emph{$2$-state general convergence-divergence model} is a convergence-divergence model with rate matrices from the $2$-state general Markov model, equal ratios of substitution rates for all convergence-divergence groups ($\frac{\alpha_l}{\beta_l}=\frac{\alpha}{\beta}$, $\alpha_l,\beta_l>0$ for the $l^{th}$ convergence-divergence group) and $\boldsymbol{\Pi}$ the stationary distribution.
\end{definition}

It is straightforward to show that before pushing back splitting operators the stationary distribution is $\boldsymbol{\Pi}=\left[\frac{\beta}{\alpha+\beta},\frac{\alpha}{\alpha+\beta}\right]^T$ and this is omitted. (Recall that we define rate matrices such that columns sum to $0$. Thus, the root distribution is a column vector.) After pushing back splitting operators, indices of the stationary distribution are in binary form, with $\left[\boldsymbol{\Pi}\right]_0=\frac{\beta}{\alpha+\beta}$, $\left[\boldsymbol{\Pi}\right]_{2^N}=\frac{\alpha}{\alpha+\beta}$ and $\left[\boldsymbol{\Pi}\right]_i=0$ for all $i\in\left\{2,3,\ldots{},2^N-1\right\}$.

Note that with $\frac{\alpha_{l}}{\beta_{l}}=\frac{\alpha}{\beta}$ and the stationary distribution at the root the Markov model is equivalent to the $2$-state general time-reversible (GTR) model. In fact, this is the only such $2$-state $2$-parameter model with $\frac{\alpha_{l}}{\beta_{l}}=\frac{\alpha}{\beta}$ and the stationary distribution at the root. From here onwards we assume all CDMs are $2$-state general CDMs, which we simply refer to as CDMs.

As is common in phylogenetic Markov model based inference, substitution rates and epoch lengths are not identifiable individually in our models. Instead, we can only identify some products of epoch lengths and the substitution rates in the epochs between the events. Roughly, these products represent the ``amount of evolution''. Furthermore, we cannot always identify changes in the amount of evolution between epochs. Instead, we can only identify the ``average'' amount of evolution across these epochs. Thus, to obtain identifiable parameters we consider contiguous sections of edges of the principal tree that potentially span multiple epochs. A \emph{converging section} of an edge is a section of an edge restricted to a single epoch where that edge corresponds to converging taxa in the epoch. \emph{Diverging sections} of an edge are the disjoint sections of an edge that remain after deleting the converging sections or entire edges if there are no converging sections. We assign parameters to each converging and diverging section, which we call \emph{convergence parameters} and \emph{divergence parameters}, respectively. All edges corresponding to taxa in a convergence group in an epoch correspond to the same convergence parameter. We refer to the sum of convergence and divergence parameters along the shortest path between two leaf taxa as the \emph{distance} between the leaf taxa. Similarly, the sum of convergence and divergence parameters along an edge of the principal tree is the \emph{edge length}. Finally, since the Markov model has two states, the root distribution has a single identifiable parameter called the \emph{root parameter}. For example, the divergence parameters of CDM $5$ are parameters $1-5$, $7$, $8$, $10$ and $11$ of Figure~\ref{4cdms5}, while the convergence parameters are parameters $6$ and $9$. We discuss the parameters in more detail in Appendix~\ref{pars}.

Finally, the collection of probabilities of combinations of states at the leaves of the principal tree is called the \emph{phylogenetic tensor}. It is a vector representation of a tensor.

The phylogenetic tensor $\boldsymbol{P}$ is
\begin{align*}
\boldsymbol{P}=\prod_{a=1}^{r}\exp\left(\boldsymbol{Q}_at_a\right)\cdot{}\boldsymbol{\Pi},
\end{align*}
where, for epoch $a$, $\boldsymbol{Q}_a$ is the rate matrix, $t_a$ is the epoch length and the product is over the epochs, whose indices are ordered according to Definition~\ref{CDMa}. Note that the phylogenetic tensor can also be expressed in terms of root, convergence and divergence parameters.

For convenience, the phylogenetic tensor is transformed into the Hadamard basis by multiplying it by a Hadamard matrix \citep{hendy1989framework,hendy1989relationship,bryant2009hadamard}. The Hadamard basis permits a simple parameterization of the phylogenetic tensor, making it easier to establish identifiability of a CDM. From here onwards we deal mostly with phylogenetic tensors in the Hadamard basis. We refer to these as \emph{transformed phylogenetic tensors}.

\subsection{Assumptions}
\label{ass}

Here we introduce some simplifying assumptions for our CDMs. Any assumptions already made, such as no taxa appearing in multiple subsets in a partition or multiple subsets or partitions in a decorated partition, apply to our CDMs in the following sections. In addition, the following assumptions on $N$-taxon CDMs are designed to avoid overparameterization and simplify inference. Further assumptions sufficient for consistent inference of many aspects of the $N$-taxon CDM appear later.

The \emph{generating CDM} and \emph{generating parameter} are the CDM and parameter vector corresponding to the data generating process. The \emph{parameter space} of a CDM is the set of all possible hypothetical generating parameters. Parameter spaces of distinct CDMs may not be disjoint. For example, the parameter space of the first CDM has the parameter space of the second CDM \emph{nested} in it if the second CDM can be obtained from the first CDM by fixing some parameters. (Note that the fixed parameters of the first CDM are not considered parameters in the second CDM.) Likewise, if the parameter space of one CDM has the parameter space of a second CDM nested in it, we say the second CDM is \emph{nested} in the first CDM. Thus, the generating parameter may be in parameter spaces of multiple CDMs. In this case, we consider the generating CDM to be the lowest parameter CDM whose parameter space contains the generating parameter.

Our assumptions on the CDMs are as follows.

\begin{enumerate}
    \item The generating CDM is a $2$-state general CDM. \label{2stategenCDM}
    \item The principal tree is rooted by a single outgroup taxon $o$.
    \item There are no convergence groups that include $o$. \label{nooutcon}
    \item No events are both speciation events and convergence-divergence events. \label{assevents}
    \item In each epoch there is at most one convergence group. \label{onecongroup}
    \item All convergence groups correspond to partitions of exactly two subsets.
    \item Each convergence group appears in at most one epoch.
    \item There are no consecutive epochs both with convergence groups.
    \item No convergence groups are nested in other convergence groups. \label{nonest}
    \item There are no sister convergence groups. \label{nosist}
    \item The generating parameter corresponds to a generic point in the generating CDM parameter space. \label{genparam}
\end{enumerate}

For most algorithms, propositions and theorems of the following sections --- excluding Section~\ref{cedges} --- the assumptions of Section~\ref{ass} hold, as well as some other assumptions that we describe later, sufficient for consistent inference of many aspects of the CDM.

In addition to the assumptions on the CDMs, all random variables are independent and generated by the data generating process corresponding to the generating parameter --- the iid assumption. Throughout this article $n$ is the sample size. Random variables may be sites in an MSA for a gene or genomic region or genes for gene presence/absence datasets. Our random variables are multinomially distributed --- a distribution from the exponential family --- with, for each random variable, each of the $2^N$ combinations of states having some probability that is a function of the generating parameter.

Generating parameters on the boundary of the generating CDM parameter space correspond to non-generic points in the parameter space. Thus, these parameters are not permitted by Assumption~\ref{genparam}, ensuring a \emph{regular} exponential family; see Definition~2.1 of \cite{drton2009likelihood} for a definition of regular models. These parameters are problematic as they correspond to edges of the principal tree of length zero, epoch lengths of zero or substitution rates of zero. These parameters can correspond to polytomies on the principal tree or convergence parameters of value zero.

The iid assumption may not be biologically realistic. For example, sites in an MSA are typically not independent. Nonetheless, it is common in phylogenetic inference to use all sites in an MSA, often restricted to a gene or genomic window. Our algorithms involve comparing maximum likelihoods for various CDMs. However, ignoring any violations of independence and assuming all random variables are independent is essentially a composite likelihood (or pseudolikelihood) approach. The composite likelihood is used in many phylogenetic software, including MP-EST \citep{liu2010maximum}, PhyloNet \citep{yu2015maximum} and SNaQ \citep{solis2016inferring}.

The composite likelihood has desirable statistical properties. Under standard regularity conditions, the maximum composite likelihood estimator is a consistent estimator of the generating parameter \citep{lindsay1988composite}. Furthermore, central limit theorems exist for dependent random variables under some specific weak conditions, for example, for univariate random variables \citep{bradley2007introduction}, extended to multivariate random variables \citep{tone2010central}. Roughly, if our variables are sites in an MSA, then --- along with standard assumptions on the mean and variance --- sufficient conditions for the central limit theorem of \cite{tone2010central} are: 1) dependence between sites decreases to zero as distance between them increases, 2) the joint distribution of an arbitrary $k$ adjacent sites is invariant across the MSA and 3) no two sites in an MSA are perfectly correlated.

To (roughly) satisfy 2), we can restrict MSAs to individual genes or genomic windows, inferring CDMs on each gene or genomic window independently, which appropriately accounts for incomplete lineage sorting. We could instead retain only sites that are approximately independent. However, with incomplete lineage sorting in mind, if we restrict an MSA to a gene or genomic window, then discarding most sites likely gives poor statistical power. With this in mind and the desirable statistical properties of the composite likelihood, we retain the iid assumption without discarding any random variables.

\bigskip

Correctly discovering sister convergence is challenging, typically requiring larger sample sizes than non-sister convergence. However, sister convergence is unlikely to lead to incorrect inference of the topology of the principal tree or false discovery of non-sister convergence groups. Thus, we assume there is no sister convergence and do not attempt to infer it in our algorithms.

We prove that if all convergence parameters are sufficiently ``small'', then the topology of the principal tree can be inferred consistently. ``Small'' convergence parameters could correspond to slow substitution rates and/or short epoch intervals. With some further assumptions that we describe, we can consistently infer all convergence groups of a CDM and some of its parameters.

\section{Limiting behavior of converging taxa}
\label{cedges}

Recall that taxa are converging if probabilities of identical states for all taxa increase with time and all other probabilities decrease. Recall also that for a given random variable all identical taxa have the same state. For example, if all taxa are identical then all sites in an MSA are invariant. It is not immediately clear what the limiting behavior of convergence groups is and whether converging taxa become identical in the limit or not. In this section we establish that converging taxa correspond to identical taxa in the limit as the epoch length increases. For this section we do not need any of the assumptions of Section~\ref{ass} except for Assumption~\ref{2stategenCDM} and also Assumption~\ref{genparam} for Theorem~\ref{convedges}.

We start by establishing some properties of rate matrices of convergence groups. These properties are then used to establish the limiting behavior of elements of the phylogenetic tensor corresponding to probabilities of combinations of states for converging taxa.

For the following proposition and theorem, we assume $i$ and $j$ are the indices corresponding to the final and initial combination of states, respectively, associated with a state transition, with binary forms $i_1i_2\ldots{}i_N$ and $j_1j_2\ldots{}j_N$ for $N$ taxa.

Recall that our rate matrices that model convergence of diverged taxa are the same as the rate matrices that model identical taxa remaining identical. Thus, for our rate matrices no distinction needs to be made between a pair of taxa that are identical and a pair of taxa that are converging. Thus, the only information required for composing the rate matrix is the sets of (identical) taxa in divergence groups --- possibly a single taxon --- and the sets of taxa in convergence groups --- possibly with some identical taxa. We assume that if there are $k$ convergence-divergence groups in some epoch, then $\mathcal{C}=\left\{C_1,C_2,\ldots{}C_k\right\}$ is the set of sets of taxa in each convergence-divergence group. For a partition, $\mathcal{C}$ is the set of subsets in the partition. For a decorated partition, each subset of $\mathcal{C}$ is either a subset in the decorated partition or the set of taxa in a partition in the decorated partition.

\begin{proposition}
\label{p1}
Suppose a tip epoch of CDM $\mathcal{N}$ with leaf taxon set $X$ and $\left|X\right|=N$ corresponds to a set of sets of taxa in each convergence-divergence group $\mathcal{C}=\left\{C_1,C_2,\ldots{}C_k\right\}$. Suppose $\boldsymbol{Q}^{\left[\mathcal{C}\right]}$ is the $2^N\times2^N$ rate matrix representing $\mathcal{C}$. Then
\begin{align*}
\left[\boldsymbol{Q}^{\left[\mathcal{C}\right]}\right]_{ij}=
\begin{cases}
\alpha_r \quad &\text{if for some } C_r\in{}\mathcal{C}, \quad \prod_{a\in{}C_r}j_a=0, \\
&\prod_{a\in{}C_r}i_a=1 \text{ and } i_a=j_a \text{ for all } a\in{}X\setminus{}C_r, \\
\beta_r \quad &\text{if for some } C_r\in{}\mathcal{C}, \quad \prod_{a\in{}C_r}\left(1-j_a\right)=0, \\
&\prod_{a\in{}C_r}\left(1-i_a\right)=1 \text{ and } i_a=j_a \text{ for all } a\in{}X\setminus{}C_r, \\
0 \quad &\text{otherwise if } i\neq{}j, \\
-\sum_{s=1,s\neq{}j}^{2^N}\left[\boldsymbol{Q}^{\left[\mathcal{C}\right]}\right]_{sj} \quad &\text{if } i=j,
\end{cases}
\end{align*}
where $\alpha_r,\beta_r>0$.
\end{proposition}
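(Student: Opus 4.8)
The plan is to obtain the formula by combining two facts: first, the rule by which the single epoch rate matrix $\boldsymbol{Q}^{\left[\mathcal{C}\right]}$ is assembled from the rate matrices of the individual convergence-divergence groups $C_1,\dots,C_k$ (following \cite{sumner2012algebra}, as described in Section~\ref{defs}); and second, the explicit form of a single group's rate matrix from Section~\ref{prelim}. For the first fact, because the convergence-divergence groups evolve independently, $\boldsymbol{Q}^{\left[\mathcal{C}\right]}$ is a Kronecker sum of the per-group rate matrices $\boldsymbol{Q}_1,\dots,\boldsymbol{Q}_k$ once the taxa are relabeled so that each $C_r$ occupies a contiguous block of coordinates. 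In terms of entries this says: for $i\neq j$, $\left[\boldsymbol{Q}^{\left[\mathcal{C}\right]}\right]_{ij}$ vanishes unless $i$ and $j$ agree in every coordinate outside some (then necessarily unique) group $C_r$, in which case it equals the entry of $\boldsymbol{Q}_r$ indexed by the restrictions $i|_{C_r}$ and $j|_{C_r}$; and $\left[\boldsymbol{Q}^{\left[\mathcal{C}\right]}\right]_{jj}=-\sum_{s\neq j}\left[\boldsymbol{Q}^{\left[\mathcal{C}\right]}\right]_{sj}$, since each $\boldsymbol{Q}_r$ has zero column sums and this is preserved by Kronecker sums. The last identity is exactly the fourth case of the proposition, so only the off-diagonal entries remain.

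Next I would pin down $\boldsymbol{Q}_r$ for each group. As emphasized just before the proposition, our rate matrices make no distinction between identical and converging taxa, so the rate matrix of every convergence-divergence group --- a singleton divergence group (where it is the rate matrix of the $2$-state general Markov model), a larger divergence group of identical taxa, or a convergence group carrying a sub-partition --- depends only on its underlying taxon set $C_r$ and has the shape exhibited in \eqref{ratematrix}. Explicitly, from a combination $c\in\left\{0,1\right\}^{C_r}$ the only nonzero off-diagonal rates of $\boldsymbol{Q}_r$ are: rate $\alpha_r>0$ to the all-ones combination, present iff $c$ is not itself all-ones (i.e.\ $\prod_{a\in C_r}c_a=0$); and rate $\beta_r>0$ to the all-zeros combination, present iff $c$ is not itself all-zeros (i.e.\ $\prod_{a\in C_r}\left(1-c_a\right)=0$). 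Writing ``is all-ones'' as $\prod_{a\in C_r}c_a=1$ and ``is all-zeros'' as $\prod_{a\in C_r}\left(1-c_a\right)=1$ puts this in the notation of the statement.

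Finally, substituting this description into the combination rule yields the result: a transition $j\to i$ with $i\neq j$ has nonzero rate precisely when, for the unique group $C_r$ in which $i$ and $j$ may differ, either $i|_{C_r}$ is all-ones while $j|_{C_r}$ is not --- giving $\alpha_r$ together with $\prod_{a\in C_r}i_a=1$, $\prod_{a\in C_r}j_a=0$ and $i_a=j_a$ for $a\in X\setminus C_r$ --- or $i|_{C_r}$ is all-zeros while $j|_{C_r}$ is not --- giving $\beta_r$ with the symmetric conditions; all other off-diagonal entries are zero. These are exactly the first three cases. The case split is genuine: the two subcases are incompatible because $i|_{C_r}$ cannot be both all-ones and all-zeros, and no two distinct groups can both apply since each forces every coordinate change to lie within its own set. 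I expect the only real care needed to be the bookkeeping for the combination rule when the $C_r$ are not already contiguous in the taxon ordering, which is dealt with by conjugating by the relevant permutation matrix and translating index binary forms back; the underlying content is just the Kronecker-sum structure of generators of independent continuous-time Markov chains together with the known form of the single-group rate matrices.
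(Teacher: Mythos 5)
Your proposal is correct and follows essentially the same route as the paper's proof: the paper also builds $\boldsymbol{Q}^{\left[\mathcal{C}\right]}$ as $\sum_{C_r\in\mathcal{C}}\boldsymbol{Q}^{\left[C_r\right]}$ with each $\boldsymbol{Q}^{\left[C_r\right]}$ a single-group generator Kronecker-multiplied by identities in the remaining slots (your ``Kronecker sum''), handles non-contiguous groups by the action of a permutation $\sigma\in\mathfrak{S}_N$ on the tensor slots, and closes with the column-sum condition for the diagonal. The one place where the routes differ is the single-group form itself: you take the entrywise description (rate $\alpha_r$ into the all-ones combination, $\beta_r$ into the all-zeros combination, zero otherwise) as given by the model description and the example in Equation~\eqref{ratematrix}, whereas the paper derives it from the operator definition $\boldsymbol{Q}^{\left[l\right]}=\alpha\boldsymbol{\mathfrak{L}}_\alpha^{\left[l\right]}+\beta\boldsymbol{\mathfrak{L}}_\beta^{\left[l\right]}$ of \cite{sumner2012algebra} by showing inductively that $\alpha\left(\boldsymbol{L}_\alpha+\boldsymbol{I}\right)^{\otimes{}l}+\beta\left(\boldsymbol{L}_\beta+\boldsymbol{I}\right)^{\otimes{}l}$ has $\beta$ throughout its first row, $\alpha$ throughout its last row and zeros elsewhere, then subtracting $\left(\alpha+\beta\right)\boldsymbol{I}^{\otimes{}l}$. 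If the rate matrix is taken to be \emph{defined} by that operator algebra, this short computation is the substantive content of the single-group case and should be included; everything else in your argument matches the paper's.
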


See Appendix~\ref{prop1} for the proof.

For example, suppose the epoch is $\left\{\left\{a\right\},\left\{b\right\}\right\}|\left\{c\right\}$. Then $\mathcal{C}=\left\{\left\{a,b\right\},\left\{c\right\}\right\}$ and it is straightforward to verify that $\boldsymbol{Q}^{\left[\mathcal{C}\right]}$ is the rate matrix of Equation~\ref{ratematrix} corresponding to the tip epoch in Figure~\ref{c}. Note that if the epoch is $\left\{a,b\right\}|\left\{c\right\}$, then again $\mathcal{C}=\left\{\left\{a,b\right\},\left\{c\right\}\right\}$ and $\boldsymbol{Q}^{\left[\mathcal{C}\right]}$ is also the rate matrix of Equation~\ref{ratematrix}, this time corresponding to an epoch where taxa $a$ and $b$ are identical and diverging from taxon $c$.

\begin{theorem}
\label{convedges}
Suppose an arbitrary epoch of CDM $\mathcal{N}$ corresponds to set of sets of taxa in each convergence-divergence group $\mathcal{C}=\left\{C_1,C_2,\ldots{}C_k\right\}$. Then if $a,b\in{}C_i$, as tip epoch length $t\to\infty$, $a$ and $b$ become identical.
\end{theorem}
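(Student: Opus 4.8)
The plan is to reduce everything to the behaviour of the single scalar quantity
$d(t)=\sum_{u:\,u_a\neq u_b}\left[\boldsymbol{P}(t)\right]_u$, the probability that taxa $a$ and $b$ disagree at the leaves of the tip epoch, where $t$ is the tip epoch length and $\boldsymbol{P}(t)=\exp\!\left(\boldsymbol{Q}^{\left[\mathcal{C}\right]}t\right)\boldsymbol{P}(0)$, with $\boldsymbol{P}(0)$ the phylogenetic tensor produced by all earlier epochs (a constant, not depending on $t$). Saying that ``$a$ and $b$ become identical'' means precisely that $\boldsymbol{P}(t)$ converges to a tensor supported on the combinations of states with $u_a=u_b$, i.e.\ that $d(t)\to 0$ as $t\to\infty$, so it suffices to prove this.

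The key structural input is Proposition~\ref{p1}. Since $\mathcal{C}=\left\{C_1,\ldots,C_k\right\}$ partitions $X$ and $a,b\in C_i$, we have $a,b\notin C_r$ for every $r\neq i$, so every off-diagonal entry of $\boldsymbol{Q}^{\left[\mathcal{C}\right]}$ of ``$\alpha_r$'' or ``$\beta_r$'' type with $r\neq i$ leaves both coordinates $a$ and $b$ unchanged. The only transitions that can change the pair $\left(u_a,u_b\right)$ are therefore the two transitions attached to $C_i$: the rate-$\alpha_i$ transition, which forces every taxon of $C_i$ (in particular $a$ and $b$) into state $1$, and the rate-$\beta_i$ transition, which forces every taxon of $C_i$ into state $0$. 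Two consequences follow. First, the set of ``agreeing'' configurations $\left\{u:u_a=u_b\right\}$ is invariant under the tip-epoch dynamics: from an agreeing configuration the $C_i$-transitions keep all of $C_i$ at a common value and hence keep $a$ and $b$ agreeing, while every other transition leaves $u_a$ and $u_b$ untouched. Second, from any ``disagreeing'' configuration both the $\alpha_i$- and the $\beta_i$-transitions are enabled (because $C_i$ then contains both states), each one lands in an agreeing configuration, and their combined rate out of the configuration is exactly $\alpha_i+\beta_i$; the remaining enabled transitions, those of the other $C_r$, move only within the disagreeing set.

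Feeding this into the Kolmogorov forward equations gives a closed ODE for $d(t)$. Writing $\dot d(t)=\sum_{u:\,u_a\neq u_b}\sum_{v}\left[\boldsymbol{Q}^{\left[\mathcal{C}\right]}\right]_{uv}\left[\boldsymbol{P}(t)\right]_v$ and exchanging the order of summation, the coefficient of $\left[\boldsymbol{P}(t)\right]_v$ is $\sum_{u:\,u_a\neq u_b}\left[\boldsymbol{Q}^{\left[\mathcal{C}\right]}\right]_{uv}$, which by the two observations equals $0$ when $v$ is an agreeing configuration and $-(\alpha_i+\beta_i)$ when $v$ is a disagreeing one (the column-sum-zero convention cancels the within-``disagree'' flow contributed by the other groups against the diagonal entry). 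Hence $\dot d(t)=-(\alpha_i+\beta_i)\,d(t)$ and $d(t)=d(0)\,e^{-(\alpha_i+\beta_i)t}$. Because the generating CDM is a $2$-state general CDM at a generic parameter (Assumptions~\ref{2stategenCDM} and \ref{genparam}), $\alpha_i,\beta_i>0$, so $d(t)\to 0$ as $t\to\infty$; when $C_i$ is a divergence group one even has $d(0)=0$ throughout. Either way $a$ and $b$ become identical. I expect the main obstacle to be purely the bookkeeping in the middle step rather than any hard estimate: the pairwise marginal $\left(u_a,u_b\right)$ need not itself be Markov, so the computation has to be carried out on the full $2^N$-state chain, and one must verify carefully — using the partition property of $\mathcal{C}$ together with the explicit entries in Proposition~\ref{p1} — that the only pair-altering jumps are the $C_i$-jumps and that every one of them lands in the agreeing set.
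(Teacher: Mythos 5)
Your proof is correct, but it takes a genuinely different route from the paper's. The paper computes the transition matrix in closed form: using the relations $\left(\boldsymbol{\mathfrak{L}}_\alpha^{\left[l\right]}\right)^2=-\boldsymbol{\mathfrak{L}}_\alpha^{\left[l\right]}$ etc.\ it sums the Taylor series to get $\exp\left(\boldsymbol{Q}^{\left[l\right]}t\right)=\boldsymbol{I}^{\otimes{}l}+\frac{1}{\alpha+\beta}\left(1-\exp\left(-\left(\alpha+\beta\right)t\right)\right)\boldsymbol{Q}^{\left[l\right]}$, takes the entrywise limit, and then runs an induction over all $k$ convergence-divergence groups to characterize the support of the limiting phylogenetic tensor. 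You instead isolate the scalar $d(t)$, verify from Proposition~\ref{p1} that the only off-diagonal transitions altering the pair $\left(u_a,u_b\right)$ are the two $C_i$-transitions (both enabled from any disagreeing configuration, each with a unique agreeing target, with disjoint supports from the $C_r$-transitions for $r\neq{}i$ since the $C_r$ are disjoint), and use the column-sum-zero convention to close the ODE $\dot{d}=-\left(\alpha_i+\beta_i\right)d$. I checked the column-sum bookkeeping and it is sound. Your argument is shorter, avoids the matrix-exponential computation and the induction entirely, yields an explicit exponential decay rate $e^{-\left(\alpha_i+\beta_i\right)t}$, and does not need Assumption~\ref{genparam}. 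What it does not give — and what the paper's stronger computation buys — is the full limiting tensor, in particular the strict positivity of the surviving entries $c_i^{\left[\mathcal{C}\right]}>0$, which the paper uses in the example following the theorem and in the intuition for Proposition~\ref{nonsistervtree}; if you ever need that refinement you would have to supplement your argument.
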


See Appendix~\ref{thm1} for the proof.

As an example of the consequence of Theorem~\ref{convedges}, suppose our dataset is an MSA, the CDM is that of Figure~\ref{c} and $\boldsymbol{P}=\left[\left[\boldsymbol{P}\right]_{000},\left[\boldsymbol{P}\right]_{001},\ldots{},\left[\boldsymbol{P}\right]_{111}\right]^T$ is the phylogenetic tensor. Then in the limit as the epoch length of the tip epoch increases, $\left[\boldsymbol{P}\right]_{010}=\left[\boldsymbol{P}\right]_{011}=\left[\boldsymbol{P}\right]_{100}=\left[\boldsymbol{P}\right]_{101}=0$, while $\left[\boldsymbol{P}\right]_{000},\left[\boldsymbol{P}\right]_{001},\left[\boldsymbol{P}\right]_{110},\left[\boldsymbol{P}\right]_{111}>0$.

\section{CDM identifiability and distinguishability}
\label{ident}

\subsection{Constructing \texorpdfstring{$N$}{N}-taxon CDMs from \texorpdfstring{$4$}{4}-taxon CDMs}

In model inference and selection we are often concerned with whether generating models can be recovered given a sufficiently large amount of data. Roughly, models whose parameters can be recovered given enough data have the statistical property of identifiability.

The more complex a model is, the more challenging it is to establish identifiability theoretically. Even with the assumptions of Section~\ref{ass}, CDMs potentially have a lot of complexity, with many possible principal tree topologies and convergence groups, particularly if there are many leaf taxa. Furthermore, even if identifiability can be established, model selection may require substantial computational resources or heuristic methods to search over the parameter spaces of all possible CDMs.

These challenges can be avoided by considering $4$-taxon CDMs, performing model selection on these $4$-taxon CDMs and inferring an $N$-taxon CDM from the $4$-taxon CDMs. We consider all $\binom{N-1}{3}$ $4$-taxon sets that include an outgroup. We include the outgroup, which is defined \emph{a priori}, since the edge to place the root of the principal tree on is typically not identifiable.

Combinatorial methods are used to construct an $N$-taxon CDM from the inferred $4$-taxon CDMs. This ``divide and conquer'' approach of reconstructing phylogenetic trees and phylogenetic networks from triplets or quartets is well established in phylogenetic inference \citep{semple2003phylogenetics,huson1998disk,huson1999disk}.

\subsection{Identifiability of \texorpdfstring{$4$}{4}-taxon CDMs}
\label{4taxaidentifiability}

Here we introduce the $4$-taxon CDMs meeting the assumptions of Section~\ref{ass}. We establish whether these CDMs are \emph{generically identifiable} --- the set of points in the parameter space where identifiability does not hold is of measure zero. From here onwards, generic identifiability is called identifiability.

The assumption that one taxon in each quartet is the outgroup reduces the number of possible topologies of $4$-taxon principal trees; up to leaf labeling it can be assumed to be $\left(o,\left(a,\left(b,c\right)\right)\right)$, with outgroup $o$ and leaf taxa $a$, $b$ and $c$. Assumption~\ref{nooutcon} of Section~\ref{ass} of no convergence involving leaf $o$ also reduces the number of possible convergence groups.

Recall our argument that sister convergence is challenging to discover and is unlikely to lead to incorrect inference of the topology of the principal tree or false discovery of non-sister convergence groups. Thus, we assume the $N$-taxon CDM has no sister convergence (Assumption~\ref{nosist}). However, $N$-taxon CDMs with no sister convergence may still display $4$-taxon CDMs with sister convergence. Again, ignoring a sister convergence group on a $4$-taxon CDM is unlikely to lead to incorrect inference, other than the sister convergence group failing to be detected. Thus, we only perform model selection on $4$-taxon CDMs with no sister convergence groups. Furthermore, since we assume there are no sister convergence groups on the $N$-taxon CDM, for any arbitrary convergence group on the $N$-taxon CDM there must be at least one displayed $4$-taxon CDM with the convergence group where it is a non-sister convergence group. The algorithms that follow appropriately consider the fact that some convergence groups on the $N$-taxon CDM are sister convergence groups on some $4$-taxon CDMs and thus not detected.

Assuming arbitrary leaf taxon labels and arbitrary non-generic parameter values, there are five $4$-taxon CDMs satisfying the assumptions of Section~\ref{ass}, called CDMs $1-5$ and displayed in Figure~\ref{4taxonCDMs2}. To determine whether the five CDMs are identifiable we must establish whether there is a parameter set on each CDM that is identifiable. Note that this set may not be the root, convergence and divergence parameters, but some combinations of these parameters. Although we assume a specific outgroup taxon, the exact location of the root is not identifiable. With the root location not identifiable, the only parameters of the principal tree that we attempt to infer are the parameters of the \emph{unrooted} principal tree. We then root the principal tree at an arbitrary position on the outgroup edge.

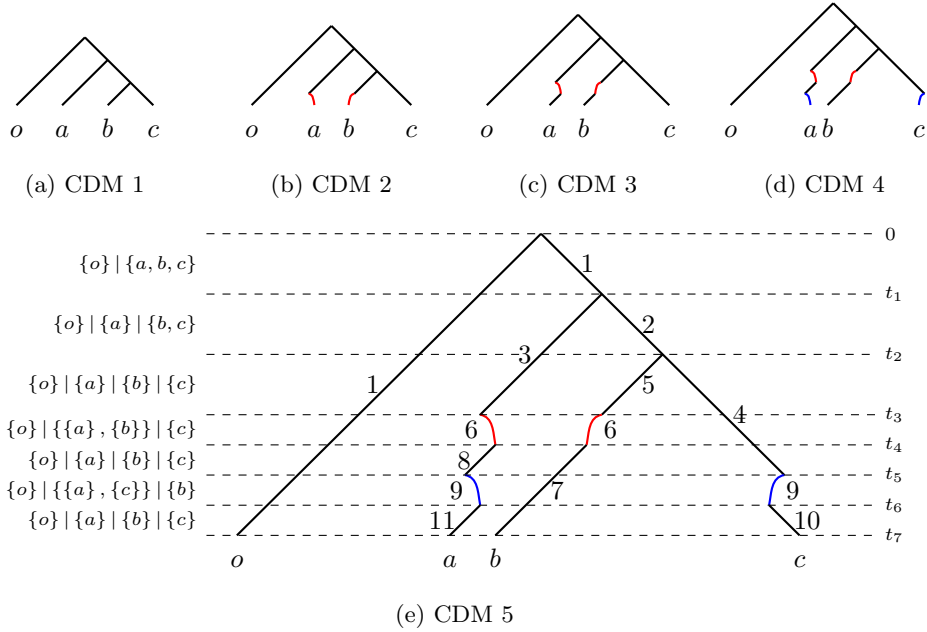
\begin{figure}[!htb]
\hspace*{\fill}
\begin{subfigure}{0.24\linewidth}
\centering
\begin{tikzpicture}[scale=0.3]]
\draw[thick] (0,0) -- (-3,-3);
\draw[thick] (0,0) -- (3,-3);
\draw[thick] (1,-1) -- (-1,-3);
\draw[thick] (2,-2) -- (1,-3);
\node[below] at (-3,-3) {\strut{$o$}};
\node[below] at (-1,-3) {\strut{$a$}};
\node[below] at (1,-3) {\strut{$b$}};
\node[below] at (3,-3) {\strut{$c$}};
\end{tikzpicture}
\caption{CDM $1$}
\label{4cdms1}
\end{subfigure}
\hfill
\begin{subfigure}{0.24\linewidth}
\centering
\begin{tikzpicture}[scale=0.3]
\draw[thick] (0,0) -- (-3.5,-3.5);
\draw[thick] (0,0) -- (3.5,-3.5);
\draw[thick] (1,-1) -- (-1,-3);
\draw[thick] (2,-2) -- (1,-3);
\draw[thick,red] (-1,-3) to[out=350,in=100] (-0.75,-3.5);
\draw[thick,red] (1,-3) to[out=190,in=80] (0.75,-3.5);
\node[below] at (-3.5,-3.5) {\strut{$o$}};
\node[below] at (-0.75,-3.5) {\strut{$a$}};
\node[below] at (0.75,-3.5) {\strut{$b$}};
\node[below] at (3.5,-3.5) {\strut{$c$}};
\end{tikzpicture}
\caption{CDM $2$}
\label{4cdms2}
\end{subfigure}
\hfill
\begin{subfigure}{0.24\linewidth}
\centering
\begin{tikzpicture}[scale=0.3]
\draw[thick] (0,0) -- (-4,-4);
\draw[thick] (0,0) -- (4,-4);
\draw[thick] (1,-1) -- (-1,-3);
\draw[thick] (2,-2) -- (1,-3);
\draw[thick,red] (-1,-3) to[out=350,in=100] (-0.75,-3.5);
\draw[thick,red] (1,-3) to[out=190,in=80] (0.75,-3.5);
\draw[thick] (-0.75,-3.5) -- (-1.25,-4);
\draw[thick] (0.75,-3.5) -- (0.25,-4);
\node[below] at (-4,-4) {\strut{$o$}};
\node[below] at (-1.25,-4) {\strut{$a$}};
\node[below] at (0.25,-4) {\strut{$b$}};
\node[below] at (4,-4) {\strut{$c$}};
\end{tikzpicture}
\caption{CDM $3$}
\label{4cdms3}
\end{subfigure}
\hfill
\begin{subfigure}{0.24\linewidth}
\centering
\begin{tikzpicture}[scale=0.3]
\draw[thick] (0,0) -- (-4.5,-4.5);
\draw[thick] (0,0) -- (4,-4);
\draw[thick] (1,-1) -- (-1,-3);
\draw[thick] (2,-2) -- (1,-3);
\draw[thick,red] (-1,-3) to[out=350,in=100] (-0.75,-3.5);
\draw[thick,red] (1,-3) to[out=190,in=80] (0.75,-3.5);
\draw[thick] (-0.75,-3.5) -- (-1.25,-4);
\draw[thick] (0.75,-3.5) -- (-0.25,-4.5);
\draw[thick,blue] (-1.25,-4) to[out=350,in=100] (-1,-4.5);
\draw[thick,blue] (4,-4) to[out=190,in=80] (3.75,-4.5);
\node[below] at (-4.5,-4.5) {\strut{$o$}};
\node[below] at (-1,-4.5) {\strut{$a$}};
\node[below] at (-0.25,-4.5) {\strut{$b$}};
\node[below] at (3.75,-4.5) {\strut{$c$}};
\end{tikzpicture}
\caption{CDM $4$}
\label{4cdms4}
\end{subfigure}
\hspace*{\fill}
\\
\centering
\begin{subfigure}{\linewidth}
\centering
\begin{tikzpicture}[scale=0.8]
\draw[thick] (0,0) -- (-5,-5);
\draw[thick] (0,0) -- (4,-4);
\draw[thick] (1,-1) -- (-1,-3);
\draw[thick] (2,-2) -- (1,-3);
\draw[thick,red] (-1,-3) to[out=350,in=100] (-0.75,-3.5);
\draw[thick,red] (1,-3) to[out=190,in=80] (0.75,-3.5);
\draw[thick] (-0.75,-3.5) -- (-1.25,-4);
\draw[thick] (0.75,-3.5) -- (-0.75,-5);
\draw[thick,blue] (-1.25,-4) to[out=350,in=100] (-1,-4.5);
\draw[thick,blue] (4,-4) to[out=190,in=80] (3.75,-4.5);
\draw[thick] (-1,-4.5) -- (-1.5,-5);
\draw[thick] (3.75,-4.5) -- (4.25,-5);
\node[below] at (-5,-5) {\strut{$o$}};
\node[below] at (-1.5,-5) {\strut{$a$}};
\node[below] at (-0.75,-5) {\strut{$b$}};
\node[below] at (4.25,-5) {\strut{$c$}};
\draw[dashed] (-5.5,0) -- (5.5,0);
\draw[dashed] (-5.5,-1) -- (5.5,-1);
\draw[dashed] (-5.5,-2) -- (-0.25,-2);
\draw[dashed] (0,-2) -- (5.5,-2);
\draw[dashed] (-5.5,-3) -- (3,-3);
\draw[dashed] (3.25,-3) -- (5.5,-3);
\draw[dashed] (-5.5,-3.5) -- (5.5,-3.5);
\draw[dashed] (-5.5,-4) -- (5.5,-4);
\draw[dashed] (-5.5,-4.5) -- (5.5,-4.5);
\draw[dashed] (-5.5,-5) -- (5.5,-5);
\node[right] at (5.5,0) {\footnotesize{$0$}};
\node[right] at (5.5,-1) {\footnotesize{$t_1$}};
\node[right] at (5.5,-2) {\footnotesize{$t_2$}};
\node[right] at (5.5,-3) {\footnotesize{$t_3$}};
\node[right] at (5.5,-3.5) {\footnotesize{$t_4$}};
\node[right] at (5.5,-4) {\footnotesize{$t_5$}};
\node[right] at (5.5,-4.5) {\footnotesize{$t_6$}};
\node[right] at (5.5,-5) {\footnotesize{$t_7$}};
\node[left] at (-5.5,-0.5) {\footnotesize{$\left\{o\right\}|\left\{a,b,c\right\}$}};
\node[left] at (-5.5,-1.5) {\footnotesize{$\left\{o\right\}|\left\{a\right\}|\left\{b,c\right\}$}};
\node[left] at (-5.5,-2.5) {\footnotesize{$\left\{o\right\}|\left\{a\right\}|\left\{b\right\}|\left\{c\right\}$}};
\node[left] at (-5.5,-3.25) {\footnotesize{$\left\{o\right\}|\left\{\left\{a\right\},\left\{b\right\}\right\}|\left\{c\right\}$}};
\node[left] at (-5.5,-3.75) {\footnotesize{$\left\{o\right\}|\left\{a\right\}|\left\{b\right\}|\left\{c\right\}$}};
\node[left] at (-5.5,-4.25) {\footnotesize{$\left\{o\right\}|\left\{\left\{a\right\},\left\{c\right\}\right\}|\left\{b\right\}$}};
\node[left] at (-5.5,-4.75) {\footnotesize{$\left\{o\right\}|\left\{a\right\}|\left\{b\right\}|\left\{c\right\}$}};
\node[left] at (-2.5,-2.5) {$1$};
\node[right] at (0.5,-0.5) {$1$};
\node[right] at (1.5,-1.5) {$2$};
\node[left] at (0,-2) {$3$};
\node[right] at (3,-3) {$4$};
\node[right] at (1.5,-2.5) {$5$};
\node[left] at (-0.875,-3.25) {$6$};
\node[right] at (0.875,-3.25) {$6$};
\node[right] at (0,-4.25) {$7$};
\node[left] at (-1,-3.75) {$8$};
\node[right] at (3.875,-4.25) {$9$};
\node[left] at (-1.125,-4.25) {$9$};
\node[right] at (4,-4.75) {$10$};
\node[left] at (-1.25,-4.75) {$11$};
\end{tikzpicture}
\caption{CDM $5$}
\label{4cdms5}
\end{subfigure}
\caption{The five $4$-taxon CDMs meeting assumptions of Section~\ref{ass} before considering leaf labeling and parameter values. Convergence is drawn as curves. Epochs are separated by events represented by dashed lines on CDM $5$. For each epoch the corresponding partition or decorated partition is on the left. Epoch intervals are on the right. Parameters are labeled on sections of the edges of CDM $5$}
\label{4taxonCDMs2}
\end{figure}

The following proposition establishes identifiability of the specific parameter set.

\begin{proposition}
\label{CDMs}
The parameter set for CDM $5$ is identifiable.
\end{proposition}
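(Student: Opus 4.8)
The plan is to exhibit an explicit identifiable parameter set for CDM~5 by working in the Hadamard basis, where the transformed phylogenetic tensor becomes a vector of monomials (products of exponentials of sums of convergence and divergence parameters). First I would set up coordinates: label the eleven convergence/divergence parameters as in Figure~\ref{4cdms5}, record the tip-epoch rate matrices for each of the seven epochs (using Proposition~\ref{p1}), and compute the transformed phylogenetic tensor $\widehat{\boldsymbol{P}} = H\boldsymbol{P}$ where $\boldsymbol{P} = \prod_a \exp(\boldsymbol{Q}_a t_a)\cdot\boldsymbol{\Pi}$. Because CDM~5 is a $2$-state general CDM with the stationary distribution at the root, each convergence-divergence group contributes a single scalar factor of the form $e^{-\lambda(\text{parameter})}$ along each edge section, and the Hadamard transform diagonalizes the tree-like part; the convergence groups (epochs 4 and 6, corresponding to $\{\{a\},\{b\}\}|\{c\}$ and $\{\{a\},\{c\}\}|\{b\}$) will contribute off-diagonal coupling that I would have to track explicitly. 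The upshot should be $2^4 = 16$ entries of $\widehat{\boldsymbol{P}}$, each a Laurent monomial (or low-degree polynomial) in the variables $x_i = e^{-c\cdot p_i}$, plus the root parameter.

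Next I would identify a subset of the $16$ coordinates from which all $11$ parameters plus the root parameter can be recovered. The standard strategy: the entries indexed by the ``splits'' of the unrooted principal tree and by single taxa give, via logarithms, a linear system in the $p_i$. Concretely, taking logs of the entries indexed by $\{o\}$, $\{a\}$, $\{b\}$, $\{c\}$, $\{o,a\}$, etc., yields linear combinations of the $p_i$; the claim is that enough of these combinations are linearly independent to solve for every $p_i$. The convergence parameters $6$ and $9$ should be pinned down by entries that ``pass through'' the corresponding convergence epoch --- e.g. the $\{a,b\}$-type entry is sensitive to parameter $6$ in a way the purely tree-like combination is not, and similarly $\{a,c\}$ or $\{b,c\}$ for parameter $9$. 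I would write out the relevant $\sim 11\times 11$ integer matrix of exponents and check it has full rank, which establishes that a generic fiber of the parametrization is a single point, i.e. generic identifiability.

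The main obstacle I anticipate is handling the two convergence epochs simultaneously: parameters $6$ and $9$ both involve overlapping taxon pairs ($6$ couples $a,b$; $9$ couples $a,c$), so the naive tree-like Hadamard factorization is perturbed in both epochs, and the coordinates of $\widehat{\boldsymbol{P}}$ are no longer pure monomials but sums of a few monomials with the convergence contributions entangled. I would need to argue that the leading (or a distinguished) term still isolates each convergence parameter --- for instance by choosing coordinates where only one convergence group ``acts'' nontrivially, exploiting Assumption~\ref{nonest} (no nesting) and Assumption~\ref{assevents} (speciation and convergence events disjoint) to guarantee such coordinates exist. A secondary technical point is that since the root location is not identifiable, I must phrase everything in terms of the unrooted principal tree's edge-length parameters and the single root parameter, absorbing the outgroup-edge ambiguity; this is routine but needs to be stated carefully so the count of identifiable quantities matches the count of free parameters after that reduction. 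Once the exponent matrix is shown to be full rank, identifiability follows, and I would note that the measure-zero exceptional set is exactly where Assumption~\ref{genparam} is violated.
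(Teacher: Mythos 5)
There is a genuine gap: the statement you set out to prove is false as you have interpreted it. You propose to recover \emph{all} eleven convergence/divergence parameters (plus the root parameter) by exhibiting a full-rank $11\times 11$ exponent matrix, but the image of the parametrization does not have enough degrees of freedom for this to be possible. The paper's own computation (a Gr\"{o}bner-basis elimination applied to the ideal generated by $r_{ijkl}-f_{ijkl}\left(x_1,\ldots,x_{11}\right)$, with the $f_{ijkl}$ as in Equations~\eqref{gens}) shows that only $9$ of the coordinates $r_{0011},\ldots,r_{1110},\delta$ are algebraically independent, so at most $9+1=10$ functions of the parameters are identifiable while CDM~$5$ has $11$ parameters besides $\gamma$. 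Concretely, $x_3,x_4,x_5,x_7,x_8,x_{10},x_{11}$ enter the phylogenetic tensor only through the five products $y_3=x_3x_8x_{11}$, $y_4=x_4x_{10}$, $y_5=x_5x_7$, $y_7=x_7x_8x_{11}$, $y_9=x_{10}x_{11}$, so two degrees of freedom are irretrievably lost and your exponent matrix is necessarily rank-deficient. The proposition is really an existence claim about \emph{some} identifiable parameter set of the correct dimension (the main text warns that ``this set may not be the root, convergence and divergence parameters, but some combinations of these parameters''), and the paper proves it by first counting the degrees of freedom of the image and then exhibiting the reparametrization $\left\{y_1,\ldots,y_9,\gamma\right\}$ together with explicit inversion formulas (Equations~\eqref{ysolns}). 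That reparametrization step is the essential missing idea in your plan; passing to the unrooted tree does not supply it, since the root-location ambiguity is already absorbed into the single outgroup parameter.

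A secondary problem is your proposed mechanism. The log-linear ``splits'' argument you describe requires the Hadamard coordinates to be pure monomials in the $x_i$, which holds for trees but fails here: because of the two convergence epochs, entries such as $f_{0101}=x_{10}x_{11}\left(1-x_9\left(1-x_2x_3x_4x_6x_8\right)\right)$ and $f_{1111}$ are genuinely polynomial, and there is no choice of coordinates in which only one convergence group ``acts'' while still seeing all parameters. You flag this as an anticipated obstacle, but the leading-term heuristic you sketch does not repair it; the paper instead handles the nonlinearity by elimination theory and by solving the resulting algebraic system exactly. If you carried out your computation you would discover the rank deficiency, but as written the proposal aims at the wrong target and lacks the tool needed to reach the right one.
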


See Appendix~\ref{CDM5appendix} for the proof. Since CDMs $1-4$ are nested in CDM $5$, the parameter spaces of CDMs $1-4$ can be obtained from the parameter space of CDM $5$ by fixing some parameters of CDM $5$. Thus, the proofs of identifiability of CDMs $1-4$ follow directly and are thus omitted.

\subsection{Distinguishability of \texorpdfstring{$4$}{4}-taxon CDMs}
\label{distinguish}

Accounting for all possible leaf labelings, without considering parameter values there are $27$ $4$-taxon CDMs: $3$ leaf labelings for CDM $1$ and $6$ each for CDMs $2-5$. Next we establish that the intersection of parameter spaces of any two of these distinct CDMs must be ``small''. Assuming the generating parameter is a generic point in the CDM parameter space, this property along with the nested property of our CDMs guarantees that if one of the $4$-taxon CDMs is displayed on the generating CDM then it is consistently inferred.

We establish whether the CDMs of Section~\ref{4taxaidentifiability} can be consistently inferred for generic parameters. Roughly, for two arbitrary CDMs the intersection of their parameter spaces must be ``small'' relative to the larger dimension of the two parameter spaces. For example, for MSAs two models must have sets of possible site pattern probabilities that are either disjoint or have a ``small'' intersection.

\begin{definition}
\label{distinguishability}
CDMs $\mathcal{N}_1$ and $\mathcal{N}_2$ with parameter spaces $\Theta_1$ and $\Theta_2$ are \emph{distinguishable} if $\dim\left(\Theta_1\cap{}\Theta_2\right)<\dim\left(\Theta_1\cup{}\Theta_2\right)$.
\end{definition}

\begin{proposition}
\label{distinguishabilitydiffdim}
CDMs $\mathcal{N}_1$ and $\mathcal{N}_2$ with parameter spaces $\Theta_1$ and $\Theta_2$, such that $\dim\left(\Theta_1\right)<\dim\left(\Theta_2\right)$, are distinguishable.

\begin{proof}
$\dim\left(\Theta_1\cap{}\Theta_2\right)\leq\dim\left(\Theta_1\right)<\dim\left(\Theta_2\right)\leq\dim\left(\Theta_1\cup{}\Theta_2\right)$.
\end{proof}
\end{proposition}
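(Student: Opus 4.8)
The plan is to reduce the claim to two elementary monotonicity properties of the dimension function and then chain inequalities. Because $\Theta_1 \cap \Theta_2 \subseteq \Theta_1$, inclusion-monotonicity of dimension gives $\dim\left(\Theta_1 \cap \Theta_2\right) \le \dim\left(\Theta_1\right)$. Because $\Theta_2 \subseteq \Theta_1 \cup \Theta_2$, the same monotonicity gives $\dim\left(\Theta_2\right) \le \dim\left(\Theta_1 \cup \Theta_2\right)$. Inserting the hypothesis $\dim\left(\Theta_1\right) < \dim\left(\Theta_2\right)$ between these yields
\[
\dim\left(\Theta_1 \cap \Theta_2\right) \le \dim\left(\Theta_1\right) < \dim\left(\Theta_2\right) \le \dim\left(\Theta_1 \cup \Theta_2\right),
\]
so $\dim\left(\Theta_1 \cap \Theta_2\right) < \dim\left(\Theta_1 \cup \Theta_2\right)$, which is exactly the condition in Definition~\ref{distinguishability}. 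No structural facts about CDMs, principal trees, or convergence groups are needed; the argument is a single line once the right notion of $\dim$ is fixed.

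The one point deserving explicit mention is what ``$\dim$'' means for these parameter spaces, since the whole argument rests on inclusion-monotonicity. Each $\Theta_i$ is the image of a product of open intervals (together with one open root-parameter interval) under the smooth parameterization $\boldsymbol{P} = \prod_{a} \exp\left(\boldsymbol{Q}_a t_a\right)\cdot\boldsymbol{\Pi}$ followed by the linear Hadamard transform; such images are semialgebraic sets. For semialgebraic (indeed any o-minimally definable) sets, dimension --- taken as the dimension of the Zariski closure, equivalently the maximal dimension of a point at which the set is locally a manifold --- is well defined and satisfies $\dim\left(A\right) \le \dim\left(B\right)$ whenever $A \subseteq B$, as well as $\dim\left(A \cup B\right) = \max\left(\dim A, \dim B\right)$. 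I would state these facts with a citation and then present the displayed chain.

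I do not expect any obstacle: the content of the proposition is purely dimension-theoretic and serves only to dispose of the easy case (unequal dimensions) of establishing distinguishability among the $27$ leaf-labeled $4$-taxon CDMs, leaving the genuinely substantive work for the equal-dimension pairs. If one prefers a probabilistic phrasing consistent with the paper's ``generic'' language, the same fact can be stated as: a strictly lower-dimensional $\Theta_1$ is a measure-zero --- hence non-generic --- subset of $\Theta_1 \cup \Theta_2$, so a generic parameter of $\mathcal{N}_2$ does not lie in $\Theta_1$.
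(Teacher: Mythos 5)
Your proof is correct and is essentially identical to the paper's, which consists of exactly the same chain $\dim\left(\Theta_1\cap{}\Theta_2\right)\leq\dim\left(\Theta_1\right)<\dim\left(\Theta_2\right)\leq\dim\left(\Theta_1\cup{}\Theta_2\right)$. Your added remarks on why inclusion-monotonicity of dimension holds for these semialgebraic parameter spaces are a reasonable elaboration the paper leaves implicit, but they do not change the argument.
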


\begin{theorem}
\label{disttheorem}
All pairs of $4$-taxon leaf-labeled CDMs of Section~\ref{4taxaidentifiability} are distinguishable.
\end{theorem}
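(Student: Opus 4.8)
The plan is to route almost every pair through the dimension criterion of Proposition~\ref{distinguishabilitydiffdim}, leaving only the delicate pairs --- two distinct leaf labellings of a single one of CDMs~$1$--$5$ --- to be handled by an irreducibility argument together with one explicit invariant per pair. First I would count the root, convergence and divergence parameters of each of CDMs~$1$--$5$ (Definition~\ref{CDM} and the discussion of parameters following it): CDM~$5$ carries the eleven section parameters $1$--$11$ of Figure~\ref{4cdms5} together with the root parameter, and CDMs~$1$--$4$ are obtained from CDM~$5$ by fixing $6$, $5$, $3$, $2$ of its parameters respectively. Since each of CDMs~$1$--$5$ is identifiable (Proposition~\ref{CDMs} and the discussion following it), $\dim\Theta$ equals the number of parameters, so $\dim\Theta_1,\dots,\dim\Theta_5$ are $6,7,9,10,12$ --- pairwise distinct. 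A relabelling of the leaves is a permutation of coordinates, hence an isomorphism of parameter spaces, so every leaf-labelled CDM of type $i$ has parameter space of the common dimension $d_i$. Thus, if $\mathcal{N}_1$ and $\mathcal{N}_2$ are leaf-labelled CDMs of different types, Proposition~\ref{distinguishabilitydiffdim} gives distinguishability immediately, and only pairs of distinct labellings of one common type remain.

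For such a pair, fix the type $i$ and the labellings $\sigma\neq\tau$, with $\dim\Theta_\sigma=\dim\Theta_\tau=d_i$. After the reparameterisation $q=\exp\bigl(-(\alpha+\beta)t\bigr)$ on each converging or diverging section, the map from parameters to transformed phylogenetic tensor is polynomial with domain a product of open intervals, so $\overline{\Theta_\sigma}$ and $\overline{\Theta_\tau}$ are irreducible varieties of dimension $d_i$. Hence if $\overline{\Theta_\sigma}\neq\overline{\Theta_\tau}$ then $\overline{\Theta_\sigma}\cap\overline{\Theta_\tau}$ is a proper closed subvariety of the irreducible $d_i$-dimensional variety $\overline{\Theta_\sigma}$, whence $\dim(\Theta_\sigma\cap\Theta_\tau)\leq\dim\bigl(\overline{\Theta_\sigma}\cap\overline{\Theta_\tau}\bigr)<d_i=\dim(\Theta_\sigma\cup\Theta_\tau)$, and $\mathcal{N}_\sigma$, $\mathcal{N}_\tau$ are distinguishable by Definition~\ref{distinguishability}. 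So it suffices to prove $\overline{\Theta_\sigma}\neq\overline{\Theta_\tau}$ for every pair of distinct labellings of one type.

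For CDM~$1$ the three labellings are the three unrooted quartet topologies, distinguished by the standard phylogenetic quartet invariants --- concretely, in the transformed coordinates a quadratic relation of the form $\hat p_{\{o,a\}}\hat p_{\{b,c\}}=\hat p_{\{o,a,b,c\}}\hat p_{\emptyset}$ holds on the split $oa\mid bc$ but not on the other two --- so the three closures are pairwise distinct. For CDMs~$2$--$5$ I would compare the transformed phylogenetic tensors directly, using the explicit expression for CDM~$5$ derived in Appendix~\ref{CDM5appendix} (and its restrictions, obtained by the parameter fixing above, for CDMs~$2$--$4$). Two distinct labellings of a type differ in at least one of: the unrooted principal-tree topology, the pair of taxa forming a convergence group, or the order of the two convergence groups; each such difference changes the monomial structure of the coordinates $\hat p_S$ --- which parameters divide which coordinate, and which linear relations among the $\hat p_S$ are forced. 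Theorem~\ref{convedges} is the guide here: driving the parameter of the convergence group on $\{x,y\}$ to its limit forces precisely the relations that render $x$ and $y$ identical, so that parameter appears in exactly the coordinates that separate $x$ from $y$; relabelling the converging pair, or swapping the two convergence groups, therefore changes which coordinates a given parameter controls. Exhibiting for each pair one coordinate, or one low-degree relation, valid for one labelling but not the other then yields $\overline{\Theta_\sigma}\neq\overline{\Theta_\tau}$.

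The main obstacle is exactly this last case analysis for CDMs~$2$--$5$, and within it the sub-case of two labellings sharing the principal tree --- for CDMs~$4$ and~$5$ the same pair converging but in the opposite order, and for CDMs~$2$--$5$ a different converging pair on the same tree --- since there the quartet invariants say nothing and one must extract a distinguishing relation from the explicit tensor, where the convergence epochs spoil the multiplicative factorisation of the pure-tree case. The saving grace is the leaf-label symmetry group, which collapses the $\binom{6}{2}=15$ pairs within each of CDMs~$2$--$5$ to a handful of representative pairs, keeping the casework finite and short.
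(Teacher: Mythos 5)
Your overall architecture matches the paper's: dispose of cross-type pairs by the dimension criterion of Proposition~\ref{distinguishabilitydiffdim}, then handle distinct labellings of a single type by exhibiting explicit relations on the transformed tensor. Two issues, one minor and one substantive. The minor one: your dimension count $6,7,9,10,12$ is wrong. Proposition~\ref{CDMs} establishes identifiability not of the raw section parameters but of a \emph{reduced} set (the $y_i$ of Appendix~\ref{CDM5appendix}, nine of them plus $\gamma$ for CDM~$5$); the eleven section parameters of Figure~\ref{4cdms5} are not individually identifiable, and the correct dimensions are $6,7,8,9,10$. Your inference ``identifiable $\Rightarrow$ dimension equals number of parameters'' fails because the identifiable parameter set is not the one you counted. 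Since the correct dimensions are still pairwise distinct, the cross-type half of your argument survives, but as written it rests on a false premise.

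The substantive issue is that the same-type case — which is where essentially all of the paper's proof lives — is only a plan, not a proof. Your irreducibility reduction is sound and is a genuine (and rather elegant) weakening of the burden: you only need $\overline{\Theta_\sigma}\neq\overline{\Theta_\tau}$, whereas the paper shows the intersections of the actual parameter spaces are empty or excluded by genericity. But you still must exhibit, for every representative pair, a relation holding on one labelling and failing on the other, and you explicitly defer exactly the cases where this is hard. The paper's proof consists of these exhibits: the relation $r_{0011}r_{1001}r_{1010}-r_{1011}^2=0$ versus $>0$ separating most labellings of CDM~$5$; the identity of $\min\left(r_{0011}r_{1100},r_{0101}r_{1010},r_{0110}r_{1001}\right)$ for CDM~$3$; and, for the genuinely hard pair you flag (the two labellings of CDM~$5$ sharing the principal tree but with the convergence groups in opposite order), a computer-algebra elimination showing that equating the two tensors forces $z_1z_2^2z_3z_4z_6z_8\left(1-z_6\right)\left(1-z_8\right)\left(z_4z_7z_8-z_5z_9\right)=0$, which has no generic solutions. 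Your Theorem~\ref{convedges} heuristic about which coordinates a convergence parameter ``controls'' is a reasonable guide but does not substitute for these computations: the coordinates of Equation~\eqref{gens} do not factor cleanly in the presence of convergence epochs (note the additive terms in $f_{0101}$, $f_{0110}$, $f_{1111}$), so ``which parameters divide which coordinate'' is not well defined for precisely the coordinates you would need. Until the per-pair invariants are actually produced, the proof is incomplete.
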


See Appendix~\ref{theoremdist} for the proof.

We have established the theoretical framework to accurately infer $4$-taxon CDMs. In the following sections we present algorithms to infer $N$-taxon CDMs from inferred $4$-taxon CDMs using combinatorial methods. The algorithms infer CDMs meeting the assumptions of Section~\ref{ass}, with the exception that the user can choose for polytomies of the principal tree to remain unresolved. We establish sufficient conditions for inference with these algorithms to be consistent.

In the first algorithm the topology of the $N$-taxon principal tree is inferred. Next, we infer the convergence groups. Recall that in order to be identifiable, parameters correspond to sections of edges of the principal tree, potentially spanning multiple epochs. Thus, although the convergence groups can be inferred, the specific epochs they belong to cannot. Next, since the epochs cannot be inferred, the partial order of convergence groups below the root is inferred. Note that in general only a partial order and not a total order can be inferred; some pairs of convergence groups involving disjoint converging taxon sets do not have an identifiable order. For convergence groups involving taxa corresponding with terminal edges of the principal tree, it is inferred whether the convergence group is in the tip epoch or not. Lastly, parameters are inferred to construct an $N$-taxon CDM. For all algorithms, any ties are settled at random with equal probabilities for all possible outcomes.

See Figure~\ref{schematic} for a schematic describing the process of inferring an $N$-taxon CDM from a dataset. Note that the schematic is simplified; the inference of convergence group partial orders and whether tip epochs involve convergence groups is not included. Furthermore, convergence parameters need to be adjusted if the sum of convergence parameters corresponding to an edge exceeds the edge length. We assume the input is a binary matrix, with an outgroup taxon and possibly with missing data. This could be an MSA restricted to a gene or genomic region, with the state space replaced by a binary state space, for example, the Watson-Crick base pairs $\left\{A,T\right\}\to\left\{0\right\}$ and $\left\{C,G\right\}\to\left\{1\right\}$. For MSAs, columns of the matrix are sites and rows are taxa. Alternatively, it could be ancestral/derived states or a gene presence/absence dataset. The algorithms use criteria, including a multiple comparisons correction, when inferring convergence groups on $4$-taxon CDMs to avoid overfitting. If convergence groups are falsely inferred on the $4$-taxon CDMs, then the combinatorial methods that follow can falsely infer convergence groups on the $N$-taxon CDM.

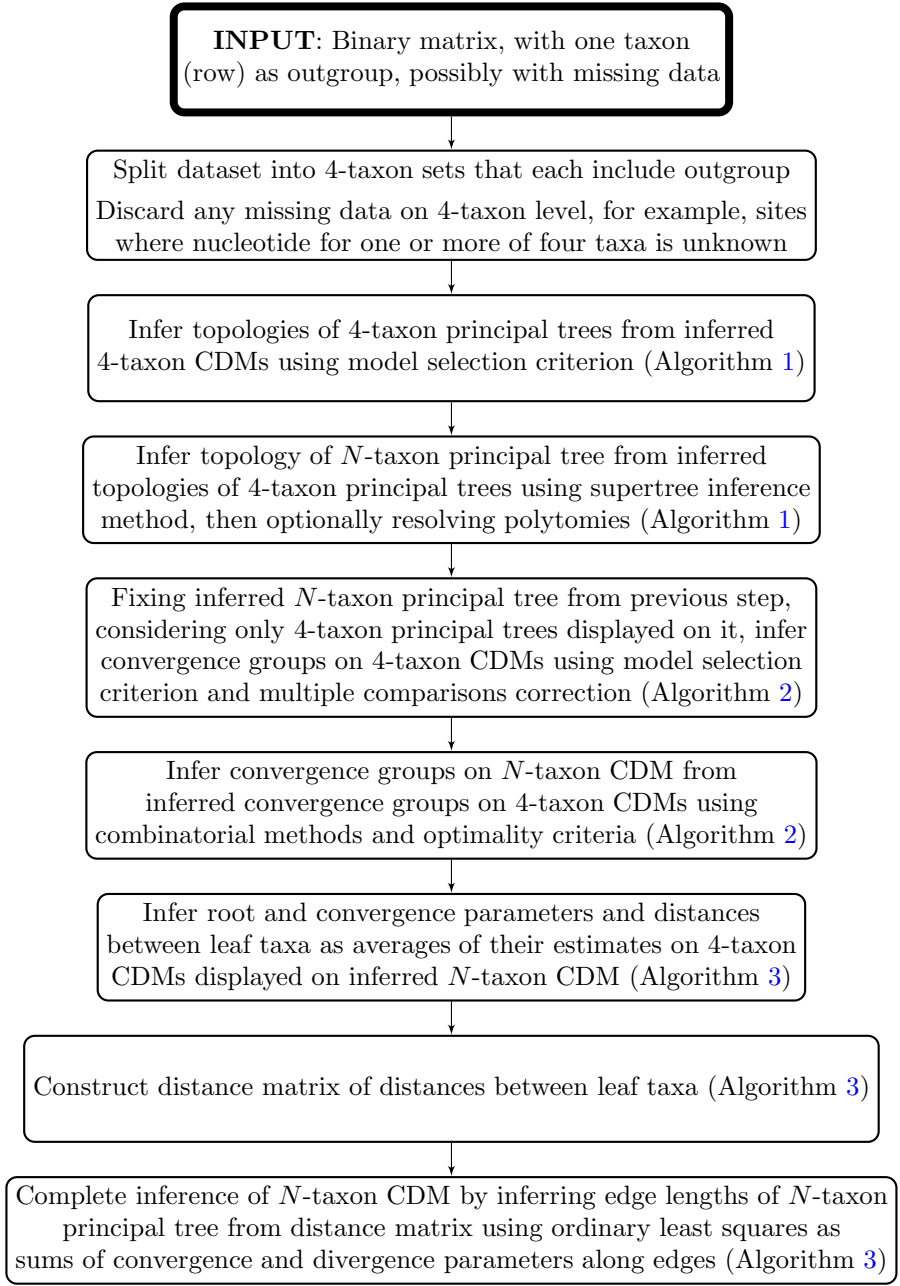
\begin{figure}[!htb]
\centering
\begin{tikzpicture}
[> = latex', auto,
block/.style ={rectangle,draw=black, line width=3pt,
align=flush center, rounded corners,
minimum height=4em},
block2/.style ={rectangle,draw=black, thick,
align=flush center, rounded corners,
minimum height=4em},
]
\node [block] at (0,0) (box1) {\textbf{INPUT}: Binary matrix, with one taxon \\
(row) as outgroup, possibly with missing data};
\node [block2,below=0.44 of box1] (box2) {Split dataset into $4$-taxon sets that each include outgroup \\
\vspace{-0.3cm} \\
Discard any missing data on $4$-taxon level, for example, sites \\
where nucleotide for one or more of four taxa is unknown};
\node [block2,below=0.44 of box2] (box3) {Infer topologies of $4$-taxon principal trees from inferred \\
$4$-taxon CDMs using model selection criterion (Algorithm~\ref{algorithmprincipaltree2})};
\node [block2,below=0.44 of box3] (box4) {Infer topology of $N$-taxon principal tree from inferred \\
topologies of $4$-taxon principal trees using supertree inference \\
method, then optionally resolving polytomies (Algorithm~\ref{algorithmprincipaltree2})};
\node [block2,below=0.44 of box4] (box5) {Fixing inferred $N$-taxon principal tree from previous step, \\
considering only $4$-taxon principal trees displayed on it, infer \\
convergence groups on $4$-taxon CDMs using model selection \\
criterion and multiple comparisons correction (Algorithm~\ref{algorithmtopCDM})};
\node [block2,below=0.44 of box5] (box6) {Infer convergence groups on $N$-taxon CDM from \\
inferred convergence groups on $4$-taxon CDMs using \\
combinatorial methods and optimality criteria (Algorithm~\ref{algorithmtopCDM})};
\node [block2,below=0.44 of box6] (box7)  {Infer root and convergence parameters and distances \\
between leaf taxa as averages of their estimates on $4$-taxon \\
CDMs displayed on inferred $N$-taxon CDM (Algorithm~\ref{algorithmmetCDM})};
\node [block2,below=0.44 of box7] (box8) {Construct distance matrix of distances between leaf taxa (Algorithm~\ref{algorithmmetCDM})};
\node [block2,below=0.44 of box8] (box9)  {Complete inference of $N$-taxon CDM by inferring edge lengths of $N$-taxon \\
principal tree from distance matrix using ordinary least squares as \\
sums of convergence and divergence parameters along edges (Algorithm~\ref{algorithmmetCDM})};
\draw[->] (box1) -- (box2);
\draw[->] (box2) -- (box3);
\draw[->] (box3) -- (box4);
\draw[->] (box4) -- (box5);
\draw[->] (box5) -- (box6);
\draw[->] (box6) -- (box7);
\draw[->] (box7) -- (box8);
\draw[->] (box8) -- (box9);
\end{tikzpicture}
\caption{The process of inferring an $N$-taxon CDM from an empirical dataset. All $4$-taxon trees and CDMs that include the outgroup are considered}
\label{schematic}
\end{figure}

\section{Inferring topologies of \texorpdfstring{$N$}{N}-taxon principal trees}
\label{topprincipaltree}

The first algorithm that we present infers the topology of the $N$-taxon principal tree. Algorithm~\ref{algorithmprincipaltree2} achieves this by inferring the topologies of all the $4$-taxon principal trees that include the outgroup, then building the the $N$-taxon principal tree from them.

We decompose a set of $N$ taxa into the $\binom{N-1}{3}$ $4$-taxon sets that include the outgroup taxon. For each $4$-taxon set, a CDM is selected from the $27$ candidates. From the set of $\binom{N-1}{3}$ $4$-taxon CDMs the topologies of the $4$-taxon principal trees are inferred, from which the topology of the $N$-taxon principal tree is inferred.

\bigskip

Inference of topologies of $4$-taxon principal trees is complicated by non-sister convergence. If not properly accounted for, non-sister convergence groups with large convergence parameters can lead to taxa descended from non-sister convergence groups being erroneously inferred as sister taxa.

Algorithm~\ref{algorithmprincipaltree2} addresses this issue in two ways. Firstly, consider a hypothetical generating CDM with principal tree $\left(o,\left(a,\left(b,c\right)\right)\right)$ and convergence group $\left\{\left\{a\right\},\left\{b\right\}\right\}$. Suppose that model selection is only performed on the three unrooted trees. Then if the convergence parameter is sufficiently large, the tree $\left(o,\left(c,\left(a,b\right)\right)\right)$ is incorrectly inferred with high probability; see Proposition~\ref{nonsistervtree} for a formal statement of this property. However, if model selection is also performed on the CDMs with non-sister convergence, then from the identifiability and distinguishability results of Section~\ref{ident}, the generating CDM is consistently inferred.

Secondly, in Algorithm~\ref{algorithmprincipaltree2} $4$-taxon sets where a single CDM fits much better than all others --- for example, according to the AIC or BIC --- are segregated from those sets where other CDMs have similar goodness of fit to the best fitting CDM. For a given $4$-taxon set, when a single CDM easily fits best, a single topology of the $4$-taxon principal tree is inferred. Otherwise, for a given $4$-taxon set, topologies of $4$-taxon principal trees on the best fitting $4$-taxon CDM and $4$-taxon CDMs with similar goodness of fit are retained for latter parts of the algorithm. A set of topologies of $N$-taxon supertrees is inferred from the retained topologies of $4$-taxon principal trees. From the set of supertrees, a consensus topology of the $N$-taxon principal tree is inferred.

\bigskip

The algorithm typically performs well when there are enough $4$-taxon sets with confidence in a single topology of the principal tree to accurately infer the topology of the $N$-taxon principal tree. However, if there are not enough such sets, the topology of the $N$-taxon principal tree may not be fully resolved, with soft polytomies. Although soft polytomies violate the first assumption of Section~\ref{ass}, it may be useful to a practitioner to choose to permit them.

Soft polytomies may be the result of non-sister convergence groups. They can arise when closely related taxa are converging on the CDM, further obscuring the already close relationships. Soft polytomies may also result from an inability to resolve phylogenetic relationships between closely related taxa despite there being no convergence. A principal tree with polytomies describes similar phylogenetic relationships to a tree of blobs \citep{allman2023tree,allman2024tinnik}. The tree of blobs represents the ``tree-like'' parts of a phylogenetic network, with the ``network-like'' parts represented by ``blobs'', which are contracted to nodes.

Finally, soft polytomies can be resolved if desired. To resolve them, we consider all topologies of $4$-taxon principal trees of $4$-taxon CDMs with the outgroup displayed on the inferred $N$-taxon principal tree. If a $4$-taxon principal tree has no polytomies, it is retained for the next part of the algorithm. Otherwise, the topology of the $4$-taxon principal tree is resolved by inferring a $4$-taxon CDM using the model selection criterion. Pairwise distances --- discussed in the next paragraph --- are then assigned to each pair of leaf taxa according to the topology of the $4$-taxon principal tree. Note that these distances are different to the definition in Section~\ref{defs}. The topology of a resolved $N$-taxon principal tree is then inferred by a clustering method --- for example, neighbor joining \citep{saitou1987neighbor} --- constrained to be a topology that can be obtained from the unresolved topology by resolving polytomies.

\bigskip

To assign pairwise distances between leaf taxa we require a tree metric. We apply the rooted triple metrization of \cite{rhodes2019topological} to the (rooted) $N$-taxon principal tree, making it ultrametric; the $N$-taxon principal tree need not have defined parameters for this step. For directed edge $e=\left(u,v\right)$, where $u$ is the parent of child $v$, the edge length is the number of descendants of $u$ minus the number of descendants of $v$; if $v$ is a tip node then it has one descendant. This parameter transformation does not influence inference of any other parts of the CDM, including parameters. The metrization is only used to infer the \emph{topology} of the $N$-taxon principal tree. Parameters are inferred in later algorithms.

The resulting tree metric is a slight modification of the tree metric $d_{RT}$ of \cite{rhodes2019topological} for rooted triples to $4$-taxon trees that all include the outgroup. For any two non-outgroup taxa, the distance is a simple linear function of the number of principal trees of displayed $4$-taxon CDMs with the outgroup where the two taxa are non-sisters --- twice this number plus two. If one of the two taxa is the outgroup, then the distance is twice the distance from the root to any leaf, $2\left(N-1\right)=2N-2$. The tree metric is described more formally in the following theorem.

\begin{theorem}[Distance on the topology of an $N$-taxon principal tree]
\label{dist}
Let $\mathcal{T}$ be a principal tree, with outgroup $o$. Suppose $\mathcal{T}$ is given the rooted triple metrization. Then the distance $d_\mathcal{T}\left(x,y\right)$ between leaf taxa $x$ and $y$ is
\begin{align*}
d_\mathcal{T}\left(x,y\right)=\begin{cases}
0 & \text{ if } $x=y$, \\
2N-2 & \text{ if $x\neq{}y$ and one of $x=o$, $y=o$}, \\
2\left|R_{x,y}\right|+2 & \text{ otherwise},
\end{cases}
\end{align*}
where $R_{x,y}$ is the set of rooted $4$-taxon principal trees displayed on $\mathcal{T}$ with outgroup $o$ displaying both $x$ and $y$, where $x$ and $y$ are non-sisters.

\end{theorem}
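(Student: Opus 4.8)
The plan is to verify the formula case by case, using the explicit edge lengths supplied by the rooted triple metrization and the fact that $\mathcal{T}$ is ultrametric after metrization. The distance $d_{\mathcal{T}}(x,y)$ is the sum of the metrization edge lengths along the path between $x$ and $y$. The case $x=y$ is immediate, so two cases remain.

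First I would handle the case where one of $x,y$ is the outgroup $o$. Since $\mathcal{T}$ is binary and rooted, the root $\rho$ has exactly two children: $o$ (a leaf) and the MRCA of the $N-1$ non-outgroup taxa. Under the metrization, the edge $(\rho,o)$ has length $(\text{descendants of }\rho) - (\text{descendants of }o) = N - 1$. The path from $o$ to any other leaf $y$ passes through $\rho$, and since the tree is now ultrametric, the distance from $\rho$ down to any leaf is the same constant; computing it along the $o$-side gives $N-1$. Hence $d_{\mathcal{T}}(o,y) = (N-1) + (N-1) = 2N-2$, which is the claimed value. (Alternatively one argues directly that for a rooted-triple metrization the root-to-leaf distance is always the number of non-root descendants minus one collapsed telescopically, but ultrametricity is the cleaner route.)

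The substantive case is $x,y$ both non-outgroup and distinct. Let $v$ be the MRCA of $x$ and $y$ in $\mathcal{T}$. Then $d_{\mathcal{T}}(x,y)$ is twice the distance from $v$ down to a leaf, by ultrametricity; equivalently it is twice (descendants of $\rho$ minus descendants of $v$), telescoping the metrization edge lengths along the path from $\rho$ to $v$ — each internal edge $(u,u')$ contributes $|\text{desc}(u)| - |\text{desc}(u')|$, and the sum telescopes to $|\text{desc}(\rho)| - |\text{desc}(v)| = N - |\text{desc}(v)|$, where $|\text{desc}(v)|$ counts leaves below $v$ including $o$'s absence; more carefully $|\text{desc}(\rho)| = N$ counting all $N$ leaves including $o$, and $|\text{desc}(v)| = |L_v|$ where $L_v$ is the set of leaves below $v$, none of which is $o$. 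So $d_{\mathcal{T}}(x,y) = 2(N - |L_v|)$. It therefore suffices to show $N - |L_v| = |R_{x,y}| + 1$, i.e. $|R_{x,y}| = N - 1 - |L_v|$. Now a rooted $4$-taxon principal tree displayed on $\mathcal{T}$ with outgroup $o$ and displaying $x,y$ is determined by a choice of a fourth taxon $z \notin \{o,x,y\}$, and the rooted quartet is $(o,(x,(y,z)))$ or a relabeling; $x$ and $y$ are non-sisters in the displayed quartet precisely when the MRCA of $\{x,y,z\}$ restricted to $\{x,y,z\}$ separates $z$ from one of $x,y$ at a deeper point — concretely, $x,y$ are sisters in the displayed quartet iff $z$ attaches outside the clade below $v$, and non-sisters iff $z$ lies strictly below $v$ (so $z \in L_v \setminus \{x,y\}$). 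Hence $|R_{x,y}| = |L_v \setminus \{x,y\}| = |L_v| - 2$. But this gives $d_{\mathcal{T}}(x,y) = 2(N - |L_v|)$ and $2|R_{x,y}| + 2 = 2|L_v| - 2$, which only agree when $N - |L_v| = |L_v| - 1$; that is false in general, so I have the displayedness condition backwards.

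Re-examining: the correct reading is that $x,y$ are \emph{non-sisters} in the rooted quartet on $\{o,x,y,z\}$ exactly when $z$ attaches on the path from $v$ up toward $\rho$ (so that $z$ joins $\{x,y\}$ before the lineage to $o$ does but after $x,y$ coalesce) — wait, that still makes $x,y$ sisters. The genuinely careful statement: in $(o,(\cdot,(\cdot,\cdot)))$ the two taxa forming the cherry are the sisters; $x,y$ are non-sisters iff $z$ coalesces with exactly one of $x,y$ before that one coalesces with the other, which happens iff $z$ lies strictly below $v$. So $|R_{x,y}| = |L_v| - 2$ after all, and the arithmetic mismatch means the edge-length/telescoping computation of $d_{\mathcal{T}}(x,y)$ must instead be done \emph{not} via ultrametricity of the metrized tree but via the modified metric the authors describe (“twice this number plus two” is the \emph{definition} being matched, and the content is that the $d_{RT}$-style metrization reproduces it). Concretely: I will compute $d_{\mathcal{T}}(x,y)$ directly as the sum of metrization edge lengths on the $x$-to-$y$ path, using that along a terminal edge the length is (size of parent's clade) $- 1$, and show term by term that this telescopes to $2(|L_v|-1) + 2 = 2|L_v|$... and then reconcile with $2|R_{x,y}|+2 = 2|L_v|-2$ by noting $R_{x,y}$ should instead be counted with a $+1$ from the quartet where the fourth taxon is effectively "above" — i.e. I expect the true identity is $d_{\mathcal{T}}(x,y) = 2|L_v|$ and $|R_{x,y}| = |L_v|-1$ once one also counts the degenerate/rooted contribution correctly.

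\textbf{The main obstacle} is exactly this bookkeeping: pinning down precisely which displayed rooted $4$-taxon trees place $x$ and $y$ as non-sisters, and checking that the count equals $|L_v| - 1$ (equivalently, that the number of non-outgroup taxa that are \emph{not} strictly below $\mathrm{MRCA}(x,y)$, together with the rooting, contributes the right $+2$). My plan is: (i) fix $v = \mathrm{MRCA}_{\mathcal{T}}(x,y)$ and let $k = |L_v|$; (ii) show by a direct telescoping of the rooted-triple edge lengths along the $x$–$y$ path that $d_{\mathcal{T}}(x,y)$ depends only on $k$; (iii) show combinatorially that $|R_{x,y}|$ depends only on $k$ (it is the number of choices of a fourth taxon $z$, with outgroup fixed as $o$, for which the induced rooted quartet has $x,y$ non-sisters — this is governed purely by whether $z$ is below $v$ or not); (iv) equate the two linear-in-$k$ expressions and confirm the constants, invoking the ultrametric normalization root-to-leaf $= N-1$ to fix the additive constant. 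The outgroup case is the base case that calibrates the constant. I expect (ii) and (iii) to each be short once the conventions are fixed, with all the subtlety concentrated in getting the sister/non-sister dichotomy and the rooting convention to agree with the stated "$2|R_{x,y}|+2$".
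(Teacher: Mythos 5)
Your overall strategy is the same as the paper's (MRCA $v$, telescoping the metrization edge lengths, counting choices of the fourth taxon $z$), and your combinatorial count was in fact correct: $x$ and $y$ are non-sisters in the displayed quartet on $\left\{o,x,y,z\right\}$ precisely when $z$ lies strictly below $v=\mathrm{MRCA}(x,y)$, so $\left|R_{x,y}\right|=\left|L_v\right|-2$. The genuine error is in the distance computation, and your subsequent attempt to "reconcile" pushes you away from the fix rather than toward it. You telescoped the metrization edge lengths along the path from the root $\rho$ \emph{down to} $v$, obtaining $N-\left|L_v\right|$, and then doubled that. But the $x$--$y$ path does not pass through the root; it goes from $x$ up to $v$ and back down to $y$. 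By ultrametricity, $d_\mathcal{T}\left(x,y\right)$ is twice the distance from $v$ down to a leaf, and telescoping $\left|\mathrm{desc}(u)\right|-\left|\mathrm{desc}(u')\right|$ along that downward path gives $\left|L_v\right|-1$ (descendants of $v$ minus descendants of a leaf, which is $1$), not $N-\left|L_v\right|$. Hence $d_\mathcal{T}\left(x,y\right)=2\left(\left|L_v\right|-1\right)=2\left(\left|L_v\right|-2\right)+2=2\left|R_{x,y}\right|+2$, exactly as claimed, with no modification to either count needed. (Sanity check: root-to-$v$ plus $v$-to-leaf is $\left(N-\left|L_v\right|\right)+\left(\left|L_v\right|-1\right)=N-1$, the root-to-leaf distance you correctly used in the outgroup case.)

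Your proposed repairs --- conjecturing $d_\mathcal{T}\left(x,y\right)=2\left|L_v\right|$ and $\left|R_{x,y}\right|=\left|L_v\right|-1$, or adding a phantom "degenerate" quartet with the fourth taxon above $v$ --- would both be wrong: a fourth taxon attaching at or above $v$ on the non-outgroup side makes $x$ and $y$ sisters in the displayed quartet, so it contributes nothing to $R_{x,y}$. Once the telescoping is done from $v$ downward rather than from $\rho$ downward, the proof closes and coincides with the paper's argument, which invokes the same $2k-2$ formula (with $k=\left|L_v\right|$) from the rooted triple metrization of \cite{rhodes2019topological} and the same count $\left|R_{x,y}\right|=k-2$.
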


See Appendix~\ref{theorem6} for the proof.

\bigskip

Algorithm~\ref{algorithmprincipaltree2} for inferring the topology of the $N$-taxon principal tree follows after inputting the data. The data is a vector of counts of the distinct random variables, for example, the counts in an MSA of the $2^N$ combinations of states across the $N$ leaf taxa. The algorithm computes maximum likelihood values for the $27$ CDMs for each $4$-taxon set that includes the outgroup. We output the matrix of model selection values $\boldsymbol{M}$ from Algorithm~\ref{algorithmprincipaltree2} as it is also used in algorithms that follow.

\begin{algorithm}
\caption{Inference of topology of $N$-taxon principal tree}
\textbf{Input: }Vector $\boldsymbol{F}$ of counts of $2^N$ state combinations across $N$ leaf taxa and tolerance $\tau>0$.
\begin{enumerate}[label*=\arabic*.]
\item Initialize empty list of topologies of inferred $4$-taxon principal trees $T_Q$. Initialize empty vector of model selection criterion values $\boldsymbol{M}$.
\item For each $4$-taxon set that includes outgroup $o$:
\begin{enumerate}[label*=\arabic*.]
\item From $\boldsymbol{F}$, tally counts of $2^4$ state combinations $ijkl$, $i,j,k,l\in\left\{0,1\right\}$.
\item Compute model selection criterion values --- for example, AIC or BIC --- for all $27$ leaf-labeled CDMs, subtracting $c$ from each value so minimum is $0$. Append vector of values to $\boldsymbol{M}$ as bottom row.
\item Append topologies of $4$-taxon principal trees of CDMs with model selection criterion values below $\tau$ to $T_Q$.
\end{enumerate}
\item Use consistent supertree inference method to infer set of topologies of $N$-taxon principal trees $S$ from $T_Q$.
\item Infer consensus tree $\widehat{\mathcal{T}}'$ from $S$, rooting with $o$.
\item If $\widehat{\mathcal{T}}'$ is not resolved, either set $\widehat{\mathcal{T}}=\widehat{\mathcal{T}}'$ and terminate algorithm or resolve:
\begin{enumerate}[label*=\arabic*.]
\item Initialize empty list of topologies of $4$-taxon principal trees $T_Q'$.
\item For each $4$-taxon set that includes $o$:
\begin{enumerate}[label*=\arabic*.]
\item If topology of $4$-taxon principal tree displayed on $\widehat{\mathcal{T}}'$ is resolved, append to $T_Q'$. Otherwise, resolve by selecting CDM via model selection criterion, then append to $T_Q'$.
\end{enumerate}
\item Compute distance matrix $\boldsymbol{D}$ from $T_Q'$ using tree metric $d_\mathcal{T}$.
\item Infer $\widehat{\mathcal{T}}$ from $\boldsymbol{D}$ using consistent clustering method, constraining $\widehat{\mathcal{T}}$ to be binary tree displayed on $\widehat{\mathcal{T}}'$ after a resolution of polytomies, rooting with $o$.
\end{enumerate}
\end{enumerate}
\textbf{Output: }Topology of $N$-taxon principal tree $\widehat{\mathcal{T}}$ and $\binom{N-1}{3}\times{}27$ matrix of model selection criterion values $\boldsymbol{M}$.
\label{algorithmprincipaltree2}
\end{algorithm}

A model selection criterion --- for example, the AIC \citep{akaike1974new} or BIC \citep{schwarz1978estimating} --- is used to select a $4$-taxon CDM for each $4$-taxon set. For iid random variables from a regular exponential family, if the BIC is used and the generating model is among the candidate models then in the limit as the sample size increases the probability of selecting the generating model converges to $1$ \citep{haughton1988choice}.

We cannot assume the $4$-taxon CDM displayed on the generating $N$-taxon CDM is among the candidate models since the $4$-taxon CDM may have sister convergence groups. However, for an arbitrary $4$-taxon set that includes the outgroup, we can establish a consistency result if we assume all convergence parameters on the $N$-taxon CDM are sufficiently ``small'' and the BIC is used for model selection. We prove that as the sample size increases the probability of the topology of the principal tree of the inferred $4$-taxon CDM being identical to the topology of the principal tree of the $4$-taxon CDM displayed on the generating $N$-taxon CDM converges to $1$. Thus, it follows that as the sample size increases the probability of the inferred topology of the $N$-taxon principal tree being identical to the topology of the principal tree of the generating $N$-taxon CDM converges to $1$.

\begin{theorem}
\label{principaltreeconsistent}

Suppose CDM $\mathcal{N}$ has topology of principal tree $\mathcal{T}$. Suppose the BIC is used for model selection in step $2$ of Algorithm~\ref{algorithmprincipaltree2}. Suppose $\widehat{\mathcal{T}}$ is the estimate of $\mathcal{T}$ inferred by Algorithm~\ref{algorithmprincipaltree2}. Then there exists some constant $c>0$ such that if the largest convergence parameter of $\mathcal{N}$ is less than $c$,
\begin{align*}
\lim_{n\to\infty}\mathbb{P}\left(\widehat{\mathcal{T}}=\mathcal{T}\right)=1.
\end{align*}

\end{theorem}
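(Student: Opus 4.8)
The plan is to reduce the statement to a per‑quartet consistency claim and then to propagate it through the supertree and consensus steps of Algorithm~\ref{algorithmprincipaltree2}. For a $4$-taxon set $Q\ni o$ write $\mathcal{T}|_Q$ for the quartet topology induced on $\mathcal{T}$ and $\mathcal{N}_Q$ for the $4$-taxon CDM displayed on $\mathcal{N}$ by $Q$; its principal tree is precisely $\mathcal{T}|_Q$, and this quartet is resolved because $\mathcal{T}$ is binary with strictly positive edge lengths. I will produce a constant $c>0$, depending only on $\mathcal{T}$, the divergence and root parameters of $\mathcal{N}$, and the combinatorial convergence structure of $\mathcal{N}$, such that if every convergence parameter of $\mathcal{N}$ is below $c$ then, for each $Q$, with probability tending to $1$ the only $4$-taxon principal-tree topology appended to $T_Q$ in step~$2.3$ is $\mathcal{T}|_Q$. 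Granting this, with probability tending to $1$ the list $T_Q$ is exactly the set of quartet topologies displayed on $\mathcal{T}$ through $o$; these determine $\mathcal{T}$, so a consistent supertree method returns $S=\{\mathcal{T}\}$, the consensus $\widehat{\mathcal{T}}'$ equals $\mathcal{T}$ and is resolved, the resolution branch of step~$5$ is not entered, and $\widehat{\mathcal{T}}=\mathcal{T}$.

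Fix $Q$. Going through the assumptions of Section~\ref{ass} and how they interact with restriction, one checks that $\mathcal{N}_Q$ either coincides with one of the $27$ leaf-labelled candidate CDMs of Figure~\ref{4taxonCDMs2}, or else carries at least one sister convergence group on the cherry of $\mathcal{T}|_Q$ (which happens exactly when a non-sister convergence group of $\mathcal{N}$ collapses onto a cherry after restriction). Suppose first $\mathcal{N}_Q$ is a candidate. Restriction sends generic parameters of $\mathcal{N}$, in particular those meeting the smallness constraint, to generic parameters of $\mathcal{N}_Q$, so by Assumption~\ref{genparam} the generating $4$-taxon distribution $p^\ast$ is a generic point of the parameter space of $\mathcal{N}_Q$; then Proposition~\ref{CDMs} (which makes the unrooted principal-tree topology recoverable) together with Theorem~\ref{disttheorem} shows that among candidates of dimension at most that of $\mathcal{N}_Q$ only $\mathcal{N}_Q$ contains $p^\ast$, and no wrong-topology candidate of dimension at most that of $\mathcal{N}_Q$ does. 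Since the data are iid from a regular exponential family (by the iid assumption and Assumption~\ref{genparam}), BIC consistency for such families \citep{haughton1988choice} gives that, as $n\to\infty$, the BIC-optimal candidate is $\mathcal{N}_Q$ with probability tending to $1$; every other candidate has shifted model-selection value diverging (at rate $n$ if its Kullback--Leibler projection of $p^\ast$ is positive, at rate $\log n$ otherwise), so it eventually exceeds the fixed tolerance $\tau$, while any candidate tying with $\mathcal{N}_Q$ in both projection and dimension shares its topology. Hence only $\mathcal{T}|_Q$ is appended.

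Now suppose $\mathcal{N}_Q$ carries a sister convergence group; then $\mathcal{N}_Q$ is not a candidate, and this is where smallness is used. Letting the sister convergence parameter(s) of $\mathcal{N}_Q$ tend to $0$, $p^\ast$ converges to a distribution $p_0$ of a candidate CDM with the correct principal-tree topology $\mathcal{T}|_Q$ (the pure tree model, or the model keeping only the remaining non-sister convergence groups); because the divergence, root and remaining convergence parameters are generic, $p_0$ is a generic point of that candidate's parameter space, so by Theorem~\ref{disttheorem} and Proposition~\ref{CDMs} the number $\delta_0:=\min_M\inf_{p\in\Theta_M}D(p_0\,\|\,p)$, with the minimum over candidates $M$ of a wrong topology and $D$ the Kullback--Leibler divergence, is strictly positive. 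By continuity of the Kullback--Leibler projection there is $c_Q>0$ such that whenever the sister convergence parameter(s) are below $c_Q$ one has $\inf_{p\in\Theta_M}D(p^\ast\,\|\,p)>\tfrac12\delta_0$ for every wrong-topology candidate $M$ while $\min_M\inf_{p\in\Theta_M}D(p^\ast\,\|\,p)<\tfrac14\delta_0$. As BIC asymptotically selects (with probability tending to $1$) a candidate minimizing this projection, the dimension penalty being of lower order and breaking only exact ties, the selected candidate has projection below $\tfrac14\delta_0$ and hence the correct topology; the same divergence-rate argument shows wrong-topology candidates eventually exceed $\tau$, so again only $\mathcal{T}|_Q$ is appended. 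Taking $c$ to be the minimum of the $c_Q$ over the finitely many quartets for which $\mathcal{N}_Q$ carries a sister convergence group (any positive value if there is none), and noting that every convergence parameter of $\mathcal{N}_Q$ is inherited from $\mathcal{N}$ and hence below $c\le c_Q$, completes the reduction and the proof.

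The main obstacle is the sister-convergence case: one must quantify how far a sister convergence group on a displayed $4$-taxon CDM moves the site-pattern distribution away from a tree-like (or reduced-convergence) distribution of the correct topology, argue that for small convergence parameters this displacement stays below the Kullback--Leibler gap $\delta_0$ separating the correct quartet from every incorrect one, and make this uniform so that a single threshold $c$ serves all $\binom{N-1}{3}$ quartets. Secondary points needing care are (i) the enumeration showing every displayed $4$-taxon CDM is either a candidate or carries a sister convergence group, (ii) the behaviour of BIC under mild misspecification --- that it selects a Kullback--Leibler-closest candidate with the dimension penalty breaking ties, the standard strengthening of the Schwarz/Haughton result --- and (iii) checking that restriction sends generic $N$-taxon parameters to generic $4$-taxon parameters. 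The smallness hypothesis is precisely what excludes the failure mode of Proposition~\ref{nonsistervtree}, where a large convergence parameter makes converging taxa mimic a sister pair of the wrong topology.
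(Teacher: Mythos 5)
Your proposal is correct and follows essentially the same route as the paper: per-quartet BIC consistency when the generating distribution is a small perturbation (from suppressing sister convergence) of a generic point in a correct-topology candidate's parameter space, combined with the identifiability/distinguishability results, a union bound over quartets, and Steel's quartet-supertree identifiability to assemble $\mathcal{T}$. The paper packages your ``BIC under mild misspecification selects a Kullback--Leibler-closest candidate'' step as an explicit adaptation of Haughton's Proposition~1.2 (Proposition~\ref{conj}), but the underlying perturbation argument is the same as yours.
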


See Appendix~\ref{toptreeconsistency} for a discussion on preliminary results required for the proof and Appendix~\ref{theorem7} for a proof.

Roughly, Theorem~\ref{principaltreeconsistent} ensures that if all convergence parameters are sufficiently ``small'', then the generating parameter is a small perturbation from a generic point in the parameter space of the $4$-taxon CDM that results from suppressing any sister convergence groups of the $4$-taxon CDM displayed on the generating CDM. Thus, the topology of the principal tree is inferred consistently. Such an assumption is reasonable if convergence only happens over short time periods.

\bigskip

Algorithm~\ref{algorithmprincipaltree2} allows inference of the topology of the $N$-taxon principal tree to be informed more heavily by the more confidently inferred $4$-taxon CDMs. These CDMs tend to have no convergence or convergence between taxa not closely related on the $N$-taxon CDM.

Since the topology of the $N$-taxon principal tree describes the tree-like part of the CDM, one may be tempted to ignore non-tree CDMs entirely when inferring it. That is, to perform model selection on only the three trees for each $4$-taxon set that includes the outgroup taxon. However, if there is substantial non-sister convergence, it is likely some topologies of $4$-taxon principal trees will be inferred incorrectly. In turn, leaf taxa descended from the same non-sister convergence group may be erroneously inferred as a clade on the inferred principal tree.

Even using Algorithm~\ref{algorithmprincipaltree2}, inference errors may be unavoidable if convergence parameters are sufficiently large relative to sample size. See Figure~\ref{equivalent4taxonCDMs} for an example of three CDMs with identical sets of possible phylogenetic tensors in the limit as some epoch lengths converge to $0$ or diverge to $\infty$. That is, the phylogenetic tensor as a function of the parameters of the CDM is restricted by taking the limit of some of the parameters.

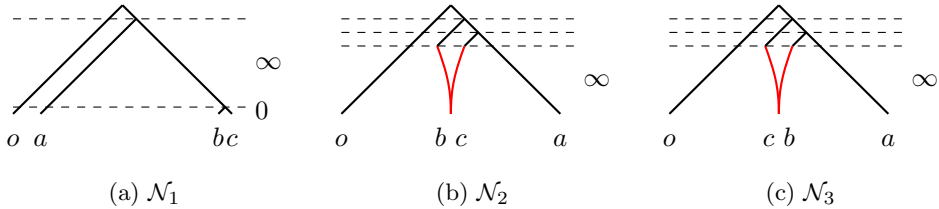
\begin{figure}[!htb]
\centering
\hspace*{\fill}
\begin{subfigure}{0.32\linewidth}
\centering
\begin{tikzpicture}[scale=0.18]
\draw[thick] (0,0) -- (-8,-8);
\draw[thick] (0,0) -- (8,-8);
\draw[thick] (1,-1) -- (-6,-8);
\draw[thick] (7.5,-7.5) -- (7,-8);
\node[below] at (-8,-8) {\strut{$o$}};
\node[below] at (-6,-8) {\strut{$a$}};
\node[below] at (7,-8) {\strut{$b$}};
\node[below] at (8,-8) {\strut{$c$}};
\draw[dashed] (-8,-1) -- (9,-1);
\draw[dashed] (-8,-7.5) -- (9,-7.5);
\node[right] at (9,-4.25) {$\infty$};
\node[right] at (9,-7.75) {$0$};
\end{tikzpicture}
\caption{$\mathcal{N}_1$}
\label{con1}
\end{subfigure}
\hfill
\begin{subfigure}{0.32\linewidth}
\centering
\begin{tikzpicture}[scale=0.18]
\draw[thick] (0,0) -- (-8,-8);
\draw[thick] (0,0) -- (8,-8);
\draw[thick] (1,-1) -- (-1,-3);
\draw[thick] (2,-2) -- (1,-3);
\draw[thick,red] (-1,-3) to[out=290,in=90] (0,-8);
\draw[thick,red] (1,-3) to[out=250,in=90] (0,-8);
\node[below] at (-8,-8) {\strut{$o$}};
\node[below] at (0,-8) {\strut{$b$ $c$}};
\node[below] at (8,-8) {\strut{$a$}};
\draw[dashed] (-8,-1) -- (9,-1);
\draw[dashed] (-8,-2) -- (9,-2);
\draw[dashed] (-8,-3) -- (9,-3);
\node[right] at (9,-5.5) {$\infty$};
\end{tikzpicture}
\caption{$\mathcal{N}_2$}
\label{con2}
\end{subfigure}
\hfill
\begin{subfigure}{0.32\linewidth}
\centering
\begin{tikzpicture}[scale=0.18]
\draw[thick] (0,0) -- (-8,-8);
\draw[thick] (0,0) -- (8,-8);
\draw[thick] (1,-1) -- (-1,-3);
\draw[thick] (2,-2) -- (1,-3);
\draw[thick,red] (-1,-3) to[out=290,in=90] (0,-8);
\draw[thick,red] (1,-3) to[out=250,in=90] (0,-8);
\node[below] at (-8,-8) {\strut{$o$}};
\node[below] at (0,-8) {\strut{$c$ $b$}};
\node[below] at (8,-8) {\strut{$a$}};
\draw[dashed] (-8,-1) -- (9,-1);
\draw[dashed] (-8,-2) -- (9,-2);
\draw[dashed] (-8,-3) -- (9,-3);
\node[right] at (9,-5.5) {$\infty$};
\end{tikzpicture}
\caption{$\mathcal{N}_3$}
\label{con3}
\end{subfigure}
\hspace*{\fill}
\caption{$4$-taxon CDMs $\mathcal{N}_1$, $\mathcal{N}_2$ and $\mathcal{N}_3$, with identical sets of possible phylogenetic tensors in the limit that epoch lengths labeled $0$ and $\infty$ converge or diverge to $0$ or $\infty$}
\label{equivalent4taxonCDMs}
\end{figure}

\begin{proposition}
\label{nonsistervtree}
Let CDM $\mathcal{N}_i$ have topology of principal tree $\mathcal{T}_i$ and $t_{i,j}$ be the epoch length of epoch $j$. Let $\mathcal{N}_1$ be the tree, with $\mathcal{T}_1=\left(o,\left(a,\left(b,c\right)\right)\right)$, $t_{1,2}\to\infty$ and $t_{1,3}\to{}0$. Let $\mathcal{N}_2$ and $\mathcal{N}_3$ be CDMs with $\mathcal{T}_2=\left(o,\left(b,\left(a,c\right)\right)\right)$ and $\mathcal{T}_3=\left(o,\left(c,\left(a,b\right)\right)\right)$, each with a non-sister convergence group $\left\{\left\{b\right\},\left\{c\right\}\right\}$ in the tip epoch, with $t_{2,4},t_{3,4}\to\infty$. Then the sets of possible phylogenetic tensors of $\mathcal{N}_1$, $\mathcal{N}_2$ and $\mathcal{N}_3$ converge to the same set.
\end{proposition}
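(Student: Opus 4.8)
The plan is to prove a bit more than is asked: I will show that in each of the three prescribed degenerations the phylogenetic tensor converges to one and the same explicit tensor, and that the only parameter surviving the limit is the ratio $\alpha/\beta$. Writing $\pi=\left(\pi_0,\pi_1\right)=\left(\frac{\beta}{\alpha+\beta},\frac{\alpha}{\alpha+\beta}\right)$ for the stationary distribution and indexing the four leaf taxa in the order $\left(o,a,b,c\right)$, let $\boldsymbol{P}^{*}$ be the tensor with $\left[\boldsymbol{P}^{*}\right]_{ijkl}=\pi_i\pi_j\pi_k$ if $k=l$ and $\left[\boldsymbol{P}^{*}\right]_{ijkl}=0$ otherwise. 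With every parameter other than the indicated epoch lengths held fixed but arbitrary, I claim the phylogenetic tensor of each of $\mathcal{N}_1,\mathcal{N}_2,\mathcal{N}_3$ tends to $\boldsymbol{P}^{*}$; since $\alpha/\beta$ ranges over $\left(0,\infty\right)$, the limiting set in each case is $\left\{\boldsymbol{P}^{*}:\pi_1\in\left(0,1\right)\right\}$, so the three sets coincide (and so do their closures).

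For $\mathcal{N}_2$ and $\mathcal{N}_3$ the work is in the tip epoch, where the processes in the distinct convergence--divergence groups $\left\{o\right\}$, $\left\{a\right\}$ and the convergence group $\left\{\left\{b\right\},\left\{c\right\}\right\}$ are independent, so the epoch transition matrix is the Kronecker product of the three corresponding exponentials. The tip-epoch sections of the $o$ and $a$ edges carry divergence parameters proportional to the epoch length $t_{2,4}$, so as $t_{2,4}\to\infty$ each of those $2\times2$ factors (a reversible $2$-state generator with stationary vector $\pi$) tends to the rank-one matrix whose two columns both equal $\pi$; the final states of $o$ and of $a$ thus become independent of everything else and stationary. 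For the convergence group I would use the explicit $4\times4$ rate matrix of Proposition~\ref{p1} restricted to $\left\{b,c\right\}$: the span of the coordinate vectors indexed by $00$ and $11$ is invariant under it, on that subspace the generator is the $2$-state chain with null vector $\left(\pi_0,\pi_1\right)^{T}$, and the two remaining eigenvalues equal $-\left(\alpha+\beta\right)<0$; hence $\exp\!\left(\boldsymbol{Q}t\right)$ converges to the matrix all of whose columns equal $\left(\pi_0,0,0,\pi_1\right)^{T}$. So $b$ and $c$ become identical (as already guaranteed by Theorem~\ref{convedges}) with common state stationary and independent of its initial distribution, and the tip tensor converges to $\pi\otimes\pi\otimes\left(\pi_0,0,0,\pi_1\right)^{T}=\boldsymbol{P}^{*}$. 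The case $\mathcal{N}_3$ is identical after swapping the drawn positions of $b$ and $c$.

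For $\mathcal{N}_1$ there is no convergence group. Taking $t_{1,3}\to0$ makes the tip epoch contribute no divergence to the two edges created by the last speciation, so $b$ and $c$ remain the identical copies produced by the splitting operator and $k=l$ holds with probability one. In the epoch of length $t_{1,2}$ the three lineages present --- the $o$ pendant edge, the $a$ pendant edge, and the internal edge ancestral to $\left\{b,c\right\}$ --- lie in three distinct divergence groups, so again the transition matrix is a Kronecker product and each $2\times2$ factor has divergence parameter proportional to $t_{1,2}$; as $t_{1,2}\to\infty$ each tends to the rank-one matrix with both columns $\pi$. Thus the states of $o$, of $a$, and of the $\left\{b,c\right\}$-ancestor become mutually independent and stationary, the zero-length tip epoch merely copies the ancestral state into $b$ and $c$, and the resulting joint law is exactly $\boldsymbol{P}^{*}$.

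Assembling the three computations, every limit equals $\boldsymbol{P}^{*}=\boldsymbol{P}^{*}\!\left(\pi_1\right)$ with no other freedom remaining, so the three limiting sets are all $\left\{\boldsymbol{P}^{*}\!\left(\pi_1\right):\pi_1\in\left(0,1\right)\right\}$. The only step that is not routine bookkeeping is the spectral analysis of the convergence-group generator --- verifying the invariance of the $00$--$11$ subspace, identifying the null vector, and checking that the remaining eigenvalues have strictly negative real part so that the exponential has the stated rank-one limit; the Kronecker-product factorization of an epoch's transition matrix across convergence--divergence groups and the ergodic limit of a single reversible $2$-state generator are standard, and deciding which edge sections are traversed by the diverging epoch length is read directly off Figure~\ref{equivalent4taxonCDMs}.
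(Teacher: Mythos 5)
Your proof is correct, but it takes a genuinely different route from the paper's. The paper works entirely in the Hadamard basis: it specializes the precomputed monomial parameterization $f_{ijkl}\left(x_1,\ldots,x_{11}\right)$ of CDM $5$ to each of $\mathcal{N}_1$, $\mathcal{N}_2$, $\mathcal{N}_3$ by setting the absent sections' parameters to $1$, then sends the indicated $x_i$ to $0$ or $1$ and checks termwise that $f_{0011}\to1$ while every other $f_{ijkl}\to0$ in all three cases, so the limiting transformed tensors agree and depend only on $\gamma$. You instead argue directly on the probability tensor: you factor each epoch's transition matrix as a Kronecker product over convergence-divergence groups, invoke the ergodic limit of the reversible $2\times2$ generators for the diverging pendant sections, and carry out a spectral analysis of the $4\times4$ convergence-group generator (invariance of the $00$--$11$ subspace, null vector $\left(\pi_0,0,0,\pi_1\right)^T$, remaining eigenvalues equal to $-\left(\alpha+\beta\right)$) to get the rank-one limit. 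I checked the generator and its spectrum against Proposition~\ref{p1} and your computation is right; your explicit limit $\boldsymbol{P}^{*}$, with $o$, $a$ and the identical pair $\left\{b,c\right\}$ independent and stationary, corresponds exactly to the paper's $r_{0011}=1$, all other $r$'s zero, with the surviving parameter $\pi_1$ in bijection with the paper's $\gamma$. What your approach buys is self-containedness and an explanation of \emph{why} the three limits coincide (everything except the $b=c$ constraint is washed out to stationarity), at the cost of some care with the Kronecker ordering; the paper's approach is a mechanical check that leans on its supplementary computer algebra but slots directly into the identifiability machinery it has already built.
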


See Appendix~\ref{equivalentCDMs} for the proof.

Proposition~\ref{nonsistervtree} is intuitive; from Theorem~\ref{convedges}, converging taxa are identical in the limit. Thus, in the limit, for all three CDMs the phylogenetic tensors are those of the tree $\left(o,\left(a,\left(b,c\right)\right)\right)$, where the epoch length of the tip epoch converges to $0$ and the epoch length of the epoch before it diverges to infinity. A similar proposition results if we consider the parametrization in terms of convergence and divergence parameters in Section~\ref{cdmsdatagen}. The proposition involves the convergence parameters of the non-tree CDMs diverging to infinity and the appropriate divergence parameters of the tree either converging to $0$ or diverging to infinity.

Although large convergence parameters leading to incorrect inference of the topology of the $4$-taxon principal tree may seem suboptimal, the limiting property is intuitive. For a finite sample size, a sufficiently large epoch length of the tip epoch on $\mathcal{N}_2$ or $\mathcal{N}_3$ results in those converging taxa having identical states --- for example, identical sequence alignments --- with probability arbitrarily close to $1$. Thus, the converging taxa appear identical and tree $\mathcal{N}_1$ fits well. In cases of similar likelihoods, $\mathcal{N}_1$ is supported more than $\mathcal{N}_2$ or $\mathcal{N}_3$ by model selection procedures since it has fewer parameters. Thus, $\mathcal{N}_1$ is erroneously inferred. The larger the epoch length of the tip epoch of $\mathcal{N}_2$ or $\mathcal{N}_3$, the larger the sample size needs to be for correct inference of the topology of the $4$-taxon principal tree with high probability.

\section{Inferring convergence groups on \texorpdfstring{$N$}{N}-taxon CDMs}
\label{topCDM}

We cannot generally identify all aspects of the $N$-taxon CDM. With epoch lengths not being identifiable, we cannot identify which epochs convergence groups belong to. However, each edge of the $N$-taxon principal tree uniquely defines a set of leaf taxa descended from it and each inferred $4$-taxon CDM uniquely defines sets of converging leaf taxa. Thus, we can identify the convergence groups of the $N$-taxon principal tree by matching the sets of converging leaf taxa on inferred $4$-taxon CDMs to edges of the $N$-taxon principal tree they descend from. Inference of convergence groups is achieved by tallying leaf taxa descended from convergence groups of inferred $4$-taxon CDMs and finding an $N$-taxon CDM with similar counts of converging leaf taxa in its convergence groups. Furthermore, we can infer relative orders of some convergence groups. Suppose $C_1=\left\{c_{1,a},c_{1,b}\right\}$ and $C_2=\left\{c_{2,a},c_{2,b}\right\}$ are convergence groups, where $c_{1,a}$, $c_{1,b}$, $c_{2,a}$ and $c_{2,b}$ are sets of taxa. If $c_{2,a}\subset{}c_{1,a}$, then $C_1$ must be in an epoch before $C_2$. Furthermore, since CDMs $4$ and $5$ of Section~\ref{4taxaidentifiability} both have two convergence groups in separate epochs, it is possible to infer relative orders of some convergence groups from the inferred $4$-taxon CDMs. Thus, we can infer a partial order on the convergence groups.

The next algorithm infers the convergence groups of the $N$-taxon CDM from the inferred convergence groups of $4$-taxon CDMs that include the outgroup and have a principal tree that is displayed on the $N$-taxon principal tree. The $N$-taxon principal tree may have polytomies. Recall that we do not permit any sister convergence groups. Thus, we do not consider any convergence groups on the $N$-taxon CDM involving edges that have a polytomous node as their shared parent node. This corresponds to no convergence groups on displayed $4$-taxon CDMs whose non-outgroup leaf taxa are all sisters --- $4$-taxon CDMs whose topology of principal tree is $\left(o,\left(a,b,c\right)\right)$, where $o$ is the outgroup. Thus, since for this algorithm we are only inferring convergence groups, these $4$-taxon sets are not considered in the algorithm.

For each $4$-taxon set to consider, we compare the nine leaf-labeled CDMs which have the required topology of the $4$-taxon principal tree (one from CDM~$1$ and two each from CDMs~$2$-$5$). We then select a $4$-taxon CDM with a model selection criterion.

We construct a matrix of ``observed'' proportions of converging quartets. Each element corresponds to a pair of leaf taxa. For each pair, we tally the inferred $4$-taxon CDMs where the pair are converging in the same non-sister convergence group (``converging quartets'') and divide by the number of $4$-taxon CDMs (``quartets'') displaying both taxa ($N-3$).

We compare the matrix of observed proportions of converging quartets to corresponding ``expected'' matrices for proposed CDMs. We select a proposed CDM with a similar expected matrix to the observed matrix. To do this, we introduce convergence groups one at a time that minimize the sum of squared differences between the observed and expected matrices. We ensure tolerance criteria are met to avoid ``overfitting'' the CDM with too many convergence groups. See Appendix~\ref{overfitting} for a discussion on avoiding overfitting the CDM.

As previously discussed, each introduced convergence group potentially induces a partial order on principal tree nodes and edges. We update partial orders when a convergence group is appended and do not consider convergence groups corresponding to edges where one is after the other. The algorithm terminates when no non-sister convergence groups not already on the CDM satisfy the assumptions of Section~\ref{ass}, the tolerance criteria and the partial orders and decrease the sum of squared differences.

\bigskip

To compute the expected proportions of converging quartets, we use the following proposition.

\begin{proposition}
\label{convratios}
For convergence group $C=\left\{c_1,c_2\right\}$ on CDM $\mathcal{N}$, let $a\in{}c_1$ and $b\in{}c_2$. Let $v$ be the MRCA node of $a$ and $b$, $X_v$ be the set of leaf taxa descending from $v$ and $X_C=c_1\cup{}c_2$. Then the expected proportion of converging quartets for $\left\{a,b\right\}$ is
\begin{align*}
\frac{\left|X_v\setminus{}X_C\right|}{N-3}=\frac{\left|X_v\right|-\left|X_C\right|}{N-3},
\end{align*}
where $\left|X_v\right|$ and $\left|X_C\right|$ are the cardinalities of sets $X_v$ and $X_C$.
\end{proposition}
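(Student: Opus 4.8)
The plan is to compute the numerator of the stated proportion by counting, among the $N-3$ quartets $\{o,a,b,x\}$ with $x\in X\setminus\{o,a,b\}$, exactly those for which the pair $\{a,b\}$ ends up inside a non-sister convergence group of the $4$-taxon CDM that inference recovers on $\{o,a,b,x\}$; I will show this count equals $\left|X_v\right|-\left|X_C\right|$. By Assumption~\ref{nooutcon} the outgroup is distinct from $a$ and $b$, so there really are $N-3$ such quartets, which also pins down the denominator.

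First I would record the structure of $\mathcal{N}$ that I need. Because all taxa inside $c_1$, and likewise inside $c_2$, are identical during the convergence epoch, $c_1$ and $c_2$ are clades of the principal tree; hence $v=\mathrm{MRCA}(a,b)$ is actually $\mathrm{MRCA}(c_1\cup c_2)$ and is independent of the choice of $a\in c_1$ and $b\in c_2$, and the two children $v_1,v_2$ of $v$ may be labeled so that $c_1\subseteq X_{v_1}$ and $c_2\subseteq X_{v_2}$; in particular $X_C=c_1\cup c_2\subseteq X_v=X_{v_1}\sqcup X_{v_2}$. For a given $x$, the principal tree restricted to $\{a,b,x\}$ makes $a$ and $b$ a cherry exactly when $x\notin X_v$, and otherwise makes $x$ a cherry with whichever of $a,b$ lies on its side $X_{v_i}$ (so that $a$ and $b$ are non-sisters); and the convergence group $C$ restricted to $\{o,a,b,x\}$ is the singleton--singleton group $\{\{a\},\{b\}\}$ when $x\notin X_C$, or the ``meta-taxon'' group $\{\{a,x\},\{b\}\}$ (resp.\ $\{\{a\},\{b,x\}\}$) when $x\in c_1$ (resp.\ $x\in c_2$).

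The count now follows from a three-way split on the position of $x$. If $x\in X_v\setminus X_C$, the restricted principal tree has $a,b$ non-sisters and $C$ restricts to $\{\{a\},\{b\}\}$, so the displayed $4$-taxon CDM is a leaf-labeling of one of CDMs~$2$--$5$ with $a$ and $b$ inside a non-sister convergence group; by distinguishability (Theorem~\ref{disttheorem}) together with genericity (Assumption~\ref{genparam}) this CDM is not confused with a tree, so inference still returns a CDM with $a,b$ in a non-sister convergence group and the quartet is counted. If $x\notin X_v$, then $a,b$ form a cherry on the restricted tree and $C$ restricts to a \emph{sister} convergence group $\{\{a\},\{b\}\}$; no candidate $4$-taxon CDM carries a sister convergence group --- a sister convergence group coincides with the plain tree --- so inference returns CDM~$1$ and the quartet is not counted. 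If $x\in X_C\setminus\{a,b\}$, then $x$ lies in the same subset of $C$ as $a$ or as $b$, so $C$ restricts to a convergence group between two ``meta-taxa'' (a cherry and a singleton): conditioning on the state at the node where the cherry finally resolves shows the rest of the process is a genuine tree, while the convergence between the two meta-taxa --- read at stationarity --- only produces pairwise distributions that some tree edge also produces, so this displayed CDM again has the phylogenetic tensors of CDM~$1$ and the quartet is not counted. Hence exactly the quartets with $x\in X_v\setminus X_C$ are counted, giving $\left|X_v\setminus X_C\right|=\left|X_v\right|-\left|X_C\right|$ by $X_C\subseteq X_v$, and dividing by $N-3$ proves the proposition.

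The main obstacle is the last case: I must show that a displayed $4$-taxon CDM whose convergence group has a non-singleton subset collapses, as a set of phylogenetic tensors, onto CDM~$1$. I expect to do this with the conditioning argument above together with the same $2$-state phenomenon --- that convergence of a pair of (possibly composite) taxa is indistinguishable from divergence along a shorter tree edge --- that underlies there being only five $4$-taxon CDMs in Section~\ref{4taxaidentifiability}; the case $x\notin X_v$ is then just the instance where the pair is an actual cherry.
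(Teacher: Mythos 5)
You reach the right count by exactly the paper's decomposition --- a three-way split on where the fourth taxon $x$ sits relative to $X_v$ and $X_C$, giving $\left|X_v\right|-\left|X_C\right|$ counted quartets out of $N-3$ --- but you justify the two uncounted cases with claims you have not proved and that the proposition does not need. The ``expected proportion'' is a deterministic quantity attached to the proposed CDM: it is the fraction of displayed $4$-taxon CDMs, \emph{after suppressing sister convergence groups}, on which $a$ and $b$ lie in a non-sister convergence group. Nothing about what inference returns, about phylogenetic tensors, or about Theorem~\ref{disttheorem} enters. So for $x\notin{}X_v$ and for $x\in{}X_C\setminus\left\{a,b\right\}$ all you must check is that $C$ restricted to $\left\{o,a,b,x\right\}$ is a \emph{sister} convergence group in the paper's sense --- the two converging sections sit on sister edges of the restricted principal tree (for $x\notin{}X_v$ these are the terminal edges of $a$ and $b$, both children of the restricted $v$; for $x\in{}c_1$ they are the edge above the MRCA of $a$ and $x$ and the terminal edge of $b$, again both children of $v$) --- whereupon it is removed by definition and the quartet is not counted. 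This is how the paper argues, and it is also why Corollary~\ref{cor} follows immediately.

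The step you yourself flag as ``the main obstacle'' --- that a displayed $4$-taxon CDM whose convergence group has a non-singleton subset collapses, as a set of phylogenetic tensors, onto CDM~$1$ --- is a genuine gap as written: you give only a plan for it, and the claim is stronger than anything the paper establishes. The proof of Theorem~\ref{conscongroups} obtains non-distinguishability of such sister-convergence CDMs from trees only under the extra hypotheses $\alpha_l=\beta_l$ and a sufficiently small convergence parameter; for generic rates exact coincidence of tensor sets is not available, and the paper's repeated statement that sister convergence is merely ``challenging to discover'' (rather than undetectable) signals that it is not expected to hold. The same remark applies to your parenthetical ``a sister convergence group coincides with the plain tree'' in the $x\notin{}X_v$ case, and, symmetrically, your appeal to distinguishability and consistency of inference in the $x\in{}X_v\setminus{}X_C$ case is superfluous. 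Once these tensor claims are replaced by the definitional observation above, your classification of the three cases closes the proof and coincides with the paper's.
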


See Appendix~\ref{convratiosproof} for the proof.

Although we do not attempt to infer sister convergence, it is constructive to consider a scenario with sister convergence groups.

\begin{corollary}
\label{cor}
If $C=\left\{c_1,c_2\right\}$ is a sister convergence group on CDM $\mathcal{N}$, with $a\in{}c_1$ and $b\in{}c_2$, then the expected proportion of converging quartets for $\left\{a,b\right\}$ is $0$.
\end{corollary}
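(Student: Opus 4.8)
The plan is to read this off Proposition~\ref{convratios}. That proposition already identifies the expected proportion of converging quartets for $\{a,b\}$ as $\tfrac{|X_v|-|X_C|}{N-3}$, where $v=\mathrm{MRCA}(a,b)$ and $X_C=c_1\cup c_2$. So the whole task reduces to checking that, when $C$ is a sister convergence group, one has $X_v=X_C$ (equivalently $|X_v|=|X_C|$), after which the stated value $0$ is immediate. Thus the proof is essentially a one-line substitution once the structural fact about sister convergence groups is in hand.

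To get that fact I would unpack the hypothesis. Since $C=\{c_1,c_2\}$ is a sister convergence group, by definition it contains a converging pair of sister taxa; relabelling if necessary, let $x\in c_1$ and $y\in c_2$ be sisters, meaning the terminal edges to $x$ and $y$ share a parent node $u$. Because the principal tree is binary, $\{x,y\}$ is a cherry, so $X_u=\{x,y\}$ and $u=\mathrm{MRCA}(x,y)$. Next I would argue that $c_1=\{x\}$ and $c_2=\{y\}$: taxa in a common part of a convergence group are identical in that epoch, hence their pairwise MRCAs lie strictly on the tip side of the convergence epoch, whereas $x$ and $y$, being in different parts, are not identical there, which places the epoch strictly above $u$. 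If some $z\ne x$ were in $c_1$, then $\mathrm{MRCA}(x,z)$ is an ancestor of $x$, hence equals $u$ or lies strictly above it; the first case forces $z\in X_u=\{x,y\}$, i.e.\ $z=y\in c_2$, contradicting $c_1\cap c_2=\varnothing$, and the second case forces the convergence epoch strictly above $\mathrm{MRCA}(x,z)$ and hence strictly above $u$, contradicting the previous sentence. So $c_1=\{x\}$, and symmetrically $c_2=\{y\}$. Then $X_C=\{x,y\}$, the only admissible choice is $a=x,b=y$, so $v=u$ and $X_v=X_u=\{x,y\}=X_C$; Proposition~\ref{convratios} then yields $\tfrac{2-2}{N-3}=0$.

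The only delicate point, and the step I expect to carry the real weight, is the structural claim that a single converging sister pair forces \emph{both} parts of the convergence group to be singletons. This rests on two ingredients already available: that each part of a convergence group consists of pairwise-identical taxa (so it is a clade whose internal MRCAs sit strictly below the convergence epoch), and the time-ordering that translates ``identical'' versus ``converging'' into the position of MRCAs relative to the convergence epoch. If it reads more cleanly, I would isolate this as a short lemma preceding the corollary, leaving the corollary itself as the trivial substitution into Proposition~\ref{convratios} described above.
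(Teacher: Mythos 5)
Your overall strategy --- reduce everything to Proposition~\ref{convratios} and show $X_v=X_C$ --- is exactly the paper's (omitted) proof, and your final substitution step is fine. The gap is in how you unpack ``sister convergence group''. You read ``sister taxa'' as meaning the \emph{terminal} edges to $x$ and $y$ share a parent (a cherry), and then rest the whole proof on the lemma that both parts of $C$ must be singletons. But the paper's notion is relative to the convergence epoch: each part $c_i$ corresponds to a single edge of the principal tree in that epoch (its taxa are identical there), and $C$ is a sister convergence group precisely when those two epoch edges are sister edges. Under that reading the parts need not be singletons: the paper explicitly treats $C=\left\{\left\{a\right\},\left\{b,c\right\}\right\}$ on the tree $\left(o,\left(a,\left(b,c\right)\right)\right)$ as a sister convergence group (see $\mathcal{N}_{4,2}$ in the proof of Theorem~\ref{conscongroups}), yet no converging pair there is a cherry, so your argument cannot get started on it and your intermediate lemma is false for it. (The dichotomy stated right after the corollary --- non-zero proportions come exactly from non-sister groups --- also forces this reading.) The correct and much shorter route is: the two epoch edges share a parent node $u$; the tree being binary gives $X_u=c_1\cup{}c_2=X_C$; and $u$ is the MRCA $v$ of any $a\in{}c_1$, $b\in{}c_2$; hence $\left|X_v\right|-\left|X_C\right|=0$. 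No singleton claim is needed.

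A secondary slip: in arguing that the convergence epoch lies on the tip side of $u$, you write that $x$ and $y$ being non-identical ``places the epoch strictly above $u$'', which is the opposite of what your subsequent contradiction in the second case relies on; you presumably meant that it places $u$ strictly above (before) the epoch. Within the cherry case your argument is otherwise sound, but as written the proof establishes the corollary only for that special case, not for sister convergence groups in the paper's sense.
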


The proof follows directly from Proposition~\ref{convratios} and is omitted.

Thus, non-zero proportions can be attributed to non-sister convergence groups.

The following proposition follows from Assumption~\ref{nonest} of Section~\ref{ass}, that no convergence groups can be nested in other convergence groups, since nested convergence groups share at least one pair of converging leaf taxa.

\begin{proposition}
\label{nosharedpair}
An arbitrary pair of distinct convergence groups on CDM $\mathcal{N}$ share no pair of converging leaf taxa.
\end{proposition}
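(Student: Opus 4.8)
The plan is to argue by contradiction: two distinct convergence groups sharing a converging pair of leaf taxa would have to be nested, contradicting Assumption~\ref{nonest}; this is essentially the converse of the observation in the sentence preceding the proposition. First note the claim has content only when $N\ge4$. A convergence group is a partition of a strict subset of $X$ into two blocks, neither containing $o$ by Assumption~\ref{nooutcon}; when $N\le3$ the only candidate partition is $\{\{a\},\{b\}\}$, so there cannot be two distinct convergence groups and the proposition holds vacuously. Assume $N\ge4$ and suppose distinct convergence groups $C_1$ and $C_2$ of $\mathcal{N}$ both have $a$ and $b$ as a converging pair. Since $C_1\ne C_2$ and each epoch contains at most one convergence group (Assumption~\ref{onecongroup}), $C_1$ and $C_2$ occupy distinct epochs, and since the epochs are linearly ordered from the root to the tip epoch we may take $C_1$ to be before $C_2$. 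Write $C_1=\{c_{1,a},c_{1,b}\}$ and $C_2=\{c_{2,a},c_{2,b}\}$; since $a$ and $b$ are converging in each group they lie in different blocks of it, and the two blocks of a convergence group are disjoint, so after relabelling we may assume $a\in c_{1,a}\cap c_{2,a}$ and $b\in c_{1,b}\cap c_{2,b}$.

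Next I would pass to the displayed CDM $\mathcal{N}'$ of $\mathcal{N}$ on the taxon set $X'=\{o,a,b\}$, which is a strict subset of $X$ because $N\ge4$. Let $E_i$ and $E_j$ be the decorated partitions of the epochs containing $C_1$ and $C_2$. In $E_i$ the taxa $a$ and $b$ both lie in $C_1$, while $o$ lies in some divergence group $D_i$ by Assumption~\ref{nooutcon}, necessarily distinct from $C_1$, so $D_i\cap X'=\{o\}$. Deleting the taxa of $X\setminus X'$ and then the empty blocks, every block of $E_i$ disjoint from $X'$ disappears, $D_i$ becomes $\{o\}$, and $C_1$ becomes $\{c_{1,a}\cap X',\ c_{1,b}\cap X'\}=\{\{a\},\{b\}\}$ --- the two blocks remain non-empty and cannot merge because they are disjoint. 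Hence $E_i$ restricts to the decorated partition $\{o\}|\{\{a\},\{b\}\}$; identically, $E_j$ restricts to $\{o\}|\{\{a\},\{b\}\}$. In $P_{\mathcal{N}'}$ these restricted epochs appear either as a single decorated partition or as two distinct ones, but in either case they are identical as decorated partitions, and the restrictions of $C_1$ and $C_2$ are the same block $\{\{a\},\{b\}\}$ within them.

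It follows that $C_1$ is before $C_2$ and the restrictions of $C_1$ and $C_2$ to the displayed CDM $\mathcal{N}'$ belong to identical decorated partitions, so $C_2$ is nested in $C_1$ in the sense of Section~\ref{cdmsdatagen}. This contradicts Assumption~\ref{nonest}, so distinct convergence groups of $\mathcal{N}$ share no converging pair of leaf taxa.

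I expect the only delicate point to be the displayed-CDM bookkeeping in the second paragraph: verifying that after deleting the non-displayed taxa and empty blocks (and collapsing any consecutive identical partitions) the epochs of $C_1$ and $C_2$ genuinely realize the same decorated partition $\{o\}|\{\{a\},\{b\}\}$ demanded by the definition of ``nested''. The disjointness of the two blocks of a convergence group and the fact that $o$ always sits in a divergence group (Assumption~\ref{nooutcon}) make this routine, but it is the step to handle carefully.
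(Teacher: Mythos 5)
Your proof is correct and takes essentially the same route as the paper's: place the two groups in distinct, ordered epochs via Assumption~\ref{onecongroup}, observe that sharing a converging pair forces nesting, and contradict Assumption~\ref{nonest}. The only difference is that you explicitly verify the nesting claim by restricting to the displayed CDM on $\left\{o,a,b\right\}$ and checking that both epochs restrict to the decorated partition $\left\{o\right\}|\left\{\left\{a\right\},\left\{b\right\}\right\}$, a step the paper's proof asserts without detail.
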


See Appendix~\ref{nosharedpairproof} for the proof.

Thus, it follows that every non-zero element of the matrix of proportions of converging quartets is determined by a single convergence group on the CDM. However, given a topology of a principal tree, it is possible that a given matrix of proportions of converging quartets does not correspond to a unique set of convergence groups. For an example, see Figure~\ref{cdmssamematrix}.

\begin{figure}[!htb]
\centering
\hspace*{\fill}
\begin{subfigure}{0.32\linewidth}
\centering
\begin{tikzpicture}[scale=0.21]
\draw[thick] (0,0) -- (-9,-9);
\draw[thick] (0,0) -- (5,-5);
\draw[thick] (1,-1) -- (-3,-5);
\draw[thick] (-2,-4) -- (0.5,-6.5);
\draw[thick,red] (-3,-5) to[out=350,in=100] (-2.75,-5.5);
\draw[thick,red] (5,-5) to[out=190,in=80] (4.75,-5.5);
\draw[thick] (-2.75,-5.5) -- (-6.25,-9);
\draw[thick] (4.75,-5.5) -- (5.75,-6.5);
\draw[thick,blue] (0.5,-6.5) to[out=350,in=100] (0.75,-7);
\draw[thick,blue] (5.75,-6.5) to[out=190,in=80] (5.5,-7);
\draw[thick] (0.75,-7) -- (2.75,-9);
\draw[thick] (5.5,-7) -- (7.5,-9);
\draw[thick] (6.5,-8) -- (5.5,-9);
\node[below] at (-9,-9) {\strut{$o$}};
\node[below] at (-6.25,-9) {\strut{$a$}};
\node[below] at (2.75,-9) {\strut{$b$}};
\node[below] at (5.5,-9) {\strut{$c$}};
\node[below] at (7.5,-9) {\strut{$d$}};
\end{tikzpicture}
\caption{$\mathcal{N}_1$}
\label{cdm1}
\end{subfigure}
\hfill
\begin{subfigure}{0.32\linewidth}
\centering
\begin{tikzpicture}[scale=0.21]
\draw[thick] (0,0) -- (-9,-9);
\draw[thick] (0,0) -- (5,-5);
\draw[thick] (1,-1) -- (-3,-5);
\draw[thick] (4,-4) -- (1.5,-6.5);
\draw[thick,red] (-3,-5) to[out=350,in=100] (-2.75,-5.5);
\draw[thick,red] (5,-5) to[out=190,in=80] (4.75,-5.5);
\draw[thick] (4.75,-5.5) -- (8.25,-9);
\draw[thick] (-2.75,-5.5) -- (-3.75,-6.5);
\draw[thick,blue] (-3.75,-6.5) to[out=350,in=100] (-3.5,-7);
\draw[thick,blue] (1.5,-6.5) to[out=190,in=80] (1.25,-7);
\draw[thick] (-3.5,-7) -- (-5.5,-9);
\draw[thick] (-4.5,-8) -- (-3.5,-9);
\draw[thick] (1.25,-7) -- (-0.75,-9);
\node[below] at (-9,-9) {\strut{$o$}};
\node[below] at (-5.5,-9) {\strut{$a$}};
\node[below] at (-3.5,-9) {\strut{$b$}};
\node[below] at (-0.75,-9) {\strut{$c$}};
\node[below] at (8.25,-9) {\strut{$d$}};
\end{tikzpicture}
\caption{$\mathcal{N}_2$}
\label{cdm2}
\end{subfigure}
\hfill
\begin{subfigure}{0.32\linewidth}
\centering
\begin{align*}
\left[
\begin{array}{c|ccccc}
 & o & a & b & c & d \\
\hline
o & 0 & 0 & 0 & 0 & 0 \\
a & 0 & 0 & 0 & \frac{1}{2} & \frac{1}{2} \\
b & 0 & 0 & 0 & \frac{1}{2} & \frac{1}{2} \\
c & 0 & \frac{1}{2} & \frac{1}{2} & 0 & 0 \\
d & 0 & \frac{1}{2} & \frac{1}{2} & 0 & 0
\end{array}
\right]
\end{align*}
\caption{Matrix of proportions of converging quartets}
\label{matrix}
\end{subfigure}
\hspace*{\fill}
\caption{Distinct $5$-taxon CDMs $\mathcal{N}_1$ and $\mathcal{N}_2$ with the same topology of the principal tree and matrix of proportions of converging quartets}
\label{cdmssamematrix}
\end{figure}
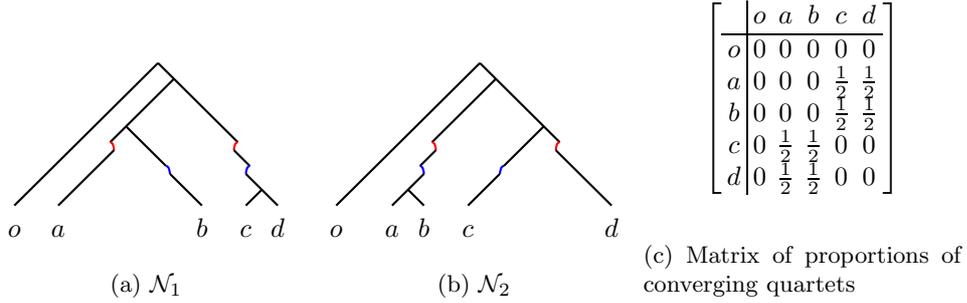

We desire to identify the set of all convergence groups. However, the matrix of proportions of converging quartets may not be sufficient to identify this set. To identify the set of all convergence groups, we consider the set of displayed $4$-taxon CDMs after suppressing sister convergence groups on the $N$-taxon CDMs, which is unique to the $N$-taxon CDM.

\begin{proposition}
\label{ident4taxoncdms}
The set of all convergence groups on CDM $\mathcal{N}$ can be identified from the set of displayed $4$-taxon CDMs after suppressing sister convergence groups.
\end{proposition}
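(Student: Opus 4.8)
The strategy is to show that the available data---the family of displayed $4$-taxon CDMs with all sister convergence groups suppressed, one for each $4$-taxon subset $Q$ of $X$ containing $o$---determines the principal tree $\mathcal{T}$ of $\mathcal{N}$ and then, by an injectivity argument, its set of convergence groups. First I would recover $\mathcal{T}$: each displayed $4$-taxon CDM records a $4$-taxon (rooted-triple) principal tree topology, and these topologies determine $\mathcal{T}$, which is precisely the content underlying Theorem~\ref{dist}. So it suffices to prove that if $\mathcal{N}_1$ and $\mathcal{N}_2$ are CDMs satisfying the assumptions of Section~\ref{ass}, share the principal tree $\mathcal{T}$, and induce the same suppressed displayed $4$-taxon CDM for every such $Q$, then they have the same convergence groups. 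Throughout, the two sides $c_1,c_2$ of any convergence group are clusters of $\mathcal{T}$ and hence range over a laminar family, which is what keeps the combinatorics tractable.

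Next I would pin down the \emph{converging pairs}. Call $\{x,y\}$ (with $x,y\neq o$) a converging pair of a CDM if some convergence group $C=\{c_1,c_2\}$ has $x\in c_1$ and $y\in c_2$; by Proposition~\ref{nosharedpair} such a $C$ is then unique. If $\mathcal{N}$ has no sister convergence, then for every converging pair the associated $C$ is non-sister, so its two cluster-sides do not share a parent and hence $|X_v|>|X_C|$ with $v=\mathrm{MRCA}(x,y)$ (equivalently, by Proposition~\ref{convratios}, a positive expected proportion of converging quartets for $\{x,y\}$, cf.\ Corollary~\ref{cor}); choosing $z\in X_v\setminus X_C$ and $Q=\{o,x,y,z\}$, the group $C$ restricts on $Q$ to the non-sister convergence group $\{\{x\},\{y\}\}$, which therefore survives suppression. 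Conversely, every convergence group appearing on any displayed $4$-taxon CDM is the restriction of a unique convergence group of $\mathcal{N}$ (uniqueness again by Proposition~\ref{nosharedpair}), and on a $4$-taxon principal tree a non-sister convergence group is necessarily of the singleton-versus-singleton form $\{\{x\},\{y\}\}$. Hence the set of converging pairs of $\mathcal{N}$ equals the set of $\{x,y\}$ for which $\{\{x\},\{y\}\}$ is a convergence group of some suppressed displayed $4$-taxon CDM; in particular $\mathcal{N}_1$ and $\mathcal{N}_2$ have the same converging pairs.

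It then remains to recover the full sides $c_1,c_2$ of each convergence group from the converging pairs together with the quartet data. The essential tool is a \emph{suppression probe}: if $C=\{c_1,c_2\}$ is a convergence group, $x_1,x_2\in c_1$ are distinct and $y\in c_2$, then $c_1\cap\{o,x_1,x_2,y\}=\{x_1,x_2\}$ is a cherry of $\mathcal{T}|_{\{o,x_1,x_2,y\}}$, so $C$ restricts there to $\{\{x_1,x_2\},\{y\}\}$, a sister group that is suppressed; since by Proposition~\ref{nosharedpair} no other convergence group covers $\{x_1,y\}$ or $\{x_2,y\}$, neither $\{\{x_1\},\{y\}\}$ nor $\{\{x_2\},\{y\}\}$ appears on the suppressed displayed CDM for $\{o,x_1,x_2,y\}$. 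Conversely, any strictly finer decomposition of $C$ into two convergence groups does leave visible non-sister groups on that quartet---this is exactly the phenomenon separating the CDMs of Figure~\ref{cdmssamematrix}, which share their matrix of converging-quartet proportions but whose suppressed displayed $4$-taxon CDMs differ. Iterating this probe and its symmetric version over all choices, one carves out, for a fixed converging pair $\{x,y\}$ with convergence group $C=\{c_1,c_2\}$, the side $c_1$ as precisely those $w$ lying with $x$---opposite $y$---in a common convergence group, and likewise $c_2$; one then checks that the reconstructed pair is a pair of disjoint non-sister clusters, i.e.\ a genuine convergence group, and that every convergence group of $\mathcal{N}$ arises this way. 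Applied to $\mathcal{N}_1$ and $\mathcal{N}_2$ this yields identical sets of convergence groups.

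The main obstacle is this last step---verifying that the suppression probe really carves out the correct clusters. Two features make a careful case analysis unavoidable: (i) distinct convergence groups of $\mathcal{N}$ may share taxa, and only the laminar structure of $\mathcal{T}$'s clusters together with Proposition~\ref{nosharedpair} and Assumption~\ref{nonest} constrains how; and (ii) the topology of $\mathcal{T}|_{\{o,w,x,y\}}$---which pair forms the cherry---depends on the relative positions of $w,x,y$ in $\mathcal{T}$, so one must treat separately the cases where the relevant convergence-group sides are singletons or larger and where $w$ lies in $c_1$, in $c_2$, or in neither. In every configuration the point to establish is the same: restricting a convergence group to two taxa drawn from one of its cluster-sides yields a sister (hence invisible) group, whereas any strictly finer set of convergence groups consistent with $\mathcal{T}$ and with the converging pairs would leave a visible non-sister group on some such quartet. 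An equivalent and perhaps cleaner packaging of the argument is to take a convergence group in the symmetric difference of the convergence-group sets of $\mathcal{N}_1$ and $\mathcal{N}_2$ that is minimal in the total cardinality of its two sides, and to exhibit explicitly a $4$-taxon $Q\ni o$ on which the two suppressed displayed CDMs differ, contradicting the hypothesis.
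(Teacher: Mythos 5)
Your overall strategy is sound and genuinely different from the paper's. The paper argues by an injectivity/contradiction route mediated by the matrix of converging-quartet proportions: it fixes a converging pair $\left\{a,b\right\}$ with group $C=\left\{c_1,c_2\right\}$, supposes an alternative group $C'$ from a competing set with the same matrix also covers $\left\{a,b\right\}$, uses equality of proportions to force $v'=v$ and $\left|X_{C'}\right|=\left|X_C\right|$, invokes laminarity to force the crossing $c_1\subset c_1'$, $c_2\supset c_2'$ (up to swapping), and then exhibits a taxon $c\in c_1'\setminus c_1$ so that on the quartet $\left\{o,a,b,c\right\}$ the group $C$ survives suppression while $C'$ does not --- essentially the ``minimal symmetric difference'' packaging you offer in your closing sentence. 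Your route is instead constructive: your recovery of the converging pairs is correct and matches the paper's use of Proposition~\ref{nosharedpair} and Assumption~\ref{nosist}, and your suppression probe is the right tool for the sides. The step you leave open --- that the probe really carves out $c_1$ and $c_2$ --- can be closed more simply than the case analysis you anticipate: for a converging pair $\left\{x,y\right\}$ with $v$ the MRCA of $x$ and $y$, the witness set $W\left(x,y\right)$, defined as the set of taxa $z$ for which $\left\{\left\{x\right\},\left\{y\right\}\right\}$ appears on the suppressed displayed CDM for $\left\{o,x,y,z\right\}$, equals $X_v\setminus X_C$ exactly (if $z\in c_1\cup c_2$ the restriction is a sister group, if $z\notin X_v$ then $x,y$ form a cherry, and no other group can contribute this pair by Proposition~\ref{nosharedpair}; cf.\ the proof of Proposition~\ref{convratios}). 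Hence $X_C=X_v\setminus W\left(x,y\right)$ is read off directly from the data, and the splitting of $X_C$ into two disjoint clusters of $\mathcal{T}$ with $x$ in one and $y$ in the other is unique by laminarity. I would substitute this for your criterion that $c_1$ consists of ``those $w$ lying with $x$, opposite $y$, in a common convergence group'', which is not by itself observable (as stated it is circular, and suppression of the pair on $\left\{o,w,x,y\right\}$ alone cannot distinguish $w\in c_1$ from $w\in c_2$). With that substitution your argument is complete; what your approach buys is an explicit reconstruction procedure rather than a pure distinguishing-quartet statement, at the cost of having to justify uniqueness of the cluster split, which the paper's cardinality argument sidesteps.
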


See Appendix~\ref{ident4taxoncdmsa} for the proof.

Proposition~\ref{ident4taxoncdms} establishes that convergence groups of $\mathcal{N}$ can be identified from the inferred $4$-taxon CDMs. However, it does not guarantee that the $4$-taxon CDMs are inferred correctly up to sister convergence groups.

If there is a matrix of proportions of converging quartets that corresponds to multiple sets of convergence groups --- such as in Figure~\ref{matrix} --- the set of convergence groups of $\mathcal{N}$ can be identified from the displayed $4$-taxon CDMs. For example, consider the $4$-taxon CDMs displayed on $\mathcal{N}_1$ of Figure~\ref{cdm1} and $\mathcal{N}_2$ of Figure~\ref{cdm2} after suppressing sister convergence groups. For $\mathcal{N}_1$, the displayed $4$-taxon CDMs on taxon sets $\left\{o,a,b,c\right\}$ and $\left\{o,a,b,d\right\}$ both have two non-sister convergence groups, while the displayed $4$-taxon CDMs on $\left\{o,a,c,d\right\}$ and $\left\{o,b,c,d\right\}$ are trees. For $\mathcal{N}_2$, the displayed $4$-taxon CDMs on $\left\{o,a,b,c\right\}$ and $\left\{o,a,b,d\right\}$ are trees, while the displayed $4$-taxon CDMs on $\left\{o,a,c,d\right\}$ and $\left\{o,b,c,d\right\}$ both have two non-sister convergence groups.

Algorithm~\ref{algorithmtopCDM} for inferring the convergence groups of the $N$-taxon CDM then follows. Starting with the principal tree, convergence groups are inferred one at a time. Convergence groups are inferred with the aid of an indicator variable and two criteria at each step. We only allow a convergence group on the inferred $N$-taxon CDM if the set of inferred $4$-taxon CDMs is similar to the set of $4$-taxon CDMs after suppressing sister convergence groups displayed on the inferred $N$-taxon CDM. When inferring a convergence group on the CDM, we choose the convergence group that minimizes the sum of squared differences between the observed and expected matrices of proportions of converging quartets and satisfies our criteria.

Indicator $\delta_{ijk}$ describes whether, for the $k^{th}$ $4$-taxon set, taxon $i\in{}c_{1,a}$ and taxon $j\in{}c_{2,a}$ are converging or not converging on both an inferred $4$-taxon CDM and the $4$-taxon CDM displayed on an $N$-taxon CDM after suppressing sister convergence groups. If the taxa are converging or not converging on both $4$-taxon CDMs, $\delta_{ijk}=1$. Otherwise, $\delta_{ijk}=0$.

The two criteria are
\begin{align*}
\begin{cases}
r^{\left(a\right)}=&\frac{1}{\left|c_{1,a}\right|\left|c_{2,a}\right|\binom{N-1}{3}}\sum_{i\in{}c_{1,a}}\sum_{j\in{}c_{2,a}}\sum_{k=1}^{\binom{N-1}{3}}\delta_{ijk}, \\
s^{\left(a\right)}=&\sum_{i=1}^{N}\sum_{j=1}^{N}\left(O_{ij}^{\left(a\right)}-E_{ij}^{\left(a\right)}\right)^2.
\end{cases}
\end{align*}

$r^{\left(a\right)}$ is a measure of the average similarity of convergence groups on inferred $4$-taxon CDMs to convergence groups on $4$-taxon CDMs displayed on the proposed $N$-taxon CDM after suppressing sister convergence groups, which includes proposed convergence group $C_a$ and all other convergence groups included from previous steps. $r^{\left(a\right)}$ must be close to $1$ for $C_a$ to be included on the inferred $N$-taxon CDM.

$s^{\left(a\right)}$ is a measure of how similar the matrix of ``observed'' proportions of converging quartets is to the matrix of ``expected'' proportions of converging quartets for the CDM including $C_a$ and all other convergence groups included from previous steps. $s^{\left(a\right)}$ must be lower than the equivalent value for all other proposed convergence groups that could be included at that step --- the convergence groups that meet the assumptions of Section~\ref{ass}, meet criteria of Appendix~\ref{overfitting} to avoid overfitting and satisfy $r^{\left(a\right)}\geq{}u\in\left[0,1\right]$. Furthermore, $s^{\left(a\right)}$ must be lower than the previous value obtained from including the previous convergence group.

\begin{algorithm}
\caption{Convergence group inference}
\textbf{Input: }$N$-taxon topology of principal tree $\widehat{\mathcal{T}}$, $\binom{N-1}{3}\times{}27$ matrix of model selection criterion values $\boldsymbol{M}$ and tolerance $u\in\left[0,1\right]$.
\begin{enumerate}[label*=\arabic*.]
\item Initialize empty list of inferred $4$-taxon CDMs $L_Q$. Initialize $\left(2N-2\right)\times{}\left(2N-2\right)$ matrix $\boldsymbol{P}$ of edge partial orders on $\widehat{\mathcal{T}}$, with $\left[\boldsymbol{P}\right]_{ij}=1$ if edge $i$ ancestral to $j$ and $0$ otherwise. Initialize empty list $\widehat{\mathcal{G}}$ of convergence groups. Initialize $N\times{}N$ matrices of observed and expected proportions of converging quartets $\boldsymbol{O}$ and $\boldsymbol{E}$ as zero matrices. Initialize $N$-taxon CDM $\widehat{\mathcal{N}}$ as comprising $N$-taxon topology of principal tree $\widehat{\mathcal{T}}$ and list of convergence groups $\widehat{\mathcal{G}}$.
\item For each $4$-taxon set that includes outgroup $o$:
\begin{enumerate}[label*=\arabic*.]
\item If $4$-taxon principal tree displayed on $\widehat{\mathcal{T}}$ has no polytomies:
\begin{enumerate}[label*=\arabic*.]
\item Select CDM with $4$-taxon principal tree displayed on $\widehat{\mathcal{T}}$ with model selection criterion, using multiple comparisons correction, such as in Appendix~\ref{overfitting}, and append to $L_Q$.
\end{enumerate}
\end{enumerate}
\item Compute $\left[\boldsymbol{O}\right]_{ij}$ for all pairs of taxa $i,j$.
\item Compute initial sum of squared differences between elements of $\boldsymbol{O}$ and $\boldsymbol{E}$, $s=\sum_{i=1}^{N}\sum_{j=1}^{N}\left(\left[\boldsymbol{O}\right]_{ij}-\left[\boldsymbol{E}\right]_{ij}\right)^2$. If $s=0$, terminate algorithm.
\item For each convergence group not in $G$, for example, $C_a=\left\{c_{1,a},c_{2,a}\right\}$, with converging taxa corresponding to edges $x$ and $y$, if:
\begin{itemize}
\item $C_a$ meets assumptions of Section~\ref{ass}, and
\item $\left[\boldsymbol{P}\right]_{xy}=\left[\boldsymbol{P}\right]_{yx}=0$, and
\item constraints of Section~\ref{overfitting} to avoid overfitting are met,
\end{itemize}
then compute $r^{\left(a\right)}$, $s^{\left(a\right)}$. If no such convergence groups, terminate algorithm.
\item Of convergence groups with $r^{\left(a\right)}\geq{}u$, find minimum $s^{\left(a\right)}$. If $\min_{a}s^{\left(a\right)}<s$, include convergence group in $\widehat{\mathcal{G}}$ and set $s$ to $\min_{a}s^{\left(a\right)}$. Else, terminate algorithm.
\item Suppose edges $x$ and $y$ correspond to convergence group last included in $\widehat{\mathcal{G}}$. Update $\boldsymbol{P}$ so all edges ancestral to $x$ are ancestral to all edges descendant from $y$ and the same when swapping $x$ and $y$. If $s=0$, terminate algorithm.
\item Return to Step~$5$.
\end{enumerate}
\textbf{Output: }$N$-taxon CDM $\widehat{\mathcal{N}}$ comprising $N$-taxon topology of principal tree $\widehat{\mathcal{T}}$ and list of convergence groups $\widehat{\mathcal{G}}$, as well as $\binom{N-1}{3}\times{}27$ matrix of model selection criterion values $\boldsymbol{M}$ and matrix of edge partial orders $\boldsymbol{P}$.
\label{algorithmtopCDM}
\end{algorithm}

We cannot establish consistency of inference of the convergence groups under the assumptions of Section~\ref{ass}. This is because $4$-taxon CDMs displayed on the generating $N$-taxon CDM may have sister convergence groups. We cannot discount the possibility that sister convergence groups may not lead to consistent inference of the non-sister convergence groups. We can, however, establish a set of sufficient conditions for inference of the convergence groups to be consistent.

\begin{theorem}
\label{conscongroups}
Suppose CDM $\mathcal{N}$ has topology of principal tree $\mathcal{T}$ and convergence groups $\mathcal{G}$. Suppose for all $l$, $\alpha_l=\beta_l$. Suppose for convergence group $\mathcal{C}_i=\left\{c_{1,i},c_{2,i}\right\}$ that if $a\in{}c_{1,i}\cup{}c_{2,i}$, then $a\notin{}c_{1,j}\cup{}c_{2,j}$ for any $j\neq{}i$. Suppose $\mathcal{T}$ is input into Algorithm~\ref{algorithmtopCDM}, the BIC is used for model selection in step~2, there are no multiple comparisons corrections and the tolerance criterion is $u=1$. Suppose $\widehat{\mathcal{G}}$ is the estimate of $\mathcal{G}$ inferred by Algorithm~\ref{algorithmtopCDM}. Then there exists some constant $c>0$ such that if the largest convergence parameter of $\mathcal{N}$ is less than $c$,
\begin{align*}
\lim_{n\to\infty}\mathbb{P}\left(\widehat{\mathcal{G}}=\mathcal{G}\right)=1.
\end{align*}
\end{theorem}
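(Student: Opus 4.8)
The plan is to reduce the $N$-taxon statement to a collection of $4$-taxon consistency statements, mirroring the structure of the proof of Theorem~\ref{principaltreeconsistent} but now tracking convergence groups rather than just principal tree topology. First I would fix the topology $\mathcal{T}$ of the principal tree as input (so the uncertainty from Algorithm~\ref{algorithmprincipaltree2} is removed by hypothesis) and observe that, under the additional assumption that each taxon lies in at most one convergence group and $\alpha_l=\beta_l$ for all $l$, the displayed $4$-taxon CDMs are of a particularly rigid form. Combining Proposition~\ref{ident4taxoncdms} with this disjointness hypothesis, the set of convergence groups $\mathcal{G}$ is in bijection with the set of displayed $4$-taxon CDMs (after suppressing sister convergence), and moreover no two convergence groups ``interfere'' in the sense of sharing converging taxa (Proposition~\ref{nosharedpair} already gives this, but the stronger disjointness assumption also removes nested and chained orderings). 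Hence it suffices to show: (i) for each $4$-taxon set that includes the outgroup and whose displayed principal tree is resolved, the BIC-selected $4$-taxon CDM is, with probability tending to $1$, equal to the $4$-taxon CDM displayed on $\mathcal{N}$ after suppressing any sister convergence group; and (ii) conditional on all these $4$-taxon selections being correct, Algorithm~\ref{algorithmtopCDM} deterministically outputs $\mathcal{G}$.

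For step (i) I would invoke the identifiability and distinguishability results of Section~\ref{ident} (Proposition~\ref{CDMs}, Theorem~\ref{disttheorem}) together with the BIC consistency theorem for regular exponential families \citep{haughton1988choice}, exactly as in the proof of Theorem~\ref{principaltreeconsistent}. The subtlety is that the $4$-taxon CDM displayed on $\mathcal{N}$ may carry a sister convergence group, so it is not among the $27$ candidates in the form it appears on $\mathcal{N}$; but by the ``small convergence parameter'' argument of Theorem~\ref{principaltreeconsistent} and the accompanying discussion in Appendix~\ref{toptreeconsistency}, choosing $c>0$ small enough forces the generating parameter to be a small perturbation of a generic point in the parameter space of the $4$-taxon CDM obtained by \emph{suppressing} that sister convergence group. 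Since that suppressed CDM \emph{is} among the candidates and is distinguishable from all others (Theorem~\ref{disttheorem}), BIC selects it with probability $\to 1$. One must take the minimum of the finitely many constants $c$ arising over the $\binom{N-1}{3}$ quartet sets; this minimum is the $c$ in the statement. Here I would also need to check that a sister convergence group, when suppressed, changes only a diverging-section parameter and not the topology — which follows from the definition of suppression and the fact that $\alpha_l=\beta_l$ keeps the model in the GTR submodel where the perturbation analysis of Appendix~\ref{toptreeconsistency} applies.

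For step (ii) I would argue that when $u=1$, no multiple-comparisons correction is applied, and all $4$-taxon selections are correct, the quantities $r^{(a)}$ and $s^{(a)}$ driving Algorithm~\ref{algorithmtopCDM} take their ideal values: the observed matrix $\boldsymbol{O}$ of proportions of converging quartets equals the expected matrix for $\mathcal{N}$ (by Proposition~\ref{convratios} applied to each true convergence group, using Corollary~\ref{cor} to kill the sister contributions), so a greedy insertion of convergence groups that strictly decreases $\sum (O_{ij}-E_{ij})^2$ and keeps $r^{(a)}=1$ recovers exactly $\mathcal{G}$ and then terminates (the partial-order matrix $\boldsymbol{P}$ never blocks a true group because, under disjointness, distinct convergence groups correspond to edges that are incomparable or already ordered consistently). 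The greedy step is well-defined because, by Proposition~\ref{nosharedpair} and the disjointness hypothesis, each nonzero block of $\boldsymbol{O}$ is ``explained'' by a unique convergence group, so inserting a wrong group cannot drive $s$ to $0$ and inserting the right ones in any admissible order does; ties, if any, are resolved at random but every admissible completion yields the same $\widehat{\mathcal{G}}=\mathcal{G}$. Finally, combining (i) and (ii) via a union bound over the finitely many quartet sets gives $\mathbb{P}(\widehat{\mathcal{G}}=\mathcal{G})\to 1$.

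I expect the main obstacle to be step (i)'s perturbation argument: making precise that ``largest convergence parameter $< c$'' implies the generating parameter lies in a neighbourhood of the suppressed-$4$-taxon-CDM parameter space on which BIC model selection is still consistent — in particular handling the case where the true quartet has \emph{two} convergence groups one of which is sister (CDMs~4 and 5 of Figure~\ref{4taxonCDMs2}), so that suppression lands in a lower-stratum CDM and one must ensure the remaining non-sister group is still detected rather than absorbed into a diverging section. This requires quoting the stratified-parameter-space / boundary analysis of Appendix~\ref{toptreeconsistency} and checking it is uniform over the finitely many quartets; everything after that is bookkeeping.
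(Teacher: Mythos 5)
Your two-step architecture (quartet-level consistency, then a deterministic combinatorial recovery given correct quartets) is the same as the paper's, and your step (ii) matches the paper's argument almost exactly: with $u=1$ and all quartet CDMs correctly inferred up to sister suppression, only true convergence groups pass the $r^{(a)}$ filter, each strictly decreases $s$, and the algorithm terminates at $\boldsymbol{O}=\boldsymbol{E}$. However, step (i) as you have written it has a genuine gap. You propose to invoke Proposition~\ref{CDMs} and Theorem~\ref{disttheorem}, but those are \emph{generic}-parameter results (they rely on Assumption~\ref{genparam}), whereas the hypothesis $\alpha_l=\beta_l$ forces $\gamma=0$, a measure-zero slice of the parameter space on which generic identifiability and distinguishability do not automatically hold --- indeed, the whole content of the paper's proof in Appendix~\ref{theorem13} is that on this slice some distinguishability genuinely \emph{fails}: a quartet CDM carrying a sister convergence group with a small convergence parameter is shown, by explicit computation of the semialgebraic constraints on the transformed tensor ($\mathcal{N}_{4,1}$ vs.\ $\mathcal{N}_{4,2}$ and $\mathcal{N}_{4,3}$ vs.\ $\mathcal{N}_{4,4}$), to satisfy exactly the constraints of its suppression, so its distribution lies \emph{inside} the suppressed model's image and Haughton's result applies directly with the generating model among the candidates. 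The paper then separately re-verifies identifiability of CDMs $1$--$3$ at $\gamma=0$ (a fresh Gr\"{o}bner-basis computation, not a citation of Proposition~\ref{CDMs}). Your route through Proposition~\ref{conj} could in principle substitute for the inclusion argument, but without re-establishing identifiability and distinguishability of the non-sister CDMs on the $\gamma=0$ locus the perturbation argument has nothing to land on; this is the missing step.

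A smaller point: the ``main obstacle'' you flag --- a displayed quartet with two convergence groups, one sister (CDMs $4$/$5$) --- cannot occur under the stated hypotheses. Since the outgroup is in no convergence group, a quartet has only three non-outgroup taxa, and two disjoint convergence groups would each need two of them; the paper notes this explicitly, so after suppression every displayed quartet CDM is one of CDMs $1$--$3$. Your worry is therefore vacuous here, and the stratified-boundary analysis you anticipate needing is not required.
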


See Appendix~\ref{theorem13} for the proof.

Theorem~\ref{conscongroups} ensures consistent inference of the set of convergence groups if all convergence parameters are ``small'', the Markov model of the generating CDM is the $2$-state binary symmetric model and there are no taxa in multiple convergence groups. Such assumptions are reasonable if substitutions between the two states are expected to be approximately equal and convergence is ``rare'', with a small number of convergence groups on the CDM. Note that we do not generally assume that $\alpha_l=\beta_l$ for each convergence-divergence group $l$ when inferring the $N$-taxon CDM.

\section{Inferring parameters of \texorpdfstring{$N$}{N}-taxon CDMs}
\label{metricCDM}

With the topology of the principal tree and convergence groups of the $N$-taxon CDM inferred, all that is left to infer is the positions of the convergence groups and the parameters. Inference of positions of the convergence groups involves inferring partial orders of convergence groups and determining which convergence groups are in the tip epoch. For more on inferring partial orders of convergence groups and tip epoch convergence groups, see Algorithms~\ref{cdmconorder}~and~\ref{cdmdiv} of Appendix~\ref{grouporders}.

We do not attempt to infer all parameters of the $N$-taxon CDM as they are not all identifiable. Instead, we infer all convergence parameters, the root parameter and all edge lengths of the (unrooted) $N$-taxon principal tree.

Taxa converging increase the probabilities of combinations of states where those taxa have the same state. We may reasonably expect that convergence should \emph{decrease} a distance between taxa if that distance was to reflect how similar the random variables --- for example, sequences --- are to each other. However, recall that the distance between taxa is the sum of convergence and divergence parameters along the shortest path between the two taxa. Thus, the distances on the $N$-taxon principal tree do not necessarily reflect how similar the random variables are to each other.

\begin{proposition}
\label{distident}
All edge lengths of the principal tree of each of CDM $1-5$ are identifiable.
\end{proposition}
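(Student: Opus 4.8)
The plan is to establish identifiability of all edge lengths of the unrooted principal tree for each of CDMs $1$--$5$ by leveraging the already-established Proposition~\ref{CDMs} (identifiability of the full parameter set for CDM $5$), together with the nesting of CDMs $1$--$4$ in CDM $5$, and then showing that each edge length is a well-defined function of the identifiable parameter set. First I would recall that an edge length of the principal tree is, by the definitions in Section~\ref{cdmsdatagen}, the sum of the convergence and divergence parameters lying along that edge. Since Proposition~\ref{CDMs} gives that the convergence parameters, the root parameter, and the divergence parameters (or the relevant identifiable combinations thereof, e.g.\ pairwise distances) of CDM $5$ are identifiable, the main task is to verify that each of the (at most three internal and at most four terminal) edge lengths of the unrooted principal tree can be written as a sum of identifiable quantities — equivalently, that no edge length is only identifiable as part of a larger unidentifiable combination.

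Concretely, I would work from the parameter labeling of Figure~\ref{4cdms5}: for CDM $5$ the divergence parameters are parameters $1$--$5$, $7$, $8$, $10$, $11$ and the convergence parameters are $6$ and $9$. I would express each edge length of the unrooted $4$-taxon principal tree as the appropriate partial sum of these (for instance, the pendant edge leading to a converging taxon is a sum of a diverging section, a converging section, and another diverging section), and then invoke Proposition~\ref{CDMs} to conclude each such sum is identifiable. For CDMs $1$--$4$, which are obtained from CDM $5$ by fixing certain parameters to boundary/zero values (removing convergence groups or collapsing epochs), the corresponding edge lengths are simply the specializations of the CDM $5$ expressions, so identifiability descends immediately; alternatively one notes that a nested submodel's parameters are a subset of the ambient model's identifiable parameters. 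It is probably cleanest to either handle CDM $5$ in full detail and remark that CDMs $1$--$4$ follow by the nesting argument (mirroring how the identifiability proofs of CDMs $1$--$4$ were omitted after Proposition~\ref{CDMs}), or to treat the two genuinely distinct topologies/convergence configurations that arise.

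The main obstacle I anticipate is the interplay between convergence parameters and the distances used in the identifiability argument: because the ``distance'' between two leaf taxa is the sum of convergence \emph{and} divergence parameters along the connecting path (and because convergence parameters can be shared across the two edges of a converging pair), one must be careful that an individual \emph{edge} length — not just pairwise path-distances — is pinned down, and that the shared convergence parameter does not create an aliasing between, say, the two pendant edges below a convergence event. I would resolve this by using that Proposition~\ref{CDMs} already supplies the convergence parameters $6$ and $9$ individually (not merely in combination with divergence parameters), so that once the path-distances are known, subtracting the known convergence contributions isolates the diverging sections, and re-summing along each edge then yields that edge's length. A secondary, more bookkeeping-level obstacle is simply enumerating the edge-to-parameter correspondence correctly across all relevant leaf labelings, but this is routine given Figure~\ref{4taxonCDMs2}. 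I would also note explicitly that it is the \emph{unrooted} principal tree's edge lengths that are claimed, consistent with the earlier remark that the root location on the outgroup edge is not identifiable.
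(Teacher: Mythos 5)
Your proposal is correct and follows essentially the same route as the paper: the paper expresses the six pairwise leaf-to-leaf distances as monomials in the identifiable parameter set $\left\{y_1,\ldots,y_9\right\}$ supplied by Proposition~\ref{CDMs}, recovers the five edge lengths of the unrooted principal tree by the standard linear (four-point-style) inversion of the tree metric, and disposes of CDMs $1$--$4$ by the same nesting observation you make (some $y_i=1$). The aliasing you worry about between the two pendant edges sharing a convergence parameter does not arise, because the distance as defined is already an additive tree metric with respect to the edge lengths as defined (the shared convergence parameter correctly contributes once to each traversed edge), so the linear inversion alone suffices without separately subtracting the individually identifiable convergence contributions.
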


See Appendix~\ref{prop14} for the proof. Note that in the proof of Proposition~\ref{distident} we establish that all pairwise distances between leaf taxa are also identifiable.

\begin{proposition}
\label{conident}
All convergence parameters of each of CDM $2-5$ are identifiable.
\end{proposition}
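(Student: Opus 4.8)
The plan is to reduce the statement to CDM $5$ and then recover its two convergence parameters, labeled $6$ and $9$ in Figure~\ref{4taxonCDMs2} and written $\theta_6$ and $\theta_9$ below, one at a time. CDMs $2$, $3$ and $4$ are each obtained from CDM $5$ by degenerating some diverging epochs to zero length, an operation that leaves the surviving converging sections unchanged, so every convergence parameter of CDMs $2$, $3$ and $4$ equals $\theta_6$ or $\theta_9$. An expression recovering $\theta_6$ and $\theta_9$ from the phylogenetic tensor of CDM $5$ that is algebraic and valid off a proper closed subset of the CDM $5$ parameter space therefore restricts to the parameter spaces of CDMs $2$, $3$ and $4$ without loss of genericity — its bad locus contains none of these coordinate subspaces, as one checks at a single point — exactly as in the reduction of Proposition~\ref{CDMs} to CDMs $1$--$4$. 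So it suffices to prove $\theta_6$ and $\theta_9$ are identifiable for CDM $5$.

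For CDM $5$ I would work with the three-taxon marginal phylogenetic tensors on $\{o,a,b\}$ and on $\{o,a,c\}$; these are linear images of the full (transformed) phylogenetic tensor, hence functions of identifiable quantities. The key observation is that marginalizing out one taxon of a two-element convergence group turns that convergence into ordinary divergence for the other taxon: from the form of the convergence rate matrix in Proposition~\ref{p1}, the chain on the converging pair is strongly lumpable with respect to the state of either taxon — the aggregate rate of moving from the states with $x=0$ to the states with $x=1$ is the same from both, and likewise in the other direction — with lumped generator exactly the $2$-state general Markov generator. Hence, after collapsing repeated partitions, the $\{o,a,b\}$-marginal of CDM $5$ is a three-taxon CDM on the principal tree $(o,(a,b))$ whose only convergence group is $\{\{a\},\{b\}\}$, with convergence parameter $\theta_6$ — the epoch in which $\{\{a\},\{c\}\}$ converged has collapsed into a diverging section of the edge to $a$ — and symmetrically the $\{o,a,c\}$-marginal is a three-taxon CDM on $(o,(a,c))$ carrying the single convergence parameter $\theta_9$. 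This marginalization is exactly what disentangles $\theta_6$ from $\theta_9$, which are otherwise coupled through the edge to $a$.

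It then remains to identify the convergence parameter $\theta$ of a three-taxon CDM of the shape diverging, then converging over $\{\{x\},\{y\}\}$, then diverging, on the tree $(o,(x,y))$. I would write out its transformed phylogenetic tensor explicitly as a function of $\theta$, the divergence parameters and the root parameter — the same kind of computation underlying the proof of Proposition~\ref{CDMs} — so that, once the matrix exponentials are carried out, $\theta$ appears (essentially through $e^{-\theta}$, reflecting the extra eigenvalue of the convergence rate matrix of Proposition~\ref{p1}) on top of the familiar tree-like dependence on the divergence and root parameters. Since Proposition~\ref{distident} already makes all edge lengths, equivalently all pairwise distances, of the three-taxon tree identifiable, the divergence contributions along each edge are pinned down; substituting them and solving the remaining equations for $e^{-\theta}$ recovers $\theta$, and doing this in the $\{o,a,b\}$- and $\{o,a,c\}$-marginals yields $\theta_6$ and $\theta_9$, hence the proposition.

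The step I expect to be the main obstacle is this last solve. Because $x$ and $y$ enter the converging epoch already correlated through shared ancestry, the $e^{-\theta}$ factor is not cleanly separated from the divergence on the surrounding sections, so one must verify that — after substituting the edge lengths identified by Proposition~\ref{distident} — the transformed-tensor coordinates can always be inverted for $e^{-\theta}$ away from a proper subvariety; genericity, also needed to keep the coordinates used nonzero, then excludes only a measure-zero set. By comparison the marginalization, the strong lumpability underpinning it, and the nesting reductions to CDMs $2$, $3$ and $4$ are routine bookkeeping.
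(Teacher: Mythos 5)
Your route is genuinely different from the paper's, which disposes of the proposition in two lines: Proposition~\ref{CDMs} already establishes that $\left\{y_1,\ldots,y_9\right\}$ is identifiable, and the two convergence parameters of CDM $5$ are exactly $y_6=x_6$ and $y_8=x_9$, so $l_6=-\log y_6$ and $l_9=-\log y_8$ are read off directly; nesting handles CDMs $2$--$4$. Your reduction to CDM $5$ and your lumpability observation are both correct: the generator of Proposition~\ref{p1} restricted to a converging pair is strongly lumpable with respect to either coordinate, with lumped generator the $2$-state general Markov generator, which is why, for instance, $r_{1100}=x_1x_3x_6x_8x_9x_{11}=e^{-d_{o,a}}$ contains $x_6$ and $x_9$ as ordinary divergence factors.

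The final solve, however, has a concrete problem beyond the one you flagged. First, the $\left\{o,a,b\right\}$-marginal alone cannot identify $\theta_6$: writing $P_a=x_3x_8x_9x_{11}$, $P_b=x_2x_5x_7$ and $Q_{\mathrm{bel}}=x_7x_8x_9x_{11}$, its four nontrivial coordinates are $r_{oa}=x_1P_ax_6$, $r_{ob}=x_1P_bx_6$, $r_{oab}=x_1P_aP_bx_6$ and $r_{ab}=Q_{\mathrm{bel}}\left(1-x_6\right)+x_6P_aP_b$, which determine $P_a$, $P_b$ and the product $x_1x_6$ but leave $x_1$, $x_6$ and $Q_{\mathrm{bel}}$ coupled through only two relations --- underdetermined. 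So the marginal does not by itself ``disentangle'' $\theta_6$; genuinely four-taxon information must be imported. Second, Proposition~\ref{distident} does not pin down ``the divergence contributions along each edge'': it identifies total edge lengths, i.e.\ sums of divergence \emph{and} convergence parameters, so substituting divergence-only quantities presupposes the very parameters you are trying to identify. The repair is short: the marginal identifies $P_aP_b=r_{oab}^2/\left(r_{oa}r_{ob}\right)$, while Proposition~\ref{distident} identifies $e^{-d_{a,b}}=P_aP_bx_6^2$, whence $x_6^2=e^{-d_{a,b}}r_{oa}r_{ob}/r_{oab}^2$; the symmetric computation on $\left\{o,a,c\right\}$ gives $x_9$. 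With that replacement your argument closes, though it still ultimately rests on the Gr\"{o}bner-basis computation behind Proposition~\ref{CDMs}, since that is how the paper proves Proposition~\ref{distident} ($d_{a,b}$ and $d_{a,c}$ are not $-\log$ of single tensor coordinates precisely because of the convergence), so no gain in elementarity is achieved over the paper's direct reading-off of $y_6$ and $y_8$.
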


See Appendix~\ref{prop15} for the proof.

\begin{proposition}
\label{rootident}
The root parameter $\gamma=\left[\boldsymbol{\Pi}\right]_0-\left[\boldsymbol{\Pi}\right]_1$, where $\left[\boldsymbol{\Pi}\right]_0$ and $\left[\boldsymbol{\Pi}\right]_1$ are the probabilities of states $0$ and $1$ at the root, respectively, is identifiable on each of CDM $1-5$.
\end{proposition}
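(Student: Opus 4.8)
The plan is to write $\gamma$ explicitly as a fixed coordinate of the transformed phylogenetic tensor, using the outgroup lineage. Recall from Definition~\ref{CDM} and the surrounding discussion that for a $2$-state general CDM the root distribution of the single root taxon is $\boldsymbol{\Pi}=\left[\frac{\beta}{\alpha+\beta},\frac{\alpha}{\alpha+\beta}\right]^T$, so $\gamma=[\boldsymbol{\Pi}]_0-[\boldsymbol{\Pi}]_1=\frac{\beta-\alpha}{\alpha+\beta}$, and that $\frac{\alpha_l}{\beta_l}=\frac{\alpha}{\beta}$ for every convergence-divergence group $l$. The central observation is that the marginal distribution of the state of the outgroup taxon $o$ at the tips is again the stationary distribution $\boldsymbol{\Pi}$. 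By the assumption that the principal tree is rooted by the single outgroup $o$ together with Assumption~\ref{nooutcon}, the taxon $o$ is split off at the root and lies in no convergence group, so in every epoch it forms its own singleton divergence group; conditional on the root state, the lineage leading to $o$ therefore evolves independently of the rest of the tree as an ordinary $2$-state Markov chain whose generator in epoch $a$ is $\left[\begin{smallmatrix}-\alpha_{l}&\beta_{l}\\\alpha_{l}&-\beta_{l}\end{smallmatrix}\right]$ for that epoch's outgroup divergence group. Because $\frac{\alpha_l}{\beta_l}=\frac{\alpha}{\beta}$, each such generator has stationary distribution $\left[\frac{\beta}{\alpha+\beta},\frac{\alpha}{\alpha+\beta}\right]^T=\boldsymbol{\Pi}$, and since the root distribution is also $\boldsymbol{\Pi}$, the marginal law of the state of $o$ equals $\boldsymbol{\Pi}$ at every time, in particular at the tips.

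Making this rigorous is cleanest in the Hadamard basis. For each epoch, the conjugate of the combined rate matrix is block lower-triangular with respect to subset inclusion; the row of the conjugated generator indexed by the singleton $\{o\}$ has nonzero entries only in the $\emptyset$ column, with value $\beta_{l}-\alpha_{l}$, and in the $\{o\}$ column, with value $-(\alpha_l+\beta_l)$, where $\alpha_l,\beta_l$ are the rates of the outgroup divergence group in that epoch --- this reduces to the $2\times2$ identity $\boldsymbol{H}\left[\begin{smallmatrix}-\alpha_l&\beta_l\\\alpha_l&-\beta_l\end{smallmatrix}\right]\boldsymbol{H}^{-1}=\left[\begin{smallmatrix}0&0\\\beta_l-\alpha_l&-(\alpha_l+\beta_l)\end{smallmatrix}\right]$ tensored with identity matrices for the other taxa, together with the fact that the $\emptyset$-row of any conjugated rate matrix is zero. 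Since the $\emptyset$ coordinate of the transformed phylogenetic tensor is identically $1$, the coordinate $[\widehat{\boldsymbol{P}}]_{\{o\}}$ obeys, within each epoch, the autonomous scalar equation $\dot{y}=(\beta_l-\alpha_l)-(\alpha_l+\beta_l)y$, whose fixed point is $\frac{\beta_l-\alpha_l}{\alpha_l+\beta_l}=\gamma$; as the transformed root tensor has $[\widehat{\boldsymbol{\Pi}}]_{\{o\}}=[\boldsymbol{\Pi}]_0-[\boldsymbol{\Pi}]_1=\gamma$, the coordinate starts at $\gamma$ and hence stays at $\gamma$ through every epoch. Therefore $\gamma=[\widehat{\boldsymbol{P}}]_{\{o\}}$; equivalently, in the original basis $\gamma=\sum_{i:\,i_o=0}[\boldsymbol{P}]_i-\sum_{i:\,i_o=1}[\boldsymbol{P}]_i$.

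This identifies $\gamma$ directly: it is one particular coordinate of the (transformed) phylogenetic tensor, so it takes the same value for any two generating parameters inducing the same distribution at the tips. The argument is uniform in the number of taxa and in the convergence structure, so it applies to each of CDMs~$1$--$5$; alternatively, for CDMs~$1$--$4$ identifiability of $\gamma$ follows from their being nested in CDM~$5$, exactly as for Proposition~\ref{CDMs}.

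I expect the only real obstacle to be the bookkeeping in the marginalization/triangularity step: verifying that summing the $2^N$-dimensional phylogenetic tensor over the non-outgroup coordinates genuinely reproduces the two-dimensional outgroup Markov chain, given that epochs carry combined rate matrices and that splitting operators have been pushed back above the root. The Hadamard-basis route avoids confronting this directly, since there one only needs the form of the conjugated epoch generators restricted to the $\{o\}$ and $\emptyset$ coordinates, which is the short computation indicated above; no analysis of the convergence rate matrices themselves is required, precisely because $o$ never participates in a convergence group.
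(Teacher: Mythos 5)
Your proof is correct, and it arrives at exactly the same identifying equation as the paper: $\gamma$ equals a singleton coordinate of the Hadamard-transformed phylogenetic tensor, hence is a coordinate of the observable distribution and trivially identifiable. The difference is in how that coordinate identity is obtained. The paper simply reads $q_{0001}=q_{0010}=q_{0100}=q_{1000}=\gamma$ off Equation~\eqref{hadphylotensor}, whose derivation lives in the supplementary Mathematica file, and then handles CDMs $1$--$4$ by nesting in CDM $5$. You instead derive the identity $[\widehat{\boldsymbol{P}}]_{\{o\}}=\gamma$ from first principles: the outgroup lineage is never in a convergence group, every epoch generator restricted to the $\emptyset$ and $\{o\}$ coordinates is $\left[\begin{smallmatrix}0&0\\\beta_l-\alpha_l&-(\alpha_l+\beta_l)\end{smallmatrix}\right]$, and the common ratio $\alpha_l/\beta_l=\alpha/\beta$ forces the fixed point of the resulting scalar flow to be $\gamma$ in every epoch, so the coordinate never moves from its initial value $[\widehat{\boldsymbol{\Pi}}]_{\{o\}}=\gamma$. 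This buys you a proof that is uniform in $N$ and in the convergence structure and that does not depend on the supplementary computation (the paper's route only establishes the result for the five $4$-taxon CDMs as computed); what it costs is the bookkeeping you flag about the block structure of the conjugated epoch generators, which you resolve correctly by noting that groups not containing $o$ contribute nothing to the $\{o\}$ row because the $\emptyset$-row of any conjugated rate matrix vanishes. One small remark: your argument establishes the identity only for the outgroup singleton coordinate, whereas the paper's equation gives it for all four singleton coordinates; this is immaterial for identifiability, since a single coordinate suffices.
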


See Appendix~\ref{prop16} for the proof.

\bigskip

For $4$-taxon sets with the outgroup, we compute maximum likelihood estimates of the $4$-taxon CDM convergence and root parameters and all six of the pairwise distances between taxa. We average root parameter estimates over all $4$-taxon sets and average convergence parameter estimates over all $4$-taxon sets displaying the convergence parameters. For each pair of taxa we average distances over all $4$-taxon sets displaying the two taxa to form a vector of pairwise distances. From the distances, a consistent method --- ordinary least squares --- is used to infer the parameters of the $N$-taxon principal tree, fixing the topology of the principal tree to be that already known or inferred by Algorithm~\ref{algorithmtopCDM} and rooting with the outgroup. If the $N$-taxon principal tree has no polytomies, inference of the principal tree is complete up to the precise root location.

If the $N$-taxon principal tree has polytomies, we resolve each one, as in Algorithm~\ref{algorithmprincipaltree2}, before computing pairwise distances between leaf taxa. A resolved $N$-taxon principal tree with edge lengths is then inferred, as described above. Paths on the resolved tree are identified that correspond to edges below polytomies of the unresolved tree. The polytomies are then reintroduced, with lengths of the identified edges on the unresolved tree estimated as the sums of parameters on the corresponding paths of the resolved tree. Inference of the $N$-taxon principal tree is then complete.

Algorithm~\ref{algorithmmetCDM} for inferring the $N$-taxon CDM parameters then follows. Note that Algorithm~\ref{algorithmmetCDM} takes the matrix of expected convergence group orders $\boldsymbol{E}$ and vector of tip epoch convergence groups $\boldsymbol{D}$ computed in Algorithms~\ref{cdmconorder}~and~\ref{cdmdiv} of Appendix~\ref{grouporders} as input. $\boldsymbol{E}$ is a binary matrix, with $\left[\boldsymbol{E}\right]_{ij}=1$ if the epoch of convergence group $i$ is before that of convergence group $j$ and $0$ otherwise. $\boldsymbol{D}$ is a binary vector, with $\left[\boldsymbol{D}\right]_i=1$ if convergence group $i$ is in the tip epoch and $\left[\boldsymbol{D}\right]_i=0$ if it is not. Note that since there can be at most one convergence group in the tip epoch, there is at most one non-zero element of $\boldsymbol{D}$. Note that after applying Algorithms~\ref{cdmconorder}~and~\ref{cdmdiv} of Appendix~\ref{grouporders}, at step $6$ of Algorithm~\ref{algorithmmetCDM} there is only one out of the $27$ possible topologies of principal trees for each $4$-taxon CDM that includes the outgroup.

Algorithm~\ref{algorithmmetCDM} takes as input either all $4$-taxon sets with the outgroup or only the $4$-taxon sets with the outgroup corresponding to the $4$-taxon CDMs displayed on the generating $N$-taxon CDM that have no sister convergence. For the latter, a consistency result can be obtained since, with no sister convergence, the probability of the inferred $4$-taxon CDM being the $4$-taxon CDM displayed on the generating $N$-taxon CDM converges to $1$. However, for the latter, some pairwise distances between leaf taxa may not be estimated since some $4$-taxon sets are not considered. Matrix $\boldsymbol{X}$ describes the edges of principal tree $\mathcal{T}$ that are traversed to compute pairwise distances between taxa. $\left[\boldsymbol{X}\right]_{ij}=1$ if the distance between the $i^{th}$ taxon pair --- ordered arbitrarily --- is computed by traversing edge $j$ --- ordered arbitrarily --- of $\mathcal{T}$. Otherwise, $\left[\boldsymbol{X}\right]_{ij}=0$. Row $i$ of $\boldsymbol{X}$ is removed in step $6$ of Algorithm~\ref{algorithmmetCDM} if the pairwise distance between the $i^{th}$ taxon pair is not computed on any of the $4$-taxon CDMs; this pairwise distance is also removed from the computation. Thus, $\boldsymbol{X}^T\boldsymbol{X}$ may not be invertible and there may not be a unique solution for the principal tree edge lengths in the ordinary least squares computation. For an example, consider the $5$-taxon CDM with topology of principal tree $\left(o,\left(a,\left(b,\left(c,d\right)\right)\right)\right)$ and convergence groups $C_1=\left\{\left\{a\right\},\left\{c,d\right\}\right\}$, $C_2=\left\{\left\{b\right\},\left\{c\right\}\right\}$ and $C_3=\left\{\left\{b\right\},\left\{d\right\}\right\}$. See Figure~\ref{5taxonexample} for the CDM and its displayed CDMs. Only one displayed $4$-taxon CDM does not have sister convergence, the CDM with topology of principal tree $\left(o,\left(b,\left(c,d\right)\right)\right)$ and convergence groups $C_2'=\left\{\left\{b\right\},\left\{c\right\}\right\}$ and $C_3'=\left\{\left\{b\right\},\left\{d\right\}\right\}$. Thus, only six pairwise distances can be estimated, despite the principal tree having seven edges. Thus, there is no unique solution for the principal tree edge lengths. In this scenario, we estimate pairwise distances from all displayed $4$-taxon CDMs after suppressing any sister convergence groups. However, this is unlikely to be problematic when $N$ is large as the $\binom{N}{2}$ pairs of taxa vastly outnumbers the $2N-3$ edges.

\begin{figure}[!htb]
\hspace*{\fill}
\begin{subfigure}{0.19\linewidth}
\centering
\begin{tikzpicture}[scale=0.154]
\draw[thick] (0,0) -- (-6.5,-6.5);
\draw[thick] (0,0) -- (3,-3);
\draw[thick] (1,-1) -- (-1,-3);
\draw[thick] (2,-2) -- (-0.5,-4.5);
\draw[thick,red] (-1,-3) to[out=350,in=100] (-0.75,-3.5);
\draw[thick,red] (3,-3) to[out=190,in=80] (2.75,-3.5);
\draw[thick] (-0.75,-3.5) -- (-3.75,-6.5);
\draw[thick] (2.75,-3.5) -- (4.75,-5.5);
\draw[thick] (3.25,-4) -- (2.75,-4.5);
\draw[thick,blue] (-0.5,-4.5) to[out=350,in=100] (-0.25,-5);
\draw[thick,blue] (2.75,-4.5) to[out=190,in=80] (2.5,-5);
\draw[thick] (-0.25,-5) -- (-0.75,-5.5);
\draw[thick] (2.5,-5) -- (1,-6.5);
\draw[thick,teal] (-0.75,-5.5) to[out=350,in=100] (-0.5,-6);
\draw[thick,teal] (4.75,-5.5) to[out=190,in=80] (4.5,-6);
\draw[thick] (-0.5,-6) -- (-1,-6.5);
\draw[thick] (4.5,-6) -- (5,-6.5);
\node[below] at (-6.5,-6.5) {\strut{$o$}};
\node[below] at (-3.75,-6.5) {\strut{$a$}};
\node[below] at (-1,-6.5) {\strut{$b$}};
\node[below] at (1,-6.5) {\strut{$c$}};
\node[below] at (5,-6.5) {\strut{$d$}};
\end{tikzpicture}
\end{subfigure}
\hfill
\begin{subfigure}{0.19\linewidth}
\centering
\begin{tikzpicture}[scale=0.2]]
\draw[thick] (0,0) -- (-5,-5);
\draw[thick] (0,0) -- (3,-3);
\draw[thick] (1,-1) -- (-1,-3);
\draw[thick] (2,-2) -- (0,-4);
\draw[thick,red] (-1,-3) to[out=350,in=100] (-0.75,-3.5);
\draw[thick,red] (3,-3) to[out=190,in=80] (2.75,-3.5);
\draw[thick] (-0.75,-3.5) -- (-2.25,-5);
\draw[thick] (2.75,-3.5) -- (3.25,-4);
\draw[thick,blue] (0,-4) to[out=350,in=100] (0.25,-4.5);
\draw[thick,blue] (3.25,-4) to[out=190,in=80] (3,-4.5);
\draw[thick] (0.25,-4.5) -- (-0.25,-5);
\draw[thick] (3,-4.5) -- (3.5,-5);
\node[below] at (-5,-5) {\strut{$o$}};
\node[below] at (-2.25,-5) {\strut{$a$}};
\node[below] at (-0.25,-5) {\strut{$b$}};
\node[below] at (3.5,-5) {\strut{$c$}};
\end{tikzpicture}
\end{subfigure}
\hfill
\begin{subfigure}{0.19\linewidth}
\centering
\begin{tikzpicture}[scale=0.2]]
\draw[thick] (0,0) -- (-5,-5);
\draw[thick] (0,0) -- (3,-3);
\draw[thick] (1,-1) -- (-1,-3);
\draw[thick] (2,-2) -- (0,-4);
\draw[thick,red] (-1,-3) to[out=350,in=100] (-0.75,-3.5);
\draw[thick,red] (3,-3) to[out=190,in=80] (2.75,-3.5);
\draw[thick] (-0.75,-3.5) -- (-2.25,-5);
\draw[thick] (2.75,-3.5) -- (3.25,-4);
\draw[thick,teal] (0,-4) to[out=350,in=100] (0.25,-4.5);
\draw[thick,teal] (3.25,-4) to[out=190,in=80] (3,-4.5);
\draw[thick] (0.25,-4.5) -- (-0.25,-5);
\draw[thick] (3,-4.5) -- (3.5,-5);
\node[below] at (-5,-5) {\strut{$o$}};
\node[below] at (-2.25,-5) {\strut{$a$}};
\node[below] at (-0.25,-5) {\strut{$b$}};
\node[below] at (3.5,-5) {\strut{$d$}};
\end{tikzpicture}
\end{subfigure}
\hfill
\begin{subfigure}{0.19\linewidth}
\centering
\begin{tikzpicture}[scale=0.2]
\draw[thick] (0,0) -- (-5,-5);
\draw[thick] (0,0) -- (2,-2);
\draw[thick] (1,-1) -- (0,-2);
\draw[thick,red] (0,-2) to[out=350,in=100] (0.25,-2.5);
\draw[thick,red] (2,-2) to[out=190,in=80] (1.75,-2.5);
\draw[thick] (0.25,-2.5) -- (-2.25,-5);
\draw[thick] (1.75,-2.5) -- (4.25,-5);
\draw[thick] (2.25,-3) -- (0.25,-5);
\node[below] at (-5,-5) {\strut{$o$}};
\node[below] at (-2.25,-5) {\strut{$a$}};
\node[below] at (0.25,-5) {\strut{$c$}};
\node[below] at (4.25,-5) {\strut{$d$}};
\end{tikzpicture}
\end{subfigure}
\hfill
\begin{subfigure}{0.19\linewidth}
\centering
\begin{tikzpicture}[scale=0.2]
\draw[thick] (0,0) -- (-5,-5);
\draw[thick] (0,0) -- (4,-4);
\draw[thick] (1,-1) -- (-1,-3);
\draw[thick] (2,-2) -- (1,-3);
\draw[thick,blue] (-1,-3) to[out=350,in=100] (-0.75,-3.5);
\draw[thick,blue] (1,-3) to[out=190,in=80] (0.75,-3.5);
\draw[thick] (-0.75,-3.5) -- (-1.25,-4);
\draw[thick] (0.75,-3.5) -- (-0.75,-5);
\draw[thick,teal] (-1.25,-4) to[out=350,in=100] (-1,-4.5);
\draw[thick,teal] (4,-4) to[out=190,in=80] (3.75,-4.5);
\draw[thick] (-1,-4.5) -- (-1.5,-5);
\draw[thick] (3.75,-4.5) -- (4.25,-5);
\node[below] at (-5,-5) {\strut{$o$}};
\node[below] at (-1.5,-5) {\strut{$b$}};
\node[below] at (-0.75,-5) {\strut{$c$}};
\node[below] at (4.25,-5) {\strut{$d$}};
\end{tikzpicture}
\end{subfigure}
\hspace*{\fill}
\caption{A $5$-taxon CDM (left) and the displayed $4$-taxon CDMs with the outgroup}
\label{5taxonexample}
\end{figure}
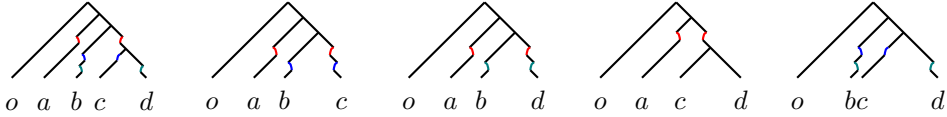

\begin{algorithm}
\small
\caption{$N$-taxon CDM inference}
\textbf{Input: }$N$-taxon CDM $\widehat{\mathcal{N}}$ comprising $N$-taxon topology of principal tree $\widehat{\mathcal{T}}$ and list of convergence groups $\widehat{\mathcal{G}}$, $\binom{N-1}{3}\times{}27$ matrix of model selection criterion values $\boldsymbol{M}$, matrix of expected convergence group orders $\boldsymbol{E}$ and vector of tip epoch convergence groups $\boldsymbol{D}$.
\begin{enumerate}[label*=\arabic*.]
\item Initialize column vector $\widehat{\boldsymbol{d}}$ of length $\binom{N}{2}$ of pairwise distances between leaf taxa as zero vector, with arbitrary order. Initialize binary $\binom{N}{2}\times{}\left(2N-3\right)$ matrix $\boldsymbol{X}$ as matrix of zeros, with row and column orders corresponding to orders of $\widehat{\boldsymbol{d}}$ and arbitrary edge labelings of (unrooted) $\widehat{\mathcal{T}}$.
\item Set $\left[\boldsymbol{X}\right]_{ij}=1$ if distance between taxon pair corresponding with $\left[\widehat{\boldsymbol{d}}\right]_i$ is computed by traversing edge $j$ of $\widehat{\mathcal{T}}$.
\item If $\widehat{\mathcal{T}}$ has polytomies, form $\widehat{\mathcal{T}}'$ by resolving them using step $5$ of Algorithm~\ref{algorithmprincipaltree2}. Otherwise, let $\widehat{\mathcal{T}}'=\widehat{\mathcal{T}}$. Form $\widehat{\mathcal{N}}'$ by replacing $\widehat{\mathcal{T}}$ with $\widehat{\mathcal{T}}'$.
\item Consider either: 1) each $4$-taxon set that includes $o$, suppressing any sister convergence groups of $4$-taxon CDMs displayed on $\mathcal{N}$, or 2) only $4$-taxon sets that include $o$ and for which $4$-taxon CDMs displayed on $\mathcal{N}$ have no sister convergence. Say arbitrary $4$-taxon set is $\left\{o,a,b,c\right\}$. For the $4$-taxon CDM:
\begin{enumerate}[label*=\arabic*.]
    \item Compute maximum likelihood estimate of $\gamma$.
    \item Compute maximum likelihood estimates of convergence parameters.
    \item Compute $d_{o,a}$, $d_{o,b}$, $d_{o,c}$, $d_{a,b}$, $d_{a,c}$ and $d_{b,c}$ as sums of parameter maximum likelihood estimates on shortest paths.
\end{enumerate}
\item Estimate $\widehat{\gamma}$, vector of convergence parameters $\widehat{\boldsymbol{v}}$ and elements of $\widehat{\boldsymbol{d}}$ as means of their estimates.
\item If element $\left[\widehat{\boldsymbol{d}}\right]_i$ has not been estimated, remove that element of $\widehat{\boldsymbol{d}}$ and row $i$ of $\boldsymbol{X}$.
\item Fixing topology of $\widehat{\mathcal{T}}'$, infer edges lengths $\widehat{\boldsymbol{l}}$ using ordinary least squares, $\widehat{\boldsymbol{l}}=\left(\boldsymbol{X}^T\boldsymbol{X}\right)^{-1}\boldsymbol{X}^T\widehat{\boldsymbol{d}}$, setting any negative elements of $\widehat{\boldsymbol{l}}$ to $0$. This step may not be possible if 2) is chosen in step $4$. In this case, return to step $4$ and choose 1).
\item If $\widehat{\mathcal{T}}$ has no polytomies:
\begin{itemize}
\item[] Let $\widehat{\mathcal{T}}=\widehat{\mathcal{T}}'$ be $N$-taxon principal tree.
\end{itemize}
Else:
\begin{itemize}
\item[] Form $\widehat{\mathcal{T}}''$ by reintroducing polytomies to $\widehat{\mathcal{T}}'$. For each edge $e_i$ of $\widehat{\mathcal{T}}''$ whose parent node is a polytomy, set length of $e_i$ to sum of parameters along path on $\widehat{\mathcal{T}}'$ between two nodes with same sets of descendant leaf taxa as two nodes $e_i$ is incident on. Let $\widehat{\mathcal{T}}=\widehat{\mathcal{T}}''$.
\end{itemize}
\item If there are edges of $\widehat{\mathcal{T}}$ shorter than sum of convergence parameters corresponding to sections of the edges, choose one such edge arbitrarily and rescale all such convergence parameters so they sum to edge length. Noting that convergence parameters corresponding to sections of other edges --- there are two edges corresponding to each convergence group --- have decreased, repeat process on remaining edges.
\item Let $\widehat{\mathcal{N}}$ be $N$-taxon CDM, with principal tree $\widehat{\mathcal{T}}$, root parameter $\widehat{\gamma}$, convergence groups $\widehat{\mathcal{G}}$, vector of convergence parameters $\widehat{\boldsymbol{v}}$, matrix of expected convergence group orders $\boldsymbol{E}$ and vector of tip epoch convergence groups $\boldsymbol{D}$.
\end{enumerate}
\textbf{Output: }$N$-taxon CDM $\widehat{\mathcal{N}}$.
\label{algorithmmetCDM}
\end{algorithm}

\begin{theorem}
\label{consparams}

Suppose CDM $\mathcal{N}$ has topology of principal tree $\mathcal{T}$, convergence groups $\mathcal{G}$, principal tree edge lengths $\boldsymbol{l}$, root parameter $\gamma$ and convergence parameters $\boldsymbol{v}$. Suppose $\mathcal{T}$, $\mathcal{G}$, convergence group partial orders and tip epoch convergence groups of $\mathcal{N}$ are input into Algorithm~\ref{algorithmmetCDM}. Suppose in step~4 of Algorithm~\ref{algorithmmetCDM} only $4$-taxon sets for which $4$-taxon CDMs displayed on $\mathcal{N}$ have no sister convergence are considered. Suppose that for each convergence group of $\mathcal{G}$ --- say $C_a=\left\{c_{1,a},c_{2,a}\right\}$ --- there is at least one $4$-taxon CDM displayed on $\mathcal{N}$ with no sister convergence where $x\in{}c_{1,a}$, $y\in{}c_{2,a}$ are non-sister leaf taxa on the displayed CDM. Suppose further that matrix $\boldsymbol{X}$ in step $6$ of Algorithm~\ref{algorithmmetCDM} has rank $2N-3$. Suppose $\widehat{\boldsymbol{l}}$, $\widehat{\gamma}$ and $\widehat{\boldsymbol{v}}$ are the estimates of $\boldsymbol{l}$, $\gamma$ and $\boldsymbol{v}$, respectively, inferred by Algorithm~\ref{algorithmmetCDM}. Then for any $\epsilon_>0$,
\begin{align*}
\lim_{n\to\infty}\mathbb{P}\left(\left|\widehat{\boldsymbol{l}}-\boldsymbol{l}\right|>\epsilon\right)=0,\quad{}\lim_{n\to\infty}\mathbb{P}\left(\left|\widehat{\gamma}-\gamma\right|>\epsilon\right)=0,\quad{}\lim_{n\to\infty}\mathbb{P}\left(\left|\widehat{\boldsymbol{v}}-\boldsymbol{v}\right|>\epsilon\right)=0,
\end{align*}
where $\left|\widehat{\boldsymbol{l}}-\boldsymbol{l}\right|$ and $\left|\widehat{\boldsymbol{v}}-\boldsymbol{v}\right|$ involve $l^1$ norms.

\end{theorem}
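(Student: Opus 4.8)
The plan is to reduce the theorem to three ingredients: (i) consistency of the maximum likelihood estimates computed on each admissible $4$-taxon set in step~4; (ii) the identification of the limits of those estimates with the corresponding quantities of $\mathcal{N}$; and (iii) propagation of this through the averaging in step~5 and the linear least-squares map in step~7. For (i), fix an admissible $4$-taxon set $\left\{o,a,b,c\right\}$, i.e.\ one for which the $4$-taxon CDM $\mathcal{N}'$ displayed on $\mathcal{N}$ has no sister convergence. Marginalizing the phylogenetic tensor of $\mathcal{N}$ over the $N-4$ deleted taxa yields exactly the phylogenetic tensor of $\mathcal{N}'$, so under the iid assumption the vector of counts of the $2^4$ state combinations across $\left\{o,a,b,c\right\}$ is multinomial with the cell probabilities of $\mathcal{N}'$. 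Because $\mathcal{T}$, $\mathcal{G}$, the convergence group partial orders and the tip epoch convergence groups are supplied as input and pin down a unique candidate $4$-taxon CDM (up to leaf labeling, one of CDMs $1$--$5$ of Figure~\ref{4taxonCDMs2} with no sister convergence), the model fitted in step~4 is correctly specified. By Assumption~\ref{genparam} the generating parameter of $\mathcal{N}$ is generic, hence has positive edge lengths, positive convergence parameters and root parameter in $\left(-1,1\right)$; this restricts to an interior point of the parameter space of $\mathcal{N}'$, so the fitted model is a regular exponential family in the sense of Definition~2.1 of \cite{drton2009likelihood}. Propositions~\ref{distident}, \ref{conident} and \ref{rootident} show that on each of CDMs $1$--$5$ the pairwise distances between leaf taxa, the convergence parameters and the root parameter are identifiable, so by standard maximum likelihood theory the step-4 estimates of these quantities on $\left\{o,a,b,c\right\}$ converge in probability as $n\to\infty$ to the corresponding parameters of $\mathcal{N}'$.

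Next I identify those limits with quantities of $\mathcal{N}$. Every displayed CDM has root parameter $\gamma=\frac{\beta-\alpha}{\alpha+\beta}$, since all CDMs are $2$-state general CDMs with the stationary distribution at the root, so each step-4 estimate of $\gamma$ is consistent for $\gamma$. For the distances, the shortest path between two retained leaf taxa $x,y$ in $\mathcal{N}'$ is the image of the shortest path between them in $\mathcal{N}$; deleting a taxon can only turn a converging section into a diverging section, or delete an edge not on the $x$--$y$ path, but the distance counts convergence and divergence parameters alike, so the $x$--$y$ distance on $\mathcal{N}'$ equals the $x$--$y$ distance $d_{x,y}$ on $\mathcal{N}$. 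By hypothesis, each convergence group $C_a=\left\{c_{1,a},c_{2,a}\right\}$ of $\mathcal{G}$ is displayed as a non-sister convergence group on at least one admissible $4$-taxon CDM, and on that CDM its convergence parameter equals the convergence parameter $v_{C_a}$ of $\mathcal{N}$, since restriction does not change the amount of convergent evolution along the corresponding edge sections. Averaging over the finitely many admissible $4$-taxon sets in step~5, together with the continuous mapping theorem and Slutsky's theorem, therefore gives $\widehat{\gamma}\to\gamma$, $\widehat{\boldsymbol{v}}\to\boldsymbol{v}$ and $\widehat{\boldsymbol{d}}\to\boldsymbol{d}$ in probability, where $\boldsymbol{d}$ is the vector of true pairwise distances over the taxon pairs that are actually estimated; by the displayed-convergence hypothesis every convergence parameter, and by the rank hypothesis enough distances, are estimated.

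Finally I push this through steps~6--10. Since $\mathcal{T}$ is binary there are no polytomies, so the polytomy-resolution steps are vacuous. By definition of edge length as a sum of section parameters, the true distances satisfy $\boldsymbol{d}=\boldsymbol{X}\boldsymbol{l}$ on the estimated rows; since $\boldsymbol{X}$ has full column rank $2N-3$ by hypothesis, $\boldsymbol{X}^T\boldsymbol{X}$ is invertible and $\left(\boldsymbol{X}^T\boldsymbol{X}\right)^{-1}\boldsymbol{X}^T\boldsymbol{d}=\boldsymbol{l}$, so the ordinary least squares map of step~7 is a fixed linear, hence continuous, map sending the true $\boldsymbol{d}$ to $\boldsymbol{l}$. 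The continuous mapping theorem then yields $\widehat{\boldsymbol{l}}=\left(\boldsymbol{X}^T\boldsymbol{X}\right)^{-1}\boldsymbol{X}^T\widehat{\boldsymbol{d}}\to\boldsymbol{l}$ in probability. The truncation of negative entries in step~7 and the convergence-parameter rescaling in step~10 are asymptotically vacuous: $\boldsymbol{l}$ has strictly positive entries and the sum of convergence parameters along each edge of $\mathcal{N}$ is at most its length, so with probability tending to $1$ neither adjustment changes anything. Combining the three convergence statements gives the claim.

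The main obstacle, and the part where real work is needed, is the bookkeeping in the second paragraph: one must verify, using the precise ``displayed CDM'' construction of Section~\ref{ident}, that marginalizing $\mathcal{N}$ to four taxa preserves each $x$--$y$ distance and the value of each displayed convergence parameter --- the point being that restriction may reclassify converging sections as diverging sections and may collapse repeated decorated partitions, but never changes the total ``amount of evolution'' along the relevant paths --- and that a generic generating parameter of $\mathcal{N}$ maps to an interior point of the displayed $4$-taxon CDM's parameter space, so that the regular-exponential-family maximum likelihood theory invoked in the first paragraph applies without qualification.
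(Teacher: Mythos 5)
Your proposal is correct and follows essentially the same route as the paper's own proof: identifiability of the pairwise distances, convergence parameters and root parameter on each displayed $4$-taxon CDM (Propositions~\ref{distident}, \ref{conident} and \ref{rootident}), consistency of the maximum likelihood estimates on each admissible $4$-taxon set, averaging over those sets, and invertibility of $\boldsymbol{X}^T\boldsymbol{X}$ for the ordinary least squares step. You fill in several details the paper leaves implicit --- the marginalization/correct-specification argument, the identification of displayed-CDM distances and convergence parameters with those of $\mathcal{N}$, and the asymptotic vacuity of the negative-entry truncation and rescaling steps --- but the underlying argument is the same.
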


See Appendix~\ref{theorem17} for the proof.

\section{Discussion}

Phylogenetic networks usually assume instantaneous gene flow between taxa. However, many biological processes can lead to gradual gene flow over time, for example, introgressive hybridization. Isolation with migration models \citep{hey2010isolation} model migration of individuals between populations over a period of time. ABBA-BABA tests \citep{green2010draft} test for gene flow between populations over time. However, both isolation with migration models and ABBA-BABA tests are limited to small numbers of taxa. We have developed phylogenetic models of gene flow between taxa over time that can be applied to large datasets. Convergence-divergence models are generalizations of phylogenetic trees for many-taxon datasets. In contrast to phylogenetic networks, they have a single ``principal tree''. A Markov model describes independent divergence of taxa on the principal tree from common ancestors. However, the Markov model also describes convergence of some previously diverged taxa over a time interval.

Genotypic replicated evolution is the independent evolution in distinct taxa of genotypic similarities --- for example, at nucleotide sites --- from similar selective pressures. This violates the assumptions of phylogenetic trees of independent divergence of taxa from common ancestors. Replicated evolution can lead to the gradual convergence of taxa. This process is not appropriately modeled by phylogenetic networks, but can be modeled by our convergence-divergence models. A gene ``principal tree'' can be inferred for a specific gene or genomic window. On the gene or genomic window, replicated evolution can be modeled by convergence of taxa over a time interval.

There are many types of datasets that a researcher may have access to --- for example, gene presence/absence datasets. Genes being formed and lost on independently diverging taxa could be modeled by a Markov model on a phylogenetic tree. However, some taxa may show similar patterns of presence/absence of some genes due to gene flow over a time interval. This gene flow over time leading to similar presence/absence patterns of genes in some taxa can be modeled by convergence-divergence models.

\backmatter

\bmhead{Supplementary Information} Supplementary material is available at \url{https://github.com/jonathanmitchell88/CDMsSI}.

\bmhead{Acknowledgements}

We thank Jeremy Sumner for helpful feedback on the manuscript.



\section*{Declarations}

\bmhead{Funding} This work was funded by The Australian Research Council Centre of Excellence for Plant Success in Nature and Agriculture (CE200100015).
\bmhead{Conflict of interest} The authors declare no conflict of interest.

\putbib
\end{bibunit}

\newpage

\newcommand{\nphantom}[1]{\sbox0{#1}\hspace{-\the\wd0}}

\setcounter{theorem}{0}

\renewcommand{\thetheorem}{\arabic{theorem}A}

\setcounter{section}{0}

\setcounter{proposition}{0}

\renewcommand\theproposition{\arabic{proposition}A}

\setcounter{algorithm}{0}

\renewcommand\thealgorithm{\arabic{algorithm}A}

\setcounter{equation}{0}

\renewcommand\theequation{\arabic{equation}A}

\setcounter{figure}{0}

\renewcommand\thefigure{\arabic{figure}A}

\begin{appendices}

\begin{bibunit}

\section{Parameter identifiability}
\label{pars}

Recall that we decompose edges of the principal tree into ``diverging sections'' and ``converging sections''. Converging sections span only a single epoch, while diverging sections may span multiple epochs.

Recall that rates and epoch times cannot be identified individually; only their products can be identified. A parameter associated with a diverging section of an edge than spans a single epoch cannot be identified. Instead, an ``average'' over the maximum number of epochs a contiguous diverging section can span can be identified. For example, suppose rate matrix $\boldsymbol{Q}_1$ applies over epoch time $t_1$ to a diverging section of an edge immediately before an event and rate matrix $\boldsymbol{Q}_2$ applies over epoch time $t_2$ to a diverging section of the edge immediately after the event. Then for the $2$-state general Markov model, $\exp\left(\boldsymbol{\widehat{Q}}\left(t_1+t_2\right)\right)=\exp\left(\boldsymbol{Q}_2t_2\right)\exp\left(\boldsymbol{Q}_1t_1\right)$, where $\boldsymbol{\widehat{Q}}$ is again a rate matrix from the $2$-state general Markov model. Thus, we apply rate matrix $\boldsymbol{\widehat{Q}}$ to both diverging sections of the edge.

This lack of identifiability result follows from the $2$-state general Markov model forming a Lie algebra, sufficient for multiplicative closure of the model class \citep{sumner2012lie}. Suppose $\alpha_1$ and $\beta_1$ and $\alpha_2$ and $\beta_2$ correspond with rate matrices $\boldsymbol{Q}_1$ and $\boldsymbol{Q}_2$, respectively. Then by Definition~\ref{CDM}, $\frac{\alpha_1}{\beta_1}=\frac{\alpha_2}{\beta_2}$. It is straightforward to show that if $\widehat{\alpha}$ and $\widehat{\beta}$ are associated with $\boldsymbol{\widehat{Q}}$, then $\frac{\widehat{\alpha}}{\widehat{\beta}}=\frac{\alpha_1}{\beta_1}=\frac{\alpha_2}{\beta_2}$. Thus, the product of the two transition matrices is replaced by a single ``average'' transition matrix.

All parameters except for those corresponding to the root distribution are of the form $l_i=a_i+b_i=\alpha_{i}t_i+\beta_{i}t_i=a_i\left(1+\frac{b_j}{a_j}\right)$, where $i$ and $j$ are arbitrary parameter indices, $a_i=\alpha_it_i$ and $b_i=\beta_it_i$. (Note that these parameters are scalars, whereas $\exp\left(\boldsymbol{\widehat{Q}}\left(t_1+t_2\right)\right)$ is a matrix.) Since $\frac{\alpha_i}{\beta_i}=\frac{\alpha_j}{\beta_j}$, it follows that $\frac{a_i}{b_i}=\frac{a_j}{b_j}$.

Contiguous diverging sections of an edge --- not separated by a converging section --- each have a single associated parameter $l_i$. Furthermore, each convergence group has an associated parameter $l_j$, in common for all converging sections of edges in the convergence group. In addition to parameters describing the convergence groups and contiguous diverging sections, there is a parameter $\gamma=\left[\boldsymbol{\Pi}\right]_0-\left[\boldsymbol{\Pi}\right]_1=\frac{-a_i+b_i}{a_i+b_i}$ describing the difference in probabilities of states $0$ and $1$ on the root taxon.

To form the set of parameters of a CDM, we consider a particular unique set of diverging and converging sections. Since differences in parameters between contiguous diverging sections cannot be identified, the diverging sections we consider are those sections on the principal tree between a node or converging section and another node or converging section. Furthermore, since the exact root location on the outgroup edge is not identifiable, we consider one diverging section to be the entire outgroup edge when the principal tree of the CDM is unrooted. The converging sections correspond to individual epochs where there is a convergence group. Converging sections correspond to convergence parameters and diverging sections correspond to divergence parameters.

Note that although this is the general formulation of the parameter space, on a given CDM not all parameters are necessarily identifiable. To obtain an identifiable set of parameters some combinations of the divergence parameters may be required, which we describe in Appendix~\ref{CDM5appendix}. For the following sections, the parameters $x_i=\exp\left(-l_i\right)\in\left(0,1\right)$ and the variants $y_i$ and $z_i$ are used for establishing identifiability and distinguishability of CDMs.

\section{Limiting behavior of converging taxa}
\label{lim}

\subsection{Proof of Proposition~\ref{p1}}
\label{prop1}

Before proving the claim, we introduce some notation. Using similar notation to \cite{sumner2012algebra}, for some arbitrary integer $l\geq{}1$ and $\boldsymbol{X}\in\left\{\boldsymbol{L}_\alpha,\boldsymbol{L}_\beta\right\}$,
\begin{align*}
\boldsymbol{X}^{\left(A\right)}=\prod_{i\in{}A}\boldsymbol{X}^{\left(i\right)},
\end{align*}
where $\boldsymbol{X}^{\left(i\right)}=\boldsymbol{I}\otimes{}\boldsymbol{I}\otimes{}\ldots{}\otimes{}\boldsymbol{X}\otimes{}\boldsymbol{I}\otimes{}\ldots{}\otimes{}\boldsymbol{I}$ has $\boldsymbol{X}$ in the $i^{th}$ position and $\boldsymbol{I}$ in all $l-1$ other positions, $\otimes$ is the Kronecker product, $A\subseteq{}\left[l\right]=\left\{1,2,\ldots{},l\right\}$ and
\begin{align*}
\boldsymbol{L}_\alpha=\left[\begin{array}{cc}
-1 & 0 \\
1 & 0
\end{array}\right],\quad{}
\boldsymbol{L}_\beta=\left[\begin{array}{cc}
0 & 1 \\
0 & -1
\end{array}\right]
\quad\text{ and }
\boldsymbol{I}=\left[\begin{array}{cc}
1 & 0 \\
0 & 1
\end{array}\right].
\end{align*}

Then we define
\begin{align*}
\boldsymbol{\mathcal{L}}_\alpha^{\left[l\right]}=&\sum_{B\subseteq{}\left[l\right]}\boldsymbol{L}_\alpha^{\left(B\right)}=\left(\boldsymbol{L}_\alpha+\boldsymbol{I}\right)^{\otimes{}l}, \\
\boldsymbol{\mathcal{L}}_\beta^{\left[l\right]}=&\sum_{B\subseteq{}\left[l\right]}\boldsymbol{L}_\beta^{\left(B\right)}=\left(\boldsymbol{L}_\beta+\boldsymbol{I}\right)^{\otimes{}l},
\end{align*}
where $\left(\boldsymbol{X}+\boldsymbol{I}\right)^{\otimes{}l}=\left(\boldsymbol{X}+\boldsymbol{I}\right)\otimes{}\left(\boldsymbol{X}+\boldsymbol{I}\right)\otimes{}\ldots{}\otimes{}\left(\boldsymbol{X}+\boldsymbol{I}\right)$ involves $\boldsymbol{X}+\boldsymbol{I}$ a total of $l$ times.

Note that this definition is very similar to the rate matrix of \cite{sumner2012algebra} for all $l$ taxa present in an epoch also being in a convergence group,
\begin{align*}
\boldsymbol{Q}^{\left[l\right]}=\alpha\boldsymbol{\mathfrak{L}}_\alpha^{\left[l\right]}+\beta\boldsymbol{\mathfrak{L}}_\beta^{\left[l\right]},
\end{align*}
where
\begin{align*}
\boldsymbol{\mathfrak{L}}_\alpha^{\left[l\right]}=\sum_{B\subseteq{}\left[l\right],B\neq{}\emptyset}\boldsymbol{L}_\alpha^{\left(B\right)},\quad{}
\boldsymbol{\mathfrak{L}}_\beta^{\left[l\right]}=\sum_{B\subseteq{}\left[l\right],B\neq{}\emptyset}\boldsymbol{L}_\beta^{\left(B\right)}.
\end{align*}

Then
\begin{align*}
\boldsymbol{\mathcal{L}}_\alpha^{\left[l\right]}=\boldsymbol{\mathfrak{L}}_\alpha^{\left[l\right]}+\boldsymbol{I}^{\otimes{}l},\quad{}\boldsymbol{\mathcal{L}}_\beta^{\left[l\right]}=\boldsymbol{\mathfrak{L}}_\beta^{\left[l\right]}+\boldsymbol{I}^{\otimes{}l},
\end{align*}
where $\boldsymbol{I}^{\otimes{}l}=\boldsymbol{I}\otimes{}\boldsymbol{I}\otimes{}\ldots{}\otimes{}\boldsymbol{I}$ involves $\boldsymbol{I}$ a total of $l$ times.

\bigskip

The proof is split into four parts. We determine the rate matrix for an arbitrary epoch in each part of the proof. 1) Instead of having $N$ taxa, we assume that $\mathcal{N}$ has $l$ taxa, where $l\in\left\{1,2,\ldots{},N\right\}$, all in the same convergence-divergence group in some arbitrary epoch. 2) We assume that $\mathcal{N}$ has $N$ taxa, with the first $l$ --- according to the indices $i=i_1i_2\ldots{}i_N$ and $j=j_1j_2\ldots{}j_N$ --- in the same convergence-divergence group in the epoch. 3) We determine the rate matrix corresponding to an arbitrary convergence-divergence group with $l$ taxa --- still assuming $\mathcal{N}$ has $N$ taxa --- by permuting the taxon labels, which corresponds to permuting the indices $i$ and $j$. 4) The rate matrix for the epoch is determined by summing the rate matrices corresponding to all convergence-divergence groups in arbitrary epoch --- all $N$ taxa are in exactly one convergence-divergence group.

\setcounter{prop}{0}

\begin{prop}
Suppose a tip epoch of CDM $\mathcal{N}$ with leaf taxon set $X$ and $\left|X\right|=N$ corresponds to a set of sets of taxa in each convergence-divergence group $\mathcal{C}=\left\{C_1,C_2,\ldots{}C_k\right\}$. Suppose $\boldsymbol{Q}^{\left[\mathcal{C}\right]}$ is the $2^N\times2^N$ rate matrix representing $\mathcal{C}$. Then
\begin{align*}
\left[\boldsymbol{Q}^{\left[\mathcal{C}\right]}\right]_{ij}=
\begin{cases}
\alpha_r \quad &\text{if for some } C_r\in{}\mathcal{C}, \quad \prod_{a\in{}C_r}j_a=0, \\
&\prod_{a\in{}C_r}i_a=1 \text{ and } i_a=j_a \text{ for all } a\in{}X\setminus{}C_r, \\
\beta_r \quad &\text{if for some } C_r\in{}\mathcal{C}, \quad \prod_{a\in{}C_r}\left(1-j_a\right)=0, \\
&\prod_{a\in{}C_r}\left(1-i_a\right)=1 \text{ and } i_a=j_a \text{ for all } a\in{}X\setminus{}C_r, \\
0 \quad &\text{otherwise if } i\neq{}j, \\
-\sum_{s=1,s\neq{}j}^{2^N}\left[\boldsymbol{Q}^{\left[\mathcal{C}\right]}\right]_{sj} \quad &\text{if } i=j,
\end{cases}
\end{align*}
where $\alpha_r,\beta_r>0$.
\end{prop}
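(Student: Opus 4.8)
The plan is to follow the four-stage decomposition already sketched in the preamble, assembling the epoch rate matrix from the rate matrix of a single convergence-divergence group. \emph{Stage 1: one group spanning all $l$ taxa.} Here I would read off the entries of $\boldsymbol{Q}^{[l]} = \alpha\boldsymbol{\mathfrak{L}}_\alpha^{[l]} + \beta\boldsymbol{\mathfrak{L}}_\beta^{[l]}$ from the tensor notation of \cite{sumner2012algebra}. The key simplification is that $\boldsymbol{\mathcal{L}}_\alpha^{[l]} = (\boldsymbol{L}_\alpha + \boldsymbol{I})^{\otimes l}$ with $\boldsymbol{L}_\alpha + \boldsymbol{I} = \left[\begin{smallmatrix}0 & 0 \\ 1 & 1\end{smallmatrix}\right]$, so the Kronecker-product entry $[\boldsymbol{\mathcal{L}}_\alpha^{[l]}]_{ij} = \prod_{a=1}^{l}[\boldsymbol{L}_\alpha+\boldsymbol{I}]_{i_aj_a}$ equals $1$ exactly when $i_a = 1$ for every $a$ (equivalently $\prod_a i_a = 1$) and $0$ otherwise; symmetrically $[\boldsymbol{\mathcal{L}}_\beta^{[l]}]_{ij} = 1$ iff $\prod_a(1-i_a) = 1$. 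Passing to $\boldsymbol{\mathfrak{L}}_\alpha^{[l]} = \boldsymbol{\mathcal{L}}_\alpha^{[l]} - \boldsymbol{I}^{\otimes l}$ only alters the diagonal, so for $i\neq j$ one obtains $[\boldsymbol{Q}^{[l]}]_{ij} = \alpha$ when $\prod_a i_a = 1$ (which forces $\prod_a j_a = 0$ since $i\neq j$), $[\boldsymbol{Q}^{[l]}]_{ij} = \beta$ when $\prod_a(1-i_a) = 1$, and $0$ otherwise; the $\alpha$- and $\beta$-cases cannot coincide because the all-ones and all-zeros indices are distinct. Since $\boldsymbol{\mathcal{L}}_\alpha^{[l]}$ and $\boldsymbol{\mathcal{L}}_\beta^{[l]}$ have exactly one nonzero entry per column, $\boldsymbol{\mathfrak{L}}_\alpha^{[l]}$ and $\boldsymbol{\mathfrak{L}}_\beta^{[l]}$ have zero column sums, giving the stated diagonal. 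This is Proposition~\ref{p1} in the single-group case $\mathcal{C} = \{X\}$.

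\emph{Stages 2--3: one group inside an $N$-taxon system.} If the group is the first $l$ taxa, its generator on the $2^N$-dimensional state space is $\boldsymbol{Q}^{[l]} \otimes \boldsymbol{I}^{\otimes(N-l)}$, with $(i,j)$-entry $[\boldsymbol{Q}^{[l]}]_{i_1\ldots i_l,\,j_1\ldots j_l}\prod_{a=l+1}^{N}\delta_{i_aj_a}$; substituting the Stage 1 formula turns the trailing Kronecker deltas into the condition ``$i_a = j_a$ for all $a\in X\setminus C_r$'' and reproduces the claimed formula for $C_r = \{1,\ldots,l\}$. For an arbitrary group $C_r$ I would conjugate by the permutation matrix $\boldsymbol{P}_\sigma$ that reorders the tensor factors so that $C_r$ comes first; conjugation sends $[\boldsymbol{M}]_{ij}\mapsto[\boldsymbol{M}]_{\sigma^{-1}(i),\sigma^{-1}(j)}$, which merely relabels the index conditions from $\{1,\ldots,l\}$ to $C_r$, so the single-group form of Proposition~\ref{p1} holds verbatim.

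\emph{Stage 4: summing over groups.} Because the members of $\mathcal{C}$ partition $X$ and a permitted substitution never changes states in two groups simultaneously, the epoch generator is the sum over $r=1,\ldots,k$ of the embedded single-group generators (the usual generator of a product of independent processes). For an off-diagonal pair $i\neq j$, at most one summand is nonzero: a group $C_r$ contributes only if $i$ and $j$ agree off $C_r$, and the coordinates at which $i$ and $j$ differ determine $C_r$ uniquely; this yields the first three cases of the formula. The diagonal is the negative column sum of each summand, hence the negative column sum of $\boldsymbol{Q}^{[\mathcal{C}]}$, giving the last case. I expect the only genuine work to be Stage 1 — correctly extracting the entrywise description of $\boldsymbol{Q}^{[l]}$ from the Lie-algebraic tensor notation — after which Stages 2--4 are routine bookkeeping with Kronecker products and index permutations.
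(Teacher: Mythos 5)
Your proposal is correct and follows essentially the same four-stage route as the paper's proof (single group of $l$ taxa, embedding via $\boldsymbol{Q}^{\left[l\right]}\otimes\boldsymbol{I}^{\otimes{}N-l}$, relabeling by a permutation of tensor slots, then summing over the disjoint groups). The only difference is cosmetic: you read the entries of $\left(\boldsymbol{L}_\alpha+\boldsymbol{I}\right)^{\otimes{}l}$ off directly from the Kronecker-product entry formula where the paper runs an induction on $l$, and your explicit remark that disjointness of the groups makes at most one summand nonzero off the diagonal is a point the paper leaves implicit.
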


\begin{proof}

1) Suppose $\mathcal{N}$ has only $l$ taxa, where $l\in\left\{1,2,\ldots{},N\right\}$, all in the same convergence-divergence group in some epoch. Then from \cite{sumner2012algebra}, the rate matrix for the epoch is
\begin{align*}
\boldsymbol{Q}^{\left[l\right]}=&\alpha\boldsymbol{\mathfrak{L}}_\alpha^{\left[l\right]}+\beta\boldsymbol{\mathfrak{L}}_\beta^{\left[l\right]},
\end{align*}
where $\alpha,\beta>0$.

We first prove that
\begin{align*}
\left[\boldsymbol{Q}^{\left[l\right]}\right]_{ij}=
\begin{cases}
\alpha \quad &\text{if } \prod_{a=1}^{l}i_a=1 \text{ and }\prod_{a=1}^{l}j_a=0, \\
\beta \quad &\text{if } \prod_{a=1}^{l}\left(1-i_a\right)=1 \text{ and }\prod_{a=1}^{l}\left(1-j_a\right)=0, \\
0 \quad &\text{otherwise if } i\neq{}j.
\end{cases}
\end{align*}

Note that we have not declared the diagonal elements of $\boldsymbol{Q}^{\left[l\right]}$, which are determined in the next part of the proof.

We first define $\boldsymbol{\widetilde{Q}}^{\left[l\right]}=\alpha\boldsymbol{\mathcal{L}}_{\alpha}^{\left[l\right]}+\beta\boldsymbol{\mathcal{L}}_{\beta}^{\left[l\right]}$ and determine its elements by mathematical induction. It is straightforward to show that
\begin{align*}
\boldsymbol{L}_\alpha+\boldsymbol{I}=\left[\begin{array}{cc}
0 & 0 \\
1 & 1
\end{array}\right],\quad
\boldsymbol{L}_\beta+\boldsymbol{I}=\left[\begin{array}{cc}
1 & 1 \\
0 & 0
\end{array}\right]
\end{align*}
and in turn,
\begin{align*}
\boldsymbol{\widetilde{Q}}^{\left[1\right]}=\alpha\left(\boldsymbol{L}_\alpha+\boldsymbol{I}\right)+\beta\left(\boldsymbol{L}_\beta+\boldsymbol{I}\right)=&\left[\begin{array}{cc}
\beta & \beta \\
\alpha & \alpha
\end{array}\right].
\end{align*}

We claim that for some integer $m\geq{}1$,
\begin{align*}
\left[\boldsymbol{\widetilde{Q}}^{\left[m\right]}\right]_{ij}=
\begin{cases}
\beta \quad &\text{if } i=1, \\
\alpha \quad &\text{if } i=2^{m}, \\
0 \quad &\text{otherwise}.
\end{cases}
\end{align*}

We establish that if the claim is true, then
\begin{align*}
\left[\boldsymbol{\widetilde{Q}}^{\left[m+1\right]}\right]_{ij}=
\begin{cases}
\beta \quad &\text{if } i=1, \\
\alpha \quad &\text{if } i=2^{m+1}, \\
0 \quad &\text{otherwise}.
\end{cases}
\end{align*}

Since
\begin{align*}
\boldsymbol{\widetilde{Q}}^{\left[m\right]}=&\alpha\boldsymbol{\mathcal{L}}_{\alpha}^{\left[m\right]}+\beta\boldsymbol{\mathcal{L}}_{\beta}^{\left[m\right]} \\
=&\alpha\left(\boldsymbol{L}_\alpha+\boldsymbol{I}\right)^{\otimes{}m}+\beta\left(\boldsymbol{L}_\beta+\boldsymbol{I}\right)^{\otimes{}m},
\end{align*}
it follows that
\begin{align*}
\boldsymbol{\widetilde{Q}}^{\left[m+1\right]}=\alpha\left(\boldsymbol{L}_\alpha+\boldsymbol{I}\right)\otimes\boldsymbol{\mathcal{L}}_\alpha^{\left[m\right]}+\beta\left(\boldsymbol{L}_\beta+\boldsymbol{I}\right)\otimes\boldsymbol{\mathcal{L}}_\beta^{\left[m\right]}
\end{align*}
and thus the claim is established.

Then since
\begin{align*}
\boldsymbol{Q}^{\left[l\right]}=&\alpha\boldsymbol{\mathfrak{L}}_\alpha^{\left[l\right]}+\beta\boldsymbol{\mathfrak{L}}_\beta^{\left[l\right]} \\
=&\boldsymbol{\widetilde{Q}}^{\left[l\right]}-\left(\alpha+\beta\right)\boldsymbol{I}^{\otimes{}l},
\end{align*}
it follows that
\begin{align*}
\left[\boldsymbol{Q}^{\left[l\right]}\right]_{ij}=
\begin{cases}
\alpha \quad &\text{if } \prod_{a=1}^{l}i_a=1 \text{ and }\prod_{a=1}^{l}j_a=0, \\
\beta \quad &\text{if } \prod_{a=1}^{l}\left(1-i_a\right)=1 \text{ and }\prod_{a=1}^{l}\left(1-j_a\right)=0, \\
0 \quad &\text{otherwise if } i\neq{}j.
\end{cases}
\end{align*}

Note that $\sum_{i=1}^{2^l}\left[\boldsymbol{\widetilde{Q}}^{\left[l\right]}\right]_{ij}=\alpha+\beta$. Thus, $\sum_{i=1}^{2^l}\left[\boldsymbol{Q}^{\left[l\right]}\right]_{ij}=0$. Summarizing,
\begin{align*}
\left[\boldsymbol{Q}^{\left[l\right]}\right]_{ij}=
\begin{cases}
\alpha \quad &\text{if } \prod_{a=1}^{l}i_a=1 \text{ and }\prod_{a=1}^{l}j_a=0, \\
\beta \quad &\text{if } \prod_{a=1}^{l}\left(1-i_a\right)=1 \text{ and }\prod_{a=1}^{l}\left(1-j_a\right)=0, \\
0 \quad &\text{otherwise if } i\neq{}j, \\
-\sum_{s=1,s\neq{}j}^{2^l}\left[\boldsymbol{Q}^{\left[l\right]}\right]_{sj} \quad &\text{if } i=j.
\end{cases}
\end{align*}

From here onwards, we let $\ast$ represent the negative of the sum of all non-diagonal elements of column $j$ of the rate matrix.

\bigskip

2) Suppose that $\mathcal{N}$ has $N$ taxa, with the first $l$ --- according to the indices $i=i_1i_2\ldots{}i_N$ and $j=j_1j_2\ldots{}j_N$ --- in a convergence-divergence group. Assume that this is the only convergence-divergence group in the epoch. That is, in the epoch the last $N-l$ taxa are not in any convergence-divergence group; they will be assigned to convergence-divergence groups at later steps in the proof. Then by \cite{sumner2012algebra}, the rate matrix for the last $N-l$ taxa is $I^{\otimes{}N-l}$; that is, the last $N-l$ taxa are not evolving. Note that since the first $l$ taxa are diverging independently from the last $N-l$ taxa, the rate matrix for all $N$ taxa is the Kronecker product of the rate matrix for the first $l$ taxa and the rate matrix for the last $N-l$ taxa. Then it follows directly from 1) that the rate matrix has elements
\begin{align*}
\left[\boldsymbol{Q}^{\left[l\right]}\otimes{}\boldsymbol{I}^{\otimes{}N-l}\right]_{ij}=
\begin{cases}
\alpha \quad &\text{if } \prod_{a=1}^{l}i_a=1, \text{ }\prod_{a=1}^{l}j_a=0 \\
&\text{and } i_a=j_a \text{ for all } a\in\left\{l+1,l+2,\ldots{},N\right\}, \\
\beta \quad &\text{if } \prod_{a=1}^{l}\left(1-i_a\right)=1, \text{ }\prod_{a=1}^{l}\left(1-j_a\right)=0 \\
&\text{and } i_a=j_a \text{ for all } a\in\left\{l+1,l+2,\ldots{},N\right\}, \\
0 \quad &\text{otherwise if } i\neq{}j, \\
\ast \quad &\text{if } i=j.
\end{cases}
\end{align*}

\bigskip

3) Consider a permutation of leaf taxon $X$. This is an element of the symmetric group $\mathfrak{S}_N$ on $X$. An arbitrary such permutation corresponds to an arbitrary re-order of the leaf taxa. We allow the permutation to act on $V^{\otimes{}N}$ (the tensor product space of \cite{sumner2012algebra}). For some arbitrary convergence-divergence group $C_r\in\mathcal{C}$ involving $l$ taxa, there exists a permutation $\sigma\in\mathfrak{S}_N$ such that $\sigma\left(\boldsymbol{Q}^{\left[l\right]}\otimes{}\boldsymbol{I}^{\otimes{}N-l}\right)=\boldsymbol{Q}^{\left[C_r\right]}$, where $\boldsymbol{Q}^{\left[C_r\right]}$ is the rate matrix for the convergence-divergence group $C_r$ and all other leaf taxa belonging to no convergence-divergence group. Then by \cite{sumner2012algebra},
\begin{align*}
\boldsymbol{Q}^{\left[C_r\right]}=\alpha\boldsymbol{\mathfrak{L}}_\alpha^{\left[C_r\right]}+\beta\boldsymbol{\mathfrak{L}}_\beta^{\left[C_r\right]},
\end{align*}
which is obtained from $\boldsymbol{Q}^{\left[l\right]}\otimes{}\boldsymbol{I}^{\otimes{}N-l}$ by the permutation $\sigma$ on the slots of the Kronecker products of each term of $\boldsymbol{\mathfrak{L}}_\alpha^{\left[l\right]}\otimes{}\boldsymbol{I}^{\otimes{}N-l}$ and $\boldsymbol{\mathfrak{L}}_\beta^{\left[l\right]}\otimes{}\boldsymbol{I}^{\otimes{}N-l}$.

Then it follows directly from 2) that
\begin{align*}
\left[\boldsymbol{Q}^{\left[C_r\right]}\right]_{ij}=
\begin{cases}
\alpha \quad &\text{if } \prod_{a\in{}C_r}i_a=1, \text{ }\prod_{a\in{}C_r}j_a=0 \\
&\text{and } i_a=j_a \text{ for all } a\in{}\left[N\right]\setminus{}C_r, \\
\beta \quad &\text{if } \prod_{a\in{}C_r}\left(1-i_a\right)=1, \text{ }\prod_{a\in{}C_r}\left(1-j_a\right)=0 \\
&\text{and } i_a=j_a \text{ for all } a\in{}\left[N\right]\setminus{}C_r, \\
0 \quad &\text{otherwise if } i\neq{}j, \\
\ast \quad &\text{if } i=j.
\end{cases}
\end{align*}

\bigskip

4) Now suppose the substitution rates for convergence-divergence group $C_r$ are $\alpha_r,\beta_r>0$. Then since $\boldsymbol{Q}^{\left[\mathcal{C}\right]}=\sum_{C_r\in\mathcal{C}}\boldsymbol{Q}^{\left[C_r\right]}$, it follows directly from 3) that
\begin{align*}
\left[\boldsymbol{Q}^{\left[\mathcal{C}\right]}\right]_{ij}=
\begin{cases}
\alpha_r \quad &\text{if for some } C_r\in\mathcal{C} \text{, } \prod_{a\in{}C_r}i_a=1, \text{ }\prod_{a\in{}C_r}j_a=0 \\
&\text{and } i_a=j_a \text{ for all } a\in{}\left[N\right]\setminus{}C_r, \\
\beta_r \quad &\text{if for some } C_r\in\mathcal{C} \text{, } \prod_{a\in{}C_r}\left(1-i_a\right)=1, \text{ }\prod_{a\in{}C_r}\left(1-j_a\right)=0 \\
&\text{and } i_a=j_a \text{ for all } a\in{}\left[N\right]\setminus{}C_r, \\
0 \quad &\text{otherwise if } i\neq{}j, \\
\ast \quad &\text{if } i=j.
\end{cases}
\end{align*}

\end{proof}

\subsection{Proof of Theorem~\ref{convedges}}
\label{thm1}

For the proof, we assume the phylogenetic tensor at the beginning of the tip epoch takes an arbitrary form. We then consider the effect of convergence-divergence groups in the tip epoch on the phylogenetic tensor. As in the proof of Proposition~\ref{p1}, the proof is split into several parts. 1) We assume $\mathcal{N}$ has $N$ taxa, but $\mathcal{C}$ has only one convergence-divergence group $C_a$ involving the first $l$ taxa in the tip epoch. We determine the elements of the transition matrix $\boldsymbol{M}^{\left[C_a\right]}=\exp\left(\boldsymbol{Q}^{\left[l\right]}t\right)\otimes{}\boldsymbol{I}^{N-l}$. 2) We determine the phylogenetic tensor in the limit as the epoch time of the tip epoch diverges. 3) We assume $\mathcal{C}$ has $k\geq{}1$ convergence-divergence groups, $\mathcal{C}=\left\{C_1,C_2,\ldots{},C_k\right\}$, and determine the phylogenetic tensor in the limit as the epoch time of the tip epoch diverges.

\setcounter{thm}{1}

\begin{thm}
Suppose an arbitrary epoch of CDM $\mathcal{N}$ corresponds to set of sets of taxa in each convergence-divergence group $\mathcal{C}=\left\{C_1,C_2,\ldots{}C_k\right\}$. Then if $a,b\in{}C_i$, as tip epoch length $t\to\infty$, $a$ and $b$ become identical.
\end{thm}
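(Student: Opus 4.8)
The plan is to compute $\lim_{t\to\infty}\exp\bigl(\boldsymbol{Q}^{[\mathcal C]}t\bigr)$ explicitly and read off from it that the limiting phylogenetic tensor is supported only on state combinations in which all taxa of $C_i$ — in particular $a$ and $b$ — carry the same state, which is exactly the condition that $a$ and $b$ are identical.

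\emph{Step 1: a single convergence-divergence group.} Consider first a group on $l\le N$ taxa with rate matrix $\boldsymbol{Q}^{[l]}=\alpha\boldsymbol{\mathfrak{L}}^{[l]}_\alpha+\beta\boldsymbol{\mathfrak{L}}^{[l]}_\beta$, as in the proof of Proposition~\ref{p1}. From the entrywise description there, the only off-diagonal entries of $\boldsymbol{Q}^{[l]}$ in the column indexed by a ``mixed'' combination (neither all-$0$ nor all-$1$) are $\alpha$ into the all-$1$ combination and $\beta$ into the all-$0$ combination, while the submatrix on $\{\text{all-}0,\text{all-}1\}$ is $\left(\begin{smallmatrix}-\alpha&\beta\\\alpha&-\beta\end{smallmatrix}\right)$. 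Reordering so that the all-$0$ and all-$1$ combinations come first, $\boldsymbol{Q}^{[l]}$ is block upper triangular with that recurrent $2\times2$ block and a diagonal block $-(\alpha+\beta)\boldsymbol{I}$ on the $2^l-2$ transient combinations; moreover $[\beta,\alpha]^T$ lies in the kernel of the $2\times2$ block, which lets the coupling block of $\exp(\boldsymbol{Q}^{[l]}t)$ be integrated in closed form. Taking $t\to\infty$ gives $\exp(\boldsymbol{Q}^{[l]}t)\to L^{[l]}$, the transition matrix each of whose columns equals the vector with $\frac{\beta}{\alpha+\beta}$ in the all-$0$ slot, $\frac{\alpha}{\alpha+\beta}$ in the all-$1$ slot, and $0$ elsewhere. (Alternatively this is the standard convergence of a continuous-time Markov chain onto its unique recurrent class $\{\text{all-}0,\text{all-}1\}$ equipped with its ergodic distribution.) In particular the image of $L^{[l]}$ is spanned by a vector supported only on the two combinations in which all $l$ coordinates agree.

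\emph{Step 2: the full epoch.} By the construction in the proof of Proposition~\ref{p1}, $\boldsymbol{Q}^{[C_r]}=\sigma_r\bigl(\boldsymbol{Q}^{[l_r]}\otimes\boldsymbol{I}^{\otimes N-l_r}\bigr)$ for a slot permutation $\sigma_r$ carrying positions $1,\dots,l_r$ onto $C_r$, so $\exp(\boldsymbol{Q}^{[C_r]}t)\to\sigma_r\bigl(L^{[l_r]}\otimes\boldsymbol{I}^{\otimes N-l_r}\bigr)$, whose image is exactly the set of tensors supported on combinations in which the coordinates indexed by $C_r$ are all equal. Because the convergence-divergence groups partition the taxon set, the $\boldsymbol{Q}^{[C_r]}$ act on disjoint tensor slots and hence commute, so $\boldsymbol{Q}^{[\mathcal C]}=\sum_r\boldsymbol{Q}^{[C_r]}$ gives $\exp(\boldsymbol{Q}^{[\mathcal C]}t)=\prod_r\exp(\boldsymbol{Q}^{[C_r]}t)$ and, by continuity of matrix multiplication, $\exp(\boldsymbol{Q}^{[\mathcal C]}t)\to\prod_r\sigma_r\bigl(L^{[l_r]}\otimes\boldsymbol{I}^{\otimes N-l_r}\bigr)$. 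Writing the phylogenetic tensor after a tip epoch of length $t$ as $\boldsymbol{P}(t)=\exp(\boldsymbol{Q}^{[\mathcal C]}t)\,\boldsymbol{P}_0$, where $\boldsymbol{P}_0$ is the fixed, $t$-independent tensor at the start of the tip epoch, we obtain $\boldsymbol{P}(t)\to\bigl(\prod_r\sigma_r(L^{[l_r]}\otimes\boldsymbol{I}^{\otimes N-l_r})\bigr)\boldsymbol{P}_0$. Since the factors commute, we may place the factor for $C_i$ leftmost, whence the limiting tensor lies in the image of $\sigma_i\bigl(L^{[l_i]}\otimes\boldsymbol{I}^{\otimes N-l_i}\bigr)$ and is therefore supported only on state combinations in which all coordinates indexed by $C_i$ — in particular $i_a$ and $i_b$ — coincide; equivalently it assigns probability $0$ to every combination with $i_a\ne i_b$, i.e.\ $a$ and $b$ are identical in the limit. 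Assumption~\ref{genparam} is used only to guarantee that the relevant entries of $\boldsymbol{P}_0$ are strictly positive, so that $a$ and $b$ genuinely \emph{become}, rather than already are, identical.

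The main obstacle I anticipate is Step~1: correctly identifying the transient/recurrent block structure of $\boldsymbol{Q}^{[l]}$ and evaluating the limit of the coupling block of $\exp(\boldsymbol{Q}^{[l]}t)$. Once $L^{[l]}$ is in hand, the remainder — the Kronecker-product and slot-permutation bookkeeping, the commutativity of the per-group factors, and the translation of ``supported on equal-coordinate combinations'' into ``identical taxa'' — is routine, provided one keeps careful track of the $\sigma_r$ and of the fact that only the tip epoch length, and not $\boldsymbol{P}_0$, is taken to infinity.
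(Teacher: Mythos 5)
Your proof is correct, and it reaches the same two intermediate objects as the paper's proof --- the limiting single-group transition matrix $L^{[l]}$ whose columns all equal the vector with $\frac{\beta}{\alpha+\beta}$ in the all-$0$ slot and $\frac{\alpha}{\alpha+\beta}$ in the all-$1$ slot, and the factorization of the tip-epoch transition matrix over the convergence-divergence groups --- but it gets to each by a different route. For $L^{[l]}$ the paper does not use your recurrent/transient block decomposition; it uses the Lie-algebraic relations behind Proposition~\ref{p1} to show $\left(\boldsymbol{Q}^{\left[l\right]}\right)^2=-\left(\alpha+\beta\right)\boldsymbol{Q}^{\left[l\right]}$, which collapses the Taylor series to the closed form $\exp\left(\boldsymbol{Q}^{\left[l\right]}t\right)=\boldsymbol{I}^{\otimes{}l}+\frac{1}{\alpha+\beta}\left(1-\exp\left(-\left(\alpha+\beta\right)t\right)\right)\boldsymbol{Q}^{\left[l\right]}$, from which the limit is read off entrywise for all $t$, not just asymptotically. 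Your probabilistic argument (unique recurrent class $\left\{\text{all-}0,\text{all-}1\right\}$ with ergodic distribution proportional to $\left(\beta,\alpha\right)$, all mixed combinations transient) is more standard and avoids the series manipulation, though the remark that $\left[\beta,\alpha\right]^T$ spans the kernel of the $2\times2$ block only gestures at the limit of the coupling block; the ergodic-theoretic justification you give in parentheses is what actually closes that step. For the multi-group stage the paper writes the product $\prod_{r}\boldsymbol{M}^{\left[C_r\right]}$ without justifying the factorization of the exponential and then runs an explicit induction over the convergence-divergence groups to identify the support of the limiting tensor; your observation that the $\boldsymbol{Q}^{\left[C_r\right]}$ commute because they act on disjoint tensor slots both supplies the missing justification for the factorization and replaces the induction with a one-line image argument, since the limit lies in the range of $\sigma_i\left(L^{\left[l_i\right]}\otimes\boldsymbol{I}^{\otimes{}N-l_i}\right)$ and hence vanishes on every combination with $i_a\neq{}i_b$. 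Your use of Assumption~\ref{genparam} matches the paper's: it is needed only for strict positivity of the surviving entries, not for the vanishing that makes $a$ and $b$ identical.
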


\begin{proof}
1) As in the proof of Proposition~\ref{p1}, assume that the convergence-divergence group $C_a$ involves the first $l$ taxa. We assume $\mathcal{N}$ has $N$ taxa, unlike in 1) of the proof of Proposition~\ref{p1}. Suppose $\boldsymbol{P}'$ is the phylogenetic tensor representing the probabilities of combinations of states immediately before the tip epoch. Then let
\begin{align*}
\boldsymbol{\widetilde{P}}=\exp\left(\boldsymbol{Q}^{\left[C_a\right]}t\right)\cdot{}\boldsymbol{P}',
\end{align*}
where
\begin{align*}
\boldsymbol{Q}^{\left[C_a\right]}=\boldsymbol{Q}^{\left[l\right]}\otimes{}\boldsymbol{I}^{N-l}.
\end{align*}

To find an expression for $\exp\left(\boldsymbol{Q}^{\left[C_a\right]}t\right)$, we use the Taylor series,
\begin{align*}
\exp\left(\boldsymbol{Q}^{\left[C_a\right]}t\right)=&\exp\left(\left(\boldsymbol{Q}^{\left[l\right]}\otimes{}\boldsymbol{I}^{N-l}\right)t\right) \\
=&\boldsymbol{I}^{\otimes{}N}+\left(\boldsymbol{Q}^{\left[l\right]}\otimes{}\boldsymbol{I}^{N-l}\right)t+\frac{1}{2!}\left(\left(\boldsymbol{Q}^{\left[l\right]}\otimes{}\boldsymbol{I}^{N-l}\right)t\right)^2+\ldots \\
=&\boldsymbol{I}^{\otimes{}N}+\left(\boldsymbol{Q}^{\left[l\right]}\otimes{}\boldsymbol{I}^{N-l}\right)t+\frac{t^2}{2!}\left(\boldsymbol{Q}^{\left[l\right]}\otimes{}\boldsymbol{I}^{N-l}\right)\cdot{}\left(\boldsymbol{Q}^{\left[l\right]}\otimes{}\boldsymbol{I}^{N-l}\right)+\ldots{} \\
=&\boldsymbol{I}^{\otimes{}N}+\left(\boldsymbol{Q}^{\left[l\right]}\otimes{}\boldsymbol{I}^{N-l}\right)t+\frac{t^2}{2!}\left(\boldsymbol{Q}^{\left[l\right]}\cdot{}\boldsymbol{Q}^{\left[l\right]}\right)\otimes{}\left(\boldsymbol{I}^{N-l}\cdot{}\boldsymbol{I}^{N-l}\right)+\ldots{} \\
=&\boldsymbol{I}^{\otimes{}N}+\left(\boldsymbol{Q}^{\left[l\right]}\otimes{}\boldsymbol{I}^{N-l}\right)t+\frac{t^2}{2!}\left(\boldsymbol{Q}^{\left[l\right]}\right)^2\otimes{}\boldsymbol{I}^{N-l}+\ldots{} \\
=&\left(\boldsymbol{I}^{\otimes{}l}+\boldsymbol{Q}^{\left[l\right]}t+\frac{t^2}{2!}\left(\boldsymbol{Q}^{\left[l\right]}\right)^2+\ldots{}\right)\otimes{}\boldsymbol{I}^{N-l} \\
=&\exp\left(\boldsymbol{Q}^{\left[l\right]}t\right)\otimes{}\boldsymbol{I}^{N-l}.
\end{align*}

Then
\begin{align*}
\boldsymbol{\widetilde{P}}=\left(\exp\left(\boldsymbol{Q}^{\left[l\right]}t\right)\otimes{}\boldsymbol{I}^{N-l}\right)\cdot{}\boldsymbol{P}'.
\end{align*}

Now focusing on $\exp\left(\boldsymbol{Q}^{\left[l\right]}t\right)$, again using a Taylor series,
\begin{align*}
\exp\left(\boldsymbol{Q}^{\left[l\right]}t\right)=\boldsymbol{I}^{\otimes{}l}+\boldsymbol{Q}^{\left[l\right]}t+\frac{1}{2!}\left(\boldsymbol{Q}^{\left[l\right]}\right)^2t^2+\ldots{}.
\end{align*}

Focusing on $\left(\boldsymbol{Q}^{\left[l\right]}\right)^2$,
\begin{align*}
\left(\boldsymbol{Q}^{\left[l\right]}\right)^2=&\left(\alpha\boldsymbol{\mathfrak{L}}_\alpha^{\left[l\right]}+\beta\boldsymbol{\mathfrak{L}}_\beta^{\left[l\right]}\right)\cdot{}\left(\alpha\boldsymbol{\mathfrak{L}}_\alpha^{\left[l\right]}+\beta\boldsymbol{\mathfrak{L}}_\beta^{\left[l\right]}\right) \\
=&\alpha^2\left(\boldsymbol{\mathfrak{L}}_\alpha^{\left[l\right]}\right)^2+\alpha\beta\left(\boldsymbol{\mathfrak{L}}_\alpha^{\left[l\right]}\boldsymbol{\mathfrak{L}}_\beta^{\left[l\right]}+\boldsymbol{\mathfrak{L}}_\beta^{\left[l\right]}\boldsymbol{\mathfrak{L}}_\alpha^{\left[l\right]}\right)+\beta^2\left(\boldsymbol{\mathfrak{L}}_\beta^{\left[l\right]}\right)^2,
\end{align*}
where $\alpha,\beta>0$.

From \cite{sumner2012algebra},
\begin{align*}
\left(\boldsymbol{\mathfrak{L}}_\alpha^{\left[l\right]}\right)^2=-\boldsymbol{\mathfrak{L}}_\alpha^{\left[l\right]},\quad{}\boldsymbol{\mathfrak{L}}_\alpha^{\left[l\right]}\boldsymbol{\mathfrak{L}}_\beta^{\left[l\right]}=-\boldsymbol{\mathfrak{L}}_\beta^{\left[l\right]},\quad{}\boldsymbol{\mathfrak{L}}_\beta^{\left[l\right]}\boldsymbol{\mathfrak{L}}_\alpha^{\left[l\right]}=-\boldsymbol{\mathfrak{L}}_\alpha^{\left[l\right]},\quad{}\left(\boldsymbol{\mathfrak{L}}_\beta^{\left[l\right]}\right)^2=-\boldsymbol{\mathfrak{L}}_\beta^{\left[l\right]}.
\end{align*}

Then
\begin{align*}
\left(\boldsymbol{Q}^{\left[l\right]}\right)^2=&-\alpha^2\boldsymbol{\mathfrak{L}}_\alpha^{\left[l\right]}-\alpha\beta\left(\boldsymbol{\mathfrak{L}}_\alpha^{\left[l\right]}+\boldsymbol{\mathfrak{L}}_\beta^{\left[l\right]}\right)-\beta^2\boldsymbol{\mathfrak{L}}_\beta^{\left[l\right]} \\
=&-\left(\alpha+\beta\right)\left(\alpha\boldsymbol{\mathfrak{L}}_\alpha^{\left[l\right]}+\beta\boldsymbol{\mathfrak{L}}_\beta^{\left[l\right]}\right) \\
=&-\left(\alpha+\beta\right)\boldsymbol{Q}^{\left[l\right]}.
\end{align*}

It follows that
\begin{align*}
\left(\boldsymbol{Q}^{\left[l\right]}\right)^u=&\left(-1\right)^{u-1}\left(\alpha+\beta\right)\boldsymbol{Q}^{\left[l\right]}
\end{align*}
for any positive integer $u\geq{}2$.

Returning to the Taylor series,
\begin{align*}
\exp\left(\boldsymbol{Q}^{\left[l\right]}t\right)=&\boldsymbol{I}^{\otimes{}l}+\boldsymbol{Q}^{\left[l\right]}t-\frac{1}{2!}\left(\alpha+\beta\right)\boldsymbol{Q}^{\left[l\right]}t^2+\ldots{} \\
=&\boldsymbol{I}^{\otimes{}l}+\left(t-\frac{\left(\alpha+\beta\right)t^2}{2}+\ldots{}\right)\boldsymbol{Q}^{\left[l\right]} \\
=&\boldsymbol{I}^{\otimes{}l}+\frac{1}{\alpha+\beta}\left(\left(\alpha+\beta\right)t-\frac{\left(\alpha+\beta\right)^2t^2}{2}+\ldots{}\right)\boldsymbol{Q}^{\left[l\right]} \\
=&\boldsymbol{I}^{\otimes{}l}-\frac{1}{\alpha+\beta}\left(-\left(\alpha+\beta\right)t+\frac{\left(\alpha+\beta\right)^2t^2}{2}-\ldots{}\right)\boldsymbol{Q}^{\left[l\right]} \\
=&\boldsymbol{I}^{\otimes{}l}-\frac{1}{\alpha+\beta}\left(1-\left(\alpha+\beta\right)t+\frac{\left(\alpha+\beta\right)^2t^2}{2}-\ldots{}\right)\boldsymbol{Q}^{\left[l\right]}+\frac{1}{\alpha+\beta}\boldsymbol{Q}^{\left[l\right]} \\
=&\boldsymbol{I}^{\otimes{}l}+\frac{1}{\alpha+\beta}\left(1-\exp\left(-\left(\alpha+\beta\right)t\right)\right)\boldsymbol{Q}^{\left[l\right]}.
\end{align*}

Now recall from the proof of Proposition~\ref{p1} that
\begin{align*}
\left[\boldsymbol{Q}^{\left[l\right]}\right]_{ij}=
\begin{cases}
\alpha \quad &\text{if } \prod_{a=1}^{l}i_a=1 \text{ and }\prod_{a=1}^{l}j_a=0, \\
\beta \quad &\text{if } \prod_{a=1}^{l}\left(1-i_a\right)=1 \text{ and }\prod_{a=1}^{l}\left(1-j_a\right)=0, \\
0 \quad &\text{otherwise if } i\neq{}j, \\
\ast \quad &\text{otherwise if } i=j.
\end{cases}
\end{align*}

Thus, for columns to sum to zero,
\begin{align*}
\left[\boldsymbol{Q}^{\left[l\right]}\right]_{ij}=
\begin{cases}
\alpha \quad &\text{if } \prod_{a=1}^{l}i_a=1 \text{ and }\prod_{a=1}^{l}j_a=0, \\
\beta \quad &\text{if } \prod_{a=1}^{l}\left(1-i_a\right)=1 \text{ and }\prod_{a=1}^{l}\left(1-j_a\right)=0, \\
0 \quad &\text{otherwise if } i\neq{}j, \\
-\alpha \quad{} &\text{if } \prod_{a=1}^{l}\left(1-i_a\right)=\prod_{a=1}^{l}\left(1-j_a\right)=1, \\
-\beta \quad{} &\text{if } \prod_{a=1}^{l}i_a=\prod_{a=1}^{l}j_a=1, \\
-\left(\alpha+\beta\right) \quad{} &\text{otherwise}.
\end{cases}
\end{align*}

Letting $\boldsymbol{M}^{\left[l\right]}=\exp\left(\boldsymbol{Q}^{\left[l\right]}t\right)$,
\begin{align*}
\left[\boldsymbol{M}^{\left[l\right]}\right]_{ij}=
\begin{cases}
\frac{\alpha}{\alpha+\beta}\left(1-\exp\left(-\left(\alpha+\beta\right)t\right)\right) \quad &\text{if } \prod_{a=1}^{l}i_a=1 \text{ and }\prod_{a=1}^{l}j_a=0, \\
\frac{\beta}{\alpha+\beta}\left(1-\exp\left(-\left(\alpha+\beta\right)t\right)\right) \quad &\text{if } \prod_{a=1}^{l}\left(1-i_a\right)=1 \text{ and }\prod_{a=1}^{l}\left(1-j_a\right)=0, \\
0 \quad &\text{otherwise if } i\neq{}j, \\
1-\frac{\alpha}{\alpha+\beta}\left(1-\exp\left(-\left(\alpha+\beta\right)t\right)\right) \quad{} &\text{if } \prod_{a=1}^{l}\left(1-i_a\right)=\prod_{a=1}^{l}\left(1-j_a\right)=1, \\
1-\frac{\beta}{\alpha+\beta}\left(1-\exp\left(-\left(\alpha+\beta\right)t\right)\right) \quad{} &\text{if } \prod_{a=1}^{l}i_a=\prod_{a=1}^{l}j_a=1, \\
\exp\left(-\left(\alpha+\beta\right)t\right) \quad{} &\text{otherwise}.
\end{cases}
\end{align*}

Next, take the limit as $t\to\infty$. Then
\begin{align*}
\lim_{t\to\infty}\left[\boldsymbol{M}^{\left[l\right]}\right]_{ij}=
\begin{cases}
\frac{\alpha}{\alpha+\beta} \quad{} &\text{if } \prod_{a=1}^{l}i_a=1, \\
\frac{\beta}{\alpha+\beta} \quad{} &\text{if } \prod_{a=1}^{l}\left(1-i_a\right)=1, \\
0 \quad{} &\text{otherwise}.
\end{cases}
\end{align*}

Now let $\boldsymbol{M}^{\left[C_a\right]}=\exp\left(\boldsymbol{Q}^{\left[l\right]}t\right)\otimes{}\boldsymbol{I}^{N-l}$. Then
\begin{align*}
\lim_{t\to\infty}\left[\boldsymbol{M}^{\left[C_a\right]}\right]_{ij}=
\begin{cases}
\frac{\alpha}{\alpha+\beta} \quad{} &\text{if } \prod_{a=1}^{l}i_a=1 \\
&\text{and } i_a=j_a \text{ for all } a\in\left\{l+1,l+2,\ldots{},N\right\}, \\
\frac{\beta}{\alpha+\beta} \quad{} &\text{if } \prod_{a=1}^{l}\left(1-i_a\right)=1 \\
&\text{and } i_a=j_a \text{ for all } a\in\left\{l+1,l+2,\ldots{},N\right\}, \\
0 \quad{} &\text{otherwise}.
\end{cases}
\end{align*}

In summary, in the limit, the only rows of $\boldsymbol{M}^{\left[C_a\right]}$ with non-zero elements have the first $l$ indices being either all $0$ or all $1$. Then in the limit, the only non-zero elements of $\boldsymbol{\widetilde{P}}$ also have the first $l$ indices being either all $0$ or all $1$.

\bigskip

2) We let the substitution rates for convergence group $C_k$ be $\alpha_k,\beta_k>0$ and recognize that since $\frac{\alpha_k}{\beta_k}=\frac{\alpha}{\beta}$, then $\frac{\alpha_k}{\alpha_k+\beta_k}=\frac{\alpha}{\alpha+\beta}$ and $\frac{\beta_k}{\alpha_k+\beta_k}=\frac{\beta}{\alpha+\beta}$. Then using the same arguments as in Proposition~\ref{p1}, $\sigma\left(\boldsymbol{M}^{\left[C_a\right]}\right)=\boldsymbol{M}^{\left[C_k\right]}$ and
\begin{align*}
\lim_{t\to\infty}\left[\boldsymbol{M}^{\left[C_k\right]}\right]_{ij}=
\begin{cases}
\frac{\alpha}{\alpha+\beta} \quad{} &\text{if } \prod_{a\in{}C_k}i_a=1 \\
&\text{and } i_a=j_a \text{ for all } a\in{}\left[N\right]\setminus{}C_k, \\
\frac{\beta}{\alpha+\beta} \quad{} &\text{if } \prod_{a\in{}C_k}\left(1-i_a\right)=1 \\
&\text{and } i_a=j_a \text{ for all } a\in{}\left[N\right]\setminus{}C_k, \\
0 \quad{} &\text{otherwise}
\end{cases}
\end{align*}
and
\begin{align*}
\lim_{t\to\infty}\sum_{j=1}^{2^N}\left[\boldsymbol{M}^{\left[C_k\right]}\right]_{ij}\left[\boldsymbol{P}'\right]_j=
\begin{cases}
c_i^{\left[C_k\right]}>0 \quad &\text{if } \prod_{a\in{}C_k}i_a=1 \text{ or } \prod_{a\in{}C_k}\left(1-i_a\right)=1, \\
0 \quad{} &\text{otherwise}.
\end{cases}
\end{align*}

Note that $c_i^{\left[C_k\right]}>0$ being strictly positive follows from Assumption~\ref{genparam} of Section~\ref{ass}.

\bigskip

3) $\mathcal{N}$ has a tip epoch with epoch time $t$ and set of convergence groups $\mathcal{C}=\left\{C_1,C_2,\ldots,C_k\right\}$. Then the phylogenetic tensor $\boldsymbol{P}$ representing the probabilities of combinations of states at the leaves of the principal tree can be expressed as
\begin{align*}
\boldsymbol{P}=&\exp\left(\boldsymbol{Q}^{\left[\mathcal{C}\right]}t\right)\cdot{}\boldsymbol{P}' \\
=&\prod_{r=1}^{k}\boldsymbol{M}^{\left[C_r\right]}\cdot{}\boldsymbol{P}'.
\end{align*}

We prove that all elements of $\boldsymbol{P}$ converge to $0$ except those where, for each $C_r\in\mathcal{C}$, all taxa in $C_r$ are in the same state.

In the limit as the epoch time of the tip epoch diverges, the phylogenetic tensor is
\begin{align*}
\lim_{t\to\infty}\boldsymbol{P}=\lim_{t\to\infty}\prod_{r=1}^{k}\boldsymbol{M}^{\left[C_r\right]}\cdot{}\boldsymbol{P}'=&\prod_{r=1}^{k}\lim_{t\to\infty}\boldsymbol{M}^{\left[C_r\right]}\cdot{}\boldsymbol{P}'.
\end{align*}

We prove that
\begin{align*}
\lim_{t\to\infty}\left[\boldsymbol{P}\right]_i
=&\begin{cases}
c_i^{\left[\mathcal{C}\right]}>0 \quad &\text{if for all } C_r\in\mathcal{C} \text{, } \prod_{a\in{}C_r}i_a=1 \text{ or } \prod_{a\in{}C_r}\left(1-i_a\right)=1, \\
0 \quad{} &\text{otherwise}.
\end{cases}
\end{align*}

We prove this claim by induction on the $k$ convergence groups. Note that in 2) we have already proven the claim for the first convergence-divergence group $C_k$ applied to $\boldsymbol{P}'$. Thus, all that remains it to prove that given the claim is true for $\boldsymbol{P}^{k-v+2}=\boldsymbol{M}^{\left[C_v\right]}\cdots{}\boldsymbol{M}^{\left[C_{v+1}\right]}\ldots{}\boldsymbol{M}^{\left[C_k\right]}\cdot{}\boldsymbol{P}'$ for some $v\in\left\{2,\ldots{},k\right\}$, it must be true for $\boldsymbol{P}^{k-v+3}=\boldsymbol{M}^{\left[C_{v-1}\right]}\cdots{}\boldsymbol{M}^{\left[C_v\right]}\ldots{}\boldsymbol{M}^{\left[C_k\right]}\cdot{}\boldsymbol{P}'$.

We assume that
\begin{align*}
\lim_{t\to\infty}\left[\boldsymbol{P}^{k-v+2}\right]_i=
\begin{cases}
c_i^{\left[\cup_{r=v}^{k}C_r\right]}>0 \quad &\text{if for all } r\in\left\{v,v+1,\ldots{},k\right\} \text{, } \prod_{a\in{}C_r}i_a=1 \\
&\text{or } \prod_{a\in{}C_r}\left(1-i_a\right)=1, \\
0 \quad &\text{otherwise}.
\end{cases}
\end{align*}

Then in the limit as $t\to\infty$, all elements of $\boldsymbol{P}^{k-v+2}$ are $0$ except those where for all $C_r\in\mathcal{C}$, $r\in\left\{v,v+1,\ldots{},k\right\}$, all taxa in $C_r$ are in the same state.

For $\left[\boldsymbol{P}^{k-v+3}\right]_i$ to be non-zero, there must exist some index $s$, such that
\begin{equation}
\left[\boldsymbol{M}^{\left[C_{v-1}\right]}\right]_{is}>0 \label{con1a}
\end{equation}
and
\begin{equation}
\left[\boldsymbol{P}^{k-v+2}\right]_s>0. \label{con2a}
\end{equation}

For Equation~\eqref{con1a} to be true,
\begin{equation}
\label{allcon1}
\begin{aligned}
\begin{cases}
&\prod_{a\in{}C_{v-1}}i_a=1 \text{ or } \prod_{a\in{}C_{v-1}}\left(1-i_a\right)=1, \\
&i_a=s_a \text{ for all } a\in\left[N\right]\setminus{}C_{v-1}.
\end{cases}
\end{aligned}
\end{equation}

For Equation~\eqref{con2a} to be true, by assumption,
\begin{equation}
\label{allcon2}
\begin{aligned}
\text{for all } r\in\left\{v,v+1,\ldots{},k\right\} \text{, } \prod_{a\in{}C_r}s_a=1 \text{ or } \prod_{a\in{}C_r}\left(1-s_a\right)=1.
\end{aligned}
\end{equation}

Combining Constraints~\eqref{allcon1} and Constraints~\eqref{allcon2},
\begin{equation}
\label{allcons}
\begin{aligned}
\begin{cases}
&\prod_{a\in{}C_{v-1}}i_a=1 \text{ or } \prod_{a\in{}C_{v-1}}\left(1-i_a\right)=1, \\
&\text{for all } r\in\left\{v,v+1,\ldots{},k\right\} \text{, } \prod_{a\in{}C_r}i_a=1 \text{ or } \prod_{a\in{}C_r}\left(1-i_a\right)=1.
\end{cases}
\end{aligned}
\end{equation}

Constraint~\eqref{allcons} can be simplified to
\begin{align*}
&\text{for all } r\in\left\{v-1,v,\ldots{},k\right\} \text{, } \prod_{a\in{}C_r}i_a=1 \text{ or } \prod_{a\in{}C_r}\left(1-i_a\right)=1.
\end{align*}

In summary,
\begin{align*}
\lim_{t\to\infty}\left[\boldsymbol{P}\right]_i=
\begin{cases}
c_i^{\left[\cup_{r=1}^{k}C_r\right]}>0 \quad &\text{if for all } r\in\left\{1,2,\ldots{},k\right\} \text{,} \\
&\prod_{a\in{}C_r}i_a=1 \text{ or } \prod_{a\in{}C_r}\left(1-i_a\right)=1, \\
0 \quad &\text{otherwise}.
\end{cases}
\end{align*}

\end{proof}

\section{Identifiability of \texorpdfstring{$4$}{4}-taxon CDMs}
\label{propsident}

\cite{sumner2012algebra} formally describe phylogenetic epoch models in their Definition~6.1 and introduce notation to compute the phylogenetic tensors. We use the same notation for our CDMs.

For each $4$-taxon CDM, the phylogenetic tensor $\boldsymbol{P}$ is transformed into the Hadamard basis $\boldsymbol{\widehat{P}}$ by multiplying by $\boldsymbol{H}_{16}=\boldsymbol{H}_2^{\otimes{}4}$, where
\begin{align*}
\boldsymbol{H}_2=\left[\begin{array}{cc}
1 & 1 \\
1 & -1
\end{array}\right].
\end{align*}

In the Hadamard basis, the phylogenetic tensor for CDM $5$ is
\begin{equation}
\label{hadphylotensor}
\begin{aligned}
\boldsymbol{\widehat{P}}=&\boldsymbol{H}_{16}\cdot{}\boldsymbol{P}=\left[\begin{array}{c}
q_{0000} \\
q_{0001} \\
q_{0010} \\
q_{0011} \\
q_{0100} \\
q_{0101} \\
q_{0110} \\
q_{0111} \\
q_{1000} \\
q_{1001} \\
q_{1010} \\
q_{1011} \\
q_{1100} \\
q_{1101} \\
q_{1110} \\
q_{1111} \\
\phantom{q_{1111}}
\end{array}\right] \\
=&\left[\begin{array}{c}
1 \\
\gamma \\
\gamma \\
\gamma^2+\left(1-\gamma^2\right)r_{0011} \\
\gamma \\
\gamma^2+\left(1-\gamma^2\right)r_{0101} \\
\gamma^2+\left(1-\gamma^2\right)r_{0110} \\
\gamma\left(\gamma^2+\left(1-\gamma^2\right)\left(r_{0011}+r_{0101}+r_{0110}-2r_{0111}\right)\right) \\
\gamma \\
\gamma^2+\left(1-\gamma^2\right)r_{1001} \\
\gamma^2+\left(1-\gamma^2\right)r_{1010} \\
\gamma\left(\gamma^2+\left(1-\gamma^2\right)\left(r_{0011}+r_{1001}+r_{1010}-2r_{1011}\right)\right) \\
\gamma^2+\left(1-\gamma^2\right)r_{1100} \\
\gamma\left(\gamma^2+\left(1-\gamma^2\right)\left(r_{0101}+r_{1001}+r_{1100}-2r_{1101}\right)\right) \\
\gamma\left(\gamma^2+\left(1-\gamma^2\right)\left(r_{0110}+r_{1010}+r_{1100}-2r_{1110}\right)\right) \\
\gamma^2\left(\gamma^2+\left(1-\gamma^2\right)\left(r_{0011}+r_{0101}+r_{0110}+r_{1001}+r_{1010}+r_{1100}\right.\right. \\
\left.\left.-2\left(r_{0111}+r_{1011}+r_{1101}+r_{1110}-2\delta\right)\right)\right)+\left(1-\gamma^2\right)^2r_{1111}
\end{array}\right].
\end{aligned}
\end{equation}

See Mathematica file S2.nb (text version S3.txt) on \url{https://github.com/jonathanmitchell88/CDMsSI} for a derivation of Equation~\eqref{hadphylotensor} and equations for $r_{0011}$, $r_{0101}$, $\ldots$, $r_{1111}$ and $\delta$ in terms of $x_i$ and $y_i$ for CDM $5$. CDMs $1-4$ are all nested in CDM $5$. Thus, their phylogenetic tensors are also in the form of Equation~\eqref{hadphylotensor}. However, the equations for $r_{0011}$, $r_{0101}$, $\ldots$, $r_{1111}$ and $\delta$ involve different expressions of $x_i$ and $y_i$.

For the proof that follows, the order of parameters is as in Figure~\ref{4taxonCDMs2}, with $x_i=\exp\left(-\left(a_i+b_i\right)\right)\in\left(0,1\right)$ for all $i\in\left\{1,2,\ldots{},11\right\}$. Note again that the exact location of the root on the outgroup edge is not identifiable; $t_1$ corresponds to the sum of epoch times of epochs from the root to the outgroup added to the first epoch time below the root.

To establish whether a CDM is identifiable or not, we must first determine a maximal set of independent elements of the transformed phylogenetic tensor. That is, a set with maximum cardinality such that there are no algebraic equations --- equalities --- involving multiple elements of the set. If the cardinality of the set equals the number of parameters, then the CDM is identifiable. For example, we can see that invariants $q_{0001}=q_{0010}=q_{0100}=q_{1000}=\gamma$ are equalities on all CDMs. Thus, we can only include one of $q_{0001}$, $q_{0010}$, $q_{0100}$ and $q_{1000}$ in the set.

To determine all such equalities, for a given CDM with $l+1$ parameters $x_1,x_2,\ldots{},x_l,\gamma$, we construct the ideal,
\begin{align*}
I=&\langle{}r_{0011}-f_{0011}\left(x_1,x_2,\ldots{},x_l\right),r_{0101}-f_{0101}\left(x_1,x_2,\ldots{},x_l\right),\ldots{}, \\
&r_{1111}-f_{1111}\left(x_1,x_2,\ldots{},x_l\right),\delta-f_{\delta}\left(x_1,x_2,\ldots{},x_l\right)\rangle \\
\subseteq{}&\mathbb{Q}\left[x_1,x_2,\ldots{},x_l,r_{0011},r_{0101},\ldots{},r_{1111},\delta\right],
\end{align*}
where each $r_{ijkl}-f_{ijkl}\left(x_1,x_2,\ldots{},x_l\right)$ and $\delta-f_{\delta}\left(x_1,x_2,\ldots{},x_l\right)$ is identically zero. (We can ignore $\gamma$ since $q_{ijkl}=\gamma^2+\left(1-\gamma^2\right)r_{ijkl}$ and including any of these invariants does not help us to determine any invariants involving multiple variables $r_{0011}$, $r_{0101}$, $\ldots{}$, $r_{1111}$, $\delta$.)

In the Macaulay2 file S4.m2 (output file S5.txt) on \url{https://github.com/jonathanmitchell88/CDMsSI} we derive the (reduced) Gr\"{o}bner basis for this ideal for a particular monomial order for CDM~$5$. Below we outline how this Gr\"{o}bner basis is computed.

In the Mathematica file S2.nb (text version S3.txt) we derive the following equations to input into the generators of the ideal:
\begin{equation}
\label{gens}
\begin{aligned}
\begin{cases}
f_{0011}=x_4x_5x_6x_7x_9x_{10}, \\
f_{0101}=x_{10}x_{11}\left(1-x_9\left(1-x_2x_3x_4x_6x_8\right)\right), \\
f_{0110}=x_7x_8x_9x_{11}\left(1-x_6\left(1-x_2x_3x_5\right)\right), \\
f_{0111}=x_2x_3x_4x_5x_6x_7x_8x_9x_{10}x_{11}, \\
f_{1001}=x_1x_2x_4x_9x_{10}, \\
f_{1010}=x_1x_2x_5x_6x_7, \\
f_{1011}=x_1x_2x_4x_5x_6x_7x_9x_{10}, \\
f_{1100}=x_1x_3x_6x_8x_9x_{11}, \\
f_{1101}=x_1x_2x_3x_4x_6x_8x_9x_{10}x_{11}, \\
f_{1110}=x_1x_2x_3x_5x_6x_7x_8x_9x_{11}, \\
f_{1111}=x_1x_7x_{10}x_{11}\left(x_4x_8x_9\left(x_2\left(1-x_6\right)+x_3x_5x_6\right)+x_2x_5x_6\left(1-x_9\right)\right), \\
\phantom{f_{1110}}\nphantom{$f_{\delta}$}f_{\delta}=x_1x_2x_3x_4x_5x_6x_7x_8x_9x_{10}x_{11}.
\end{cases}
\end{aligned}
\end{equation}

The functions $f_{0011}=f_{0011}\left(x_1,x_2,\ldots{},x_l\right)$, $f_{0101}=f_{0101}\left(x_1,x_2,\ldots{},x_l\right)$, $\ldots{}$, $f_{1111}=f_{1111}\left(x_1,x_2,\ldots{},x_l\right)$ and $f_\delta=f_\delta\left(x_1,x_2,\ldots{},x_l\right)$ depend on the CDM in question, for example, CDM $5$.

The monomial order is the elimination order, eliminating the block $x_1,x_2,\ldots,x_l$, with graded reverse lexicographic order on each block, $x_1>x_2>\ldots{}>x_l$ and $r_{0011}>r_{0101}>\ldots{}>r_{1111}>\delta$.

Next, we compute the (reduced) Gr\"{o}bner basis $I_G$ of $I$. Then $I_{G,q}=I_G\cap\mathbb{R}\left[r_{0011},r_{0101},\ldots{},r_{1111},\delta\right]$ is a Gr\"{o}bner basis for the elimination ideal involving only $r_{0011},r_{0101},\ldots{},r_{1111},\delta$.

Note that $q_{1111}$ is a function of both $r_{1111}$ and $\delta$, the only element of $\boldsymbol{\widehat{P}}$ that is a function of either. Thus, the maximum cardinality set can include at most one of $r_{1111}$ and $\delta$. In S4.m2 we find that when eliminating $r_{1111}$ there are no generators that involve $\delta$. Thus, $r_{1111}$ is eliminated and $\delta$ is another independent variable of the system when $r_{1111}$ is eliminated.

Note that there are still some algebraic equations --- equalities --- involving some elements of $\left\{r_{0011},r_{0101},\ldots{},r_{1110},\delta\right\}$. In S4.m2 (output file S5.txt) we find the largest cardinality subset with no algebraic equations involving multiple elements. This cardinality, plus one for $\gamma$, is the degrees of freedom of the phylogenetic tensor. Given a set of parameters of the CDM, if this degrees of freedom is less than the number of parameters, then the system of polynomial equations is underdetermined and that set of parameters is not identifiable. (Note that some individual parameters may still be indentifiable.) Otherwise, the set of parameters is identifiable. If that set of parameters is not identifiable, it may be possible to combine the parameters in a such a way that the new set of parameters is identifiable.

\subsection{Proof of Proposition~\ref{CDMs}}
\label{CDM5appendix}

See S4.m2 (output file S5.txt) and S6.m2 (output file S7.txt) on \url{https://github.com/jonathanmitchell88/CDMsSI} for the computations of the (reduced) Gr\"{o}bner bases of the ideals in this proof.

\setcounter{prop}{2}

\begin{prop}
The parameter set for CDM $5$ is identifiable.
\end{prop}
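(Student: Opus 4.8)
The plan is to exhibit an explicit subset of the coordinates of the transformed phylogenetic tensor $\boldsymbol{\widehat{P}}$ of CDM $5$ whose cardinality equals the number of free parameters ($11$ divergence/convergence parameters $x_1,\dots,x_{11}$ plus the root parameter $\gamma$, so $12$ in total), and which is algebraically independent as a function of those parameters. By the discussion preceding the proposition in Appendix~\ref{propsident}, identifiability of the parameter set is equivalent to the degrees of freedom of the phylogenetic tensor (the maximal cardinality of an algebraically independent subset of $\{\gamma\}\cup\{r_{0011},r_{0101},\dots,r_{1110},\delta\}$ after eliminating $r_{1111}$) being at least $12$; since the map from parameters to tensor entries certainly has fibre dimension $\ge 0$, equality holds and the generic fibre is a point up to the finitely many choices the model allows. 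So the whole proof reduces to: (i) produce a candidate of $12$ coordinates; (ii) certify algebraic independence via the Gröbner/elimination computation already set up in the Macaulay2 file S4.m2; (iii) invert — show the chosen $12$ quantities determine $\gamma, x_1,\dots,x_{11}$.

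First I would record $\gamma = q_{0001}$ (equivalently $q_{0010}=q_{0100}=q_{1000}$), which pins down the root parameter immediately and reduces everything to recovering $x_1,\dots,x_{11}$ from the $r_{ijkl}$ and $\delta$. Then, using the explicit monomial expressions in Equation~\eqref{gens}, I would choose $11$ of the twelve functions $f_{0011},f_{0101},\dots,f_{1110},f_\delta$ (with $f_{1111}$ excluded, as it is the unique non-monomial one tangled with $\delta$) so that the corresponding monomials in $x_1,\dots,x_{11}$ have linearly independent exponent vectors over $\mathbb{Q}$ — that is, the $11\times 11$ exponent matrix is nonsingular. Most of the $f$'s in~\eqref{gens} are pure monomials, so taking logarithms turns algebraic independence of these particular functions into a plain linear-algebra statement: the exponent matrix of, say, $\{f_{0011}, f_{0101}, f_{0110}, f_{0111}, f_{1001}, f_{1010}, f_{1011}, f_{1100}, f_{1101}, f_{1110}, f_\delta\}$ must have full rank $11$. (Note $f_{0111}=f_\delta$ as monomials, so one must instead use $f_{0101}$ or $f_{0110}$, which contain genuine binomial factors like $1-x_9(1-x_2x_3x_4x_6x_8)$; for those the "logarithm" trick fails and one genuinely needs the elimination-ideal computation to confirm no polynomial relation binds the chosen set.) Concretely I would read off from the output file S5.txt that the reduced Gröbner basis $I_{G,q}$ of the elimination ideal, together with the extra independent generator $\delta$, has its variety of the expected codimension, i.e. the chosen $11$ coordinates cut out a set of dimension $11$ in $(r,\delta)$-space, so no algebraic relation constrains them.

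For the inversion step I would argue constructively: from the full-rank exponent matrix, the monomial coordinates determine $\log x_1,\dots,\log x_{11}$ (hence $x_1,\dots,x_{11}$, since all $x_i\in(0,1)$ and logs are real and uniquely invertible there) up to handling the binomial factors in those $f$'s that are not pure monomials. The binomial pieces are resolved sequentially: once enough $x_i$ are known from pure-monomial coordinates, each remaining binomial equation like $f_{0101}=x_{10}x_{11}(1-x_9(1-x_2x_3x_4x_6x_8))$ becomes linear in the one unknown it still contains, so it pins that unknown down. The main obstacle — and the reason the heavy lifting is delegated to the computer-algebra files S4.m2 and S6.m2 — is verifying that the binomial-bearing coordinates genuinely add to the degrees of freedom rather than being forced by the monomial ones; this is not visible from exponent-vector bookkeeping and requires the explicit Gröbner basis over $\mathbb{Q}[x_1,\dots,x_{11},r_{0011},\dots,r_{1110},\delta]$ with the stated elimination order. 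Once that computation confirms the degrees of freedom equal $12$, identifiability of CDM $5$ follows, and since CDMs $1$–$4$ are nested in CDM $5$ their identifiability is immediate by fixing the appropriate $x_i=1$.
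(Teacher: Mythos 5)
Your proof is built on a premise that the paper's own computation refutes: you take the identifiable ``parameter set'' to be the full collection $\left\{x_1,\ldots,x_{11},\gamma\right\}$ and aim to exhibit $12$ algebraically independent coordinates of the transformed tensor. In fact the Gr\"{o}bner-basis/elimination computation shows that only $9$ of the quantities $r_{0011},r_{0101},\ldots,r_{1110},\delta$ are algebraically independent, so the tensor has $10$ degrees of freedom (including $\gamma$), not $12$, and the raw parameters $x_1,\ldots,x_{11}$ are \emph{not} individually identifiable. This is visible directly from Equation~\eqref{gens}: the parameters $x_3,x_7,x_8,x_{10},x_{11}$ enter only through the products $x_3x_8x_{11}$, $x_4x_{10}$, $x_5x_7$, $x_7x_8x_{11}$, $x_{10}x_{11}$, so the fibre of the parameterization map is positive-dimensional. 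Your proposed linear-algebra step also fails on its own terms: the exponent vectors of the pure-monomial coordinates satisfy multiplicative relations such as $f_{1011}^2=f_{0011}f_{1001}f_{1010}$ and $f_\delta=x_1f_{0111}$ with $f_{0111}$ itself a coordinate, so no choice of $11$ of these functions has a nonsingular $11\times11$ exponent matrix, and the ``inversion'' step cannot determine all of $\log x_1,\ldots,\log x_{11}$. The remark in Section~\ref{4taxaidentifiability} and Appendix~\ref{pars} that the identifiable set ``may not be the root, convergence and divergence parameters, but some combinations of these parameters'' is exactly the point you have skipped over.

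The paper's proof proceeds the other way around: it first uses the elimination ideal to establish that the transcendence degree is $9$, concludes that $\left\{x_1,\ldots,x_{11}\right\}$ cannot be identifiable, and then constructs a reparameterization into $9$ monomial combinations $y_1,\ldots,y_9$ (e.g.\ $y_3=x_3x_8x_{11}$, $y_4=x_4x_{10}$, $y_9=x_{10}x_{11}$) for which explicit closed-form solutions in terms of $r_{0011},\ldots,r_{1101},\delta$ are exhibited; the identifiable parameter set asserted by the proposition is $\left\{y_1,\ldots,y_9,\gamma\right\}$. To repair your argument you would need to (i) correct the count of degrees of freedom from $12$ to $10$, (ii) replace the goal of recovering each $x_i$ with recovering a maximal identifiable set of combinations of the $x_i$, and (iii) verify, as the paper does computationally, both that $9$ is the correct transcendence degree and that the chosen $y_i$ admit an explicit inverse on the domain $y_i\in\left(0,1\right)$.
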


\begin{proof}

In S4.m2 (output file S5.txt), we see that there are $9$ elements of $\left\{r_{0011},r_{0101},\ldots{},r_{1110},\delta\right\}$ that are free to vary. However, CDM $5$ has $11$ parameters excluding $\gamma$. Thus, this set of parameters is not identifiable. However, recall in Section~\ref{pars} that taking some products of $x_i$ parameters may be required to obtain a set of identifiable parameters. Since there are $9$ elements of $\left\{r_{0011},r_{0101},\ldots{},r_{1110},\delta\right\}$ that are free to vary, we desire a set of $9$ parameters.

In S2.nb (text version S3.txt), we express $f_{0011},f_{0101},\ldots{},f_{1111},\delta$ in terms of the set of parameters $\left\{y_1,y_2,y_3,y_4,y_5,y_6,y_7,y_8,y_9\right\}$. Precisely,
\begin{align*}
\begin{cases}
y_1=x_1, \\
y_2=x_2, \\
y_3=x_3x_8x_{11}, \\
y_4=x_4x_{10}, \\
y_5=x_5x_7, \\
y_6=x_6, \\
y_7=x_7x_8x_{11}, \\
y_8=x_9, \\
y_9=x_{10}x_{11}.
\end{cases}
\end{align*}

In S6.m2 (output file S7.txt), we see that this set of parameters is identifiable. We note that $x_i\in\left(0,1\right)$ for all $i\in\left\{1,2,\ldots{},11\right\}$. It follows that $r_{0011},r_{0101},\ldots{},r_{1111},\delta\in\left(0,1\right)$ and $y_i\in\left(0,1\right)$ for all $i\in\left\{1,2,\ldots{},9\right\}$. In S2.nb (text version S3.txt), we see that the solutions to the system are
\begin{equation}
\label{ysolns}
\begin{aligned}
\begin{cases}
y_1=\frac{\delta}{r_{0111}}, \\
y_2=\frac{r_{0111}\sqrt{r_{1001}r_{1010}}}{\delta\sqrt{r_{0011}}}, \\
y_3=\frac{\delta}{\sqrt{r_{0011}r_{1001}r_{1010}}}, \\
y_4=\frac{r_{1101}\delta\sqrt{r_{0011}}}{r_{0111}r_{1100}\sqrt{r_{1001}r_{1010}}}, \\
y_5=\frac{\delta}{r_{1101}}, \\
y_6=\frac{r_{1101\sqrt{r_{0011}r_{1010}}}}{\delta\sqrt{r_{1001}}}, \\
y_7=\frac{\delta\left(r_{0110}r_{1101}\delta\sqrt{r_{0011}}-r_{0111}^2r_{1100}\sqrt{r_{1001}r_{1010}}\right)}{r_{0111}r_{1100}\sqrt{r_{0011}r_{1001}}\left(\delta\sqrt{r_{1001}}-r_{1101}\sqrt{r_{0011}r_{1010}}\right)}, \\
y_8=\frac{r_{0111}r_{1001}r_{1100}}{r_{1101}\delta}, \\
y_9=\frac{r_{1101}\left(r_{0101}\delta-r_{0111}r_{1101}\right)}{r_{1101}\delta-r_{0111}r_{1001}r_{1100}}.
\end{cases}
\end{aligned}
\end{equation}

Thus, the parameter set $\left\{y_1,y_2,y_3,y_4,y_5,y_6,y_7,y_8,y_9,\gamma\right\}$ on CDM $5$ is identifiable.

\end{proof}

Since CDMs $1-4$ are all nested in CDM $5$, the transformed phylogenetic tensors of CDMs $1-4$ can be determined directly from that of CDM $5$ by setting some parameters $x_i$ to $1$. Similarly, it is straightforward to prove that the equivalent sets of $y_i$ parameters are identifiable for each of CDMs $1-4$. The numbers of degrees of freedom for the phylogenetic tensors of CDMs $1-5$ are $6$, $7$, $8$, $9$ and $10$, respectively.

\section{Proof of Theorem~\ref{disttheorem}}
\label{theoremdist}

For a robust proof, we could consider the (reduced) Gr\"{o}bner bases of the ideals representing the parameter spaces of the CDMs and show that each CDM has a unique Gr\"{o}bner basis. The Gr\"{o}bner basis for CDM $5$ has already been computed in Section~\ref{propsident}. However, computation of the Gr\"{o}bner bases is slow and some bases contain many generators. Instead, it is sufficient to consider only a few constraints for each parameter space that exist for some CDMs and not others, greatly simplifying the proof.

\setcounter{thm}{4}

\begin{thm}
All pairs of $4$-taxon leaf-labeled CDMs of Section~\ref{4taxaidentifiability} are distinguishable.
\end{thm}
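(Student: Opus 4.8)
The plan is to reduce the $27\times 27$ table of pairs to a handful of genuinely different cases. By the remark following the proof of Proposition~\ref{CDMs}, the transformed phylogenetic tensors of CDMs $1$--$5$ have $6,7,8,9$ and $10$ degrees of freedom, so the parameter spaces $\Theta$ of their leaf-labeled versions have exactly these dimensions. Consequently, if two leaf-labeled CDMs come from \emph{different} types, their parameter spaces have different dimensions, and Proposition~\ref{distinguishabilitydiffdim} already gives distinguishability. So the only pairs that require work are distinct leaf-labelings of the \emph{same} type: $\binom{3}{2}=3$ pairs for CDM $1$ and $\binom{6}{2}=15$ pairs for each of CDMs $2$--$5$.

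Next I would record the standard variety-theoretic fact that does the rest of the work. Each $\Theta$ is the Zariski closure of the image of a box $(0,1)^k\times(-1,1)$ under a polynomial map (the entries of $\boldsymbol{\widehat{P}}$ as functions of the $x_i$ and $\gamma$, via Equations~\eqref{hadphylotensor} and \eqref{gens}); since the box is Zariski dense in affine space, every $\Theta$ is \emph{irreducible}. For two irreducible varieties of the same dimension $d$, either they coincide or their intersection is a proper closed subset of each, hence of dimension at most $d-1$, while their union has dimension $d$; in the latter case Definition~\ref{distinguishability} is satisfied. Therefore it suffices to show that distinct leaf-labelings of the same CDM type have \emph{distinct} parameter spaces, and for this it is enough to exhibit, for each such pair $(\mathcal{N}_1,\mathcal{N}_2)$, one polynomial relation among the transformed-tensor coordinates that holds identically on $\Theta_1$ but fails on $\Theta_2$ (the failure being verified by substituting a single explicit parameter value, so no Gr\"{o}bner computation is strictly needed to conclude once a candidate relation is in hand).

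Then I would supply the distinguishing relations. For CDM $1$ the three labelings are the three unrooted quartet topologies, and the classical split invariants do the job: in the coordinates of Equation~\eqref{hadphylotensor} specialized to CDM $1$ (where each $r_S$ and $\delta$ is a monomial in divergence parameters) exactly one of the three relations of the shape $r_{\{i,j\}}\,r_{\{k,l\}}=r_{\{i,j,k,l\}}$ holds, depending on which split carries the internal edge. For CDMs $2$--$5$, relabeling $\{a,b,c\}$ changes the model in one of two ways: either the underlying unrooted quartet changes, again detected by a split invariant, or the quartet is unchanged but the converging pair changes, and for CDMs $4$ and $5$ the order of the two convergence groups changes. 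The latter is detected by relations among the $r_S$ that are not symmetric under the relevant transposition of taxa; the asymmetric way the convergence parameters enter Equation~\eqref{gens} (for instance $x_6$ occurs linearly in $f_{0011}$ but only through the factor $1-x_6(1-x_2x_3x_5)$ in $f_{0110}$) is exactly what produces such relations. The explicit defining relations, and the check that each chosen relation separates the pair, are extracted from the reduced Gr\"{o}bner bases already computed for the supplementary files; as noted just before the theorem, only a few generators per parameter space are required, which is why we do not need the full bases for every labeling.

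The main obstacle is the bookkeeping for CDMs $2$--$5$: one must produce, for each of the $15$ pairs in each of those four types, a correct separating relation and confirm both that it lies in the vanishing ideal of one parameter space and that it is not identically zero on the other. This is delicate because the root parameter $\gamma$ appears in every even-weight entry and several convergence parameters can feed into a single entry, so the relevant invariants are genuinely polynomial rather than monomial and must be read off from the computations rather than guessed; a single misidentified generator would break a case. By contrast, CDM $1$ and all cross-type pairs are routine, the latter following from Proposition~\ref{distinguishabilitydiffdim} and the dimension count alone.
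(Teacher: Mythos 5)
Your overall decomposition matches the paper's: cross-type pairs are dispatched by the dimension count via Proposition~\ref{distinguishabilitydiffdim} (dimensions $6,7,8,9,10$), and all the work is concentrated on distinct leaf labelings of the same type. Your irreducibility observation is a clean, explicit justification --- absent from the paper --- for why a single polynomial vanishing identically on one parameter space but not the other forces $\dim\left(\Theta_1\cap\Theta_2\right)<\dim\left(\Theta_1\cup\Theta_2\right)$.

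The gap sits in exactly the pairs you defer to ``bookkeeping,'' and they are precisely the pairs the paper could not handle by a separating invariant. For CDM $5$, the relation $r_{0011}r_{1001}r_{1010}-r_{1011}^2=0$ separates the labelings $\left(o,\left(a,\left(b,c\right)\right)\right)$ and $\left(o,\left(a,\left(c,b\right)\right)\right)$ from the other four, but it holds on \emph{both} members of that pair; the paper then abandons the invariant strategy, equates the two parameterizations coordinate by coordinate, and shows the resulting system forces $z_1z_2^2z_3z_4z_6z_8\left(1-z_6\right)\left(1-z_8\right)\left(z_4z_7z_8-z_5z_9\right)=0$, i.e.\ the two \emph{images} are disjoint at generic points (and similarly for CDM $4$). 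Your plan instead requires a polynomial lying in the vanishing ideal of one of these two parameter spaces but not the other, to be ``extracted from the reduced Gr\"{o}bner bases''; but the paper computes that basis for only one labeling, and disjointness of the two images does not imply that their Zariski closures differ (two full-dimensional semialgebraic subsets of a single irreducible variety can be disjoint), so the existence of your separating polynomial is exactly what remains unproved for these pairs. You would need either to compare the two elimination ideals explicitly or to fall back on the paper's system-solving argument. A second, smaller defect: your proposed split invariants for CDM $1$ have the form $r_{\left\{i,j\right\}}r_{\left\{k,l\right\}}=r_{1111}$, but $r_{1111}$ is not individually recoverable from the distribution --- it enters the observable coordinate $q_{1111}$ only jointly with $\delta$ --- which is why the paper's CDM~$1$ conditions (e.g.\ $r_{0101}r_{1010}=r_{0110}r_{1001}<r_{0011}r_{1100}$) are built solely from weight-two coordinates.
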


\begin{proof}

By Proposition~\ref{distinguishabilitydiffdim}, if two CDMs have parameter spaces with different dimensions, then they are distinguishable from each other. CDMs $1$, $2$, $3$, $4$ and $5$ have parameter space dimensions $6$, $7$, $8$, $9$ and $10$ respectively, corresponding to the numbers of free parameters.

All that is left to prove is that any two CDMs that differ only in their leaf labelings are distinguishable. The notation that follows is consistent with that of Section~\ref{propsident}. Recall that $y_i\in\left(0,1\right)$ for all $i\in\left\{1,2,\ldots{},9\right\}$.

\subsection*{CDM $5$}

See S8.nb (text version S9.txt) on \url{https://github.com/jonathanmitchell88/CDMsSI} for proofs of the following claims.

For leaf labelings $\left(o,\left(a,\left(b,c\right)\right)\right)$ and $\left(o,\left(a,\left(c,b\right)\right)\right)$,
\begin{align*}
r_{0011}r_{1001}r_{1010}-r_{1011}^2=0,
\end{align*}
while for the other leaf labelings
\begin{align*}
r_{0011}r_{1001}r_{1010}-r_{1011}^2>0.
\end{align*}

Thus, we need only show that CDMs with leaf labelings $\left(o,\left(a,\left(b,c\right)\right)\right)$ and $\left(o,\left(a,\left(c,b\right)\right)\right)$ are distinguishable. To do this, we show that the intersection of the parameter spaces of the two CDMs is the empty set. Letting $y_i$ be the parameters corresponding to leaf labeling $\left(o,\left(a,\left(b,c\right)\right)\right)$ and $z_i$ corresponding to $\left(o,\left(a,\left(c,b\right)\right)\right)$, we equate the equations for each element of the two phylogenetic tensors and solve for the $z_i$ parameters,
\begin{align*}
\begin{cases}
\phantom{y_1\left(y_4y_8\left(y_2y_7\left(1-y_6\right)+y_3y_5y_6\right)\right.}\nphantom{$y_4y_5y_6y_8$}y_4y_5y_6y_8=&z_4z_5z_6z_8, \\
\phantom{y_1\left(y_4y_8\left(y_2y_7\left(1-y_6\right)+y_3y_5y_6\right)\right.}\nphantom{$y_9\left(1-y_8\right)+y_2y_3y_4y_6y_8$}y_9\left(1-y_8\right)+y_2y_3y_4y_6y_8=&z_8\left(z_7\left(1-z_6\right)+z_2z_3z_5z_6\right), \\
\phantom{y_1\left(y_4y_8\left(y_2y_7\left(1-y_6\right)+y_3y_5y_6\right)\right.}\nphantom{$y_8\left(y_7\left(1-y_6\right)+y_2y_3y_5y_6\right)$}y_8\left(y_7\left(1-y_6\right)+y_2y_3y_5y_6\right)=&z_9\left(1-z_8\right)+z_2z_3z_4z_6z_8, \\
\phantom{y_1\left(y_4y_8\left(y_2y_7\left(1-y_6\right)+y_3y_5y_6\right)\right.}\nphantom{$y_2y_3y_4y_5y_6y_8$}y_2y_3y_4y_5y_6y_8=&z_2z_3z_4z_5z_6z_8, \\
\phantom{y_1\left(y_4y_8\left(y_2y_7\left(1-y_6\right)+y_3y_5y_6\right)\right.}\nphantom{$y_1y_2y_4y_8$}y_1y_2y_4y_8=&z_1z_2z_5z_6, \\
\phantom{y_1\left(y_4y_8\left(y_2y_7\left(1-y_6\right)+y_3y_5y_6\right)\right.}\nphantom{$y_1y_2y_5y_6$}y_1y_2y_5y_6=&z_1z_2z_4z_8, \\
\phantom{y_1\left(y_4y_8\left(y_2y_7\left(1-y_6\right)+y_3y_5y_6\right)\right.}\nphantom{$y_1y_2y_4y_5y_6y_8$}y_1y_2y_4y_5y_6y_8=&z_1z_2z_4z_5z_6z_8, \\
\phantom{y_1\left(y_4y_8\left(y_2y_7\left(1-y_6\right)+y_3y_5y_6\right)\right.}\nphantom{$y_1y_3y_6y_8$}y_1y_3y_6y_8=&z_1z_3z_6z_8, \\
\phantom{y_1\left(y_4y_8\left(y_2y_7\left(1-y_6\right)+y_3y_5y_6\right)\right.}\nphantom{$y_1y_2y_3y_4y_6y_8$}y_1y_2y_3y_4y_6y_8=&z_1z_2z_3z_5z_6z_8, \\
\phantom{y_1\left(y_4y_8\left(y_2y_7\left(1-y_6\right)+y_3y_5y_6\right)\right.}\nphantom{$y_1y_2y_3y_5y_6y_8$}y_1y_2y_3y_5y_6y_8=&z_1z_2z_3z_4z_6z_8, \\
\phantom{y_1\left(y_4y_8\left(y_2y_7\left(1-y_6\right)+y_3y_5y_6\right)\right.}\nphantom{$y_1\left(y_4y_8\left(y_2y_7\left(1-y_6\right)+y_3y_5y_6\right)\right.$}y_1\left(y_4y_8\left(y_2y_7\left(1-y_6\right)+y_3y_5y_6\right)\right.\mathrel{\raisebox{-1em}{$=$}}&z_1\left(z_4z_8\left(z_2z_7\left(1-z_6\right)+z_3z_5z_6\right)\right. \\
\phantom{y_1\left(y_4y_8\left(y_2y_7\left(1-y_6\right)+y_3y_5y_6\right)\right.}\nphantom{$\left.+y_2y_5y_6y_9\left(1-y_8\right)\right)$}\left.+y_2y_5y_6y_9\left(1-y_8\right)\right)&\left.+z_2z_5z_6z_9\left(1-z_8\right)\right). \\
\phantom{y_1\left(y_4y_8\left(y_2y_7\left(1-y_6\right)+y_3y_5y_6\right)\right.}\nphantom{$y_1y_2y_3y_4y_5y_6y_8$}y_1y_2y_3y_4y_5y_6y_8=&z_1z_2z_3z_4z_5z_6z_8.
\end{cases}
\end{align*}
Solving this system of equations --- see S10.m2 (output file S11.txt) and the expressions simplified in S8.nb (text version S9.txt) on \url{https://github.com/jonathanmitchell88/CDMsSI} --- we obtain
\begin{align*}
z_1z_2^2z_3z_4z_6z_8\left(1-z_6\right)\left(1-z_8\right)\left(z_4z_7z_8-z_5z_9\right)=0,
\end{align*}
which has no solutions since $z_i\in\left(0,1\right)$ for all $i\in\left\{1,2,\ldots,8\right\}$ and the generating parameter must be a generic point in the parameter space, i.e. $z_4z_7z_8-z_5z_9\neq{}0$. Thus, for CDM $5$, any two CDMs with different leaf labelings are distinguishable.

\subsection*{CDM $4$}

The proof is identical to that of CDM $5$, but with the addition of $y_9=z_9=1$. Again, see S8.nb (text version S9.txt) and S10.m2 (output file S11.txt). We obtain
\begin{align*}
z_1z_2z_3z_4z_5z_6z_8\left(1-z_6\right)\left(1-z_7z_8\right)=0,
\end{align*}
which again has no solutions. Thus, for CDM $4$, any two CDMs with different leaf labelings are distinguishable.

\subsection*{CDM $3$}

See S8.nb (text version S9.txt) for proofs of the following claims.

For leaf labeling pairs $\left(o,\left(a,\left(b,c\right)\right)\right)$ and $\left(o,\left(c,\left(b,a\right)\right)\right)$, $\left(o,\left(a,\left(c,b\right)\right)\right)$ and
$\left(o,\left(b,\left(c,a\right)\right)\right)$ and $\left(o,\left(b,\left(a,c\right)\right)\right)$ and $\left(o,\left(c,\left(a,b\right)\right)\right)$,
\begin{align*}
\begin{cases}
\min\left(r_{0011}r_{1100},r_{0101}r_{1010},r_{0110}r_{1001}\right)=r_{0101}r_{1010}, \\
\min\left(r_{0011}r_{1100},r_{0101}r_{1010},r_{0110}r_{1001}\right)=r_{0110}r_{1001}, \\
\min\left(r_{0011}r_{1100},r_{0101}r_{1010},r_{0110}r_{1001}\right)=r_{0011}r_{1100},
\end{cases}
\end{align*}
respectively, where each equation corresponds to a leaf labeling pair. Thus, any CDM from one pair is distinguishable from a CDM from another pair.

All that is left is to prove that CDMs from an arbitrary pair, for example, $\left(o,\left(a,\left(b,c\right)\right)\right)$ and $\left(o,\left(c,\left(b,a\right)\right)\right)$, are distinguishable. For leaf labeling $\left(o,\left(a,\left(b,c\right)\right)\right)$, but not $\left(o,\left(c,\left(b,a\right)\right)\right)$,
\begin{align*}
r_{0011}r_{1001}r_{1010}-r_{1011}^2=0.
\end{align*}
For leaf labeling $\left(o,\left(c,\left(b,a\right)\right)\right)$, but not $\left(o,\left(a,\left(b,c\right)\right)\right)$,
\begin{align*}
r_{0011}r_{1001}r_{1010}-r_{1011}^2>0.
\end{align*}
Thus for CDM $3$, any two CDMs with different leaf labelings are distinguishable.

\subsection*{CDM $2$}

See S8.nb (text version S9.txt) for proofs of the following claims.

The constraints for CDM $2$ include those described above for CDM $3$. Thus for CDM $2$, any two CDMs with different leaf labelings are distinguishable.

\subsection*{CDM $1$}

See S8.nb (text version S9.txt) for proofs of the following claims.

For leaf labelings $\left(o,\left(a,\left(b,c\right)\right)\right)$, $\left(o,\left(b,\left(a,c\right)\right)\right)$ and $\left(o,\left(c,\left(a,b\right)\right)\right)$,
\begin{align*}
\begin{cases}
r_{0101}r_{1010}=r_{0110}r_{1001}<r_{0011}r_{1100}, \\
r_{0011}r_{1100}=r_{0110}r_{1001}<r_{0101}r_{1010}, \\
r_{0011}r_{1100}=r_{0101}r_{1010}<r_{0110}r_{1001},
\end{cases}
\end{align*}
respectively. Thus, for CDM $1$ any two CDMs with different leaf labelings are distinguishable.

\end{proof}

\section{Proof of Theorem~\ref{dist}}
\label{theorem6}

\setcounter{thm}{5}

\begin{thm}[Distance on the topology of an $N$-taxon principal tree]
Let $\mathcal{T}$ be a principal tree, with outgroup $o$. Suppose $\mathcal{T}$ is given the rooted triple metrization. Then the distance $d_\mathcal{T}\left(x,y\right)$ between leaf taxa $x$ and $y$ is
\begin{align*}
d_\mathcal{T}\left(x,y\right)=\begin{cases}
0 & \text{ if } $x=y$, \\
2N-2 & \text{ if $x\neq{}y$ and one of $x=o$, $y=o$}, \\
2\left|R_{x,y}\right|+2 & \text{ otherwise},
\end{cases}
\end{align*}
where $R_{x,y}$ is the set of rooted $4$-taxon principal trees displayed on $\mathcal{T}$ with outgroup $o$ displaying both $x$ and $y$, where $x$ and $y$ are non-sisters.

\end{thm}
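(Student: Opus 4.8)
The plan is to reduce the whole statement to a single telescoping identity for the metrized tree followed by a combinatorial count of "third taxa". First I would unwind the rooted triple metrization: for a directed edge $(u,v)$ its length is $|X_u|-|X_v|$, where $X_w$ denotes the set of leaf taxa at or below $w$ and a leaf counts as its own unique descendant, so $|X_w|=1$ for a leaf $w$. Along the unique path in $\mathcal{T}$ from a leaf $x$ up to any ancestor $w$ the lengths telescope, giving total length $|X_w|-|X_x|=|X_w|-1$. Taking $w=\mathrm{MRCA}(x,y)$ and splitting the $x$-to-$y$ path at $w$ yields
\begin{align*}
d_\mathcal{T}(x,y)=2\bigl(|X_w|-1\bigr),\qquad w=\mathrm{MRCA}(x,y).
\end{align*}
In particular $\mathcal{T}$ is ultrametric with root-to-leaf distance $N-1$. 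The case $x=y$ (here $w=x$, $|X_w|=1$) gives $0$; the case $x\neq y$ with $y=o$ gives $w=\text{root}$, since $\mathcal{T}$ is rooted by the single outgroup $o$ (so $x,y$ cannot be separated by the root unless one is $o$), hence $|X_w|=N$ and $d_\mathcal{T}(x,y)=2N-2$.

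For the remaining case $x\neq y$, $x,y\neq o$, it suffices to prove $|R_{x,y}|=|X_w|-2$, since then $d_\mathcal{T}(x,y)=2(|X_w|-1)=2|R_{x,y}|+2$. Each rooted $4$-taxon principal tree displayed on $\mathcal{T}$ with outgroup $o$ and displaying both $x$ and $y$ arises by restricting $\mathcal{T}$ to $\{o,x,y,z\}$ for a unique third taxon $z\in X\setminus\{o,x,y\}$ (distinct $z$ give distinct leaf-labeled trees, so $R_{x,y}$ is counted without repetition); because $o$ is the outgroup of $\mathcal{T}$ it is also the outgroup of the restriction, so the restriction has shape $\bigl(o,(\,\cdot\,,(\,\cdot\,,\cdot\,))\bigr)$ and the only question is which pair forms the cherry. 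The key sub-lemma is: \emph{$x$ and $y$ are non-sisters in $\mathcal{T}|_{\{o,x,y,z\}}$ if and only if $z\in X_w$}. I would prove this by the standard three-leaf argument on $\{x,y,z\}$: if $z\notin X_w$ then $\mathrm{MRCA}(x,z)=\mathrm{MRCA}(y,z)=\mathrm{MRCA}(x,y,z)$ is a proper ancestor of $w$, so the cherry is $\{x,y\}$; if $z\in X_w$ then $z$ lies in exactly one of the two child subtrees of $w$ (using that $\mathcal{T}$ is binary and $x,y$ lie in different child subtrees of $w$), say the one containing $x$, whence $\mathrm{MRCA}(x,z)\prec w=\mathrm{MRCA}(y,z)=\mathrm{MRCA}(x,y)$ and the cherry is $\{x,z\}$ (symmetrically $\{y,z\}$), so $x,y$ are non-sisters. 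Finally, $x,y\neq o$ forces $w$ strictly below the root, so $o\notin X_w$; hence the admissible $z$ are exactly the elements of $X_w\setminus\{x,y\}$, giving $|R_{x,y}|=|X_w|-2$ and the stated formula.

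The only real obstacle is making the cherry sub-lemma airtight: in particular confirming that a $4$-taxon principal tree "displayed on $\mathcal{T}$" is just the ordinary leaf-set restriction of the underlying tree (which it is here, since this step only tracks topologies of principal trees and no convergence groups enter), keeping the proper-ancestor relations bookkept correctly, and checking the binary-tree case at $w$. Everything else is telescoping sums and a size count, which I would not spell out in detail.
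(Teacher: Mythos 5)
Your proposal is correct and follows essentially the same route as the paper's proof: reduce to $d_\mathcal{T}(x,y)=2k-2$ with $k=\left|X_w\right|$ via the rooted triple metrization (which the paper delegates to the arguments of \cite{rhodes2019topological} and you derive by telescoping), then count the admissible third taxa as the $k-2$ leaves below $w$ other than $x$ and $y$. Your explicit cherry sub-lemma just fills in the one-sentence counting claim the paper makes, so there is nothing substantively different to flag.
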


\begin{proof}
Clearly, if $x=y$ then $d_\mathcal{T}\left(x,y\right)=0$.

Next suppose $x\neq{}y$ and one of $x=o$, $y=o$. With no loss of generality, assume $y=o$. Then
\begin{align*}
d_\mathcal{T}\left(x,y\right)=d_\mathcal{T}\left(x,o\right)=d_\mathcal{T}\left(x,v\right)+d_\mathcal{T}\left(v,o\right),
\end{align*}
where $v$ is the most recent common ancestor (MRCA) of $x$ and $o$. Since $y=o$, $v$ must be the root of $\mathcal{T}$. Then from the rooted triple metrization, by the same arguments as \cite{rhodes2019topological},
\begin{align*}
d_\mathcal{T}\left(x,v\right)=d_\mathcal{T}\left(v,o\right)=N-1
\end{align*}
and
\begin{align*}
d_\mathcal{T}\left(x,o\right)=2N-2.
\end{align*}

Finally, suppose $x\neq{}y$ and $x,y\neq{}o$. Again suppose that $v$ is the MRCA of $x$ and $y$. Then again by the same arguments as \cite{rhodes2019topological},
\begin{align*}
d_\mathcal{T}\left(x,y\right)=2k-2,
\end{align*}
where $k$ is the number of leaf taxa descended from $v$.

For $x$ and $y$ to be non-sisters on a rooted $4$-taxon principal tree displayed on $\mathcal{T}$ with outgroup $o$, we require the leaf taxon that is not $x$, $y$ or $o$ to be one of the $k-2$ leaf taxa descended from $v$ that is not $x$ or $y$. Thus,
\begin{align*}
\left|R_{x,y}\right|=k-2
\end{align*}
and
\begin{align*}
d_\mathcal{T}\left(x,y\right)=2\left|R_{x,y}\right|+2.
\end{align*}

\end{proof}

\section{Inferring topologies of \texorpdfstring{$N$}{N}-taxon principal trees}
\label{toptreeconsistency}

We prove that consistent inference of the topology of the $N$-taxon principal tree follows from consistent inference of the principal trees of the displayed $4$-taxon CDMs. However, it is possible that a displayed $4$-taxon CDM does not meet the assumptions of Section~\ref{ass}. Specifically, even if an $N$-taxon CDM meets the assumptions, some displayed $4$-taxon CDMs may have sister convergence. By assuming that all convergence parameters of the $N$-taxon CDM are sufficiently ``small'', then all convergence parameters of the displayed $4$-taxon CDMs, including those of sister convergence groups of the displayed $4$-taxon CDMs are ``small''. Then all topologies of the displayed $4$-taxon principal trees are inferred consistently by Algorithm~\ref{algorithmprincipaltree2}.

To prove this result, we first prove a proposition similar to Proposition~1.2 of \cite{haughton1988choice}. Proposition~1.2 states that if the generating model is among the set of candidate models, the probability that the model selected by the BIC is the generating model converges to $1$. Our adaptation relaxes Proposition~1.2, such that none of the candidate models are the generating model, but some candidate models are sufficiently ``close'' to the generating model. That is, the generating parameter is a ``small'' perturbation from a point in the parameter space of a candidate model. We then use our proposition to prove that all topologies of the displayed $4$-taxon principal trees are inferred consistently by Algorithm~\ref{algorithmprincipaltree2}.

For the following proposition, $f\left(X,\phi\right)=\exp\left(X\phi-b\left(\phi\right)\right)$ is the density for a regular exponential family, $m_1$ and $m_2$ are the parameter spaces of two models, $\interior\Theta$ is the interior of some topological space $\Theta$, $\overline{m}_1$ and $\overline{m}_2$ are the Zariski closures of $m_1$ and $m_2$, respectively and $E_{\theta}X_i=\nabla{}b\left(\theta\right)$ is the expected value of random variable $X_i$ given generating parameter $\theta$. The function $g\left(\phi\right)=\nabla{}b\left(\theta\right)\phi-b\left(\phi\right)$ for $\phi\in\Theta$ attains its unique maximum at $\theta$ \citep{barndorff1978information}.

\begin{proposition}
\label{conj}

Let $m_1$ and $m_2$ be two different models satisfying $m_1\cap{}m_2=\emptyset$. Then there exists some $\theta\in\interior\Theta$, $\theta\notin\overline{m}_1,\theta\notin\overline{m}_2$, with a neighborhood $\mathfrak{N}$ of $\theta$ such that $\mathfrak{N}\cap{}m_1=\emptyset$, $\mathfrak{N}\cap{}m_2\neq{}\emptyset$ and
\begin{align*}
\lim_{n\to\infty}P_{\theta}^{n}\left(\gamma\left(n,1\right)<\gamma\left(n,2\right)\right)=1.
\end{align*}

\begin{proof}

The proof requires only a slight modification to the proof of Proposition~1.2 of \cite{haughton1988choice}.

From \cite{haughton1988choice}, since $\mathfrak{N}\cap{}m_1=\emptyset$,
\begin{equation}
\sup_{\phi\in{}m_1\cap{}\Theta}\nabla{}b\left(\theta\right)\phi-b\left(\phi\right)+\epsilon\leq{}\nabla{}b\left(\theta\right)\theta-b\left(\theta\right) \label{eqn1}
\end{equation}
and asymptotically with probability $1$,
\begin{equation}
\left|\sup_{\phi\in{}m_i\cap{}\Theta}\left(Y_n\phi-b\left(\phi\right)\right)-\sup_{\phi\in{}m_i\cap{}\Theta}\nabla{}b\left(\theta\right)\phi-b\left(\phi\right)\right|<\frac{\epsilon}{4}, \label{eqn2}
\end{equation}
where $\epsilon>0$.

Although $\mathfrak{N}\cap{}m_2\neq{}\emptyset$, $g\left(\phi\right)$ attains its maximum at $\theta$ and $\theta\notin\overline{m}_2$. Thus, we can choose $\widetilde{\epsilon}>0$ such that
\begin{equation}
\sup_{\phi\in{}m_2\cap{}\Theta}\nabla{}b\left(\theta\right)\phi-b\left(\phi\right)+\widetilde{\epsilon}=\nabla{}b\left(\theta\right)\theta-b\left(\theta\right). \label{eqn3}
\end{equation}

We consider the two possible signs of the argument of the absolute value in Inequality~\eqref{eqn2}. If
\begin{align*}
\sup_{\phi\in{}m_i\cap{}\Theta}\left(Y_n\phi-b\left(\phi\right)\right)-\sup_{\phi\in{}m_i\cap{}\Theta}\nabla{}b\left(\theta\right)\phi-b\left(\phi\right)\geq{}0,
\end{align*}
then from Inequality~\ref{eqn2},
\begin{align*}
\sup_{\phi\in{}m_1\cap{}\Theta}\left(Y_n\phi-b\left(\phi\right)\right)<&\sup_{\phi\in{}m_1\cap{}\Theta}\nabla{}b\left(\theta\right)\phi-b\left(\phi\right)+\frac{\epsilon}{4}.
\end{align*}

Similarly, if
\begin{align*}
\sup_{\phi\in{}m_i\cap{}\Theta}\left(Y_n\phi-b\left(\phi\right)\right)-\sup_{\phi\in{}m_i\cap{}\Theta}\nabla{}b\left(\theta\right)\phi-b\left(\phi\right)<0,
\end{align*}
then
\begin{align*}
\sup_{\phi\in{}m_1\cap{}\Theta}\left(Y_n\phi-b\left(\phi\right)\right)<&\sup_{\phi\in{}m_1\cap{}\Theta}\nabla{}b\left(\theta\right)\phi-b\left(\phi\right) \\
<&\sup_{\phi\in{}m_1\cap{}\Theta}\nabla{}b\left(\theta\right)\phi-b\left(\phi\right)+\frac{\epsilon}{4}.
\end{align*}

Thus, from Inequalities~\eqref{eqn1}~and~\eqref{eqn2}, asymptotically with probability $1$,
\begin{equation}
\sup_{\phi\in{}m_1\cap{}\Theta}\left(Y_n\phi-b\left(\phi\right)\right)<\sup_{\phi\in{}m_1\cap{}\Theta}\nabla{}b\left(\theta\right)\phi-b\left(\phi\right)+\frac{\epsilon}{4} \leq{}\nabla{}b\left(\theta\right)\theta-b\left(\theta\right)-\frac{3\epsilon}{4}.
\label{eqn4}
\end{equation}

By similar arguments, from Inequality~\eqref{eqn2} and Equation~\eqref{eqn3}, asymptotically with probability $1$,
\begin{equation}
\sup_{\phi\in{}m_2\cap{}\Theta}\left(Y_n\phi-b\left(\phi\right)\right)>\sup_{\phi\in{}m_2\cap{}\Theta}\nabla{}b\left(\theta\right)\phi-b\left(\phi\right)-\frac{\epsilon}{4}=\nabla{}b\left(\theta\right)\theta-b\left(\theta\right)-\widetilde{\epsilon}-\frac{\epsilon}{4}.
\label{eqn5}
\end{equation}

By Inequalities~\eqref{eqn4}~and~\eqref{eqn5},
\begin{align*}
\sup_{\phi\in{}m_1\cap{}\Theta}\left(Y_n\phi-b\left(\phi\right)\right)<&\nabla{}b\left(\theta\right)\theta-b\left(\theta\right)-\frac{3\epsilon}{4} \\
=&\nabla{}b\left(\theta\right)\theta-b\left(\theta\right)-\widetilde{\epsilon}-\frac{\epsilon}{4}+\widetilde{\epsilon}-\frac{\epsilon}{2} \\
<&\sup_{\phi\in{}m_2\cap{}\Theta}\left(Y_n\phi-b\left(\phi\right)\right)+\widetilde{\epsilon}-\frac{\epsilon}{2} \\
=&\sup_{\phi\in{}m_2\cap{}\Theta}\left(Y_n\phi-b\left(\phi\right)\right)-\delta,
\end{align*}
where $\delta=\frac{\epsilon}{2}-\widetilde{\epsilon}$.

If it is possible to choose $\delta>0$,
then asymptotically with probability $1$,
\begin{align*}
\sup_{\phi\in{}m_1\cap{}\Theta}\left(Y_n\phi-b\left(\phi\right)\right)+\delta<\sup_{\phi\in{}m_2\cap{}\Theta}\left(Y_n\phi-b\left(\phi\right)\right).
\end{align*}

We are free to choose any $\theta\in\interior\Theta$. Thus, we choose $\theta$ to be an arbitrarily small perturbation from some point in $m_2$. Then $\widetilde{\epsilon}>0$ is arbitrarily small and $\delta>0$. The remainder of the proof then follows from \cite{haughton1988choice}.

\end{proof}

\end{proposition}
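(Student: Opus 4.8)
The plan is to follow the proof of Proposition~1.2 of \cite{haughton1988choice} --- which establishes that the BIC is consistent when the generating model sits among the candidates --- and to relax it so that the generating parameter $\theta$ need lie in neither $m_1$ nor $m_2$, but is merely an arbitrarily small perturbation of a point of $m_2$. Here $\gamma(n,i)$ denotes the BIC of model $i$: its leading term is $n$ times the maximized normalized log-likelihood $\sup_{\phi\in m_i\cap\Theta}\bigl(Y_n\phi-b(\phi)\bigr)$, where $Y_n$ is the empirical mean of the sufficient statistics, and its penalty term is $O(\log n)=o(n)$. By the law of large numbers $Y_n\to\nabla b(\theta)$, and, exactly as in \cite{haughton1988choice}, the maximized empirical objective over each $m_i$ converges to $\sup_{\phi\in m_i\cap\Theta}g(\phi)$ with $g(\phi)=\nabla b(\theta)\phi-b(\phi)$; recall that $g$ attains its unique maximum at $\theta$ \citep{barndorff1978information}.

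First I would fix the point $\theta$. Since $m_1$ and $m_2$ are distinct models with $m_1\cap m_2=\emptyset$, the set $m_2$ is not contained in the Zariski closure $\overline{m}_1$, so $m_2\setminus\overline{m}_1\neq\emptyset$; perturbing slightly off $\overline{m}_2$ then yields $\theta\in\interior\Theta$ with $\theta\notin\overline{m}_1$, $\theta\notin\overline{m}_2$, yet $\theta$ arbitrarily close to $m_2$. As $\overline{m}_1$ is closed, a sufficiently small neighborhood $\mathfrak{N}$ of $\theta$ satisfies $\mathfrak{N}\cap m_1=\emptyset$, while $\mathfrak{N}\cap m_2\neq\emptyset$ by construction. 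Because $\mathfrak{N}\cap m_1=\emptyset$, $\theta\notin\overline{m}_1$, and $g$ is uniquely maximized at $\theta$, there is a strict gap $\sup_{\phi\in m_1\cap\Theta}g(\phi)+\epsilon\leq g(\theta)$ for some $\epsilon>0$; similarly $\sup_{\phi\in m_2\cap\Theta}g(\phi)+\widetilde\epsilon=g(\theta)$ for some $\widetilde\epsilon>0$ which, by continuity of $g$ and the closeness of $\theta$ to $m_2$, can be taken smaller than $\epsilon/2$.

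Next I would invoke the asymptotic-with-probability-one estimate $\bigl|\sup_{\phi\in m_i\cap\Theta}(Y_n\phi-b(\phi))-\sup_{\phi\in m_i\cap\Theta}g(\phi)\bigr|<\epsilon/4$ of \cite{haughton1988choice}. Together with the two gaps this gives, asymptotically with probability one, $\sup_{\phi\in m_1\cap\Theta}(Y_n\phi-b(\phi))<g(\theta)-3\epsilon/4$ and $\sup_{\phi\in m_2\cap\Theta}(Y_n\phi-b(\phi))>g(\theta)-\widetilde\epsilon-\epsilon/4$, so the $m_2$ objective exceeds the $m_1$ objective by at least $\delta=\epsilon/2-\widetilde\epsilon>0$. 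Multiplying by $n$ and noting that the two BIC penalties differ by only $O(\log n)$, the margin $n\delta$ eventually dominates, whence $\gamma(n,1)<\gamma(n,2)$ on an event whose probability tends to $1$, which is the claim.

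The hard part, I expect, is the purely geometric step of arranging $\theta\in\interior\Theta$, $\theta\notin\overline{m}_1\cup\overline{m}_2$, $\mathfrak{N}\cap m_1=\emptyset$, $\mathfrak{N}\cap m_2\neq\emptyset$, and $\widetilde\epsilon<\epsilon/2$ simultaneously: this rests on $m_2$ being a lower-dimensional variety onto which its relative complement accumulates, and on $\overline{m}_1$ containing no neighborhood of the chosen point --- precisely the content of the hypotheses $m_1\neq m_2$ and $m_1\cap m_2=\emptyset$ (read alongside the dimension facts of Section~\ref{ident}). The analytic ingredients --- the uniform law of large numbers for the maximized exponential-family objective and the $o(n)$ control of the penalty difference --- carry over unchanged from \cite{haughton1988choice}.
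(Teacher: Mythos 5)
Your proposal is correct and follows essentially the same route as the paper: the same gap inequalities $\sup_{\phi\in m_1\cap\Theta}g(\phi)+\epsilon\leq g(\theta)$ and $\sup_{\phi\in m_2\cap\Theta}g(\phi)+\widetilde{\epsilon}=g(\theta)$, the same $\epsilon/4$ concentration bound from Haughton, the same margin $\delta=\epsilon/2-\widetilde{\epsilon}$, and the same final move of taking $\theta$ to be an arbitrarily small perturbation of a point of $m_2$ so that $\widetilde{\epsilon}$ can be made smaller than $\epsilon/2$. Your additional remarks on the geometric feasibility of the choice of $\theta$ and on the $O(\log n)$ penalty difference make explicit what the paper leaves to the citation of Haughton, but the argument is the same.
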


A convergence group on the generating $N$-taxon CDM may be a sister convergence group on some displayed $4$-taxon CDMs and a non-sister convergence group on others. Thus, we must assume that all convergence parameters of the generating $N$-taxon CDM are ``small'' relative to the divergence parameters.

\bigskip

Next, we adapt Theorem~3 of \cite{steel1992complexity} to prove that the $N$-taxon principal tree can be identified from the set of $4$-taxon principal trees that include the outgroup.

\begin{theorem}[Steel, 1992]
\label{corsteel1992}
For a set of rooted triples $R$, $\left<R\right>=\left\{T\right\}$ if and only if $R$ is consistent with $T$, and for each internal edge $e$ of $T$ there is a rooted triple in $R$ which distinguishes $e$.
\end{theorem}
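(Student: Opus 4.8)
The statement is Steel's characterization of when a set of rooted triples defines a unique tree (Theorem~3 of \cite{steel1992complexity}), so the shortest route is simply to cite that reference; I would nonetheless record the argument, specialized to the binary trees that occur here as principal trees. Write $r(T)$ for the set of rooted triples displayed by $T$; $R$ is \emph{consistent} with $T$ when $R\subseteq r(T)$; and a triple $t$ \emph{distinguishes} an internal edge $e$ of $T$ when $t\in r(T)\setminus r(T/e)$, where $T/e$ denotes $T$ with $e$ contracted. A direct check shows $ab|c$ distinguishes $e$ exactly when $e$ is the edge immediately above $\mathrm{MRCA}_T(a,b)$ and $\mathrm{MRCA}_T(a,b,c)$ is its parent; in particular the two leaves $a,b$ of such a triple sit on opposite sides of the split at the child endpoint of $e$.

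For the ``only if'' direction I would use contraposition. If $R$ is not consistent with $T$ then $T\notin\langle R\rangle$. If $R$ is consistent with $T$ but some internal edge $e$ is distinguished by no triple of $R$, then every $t\in R$ already lies in $r(T/e)$, so $T/e$ also displays $R$; contracting an internal edge yields a genuinely different, less resolved tree, so $\langle R\rangle$ contains both $T$ and $T/e$. In either case $\langle R\rangle\neq\{T\}$.

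For the ``if'' direction, assume $R$ is consistent with $T$ and every internal edge of $T$ is distinguished by a triple of $R$. The plan is to mimic the recursion behind the \textsc{Build} algorithm. Let $G$ be the Aho graph on the leaf set $X$, with an edge $\{a,b\}$ whenever $ab|c\in R$ for some $c$. First I would show, by induction on cluster size, that for every internal edge $e$ of $T$ the induced subgraph $G|_{A_e}$ on the cluster $A_e$ is connected: the distinguishing triple of $e$ contributes an edge joining the two subclusters into which the child endpoint of $e$ splits $A_e$ (this is where binarity of $T$ is used), while the distinguishing triples of the internal edges lying strictly inside each subcluster connect those subclusters by the inductive hypothesis. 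Applied to the two edges below the root, this shows the connected components of $G$ are exactly the two root clusters $X_1,X_2$ of $T$. Now take any $T'\in\langle R\rangle$. Each triple of $R$ forces its two leaves into the same root subtree of $T'$, so the root partition of $T'$ is coarser than $\{X_1,X_2\}$ and, having at least two parts, must equal it. Restricting to $X_i$, the tree $T'|_{X_i}$ displays $R|_{X_i}$; since the distinguishing triple of any internal edge of $T$ with cluster inside $X_i$ uses only leaves of $X_i$, the pair $(R|_{X_i},T|_{X_i})$ again satisfies the hypotheses, and induction on $|X|$ gives $T'|_{X_i}=T|_{X_i}$. Hence $T'=T$.

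The bookkeeping around the definition of ``distinguishes'' and the contraction step are routine. The one step that needs genuine care --- and the point at which binarity of the principal tree is essential --- is the inductive proof that the distinguishing triples make $G$ connected on each cluster; granting that, the ``the root split is forced, now recurse'' argument is standard.
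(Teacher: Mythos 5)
The paper offers no proof of this statement: it is imported with attribution as Theorem~3 of \cite{steel1992complexity}, and the surrounding text only records its consequence (and its unrooted-quartet analogue). Your reconstruction is therefore necessarily a different route from the paper's, but it is the standard one --- the Aho-graph/\textsc{Build} argument by which the result is actually proved in the literature --- and it is sound. The contrapositive for the ``only if'' direction is correct (if no triple of $R$ distinguishes $e$ then $R\subseteq r(T/e)$, so $T/e$ also displays $R$), and in the ``if'' direction you correctly isolate the one nontrivial step: the induction on cluster size showing that the distinguishing triples make the Aho graph connected on every cluster of $T$, with binarity entering exactly where you say it does (the child endpoint of $e$ splits $A_e$ into only two subclusters, so the single cross-edge supplied by the distinguishing triple, together with the inductive connectivity of each subcluster, connects $A_e$); consistency of $R$ with $T$ rules out cross-edges between the two root clusters, the root split of any displaying tree is then forced, and recursion on each part finishes the argument. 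The only point to tighten is the bookkeeping around $\langle R\rangle$, which neither the paper nor your write-up defines: if $\langle R\rangle$ is the set of all trees displaying $R$ your argument reads verbatim, whereas if it is the set of trees minimal under contraction among those displaying $R$, then in the ``only if'' direction the correct conclusion when $e$ is undistinguished is that $T\notin\langle R\rangle$ rather than that both $T$ and $T/e$ lie in $\langle R\rangle$ --- either way $\langle R\rangle\neq\left\{T\right\}$, so the proof survives with a one-line adjustment.
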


The consequence of Theorem~\ref{corsteel1992} of \cite{steel1992complexity} is that if all trees of a set of (binary) rooted $3$-taxon trees $R$ are displayed on a (binary) rooted $N$-taxon tree $T$ and each internal edge of $T$ is an internal edge of at least one tree in $R$, then $T$ is the only $N$-taxon tree that displays all the $3$-taxon trees of $R$. In other words, the $N$-taxon tree $T$ can be identified from the set of $3$-taxon trees $R$.

\cite{steel1992complexity} note that an analogous theorem exists for unrooted quartets. Thus, the \emph{unrooted} principal tree of the $N$-taxon CDM can be identified from the set of $\binom{N-1}{3}$ topologies of \emph{unrooted} $4$-taxon principal trees that include the outgroup displayed on the \emph{unrooted} principal tree of the $N$-taxon CDM. The principal tree of the $N$-taxon CDM is then rooted by the outgroup.

\subsection{Proof of Theorem~\ref{principaltreeconsistent}}
\label{theorem7}

Finally, from Proposition~\ref{conj} and Theorem~\ref{corsteel1992} adapted to unrooted quartets that include the outgroup, we can prove Theorem~\ref{principaltreeconsistent}.

\setcounter{thm}{6}

\begin{thm}

Suppose CDM $\mathcal{N}$ has topology of principal tree $\mathcal{T}$. Suppose the BIC is used for model selection in step $2$ of Algorithm~\ref{algorithmprincipaltree2}. Suppose $\widehat{\mathcal{T}}$ is the estimate of $\mathcal{T}$ inferred by Algorithm~\ref{algorithmprincipaltree2}. Then there exists some constant $c>0$ such that if the largest convergence parameter of $\mathcal{N}$ is less than $c$,
\begin{align*}
\lim_{n\to\infty}\mathbb{P}\left(\widehat{\mathcal{T}}=\mathcal{T}\right)=1.
\end{align*}

\end{thm}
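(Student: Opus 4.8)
The plan is to reduce the statement to the level of the $4$-taxon sets and then combine Proposition~\ref{conj} with the distinguishability result of Theorem~\ref{disttheorem}.

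First I would reduce to quartets. By the adaptation of Theorem~\ref{corsteel1992} to unrooted quartets containing the outgroup, the topology of $\mathcal{T}$ is uniquely determined by the set of the $\binom{N-1}{3}$ unrooted topologies of the $4$-taxon principal trees that include $o$ and are displayed on $\mathcal{T}$: every internal edge of $\mathcal{T}$ splits the leaf set into two parts of size at least two, and so is distinguished by at least one such quartet. Hence it suffices to show that, with probability tending to $1$, Algorithm~\ref{algorithmprincipaltree2} recovers for \emph{every} $4$-taxon set containing $o$ the topology of the displayed $4$-taxon principal tree, and moreover that every CDM within the input tolerance $\tau$ of the BIC-minimum shares that topology. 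Given this, the list $T_Q$ consists precisely of the correct quartet topologies, the consistent supertree method returns $S=\{\mathcal{T}\}$, the consensus is $\mathcal{T}$, which is fully resolved, and the algorithm outputs $\widehat{\mathcal{T}}=\mathcal{T}$. Since there are only finitely many $4$-taxon sets, it is enough to establish the quartet-level claim for a single arbitrary set $\{o,a,b,c\}$.

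For the quartet-level claim, let $\mathcal{N}'$ be the CDM displayed on $\mathcal{N}$ on $\{o,a,b,c\}$, with generating parameter $\theta$ and principal tree topology $\mathcal{T}'$, and let $\mathcal{N}''$ be obtained from $\mathcal{N}'$ by suppressing any sister convergence groups. By Assumptions~\ref{assevents}--\ref{nosist}, $\mathcal{N}''$ is one of the $27$ leaf-labeled candidate CDMs of Section~\ref{4taxaidentifiability}, and it has topology $\mathcal{T}'$. As the convergence parameters of $\mathcal{N}$ shrink, so do the displayed convergence parameters of $\mathcal{N}'$, and $\theta$ becomes a perturbation of size $O(c)$ of a point $p$ in $m_2:=\Theta_{\mathcal{N}''}$, which by Assumption~\ref{genparam} may be taken to be a generic interior point of $m_2$ with $\theta\notin\overline{m_2}$. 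For any candidate CDM $m_1$ whose principal tree topology differs from $\mathcal{T}'$, Theorem~\ref{disttheorem} shows $m_1$ and $m_2$ are distinguishable; since $p$ is generic in $m_2$ it then lies outside the Zariski closure of $m_1$ and admits a Euclidean neighbourhood $\mathfrak{N}$ with $\mathfrak{N}\cap m_1=\emptyset$ and $p\in\mathfrak{N}\cap m_2$. Choosing $c$ small enough that $\theta\in\mathfrak{N}$, the proof of Proposition~\ref{conj} — which uses only the local conditions $\mathfrak{N}\cap m_1=\emptyset$ and $\mathfrak{N}\cap m_2\neq\emptyset$ together with $\theta$ being a small perturbation of a point of $m_2$ — applies to this $m_1$, $m_2$, $\theta$: the maximised per-observation log-likelihood of $m_2$ exceeds that of $m_1$ by a fixed $\delta>0$ asymptotically almost surely, whence the maximised log-likelihoods differ by at least $n\delta$ and $\mathrm{BIC}(m_1)-\mathrm{BIC}(m_2)\to\infty$ in $P_\theta$-probability, the $O(\log n)$ gap in the penalty terms being dominated. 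A union bound over the finitely many wrong-topology candidates $m_1$ shows that, with probability tending to $1$, the BIC of every wrong-topology candidate exceeds the BIC-minimum by more than $\tau$; hence the BIC-selected CDM and every CDM within $\tau$ of it has topology $\mathcal{T}'$, which is the quartet-level claim.

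Finally I would take $c$ to be the minimum of the finitely many thresholds obtained above (one per $4$-taxon set and per wrong-topology competitor) and apply a last union bound over the $4$-taxon sets; with the quartet reduction this yields $\lim_{n\to\infty}\mathbb{P}(\widehat{\mathcal{T}}=\mathcal{T})=1$. The hard part will be the translation of the hypothesis ``all convergence parameters of $\mathcal{N}$ are small'' into the precise input of Proposition~\ref{conj}: one must check that suppressing the sister convergence groups of each displayed $4$-taxon CDM really produces a candidate CDM of the correct topology (dealing with the configurations excluded by Assumptions~\ref{onecongroup}--\ref{nosist}, in particular ruling out repeated or nested convergence groups on a displayed quartet), that the associated point $p$ can be taken generic so that Theorem~\ref{disttheorem} furnishes the local disjointness $\mathfrak{N}\cap m_1=\emptyset$, and that $c$ can be chosen small enough, uniformly over all quartets and competitors, to force $\theta$ into the neighbourhood $\mathfrak{N}$ and keep the constant $\delta$ in Proposition~\ref{conj} positive.
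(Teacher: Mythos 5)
Your proposal is correct and follows essentially the same route as the paper: reduce to the $\binom{N-1}{3}$ quartets containing $o$, apply Proposition~\ref{conj} with the generating parameter viewed as a small perturbation of a point in the correct-topology candidate obtained by suppressing sister convergence, take a union bound over quartets and competitors, and finish with the unrooted-quartet adaptation of Theorem~\ref{corsteel1992}. The one link where you are weaker than the paper is the passage from ``$m_1$ and $m_2$ are distinguishable'' to ``a generic $p\in m_2$ lies outside $\overline{m}_1$'': Definition~\ref{distinguishability} is only a dimension count and does not by itself exclude $m_2\subseteq\overline{m}_1$ when $\dim m_2<\dim m_1$, whereas the paper obtains the required local disjointness directly from the explicit invariant $r_{0011}r_{1001}r_{1010}-r_{1011}^2$, which vanishes identically on correct-topology candidates and is strictly positive on wrong-topology ones. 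Since you explicitly flag this verification as the outstanding ``hard part,'' the structure of your argument matches the paper's.
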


\begin{proof}

Suppose $\mathcal{N}$ has a displayed $4$-taxon CDM $\mathcal{N}_4$ with topology of principal tree $\mathcal{T}_4=\left(o,\left(a,\left(b,c\right)\right)\right)$. Then from the proof of Theorem~\ref{disttheorem}, for $\mathcal{N}_4$,
\begin{align*}
r_{0011}r_{1001}r_{1010}-r_{1011}^2=0,
\end{align*}
while for some $4$-taxon CDM with topology of principal tree $\mathcal{T}'_4\neq{}\left(o,\left(a,\left(b,c\right)\right)\right)$,
\begin{align*}
r_{0011}r_{1001}r_{1010}-r_{1011}^2>0.
\end{align*}

Suppose $m_1$ corresponds to the union of parameter spaces for CDMs $1-5$ for the topology of principal tree $\mathcal{T}_4$. Suppose also that $m_2$ corresponds to the union of parameter spaces for CDMs $1-5$ for any $4$-taxon topology of principal tree that is not $\mathcal{T}_4$. Then $m_1\cap{}m_2=\emptyset$ --- recall from Assumption~\ref{genparam} that the generating parameter is a generic point. Suppose $\theta\notin\overline{m}_1,\overline{m}_2$. Then if $c>0$ is sufficiently small, Proposition~\ref{conj} holds and $m_1$ is selected by the BIC asymptotically with probability $1$.

\bigskip

Next, we prove the claim that the set of inferred topologies of $4$-taxon principal trees equals the set of topologies of the principal trees of the $4$-taxon CDMs displayed on $\mathcal{N}$. Then from the adaptation of Theorem~\ref{corsteel1992} to unrooted quartets, the topology of the principal tree of $\mathcal{N}$ is the only topology that displays all inferred $4$-taxon principal trees. Thus, any consistent supertree inference method used in step~3 of Algorithm~\ref{algorithmprincipaltree2} infers the topology of the principal tree of $\mathcal{N}$ consistently and the proof is complete.

All that is left to prove is the claim that the probability of the set of inferred $4$-taxon principal trees equalling the set of topologies of principal trees of $4$-taxon CDMs displayed on $\mathcal{N}$ converges to $1$.

\bigskip

Suppose $A_i$ is the event where the topology of the $i^{th}$ $4$-taxon principal tree is inferred incorrectly, given some arbitrary order. Then, by Proposition~\ref{conj}, there exists some sample size $n$ such that for $n'>n$, $\mathbb{P}\left(A_i\right)<\epsilon_i$ for some arbitrarily small $\epsilon_i>0$. Then by Boole's inequality,
\begin{align*}
\mathbb{P}\left(\cup_{i=1}^{\binom{N-1}{3}}A_i\right)\leq{}\sum_{i=1}^{\binom{N-1}{3}}\mathbb{P}\left(A_i\right)<\sum_{i=1}^{\binom{N-1}{3}}\epsilon_i,
\end{align*}
an arbitrarily small positive quantity. Thus, the set of topologies of the inferred $4$-taxon principal trees of step $2$ of Algorithm~\ref{algorithmprincipaltree2} equals the set of topologies of the principal trees of the $4$-taxon CDMs displayed on $\mathcal{N}$ with probability converging to $1$.

\end{proof}

\section{Proof of Proposition~\ref{nonsistervtree}}
\label{equivalentCDMs}

\setcounter{prop}{7}

\begin{prop}
Let CDM $\mathcal{N}_i$ have topology of principal tree $\mathcal{T}_i$ and $t_{i,j}$ be the epoch length of epoch $j$. Let $\mathcal{N}_1$ be the tree, with $\mathcal{T}_1=\left(o,\left(a,\left(b,c\right)\right)\right)$, $t_{1,2}\to\infty$ and $t_{1,3}\to{}0$. Let $\mathcal{N}_2$ and $\mathcal{N}_3$ be CDMs with $\mathcal{T}_2=\left(o,\left(b,\left(a,c\right)\right)\right)$ and $\mathcal{T}_3=\left(o,\left(c,\left(a,b\right)\right)\right)$, each with a non-sister convergence group $\left\{\left\{b\right\},\left\{c\right\}\right\}$ in the tip epoch, with $t_{2,4},t_{3,4}\to\infty$. Then the sets of possible phylogenetic tensors of $\mathcal{N}_1$, $\mathcal{N}_2$ and $\mathcal{N}_3$ converge to the same set.
\end{prop}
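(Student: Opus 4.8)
The plan is to compute the limiting phylogenetic tensor of each of $\mathcal{N}_1$, $\mathcal{N}_2$ and $\mathcal{N}_3$ and to check that the three limits coincide; "converge to the same set" then follows because the phylogenetic tensor is continuous in the epoch lengths and extends continuously to the endpoints $0$ and $\infty$. The intuition, as remarked after the statement, is that Theorem~\ref{convedges} forces $b$ and $c$ to become identical in $\mathcal{N}_2$ and $\mathcal{N}_3$, the vanishing tip epoch does the same in $\mathcal{N}_1$, and the remaining epoch length sent to infinity decouples the remaining lineages.

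For $\mathcal{N}_2$ and $\mathcal{N}_3$ I would argue as follows. The taxa $b$ and $c$ form a convergence group in the tip epoch with epoch length tending to infinity, so parts~1 and~2 of the proof of Theorem~\ref{convedges} give that the corresponding transition matrix converges to the matrix having, in every column, the entry $\frac{\beta}{\alpha+\beta}$ in the row $i_b=i_c=0$, the entry $\frac{\alpha}{\alpha+\beta}$ in the row $i_b=i_c=1$ and $0$ elsewhere, with the coordinates of $o$ and $a$ untouched; hence in the limit $b=c$, this common state has the stationary distribution $\boldsymbol{\Pi}$, and it is independent of $o$ and $a$. In the same tip epoch the singletons $\left\{o\right\}$ and $\left\{a\right\}$ are divergence groups evolving for an infinite time under a $2$-state GTR rate matrix, and since all the $2$-state GTR rate matrices in the model share the stationary distribution $\boldsymbol{\Pi}$ (Definition~\ref{CDM}), the transition matrix along the entire path from the root to each of $o$ and $a$ converges to the rank-one matrix whose columns all equal $\boldsymbol{\Pi}$. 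Marginalizing over the root and the internal nodes, the limiting phylogenetic tensor of $\mathcal{N}_2$ and of $\mathcal{N}_3$ therefore has $\left[\boldsymbol{P}\right]_{i_oi_ai_bi_c}=\left[\boldsymbol{\Pi}\right]_{i_o}\left[\boldsymbol{\Pi}\right]_{i_a}\left[\boldsymbol{\Pi}\right]_{i_b}$ when $i_b=i_c$ and $0$ otherwise; it depends only on the root parameter $\gamma$, the finite epoch lengths having dropped out.

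For $\mathcal{N}_1=\left(o,\left(a,\left(b,c\right)\right)\right)$ the tip epoch length tends to $0$, so $b$ and $c$ never diverge and $b=c$, while $t_{1,2}\to\infty$ means that in the epoch where $\left\{o\right\}$, $\left\{a\right\}$ and $\left\{b,c\right\}$ are separate divergence groups each evolves for an infinite time, so again the transition matrix from the root to each of $o$, $a$ and the $\left\{b,c\right\}$ lineage converges to the rank-one matrix with columns $\boldsymbol{\Pi}$. The same marginalization gives exactly the tensor displayed above. Since $\gamma$ ranges over the same interval $\left(-1,1\right)$ in all three models, the three limiting families coincide, and with the continuity observed above this establishes the proposition. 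The main obstacle is purely the bookkeeping in the last two paragraphs: composing and marginalizing the transition matrices along the principal tree to confirm that all three effective taxa $o$, $a$ and the merged $\left\{b,c\right\}$ genuinely decouple in the limit; the pointwise limits of the individual building blocks are already handed to us by Theorem~\ref{convedges} and by the elementary fact that a $2$-state GTR transition matrix over infinite time converges to its rank-one stationary matrix.
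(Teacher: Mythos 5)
Your argument is correct and lands on the same limiting object as the paper, but it gets there by a genuinely different route. The paper's proof (Appendix~\ref{equivalentCDMs}) works entirely in the Hadamard basis: it specializes the symbolic parameterization of Equations~\eqref{gens} to each of $\mathcal{N}_1$, $\mathcal{N}_2$ and $\mathcal{N}_3$, translates the epoch-length limits into $x_i\to{}0$ or $x_i\to{}1$, and checks case by case that every coordinate tends to $0$ except $f_{0011}\to{}1$, leaving the one-parameter family indexed by $\gamma$. You instead stay in the probability basis and assemble the limit from two structural facts: the limiting transition matrix of a convergence group computed in the proof of Theorem~\ref{convedges}, and the rank-one limit of a $2$-state GTR semigroup, combined with the fact that all rate matrices share the stationary distribution $\boldsymbol{\Pi}$ (Definition~\ref{CDM}). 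Your product-measure description --- $\left[\boldsymbol{P}\right]_{i_oi_ai_bi_c}=\left[\boldsymbol{\Pi}\right]_{i_o}\left[\boldsymbol{\Pi}\right]_{i_a}\left[\boldsymbol{\Pi}\right]_{i_b}$ when $i_b=i_c$ and $0$ otherwise --- is exactly the Hadamard point $r_{0011}=1$ with all other $r$'s equal to $0$, so the two computations agree and both exhibit the same limiting set parameterized only by $\gamma$. Your route makes the mechanism transparent (a column-constant stochastic factor absorbs everything upstream, which is why the remaining finite parameters drop out) and avoids reliance on the computer-algebra-derived parameterization; the paper's route has the advantage of uniformity with the identifiability and distinguishability machinery, since the limit is displayed directly in the coordinates used in Theorem~\ref{disttheorem}. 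The only step you should write out rather than defer as ``bookkeeping'' is the joint decoupling: the tip-epoch transition matrix is the (permuted) Kronecker product of the factors for $\left\{o\right\}$, $\left\{a\right\}$ and the $\left\{b,c\right\}$ block, each factor converges to a column-constant matrix, and a Kronecker product of column-constant matrices applied to any joint distribution returns the product measure --- this is what gives joint, not merely pairwise, independence in the limit, and the analogous statement for $\mathcal{N}_1$ in its second epoch.
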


\begin{proof}
Consider phylogenetic tensors for $\mathcal{N}_1$, $\mathcal{N}_2$ and $\mathcal{N}_3$ with arbitrary finite, positive substitution rates and epoch times. Let apostrophe superscripts denote parameters of $\mathcal{N}_2$.

For generic parameters on $\mathcal{N}_1$, after setting $x_6=x_7=x_8=x_9=x_{10}=x_{11}=1$, Equations~\eqref{gens} reduce to
\begin{align*}
\begin{cases}
f_{0011}=x_4x_5, \\
f_{0101}=x_2x_3x_4, \\
f_{0110}=x_2x_3x_5, \\
f_{0111}=x_2x_3x_4x_5, \\
f_{1001}=x_1x_2x_4, \\
f_{1010}=x_1x_2x_5, \\
f_{1011}=x_1x_2x_4x_5, \\
f_{1100}=x_1x_3, \\
f_{1101}=x_1x_2x_3x_4, \\
f_{1110}=x_1x_2x_3x_5, \\
f_{1111}=x_1x_3x_4x_5, \\
\phantom{f_{1110}}\nphantom{$f_{\delta}$}f_{\delta}=x_1x_2x_3x_4x_5.
\end{cases}
\end{align*}

Taking the limit as the epoch time of the second epoch diverges and the epoch time of the tip epoch converges to $0$ is equivalent to $x_1,x_2,x_3\to{}0$ and $x_4,x_5\to{}1$. Thus,
\begin{align*}
\begin{cases}
f_{0011}\to{}1, \\
f_{0101}\to{}0, \\
f_{0110}\to{}0, \\
f_{0111}\to{}0, \\
f_{1001}\to{}0, \\
f_{1010}\to{}0, \\
f_{1011}\to{}0, \\
f_{1100}\to{}0, \\
f_{1101}\to{}0, \\
f_{1110}\to{}0, \\
f_{1111}\to{}0, \\
\phantom{f_{1110}}\nphantom{$f_{\delta}$}f_{\delta}\to{}0.
\end{cases}
\end{align*}

For generic parameters on $\mathcal{N}_2$, after setting $x'_7=x'_8=x'_9=x'_{10}=x'_{11}=1$ and permuting leaf labels, Equations~\eqref{gens} reduce to
\begin{align*}
\begin{cases}
f_{0011}=1-x'_6\left(1-x'_2x'_3x'_5\right), \\
f_{0101}=x'_4x'_5x'_6, \\
f_{0110}=x'_2x'_3x'_4x'_6, \\
f_{0111}=x'_2x'_3x'_4x'_5x'_6, \\
f_{1001}=x'_1x'_2x'_5x'_6, \\
f_{1010}=x'_1x'_3x'_6, \\
f_{1011}=x'_1x'_2x'_3x'_5x'_6, \\
f_{1100}=x'_1x'_2x'_4, \\
f_{1101}=x'_1x'_2x'_4x'_5x'_6, \\
f_{1110}=x'_1x'_2x'_3x'_4x'_6, \\
f_{1111}=x'_1x'_4\left(x'_2\left(1-x'_6\right)+x'_3x'_5x'_6\right), \\
\phantom{f_{1110}}\nphantom{$f_{\delta}$}f_{\delta}=x'_1x'_2x'_3x'_4x'_5x'_6.
\end{cases}
\end{align*}

Taking the limit as the epoch time of the tip epoch diverges is equivalent to $x'_1,x'_4,x'_6\to{}0$. Thus,
\begin{align*}
\begin{cases}
f_{0011}\to{}1, \\
f_{0101}\to{}0, \\
f_{0110}\to{}0, \\
f_{0111}\to{}0, \\
f_{1001}\to{}0, \\
f_{1010}\to{}0, \\
f_{1011}\to{}0, \\
f_{1100}\to{}0, \\
f_{1101}\to{}0, \\
f_{1110}\to{}0, \\
f_{1111}\to{}0, \\
\phantom{f_{1110}}\nphantom{$f_{\delta}$}f_{\delta}\to{}0.
\end{cases}
\end{align*}

It is straightforward to show that Equations~\eqref{gens} have the same limits for $\mathcal{N}_3$ and this is omitted.

\end{proof}

\section{Controlling overfitting the CDM}
\label{overfitting}

For Algorithm~\ref{algorithmtopCDM} that infers the convergence groups of the $N$-taxon CDM, we introduce two constraints to avoid overfitting the CDM with too many convergence groups. The constraints are used to determine whether convergence groups are inferred on the CDM in a stepwise fashion.

The constraints ensure that only a small number of expected proportions of converging quartets can exceed their observed values. Furthermore, convergence groups can only be included if expected proportions exceed their observed values by small amounts. Suppose $\boldsymbol{O}$ and $\boldsymbol{E}$ are the matrices of observed and expected proportions of converging quartets. Suppose $C=\left\{c_1,c_2\right\}$ is an arbitrary convergence group, with $c_1\cup{}c_2\subset\left[N\right]$. Then $a_1=\max_{i\in{}c_1,j\in{}c_2}\left|\left[\boldsymbol{O}\right]_{ij}-\left[\boldsymbol{E}\right]_{ij}\right|$ and $a_2=\frac{1}{\left(N-1\right)^2}\sum_{i=1}^{N}\sum_{j=1}^{N}\delta_{ij}$, where $\delta_{ij}=0$ if $\left[\boldsymbol{E}\right]_{ij}\leq{}\left[\boldsymbol{O}\right]_{ij}$ and $1$ otherwise. We include the convergence group on the CDM only if $a_1\leq{}A_1$ and $a_2\leq{}A_2$, where $A_1,A_2\in\left[0,1\right]$. Note that the denominator of $a_2$ has $N-1$ since the outgroup is not involved in convergence. ($\left[\boldsymbol{O}\right]_{iN}=\left[\boldsymbol{O}\right]_{Nj}=\left[\boldsymbol{E}\right]_{iN}=\left[\boldsymbol{E}\right]_{Nj}=0$ for all $i,j\in\left\{1,2,\ldots{},N\right\}$.)

Further control of overfitting is achieved with a multiple comparisons correction, favoring $4$-taxon trees over non-tree CDMs. For a given $4$-taxon set that includes the outgroup taxon, the model selection criterion values are first converted into weights, for example, AIC or BIC \citep{anderson2004model}. These weights are a ``tree weight'' corresponding to the AIC or BIC of the tree and a ``non-tree weight'' corresponding to the AIC or BIC of the best fitting non-tree CDM. Tree weights could then be multiplied by some positive constant $b\geq{}1$ to achieve further control for overfitting. A multiple comparisons correction, such as the Holm-Bonferroni method \citep{holm1979simple}, could then be applied to the tree weights over all $4$-taxon sets that include the outgroup taxon, as if the weights were p-values. If the tree is ``rejected'', then the non-tree CDM with the lowest AIC or BIC is selected.

\section{Proof of Proposition~\ref{convratios}}
\label{convratiosproof}

\setcounter{prop}{8}

\begin{prop}
For convergence group $C=\left\{c_1,c_2\right\}$ on CDM $\mathcal{N}$, let $a\in{}c_1$ and $b\in{}c_2$. Let $v$ be the MRCA node of $a$ and $b$, $X_v$ be the set of leaf taxa descending from $v$ and $X_C=c_1\cup{}c_2$. Then the expected proportion of converging quartets for $\left\{a,b\right\}$ is
\begin{align*}
\frac{\left|X_v\setminus{}X_C\right|}{N-3}=\frac{\left|X_v\right|-\left|X_C\right|}{N-3},
\end{align*}
where $\left|X_v\right|$ and $\left|X_C\right|$ are the cardinalities of sets $X_v$ and $X_C$.
\end{prop}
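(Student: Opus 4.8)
## Proof proposal for Proposition~\ref{convratios}

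The plan is to count directly the number of $4$-taxon subsets (quartets) containing the pair $\{a,b\}$ for which $a$ and $b$ end up in the same non-sister convergence group on the displayed $4$-taxon CDM, and then divide by $N-3$, the total number of quartets that include both $a$ and $b$ together with the outgroup $o$. Since every quartet under consideration contains $o$, $a$ and $b$, a quartet is determined by its fourth taxon $d \in X \setminus \{o,a,b\}$, of which there are $N-3$; this is the denominator. So the whole task reduces to characterizing, in terms of the position of $d$, exactly when the displayed $4$-taxon CDM on $\{o,a,b,d\}$ has $a$ and $b$ converging in a shared non-sister convergence group.

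First I would recall how displayed CDMs are obtained (the operation defined in Section~\ref{cdmsdatagen}): delete all taxa outside $\{o,a,b,d\}$ from each partition/decorated partition, remove empty parts, and collapse consecutive repeats. The convergence group $C = \{c_1,c_2\}$ with $a\in c_1$, $b\in c_2$ restricts, on the displayed CDM, to $\{c_1 \cap \{o,a,b,d\},\, c_2 \cap \{o,a,b,d\}\}$. Since $o \notin X_C$ by Assumption~\ref{nooutcon}, and $a \in c_1$, $b \in c_2$, the restricted group always retains $a$ in one part and $b$ in the other, so $a$ and $b$ are still converging on the displayed CDM whenever the restricted decorated partition survives the collapsing step — which it does precisely when it is a non-trivial convergence group, i.e. when after restriction it is still ``non-sister''. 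The key geometric fact is: on the displayed $4$-taxon CDM, $a$ and $b$ are non-sisters if and only if $d$ is descended from $v = \mathrm{MRCA}(a,b)$ but $d \notin X_C$. If $d$ is not descended from $v$, then on the displayed tree $a$ and $b$ are sisters (the split separating $\{a,b\}$ from $\{o,d\}$ exists), so the convergence group restricts to a sister convergence group and contributes $0$ (this is exactly Corollary~\ref{cor}). If $d \in X_C$, then $d$ lies in $c_1$ or $c_2$, and the restricted ``group'' has $d$ together with $a$ or with $b$; one then checks that the restricted decorated partition either has $a,b,d$ split as $\{a,d\}|\{b\}$ or $\{a\}|\{b,d\}$ inside the convergence group — and in the $4$-taxon setting this still makes $a$ and $b$ sisters relative to $o$ and the remaining lone taxon is now inside the group, so again the contribution is $0$ (more carefully: with the outgroup fixed, the only non-sister $4$-taxon convergence configurations are those where the fourth taxon is strictly ``outside'' the converging pair's clade-minus-group). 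Hence the favorable $d$ are exactly those in $X_v \setminus X_C$, giving the count $|X_v \setminus X_C| = |X_v| - |X_C|$ (the last equality because $X_C \subseteq X_v$, since $v$ is an ancestor of every taxon in $c_1 \cup c_2$).

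The main obstacle I anticipate is the careful case analysis verifying that $d \in X_C$ forces the displayed configuration to count as $0$, and more generally pinning down precisely which $4$-taxon convergence patterns count as ``converging in the same non-sister convergence group''. This requires being meticulous about the collapsing of repeated partitions in the displayed-CDM construction and about the interplay between the displayed principal-tree topology (sister vs.\ non-sister) and the displayed convergence group — essentially confirming that the relevant quartet CDM is one of CDMs $2$--$5$ with $a,b$ as the converging non-sister pair rather than a sister-convergence configuration or a plain tree. Once that characterization is nailed down, the count is immediate and the proposition follows. I would also note in passing that Proposition~\ref{nosharedpair} guarantees the pair $\{a,b\}$ is associated to a unique convergence group, so the ``observed vs.\ expected'' bookkeeping in Algorithm~\ref{algorithmtopCDM} is well-posed, though that is not strictly needed for this proof.
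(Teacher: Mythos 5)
Your proposal is correct and follows essentially the same route as the paper's proof: both reduce the problem to counting, over the $N-3$ possible fourth taxa, exactly those lying in $X_v\setminus{}X_C$ --- so that the speciation separating the fourth taxon from $a$ or $b$ occurs below $v$ but \emph{before} the epoch of $C$, which is precisely what makes the displayed convergence group non-sister --- and then divide by $N-3$. One small slip in your $d\in{}X_C$ case: if $d\in{}c_1$ the displayed topology is $\left(o,\left(b,\left(a,d\right)\right)\right)$, so $a$ and $b$ are not literally sisters there; the correct reason the contribution is zero is the one in your parenthetical remark (the two converging edges become sister edges at the epoch of $C$ because $d$ is absorbed into one converging part), which is exactly how the paper argues it.
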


\begin{proof}
To determine the expected proportions of converging quartets, suppose taxa $a$ and $b$ are converging. Then convergence between these taxa can only be inferred on $4$-taxon CDMs with topology of principal tree $\left(o,\left(a,\left(b,c\right)\right)\right)$ or $\left(o,\left(b,\left(a,c\right)\right)\right)$, for some arbitrary taxon $c$. With no loss of generality, we assume that the topology of the principal tree of some $4$-taxon CDM is $\left(o,\left(a,\left(b,c\right)\right)\right)$. To determine the expected proportions, we must determine the number of $4$-taxon CDMs displayed on $\mathcal{N}$, displaying both $a$ and $b$ where they appear as non-sisters.

We start with the rooted tree $\left(o,\left(a,b\right)\right)$ and append taxon $c$ and include a convergence group $C$. One edge corresponding to the convergence group $C$ must be ancestral to $a$, while the other must be ancestral to $b$. Thus, for $C$ to be a non-sister convergence group, the remaining taxon $c$ must be placed on an edge directly descended from $v$, corresponding to a speciation event before the epoch $C$ is in. Thus, $c$ could be any of the $\left|X_v\setminus{}X_C\right|=\left|X_v\right|-\left|X_C\right|$ taxa out of the $N-3$ possible taxa that are not $o$, $a$ or $b$.
\end{proof}

\section{Proof of Proposition~\ref{nosharedpair}}
\label{nosharedpairproof}

\setcounter{prop}{10}

\begin{prop}
An arbitrary pair of distinct convergence groups on CDM $\mathcal{N}$ share no pair of converging leaf taxa.
\end{prop}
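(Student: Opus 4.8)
The plan is to argue by contraposition, deriving a contradiction with Assumption~\ref{nonest} of Section~\ref{ass}, which forbids nested convergence groups. Suppose, for contradiction, that two distinct convergence groups $C_1 = \{c_{1,1}, c_{1,2}\}$ and $C_2 = \{c_{2,1}, c_{2,2}\}$ on CDM $\mathcal{N}$ share a pair of converging leaf taxa, say $a$ and $b$, where $a$ and $b$ are converging in both $C_1$ and $C_2$. First I would use the fact that ``converging'' is a symmetric relation within a convergence group (Assumption~\ref{onecongroup} forces a unique convergence group per epoch, and each convergence group corresponds to a partition of exactly two subsets by the sixth assumption). Since $a$ and $b$ are converging in $C_1$, they lie in distinct subsets of its two-part partition --- with no loss of generality $a \in c_{1,1}$ and $b \in c_{1,2}$; likewise $a \in c_{2,1}$ and $b \in c_{2,2}$ for $C_2$.

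The key step is to show that one of $C_1$, $C_2$ must be nested in the other. By the eighth assumption there are no consecutive epochs both with convergence groups, and by the seventh each convergence group appears in at most one epoch; since $C_1 \neq C_2$, they occupy different epochs, so one --- say $C_1$ --- is strictly before the other. Now consider the displayed CDM $\mathcal{N}'$ of $\mathcal{N}$ restricted to a four-taxon set containing the outgroup $o$ together with $a$, $b$, and some third taxon $c$ chosen appropriately (the outgroup exists and is in no convergence group by the second and third assumptions). Because $a$ and $b$ are converging in both $C_1$ and $C_2$ on $\mathcal{N}$, and the edges to $a$ and $b$ persist through the deletion operation defining displayed CDMs, the pair $\{a, b\}$ remains converging in the images of both $C_1$ and $C_2$ on $\mathcal{N}'$. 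But on a four-taxon CDM each convergence group is of the form $\{\{a\}, \{b\}\}$ as a partition of $\{a,b\}$ --- up to the third taxon's placement --- so the restricted images of $C_1$ and $C_2$ yield identical decorated partitions on $\mathcal{N}'$. By the definition of nesting given in Section~\ref{defs} (a convergence group $C_2$ is nested in $C_1$ if $C_1$ is before $C_2$ and some displayed CDM has the two restricted convergence groups belonging to identical decorated partitions), this exhibits $C_2$ as nested in $C_1$, contradicting Assumption~\ref{nonest}.

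The main obstacle I anticipate is the careful bookkeeping in the displayed-CDM reduction: I must verify that after deleting all taxa outside $\{o, a, b, c\}$ and collapsing repeated partitions, the surviving decorated partitions genuinely coincide --- not merely that $\{a,b\}$ is converging in each, but that the \emph{whole} decorated partition agrees (which on four taxa with $o$ an outgroup and only $a$, $b$ converging, it does, since the partition is forced to be $\{o\}|\{\{a\},\{b\}\}|\{c\}$ in both cases). A secondary subtlety is handling the edge case where the third taxon $c$ is itself one of the four taxa $c_{1,1}, c_{1,2}, c_{2,1}, c_{2,2}$ beyond $a, b$; but since $|X| \geq 4$ and we are free in the choice of $c$, and the argument only needs \emph{one} suitable displayed four-taxon CDM, this can be arranged. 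Once the reduction is pinned down, the contradiction with Assumption~\ref{nonest} is immediate, completing the proof.
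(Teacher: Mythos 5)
Your proposal is correct and follows essentially the same route as the paper's proof: use the one-convergence-group-per-epoch assumption to order $C_1$ before $C_2$, then argue that a shared converging pair forces $C_2$ to be nested in $C_1$, contradicting Assumption~\ref{nonest} (the paper's own proof is just these two sentences, without unpacking the nesting definition as you do). One caveat: by insisting on a \emph{four}-taxon displayed CDM you make the choice of the third taxon $c$ a real issue that you dismiss with ``this can be arranged'' --- it does hold, but only after noting that $c_{2,1}\subseteq c_{1,1}$ and $c_{2,2}\subseteq c_{1,2}$ (identity is only destroyed over time) together with the no-sister-convergence assumption guarantee a taxon outside both groups; since displayed CDMs are defined on arbitrary strict subsets of $X$, taking $X'=\left\{o,a,b\right\}$ collapses both restricted decorated partitions to $\left\{o\right\}|\left\{\left\{a\right\},\left\{b\right\}\right\}$ immediately and avoids the case analysis altogether.
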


\begin{proof}
Suppose $C_1$ and $C_2$ are two distinct convergence groups on $\mathcal{N}$. By Assumption~\ref{onecongroup} of Section~\ref{ass}, there can be at most one convergence group in each epoch. Thus, $C_1$ is either in an epoch before or after $C_2$. With no loss of generality, we assume that $C_1$ is in an epoch before $C_2$.

In order to share at least one pair of converging taxa, $C_2$ must be nested in $C_1$. However, by Assumption~\ref{nonest} of Section~\ref{ass}, there can be no convergence groups nested in other convergence groups.

\end{proof}

\section{Proof of Proposition~\ref{ident4taxoncdms}}
\label{ident4taxoncdmsa}

We assume that the topology of the principal tree of $\mathcal{N}$ is known. However, we note that if it is not known, from Theorem~\ref{principaltreeconsistent} it can be inferred consistently.

\setcounter{prop}{11}

\begin{prop}
The set of all convergence groups on CDM $\mathcal{N}$ can be identified from the set of displayed $4$-taxon CDMs after suppressing sister convergence groups.
\end{prop}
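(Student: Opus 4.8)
The plan is to give an explicit combinatorial procedure that, from the supplied set of displayed $4$-taxon CDMs (with sister convergence groups suppressed), reconstructs the full set of convergence groups of $\mathcal{N}$, and then to verify that the procedure outputs exactly the convergence groups of $\mathcal{N}$ — nothing spurious and nothing missing. First I would observe that the principal tree $\mathcal{T}$ is itself determined by the given data: the displayed $4$-taxon CDMs carry the topologies of all $\binom{N-1}{3}$ displayed $4$-taxon principal trees that include $o$, and by the quartet analogue of Theorem~\ref{corsteel1992} (Steel, 1992) these determine $\mathcal{T}$ uniquely. So we may work with $\mathcal{T}$, and in particular with its clades $X_u=\{\text{leaves descended from }u\}$, in hand.

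For each unordered pair of non-outgroup leaves $\{a,b\}$ and each remaining leaf $c$, the displayed $4$-taxon CDM on $\{o,a,b,c\}$, after suppression of any sister convergence groups it carries, either does or does not have $a$ and $b$ converging in a (necessarily non-sister) convergence group; let $S_{a,b}$ be the set of $c$ for which it does. The key structural claim is that, writing $v$ for the MRCA of $a$ and $b$ in $\mathcal{T}$,
\begin{align*}
S_{a,b}=\begin{cases} X_v\setminus X_C & \text{if $a,b$ lie in distinct parts of a convergence group $C=\{c_1,c_2\}$ of $\mathcal{N}$, with $X_C=c_1\cup c_2$},\\ \emptyset & \text{otherwise}.\end{cases}
\end{align*}
For the second case, every convergence group appearing on a displayed $4$-taxon CDM is the restriction of a convergence group of $\mathcal{N}$ (by the definition of a displayed CDM, using Assumptions~\ref{onecongroup} and~\ref{nonest} so that no restricted decorated partition is collapsed away), so $a,b$ converging on some quartet forces them into distinct parts of a convergence group of $\mathcal{N}$; by Proposition~\ref{nosharedpair} that group $C$ is unique, so $S_{a,b}$ is well defined. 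For the first case, the argument in the proof of Proposition~\ref{convratios} shows that the restriction of $C$ to $\{o,a,b,c\}$ is a \emph{non-sister} convergence group exactly when $c$ is a leaf descended from $v$ that does not lie in $X_C$; suppressing sister convergence on the quartet deletes precisely the complementary cases, giving $S_{a,b}=X_v\setminus X_C$.

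Granting the claim, reconstruction is immediate. Whenever $S_{a,b}\neq\emptyset$ we recover $X_C=X_v\setminus S_{a,b}$; writing $v_a,v_b$ for the two children of $v$ on the paths to $a,b$, each part of a convergence group is a clade of $\mathcal{T}$, and a clade meeting both subtrees of $v$ would contain $X_v$ and so could not be disjoint from the other part, whence $c_1\subseteq X_{v_a}$ and $c_2\subseteq X_{v_b}$; therefore $c_1=X_C\cap X_{v_a}$ and $c_2=X_C\cap X_{v_b}$, recovering $C=\{c_1,c_2\}$. Conversely every convergence group $C=\{c_1,c_2\}$ of $\mathcal{N}$ is produced this way: since $\mathcal{N}$ has no sister convergence groups (Assumption~\ref{nosist}) we have $X_C\subsetneq X_v$ — otherwise $c_1=X_{v_a}$, $c_2=X_{v_b}$ and the two edges of $C$ would be sisters — so any $a\in c_1$, $b\in c_2$, $c\in X_v\setminus X_C$ yields a quartet witnessing $C$ as a non-sister convergence group, hence $S_{a,b}\neq\emptyset$ and the procedure outputs $C$. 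Thus the set of convergence groups of $\mathcal{N}$ equals $\{\{X_C\cap X_{v_a},\,X_C\cap X_{v_b}\}: S_{a,b}\neq\emptyset\}$, a function of the displayed $4$-taxon CDMs alone; this also re-proves the uniqueness illustrated by Figure~\ref{cdmssamematrix}, where the matrix of proportions failed but the set of displayed CDMs suffices.

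The main obstacle is the structural claim about $S_{a,b}$: pinning down exactly which displayed $4$-taxon CDMs carry $\{a,b\}$ as a non-sister converging pair, and checking that suppressing sister convergence on the quartets neither manufactures spurious non-sister pairs nor destroys the ones we need — the latter resting on $\mathcal{N}$ itself having no sister convergence, so that each of its convergence groups survives as a non-sister group on at least one quartet. Everything else is routine bookkeeping with clades of $\mathcal{T}$ built on Propositions~\ref{convratios} and~\ref{nosharedpair}.
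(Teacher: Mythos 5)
Your proof is correct, but it takes a genuinely different route from the paper's. The paper argues by distinguishability/contradiction: it first disposes of the tree case via the matrix of proportions of converging quartets, then supposes a second set of convergence groups produces the same matrix, deduces from the matrix equality that $v'=v$ and $\left|X_{C'}\right|=\left|X_C\right|$ so that (without loss of generality) $c_1\subset{}c'_1$ and $c_2\supset{}c'_2$, and finally exhibits a single quartet $\left\{o,a,b,c\right\}$ with $c\in{}c'_1\setminus{}c_1$ on which $C$ restricts to a non-sister convergence group while $C'$ restricts to a sister convergence group and is suppressed, so the two sets of displayed quartet CDMs differ; it treats $N=4$ as a separate case. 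You instead give a direct reconstruction: the witness set $S_{a,b}$ equals $X_v\setminus{}X_C$ exactly (and is empty when $a,b$ are not a converging pair), so $X_C=X_v\setminus{}S_{a,b}$ and the two parts are recovered as $X_C\cap{}X_{v_a}$ and $X_C\cap{}X_{v_b}$. Both arguments hinge on the same combinatorial fact from the proof of Proposition~\ref{convratios} --- that the quartets on which a converging pair $\left\{a,b\right\}$ appears as a non-sister convergence group are precisely those whose fourth taxon lies in $X_v\setminus{}X_C$ --- but you upgrade the ``only if'' direction to an exact characterization of $S_{a,b}$ (verifying that $c\in{}X_C\setminus\left\{a,b\right\}$ and $c\notin{}X_v$ both yield sister convergence on the quartet and are therefore suppressed), which makes the identification constructive, handles $N=4$ uniformly, and avoids the paper's case analysis over hypothetical alternative convergence groups. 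The supporting observations you lean on --- uniqueness of the convergence group containing a given converging pair (Proposition~\ref{nosharedpair}), each part of a convergence group being a clade contained in one subtree of $v$, the restricted decorated partition not being collapsed away under Assumptions~\ref{onecongroup}~and~\ref{nonest}, and $X_C\subsetneq{}X_v$ because $\mathcal{N}$ has no sister convergence groups --- are all sound, so the reconstruction outputs exactly the convergence groups of $\mathcal{N}$.
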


\begin{proof}

The set of displayed $4$-taxon CDMs after suppressing sister convergence groups defines a matrix of proportions of converging quartets. However, in general the set of all convergence groups on $\mathcal{N}$ cannot be identified from the matrix (see Figure~\ref{cdmssamematrix}). Instead, we can identify a \emph{set} of possible sets of convergence groups on $\mathcal{N}$ that correspond to the matrix of proportions of converging quartets. Since the set of displayed $4$-taxon CDMs after suppressing sister convergence groups is assumed known, for the remainder of the proof we can restrict to this set of sets of convergence groups. We must then prove that we can identify the specific set of all convergence groups of $\mathcal{N}$.

\bigskip

If $\mathcal{N}$ is a tree, then the set of displayed $4$-taxon CDMs after suppressing sister convergence groups is a set of trees. Thus, by Corollary~\ref{cor}, the matrix of proportions of converging quartets is the zero matrix. Alternatively, if $\mathcal{N}$ is not a tree, then $\mathcal{N}$ must have at least one non-sister convergence group. Call one such non-sister convergence group $C=\left\{c_1,c_2\right\}$, with $v$ the most recent common ancestral node of $c_1$ and $c_2$. Then by Proposition~\ref{convratios}, the expected proportion of converging quartets for $a\in{}c_1$ and $b\in{}c_2$ is $\frac{\left|X_v\right|-\left|X_C\right|}{N-3}$, where $X_v$ is the set of all taxa descending from $v$ and $\left|X_C\right|=\left|c_1\right|+\left|c_2\right|$. By the definition of non-sister convergence groups, $\left|X_v\right|-\left|X_C\right|>0$. Thus, the matrix of converging quartets is not the zero matrix. Thus, if $\mathcal{N}$ is a tree, the set of convergence groups can be identified from the set of displayed $4$-taxon CDMs after suppressing sister convergence groups via the matrix of converging quartets.

\bigskip

For the remainder of the proof, we can assume that $\mathcal{N}$ is not a tree. Then the set of non-sister convergence groups defines a set $S$ of $4$-taxon CDMs displayed on $\mathcal{N}$ with non-sister convergence groups after suppressing sister convergence groups --- note that $4$-taxon CDMs of $S$ can have one or two non-sister convergence groups. Suppose similarly that $S'$ is a set of $4$-taxon CDMs defined by a set of non-sister convergence groups not on $\mathcal{N}$ but with the same matrix of proportions of converging quartets as the set of non-sister convergence groups on $\mathcal{N}$. We must prove that there exists some $4$-taxon CDM in $S$ that is not in $S'$. Then we can identify the set of convergence groups on $\mathcal{N}$ from the set of $4$-taxon CDMs.

We prove that there is some $4$-taxon CDM in $S$ that is not in $S'$. We first consider an arbitrary $4$-taxon CDM $\mathcal{N}_4$ in $S$. Consider arbitrary leaf taxon pair $\left\{a,b\right\}$, where $a\in{}c_1$ and $b\in{}c_2$. Furthermore, assume $c\in{}X_v\setminus{}X_C$. Then with no loss of generality, we can assume the topology of the principal tree of $\mathcal{N}_4$ is $\left(o,\left(b,\left(a,c\right)\right)\right)$.

Suppose that $C'=\left\{c'_1,c'_2\right\}$ is one such non-sister convergence group that defines $S'$, with $c'_1$, $c'_2$, $v'$, $X_v'$ and $X_{C'}$ as in Proposition~\ref{convratios}. Now consider $4$-taxon CDM $\mathcal{N}'_4$, defined by $C'$ and on leaf taxon set $\left\{o,a,b,c\right\}$, with topology of principal tree $\left(o,\left(b,\left(a,c\right)\right)\right)$. Since we require a non-sister convergence group on $\mathcal{N}'_4$ where $a$ and $b$ are both converging, we must have either $a\in{}c'_1$ and $b\in{}c'_2$ or $a\in{}c'_2$ and $b\in{}c'_1$. With no loss of generality, we assume that $a\in{}c'_1$ and $b\in{}c'_2$. Then $c_1\subseteq{}c'_1$ or $c_1\supset{}c'_1$. Similarly, $c_2\subseteq{}c'_2$ or $c_2\supset{}c'_2$. Both $v$ and $v'$ are the MRCA of $a$ and $b$. Thus $v'=v$.

Now assume that $X_{C'}=X_C$. Then $c'_1=c_1$ and $c'_2=c_2$ and in turn, $C'=C$. Thus, $S'$ is defined by a set of convergence groups that includes $C$ and the $4$-taxon CDM is in $S'$. Thus, we can assume that $X_{C'}\neq{}X_C$ and we cannot have both $c'_1=c_1$ and $c'_2=c_2$. However, since the matrices of proportions of converging quartets must be the same for the two sets of convergence groups, we must have
\begin{align*}
\frac{\left|X_v\right|-\left|X_C\right|}{N-3}=\frac{\left|X_{v'}\right|-\left|X_{C'}\right|}{N-3},
\end{align*}
which simplifies to $\left|X_C\right|=\left|X_{C'}\right|$, since $v'=v$. Thus, either $c_1\subset{{}c'_1}$ and $c_2\supset{}c'_2$ or $c_1\supset{{}c'_1}$ and $c_2\subset{}c'_2$. With no loss of generality, we assume that $c_1\subset{{}c'_1}$ and $c_2\supset{}c'_2$.

Then there exists some choice of $c$ such that $c\in{}c_1'\setminus{}c_1$. For $c_2\supset{}c'_2$, there must similarly be some taxon $d\in{}c_2\setminus{}c'_2$. Thus, we are assuming that $N\geq{}5$ --- the outgroup and taxa $a$, $b$, $c$ and $d$. Then $a,c\in{}c'_1$ and $b\in{}c'_2$. Thus, before suppressing sister convergence groups to form $\mathcal{N}'_4$, $C'$ must correspond with a sister convergence group on the $4$-taxon CDM on leaf taxa $\left\{o,a,b,c\right\}$ --- see Figure~\ref{prop7} for a graphical depiction of $C$ and $C'$. Then any other choice of convergence group that defines $S'$, say $C''$, must satisfy $c''_1\supset{}c_1$ and the claim follows. Finally, since we have assumed $N\geq{}5$, we must also consider $N=4$. For $N=4$, it is clear from the identifiability and distinguishability of all CDMs with no sister convergence that the claim holds.

\end{proof}

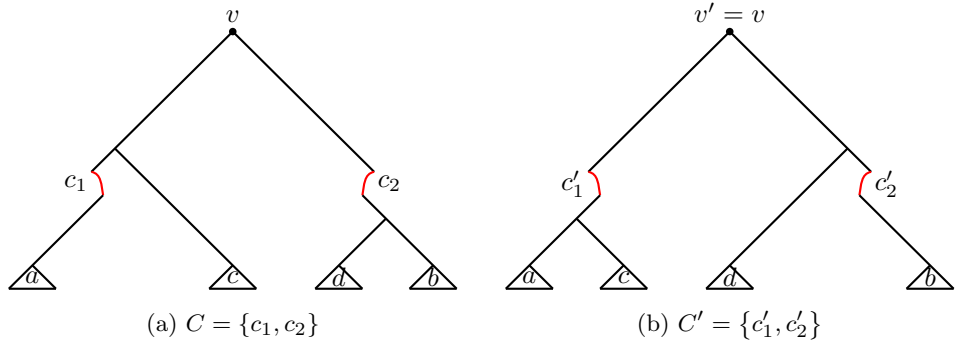
\begin{figure}[!htb]
\centering
\hspace*{\fill}
\begin{subfigure}{0.49\linewidth}
\centering
\begin{tikzpicture}[scale=0.31]
\draw[thick] (0,0) -- (-6,-6);
\draw[thick] (-5,-5) -- (1,-11);
\draw[thick] (0,-10) -- (-1,-11);
\draw[thick] (-1,-11) -- (1,-11);
\draw[thick,red] (-6,-6) to[out=350,in=100] (-5.5,-7);
\draw[thick] (-5.5,-7) -- (-9.5,-11);
\draw[thick] (-8.5,-10) -- (-7.5,-11);
\draw[thick] (-9.5,-11) -- (-7.5,-11);
\draw[thick] (0,0) -- (6,-6);
\draw[thick,red] (6,-6) to[out=190,in=80] (5.5,-7);
\draw[thick] (5.5,-7) -- (9.5,-11);
\draw[thick] (6.5,-8) -- (3.5,-11);
\draw[thick] (4.5,-10) -- (5.5,-11);
\draw[thick] (3.5,-11) -- (5.5,-11);
\draw[thick] (8.5,-10) -- (7.5,-11);
\draw[thick] (7.5,-11) -- (9.5,-11);
\node[] at (-8.5,-10.5) {$a$};
\node[] at (0,-10.5) {$c$};
\node[] at (4.5,-10.5) {$d$};
\node[] at (8.5,-10.5) {$b$};
\node[left] at (-5.75,-6.5) {$c_1$};
\node[right] at (5.75,-6.5) {$c_2$};
\tkzDefPoint(0,0){v}
\tkzLabelPoint[above](v){$v$}
\node at (v)[circle,fill,inner sep=1pt]{};
\end{tikzpicture}
\caption{$C=\left\{c_1,c_2\right\}$}
\label{C}
\end{subfigure}
\hfill
\begin{subfigure}{0.49\linewidth}
\centering
\begin{tikzpicture}[scale=0.31]
\draw[thick] (0,0) -- (6,-6);
\draw[thick] (5,-5) -- (-1,-11);
\draw[thick] (0,-10) -- (1,-11);
\draw[thick] (1,-11) -- (-1,-11);
\draw[thick,red] (6,-6) to[out=190,in=80] (5.5,-7);
\draw[thick] (5.5,-7) -- (9.5,-11);
\draw[thick] (8.5,-10) -- (7.5,-11);
\draw[thick] (9.5,-11) -- (7.5,-11);
\draw[thick] (0,0) -- (-6,-6);
\draw[thick,red] (-6,-6) to[out=350,in=100] (-5.5,-7);
\draw[thick] (-5.5,-7) -- (-9.5,-11);
\draw[thick] (-6.5,-8) -- (-3.5,-11);
\draw[thick] (-4.5,-10) -- (-5.5,-11);
\draw[thick] (-3.5,-11) -- (-5.5,-11);
\draw[thick] (-8.5,-10) -- (-7.5,-11);
\draw[thick] (-7.5,-11) -- (-9.5,-11);
\node[] at (-8.5,-10.5) {$a$};
\node[] at (-4.5,-10.5) {$c$};
\node[] at (0,-10.5) {$d$};
\node[] at (8.5,-10.5) {$b$};
\node[right] at (5.75,-6.5) {$c'_2$};
\node[left] at (-5.75,-6.5) {$c'_1$};
\tkzDefPoint(0,0){v}
\tkzLabelPoint[above](v){$v'=v$}
\node at (v)[circle,fill,inner sep=1pt]{};
\end{tikzpicture}
\caption{$C'=\left\{c'_1,c'_2\right\}$}
\label{Cdash}
\end{subfigure}
\hspace*{\fill}
\caption{Convergence groups $C$ and $C'$. Labels $c_1$, $c_2$, $c'_1$ and $c'_2$ indicate the set of leaf taxa below that edge. Triangles are displayed CDMs. Labels inside triangles indicate one of possibly many taxa on leaves of those displayed CDMs. There may be more displayed CDMs not drawn below $v$ that are not below either $C$ or $C'$}
\label{prop7}
\end{figure}

\section{Proof of Theorem~\ref{conscongroups}}
\label{theorem13}

\setcounter{thm}{12}

\begin{thm}
Suppose CDM $\mathcal{N}$ has topology of principal tree $\mathcal{T}$ and convergence groups $\mathcal{G}$. Suppose for all $l$, $\alpha_l=\beta_l$. Suppose for convergence group $\mathcal{C}_i=\left\{c_{1,i},c_{2,i}\right\}$ that if $a\in{}c_{1,i}\cup{}c_{2,i}$, then $a\notin{}c_{1,j}\cup{}c_{2,j}$ for any $j\neq{}i$. Suppose $\mathcal{T}$ is input into Algorithm~\ref{algorithmtopCDM}, the BIC is used for model selection in step~2, there are no multiple comparisons corrections and the tolerance criterion is $u=1$. Suppose $\widehat{\mathcal{G}}$ is the estimate of $\mathcal{G}$ inferred by Algorithm~\ref{algorithmtopCDM}. Then there exists some constant $c>0$ such that if the largest convergence parameter of $\mathcal{N}$ is less than $c$,
\begin{align*}
\lim_{n\to\infty}\mathbb{P}\left(\widehat{\mathcal{G}}=\mathcal{G}\right)=1.
\end{align*}
\end{thm}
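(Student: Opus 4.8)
The plan is to reduce the $N$-taxon consistency statement to the already-established $4$-taxon results of Section~\ref{ident} together with the combinatorial identifiability result Proposition~\ref{ident4taxoncdms}, much as in the proof of Theorem~\ref{principaltreeconsistent}. The hypotheses $\alpha_l=\beta_l$ for all $l$ mean every convergence-divergence group evolves under the binary symmetric model, so $\gamma=0$ and the generating parameter lies in a ``smaller'' sub-model; the disjointness hypothesis on the convergence groups (no taxon in two convergence groups) ensures that the matrix of proportions of converging quartets, together with the displayed $4$-taxon CDMs, pins down $\mathcal{G}$ exactly. The main work is to show that, for $c>0$ sufficiently small, Algorithm~\ref{algorithmtopCDM} makes no wrong step with probability tending to $1$.

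First I would note that, by Theorem~\ref{principaltreeconsistent} (or simply because $\mathcal{T}$ is given as input here), we may condition on the principal tree being correct. Then for each $4$-taxon set that includes $o$ whose displayed principal tree is non-degenerate, I apply Proposition~\ref{conj} exactly as in the proof of Theorem~\ref{principaltreeconsistent}: take $m_1$ to be the union of parameter spaces of the CDMs (among CDMs $1$--$5$ with the correct $4$-taxon topology) obtained from the displayed $4$-taxon CDM of $\mathcal{N}$ after suppressing any sister convergence groups, and $m_2$ the union of parameter spaces of all other leaf-labeled CDMs sharing that topology; distinguishability (Theorem~\ref{disttheorem}) gives $m_1\cap m_2=\emptyset$, and the $\alpha_l=\beta_l$, small-$c$ hypotheses guarantee the generating parameter is a small perturbation of a generic point of $m_1$. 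Hence the BIC selects from $m_1$ with probability $\to 1$. A Boole's-inequality union bound over the $\binom{N-1}{3}$ quartets then shows that, with probability $\to 1$, the list $L_Q$ produced in step~2 of Algorithm~\ref{algorithmtopCDM} equals exactly the set of $4$-taxon CDMs displayed on $\mathcal{N}$ after suppressing sister convergence groups.

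Next I would argue that, \emph{given} $L_Q$ is this correct set, the remaining (deterministic) part of Algorithm~\ref{algorithmtopCDM} outputs $\widehat{\mathcal{G}}=\mathcal{G}$. Here I use: (i) Proposition~\ref{convratios} and Corollary~\ref{cor} to identify the observed matrix $\boldsymbol{O}$ with the true expected matrix of proportions of converging quartets; (ii) Proposition~\ref{nosharedpair} (which follows from Assumption~\ref{nonest}, itself implied by the no-shared-taxon hypothesis) so that each non-zero entry of $\boldsymbol{O}$ is attributable to a single convergence group; and (iii) the tolerance choice $u=1$ together with the definition of $r^{(a)}$, which forces $\delta_{ijk}=1$ for every relevant triple, i.e.\ a candidate convergence group is admitted only if it matches the displayed $4$-taxon CDMs exactly. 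I would then check by induction on the greedily added convergence groups that the step that minimizes $s^{(a)}$ subject to $r^{(a)}\ge 1$, the partial-order constraint in $\boldsymbol{P}$, and the overfitting constraints (vacuous here: no multiple-comparisons correction, and exact matching makes $a_1=0$, $a_2=0$), always picks a true convergence group of $\mathcal{G}$, that the partial-order update is consistent with $\mathcal{N}$, and that the algorithm terminates precisely when $s=0$, i.e.\ when all of $\mathcal{G}$ has been recovered; Proposition~\ref{ident4taxoncdms} is what guarantees that no spurious set of convergence groups can reproduce the same data $L_Q$, so the greedy choice cannot be led astray.

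The main obstacle I anticipate is the inductive bookkeeping in step~3: showing that the greedy, one-at-a-time insertion driven by $\min_a s^{(a)}$ cannot at some intermediate stage prefer a convergence group $C_a\notin\mathcal{G}$ that happens to lower the sum of squared differences while still satisfying $r^{(a)}=1$. This is exactly where Proposition~\ref{ident4taxoncdms} and the no-shared-taxon hypothesis must be invoked carefully — the former rules out a rival \emph{complete} set of convergence groups with identical displayed $4$-taxon CDMs, and the latter ensures the expected-proportion contributions of distinct groups occupy disjoint sets of matrix entries, so that no partial rival set can match $L_Q$ on the already-explained entries while improving $s$. A secondary technical point is making the constant $c$ uniform: one needs $c$ small enough that Proposition~\ref{conj} applies simultaneously to all finitely many $4$-taxon sets, which is immediate by taking the minimum of the finitely many thresholds. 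I would close by combining the two probability-$\to 1$ events (correct $L_Q$, and hence correct $\widehat{\mathcal{G}}$) to conclude $\lim_{n\to\infty}\mathbb{P}(\widehat{\mathcal{G}}=\mathcal{G})=1$.
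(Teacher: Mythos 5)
Your overall architecture matches the paper's: reduce to consistent selection of the displayed $4$-taxon CDMs after suppressing sister convergence (via Proposition~\ref{conj} and a union bound over the $\binom{N-1}{3}$ quartets), then argue that the deterministic part of Algorithm~\ref{algorithmtopCDM} with $u=1$ recovers $\mathcal{G}$ exactly. The second half of your argument is essentially the paper's (the paper is in fact more direct: the $u=1$ filter alone rules out any false convergence group, since such a group has a non-sister pair on some displayed quartet that is not converging there, forcing $r^{(a)}<1$; each true group strictly decreases $s$; and termination occurs at $\boldsymbol{O}=\boldsymbol{E}$ — Proposition~\ref{ident4taxoncdms} is not needed for this step).

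The gap is in the first half, at the sentence ``the $\alpha_l=\beta_l$, small-$c$ hypotheses guarantee the generating parameter is a small perturbation of a generic point of $m_1$.'' That claim is the entire technical content of the theorem and is not automatic: the displayed $4$-taxon CDM may carry a \emph{sister} convergence group, which is not among the $27$ candidate models, so one must show both that its phylogenetic tensor approaches $m_1$ (the suppressed CDM's model) as the sister convergence parameter shrinks \emph{and} that it continues to satisfy the semialgebraic constraints that distinguish $m_1$ from every rival $m_2$ — otherwise the hypotheses of Proposition~\ref{conj} ($\mathfrak{N}\cap m_1=\emptyset$, $\mathfrak{N}\cap m_2\neq\emptyset$ with the roles correctly assigned) cannot be verified. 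The paper devotes most of its proof to exactly this: under $\gamma=0$ it explicitly computes the transformed phylogenetic tensors of the sister-convergence CDMs $\mathcal{N}_{4,2}$ and $\mathcal{N}_{4,4}$, checks that they satisfy the same equalities (e.g.\ $r_{0101}=r_{0110}$, $r_{1001}=r_{1010}$, $r_{1100}=r_{1111}$, $r_{0101}r_{1010}=r_{0110}r_{1001}$, $r_{0011}r_{1100}=r_{1111}$) and, writing the convergence parameter as $1-\epsilon$, the same strict inequalities up to $O(\epsilon)$ as the suppressed CDMs $\mathcal{N}_{4,1}$ and $\mathcal{N}_{4,3}$ — i.e.\ the sister-convergence CDMs are \emph{not distinguishable} from their suppressed counterparts — and separately re-establishes identifiability of CDMs $1$--$3$ in the $\gamma=0$ sub-model. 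This is precisely where $\alpha_l=\beta_l$ does real work (it is not merely that ``$\gamma=0$ gives a smaller sub-model''), and your proposal gives no indication of how this indistinguishability would be proved. A minor additional slip: Theorem~\ref{disttheorem} gives $\dim\left(\Theta_1\cap\Theta_2\right)<\dim\left(\Theta_1\cup\Theta_2\right)$, not $m_1\cap m_2=\emptyset$; emptiness comes from restricting to generic points as in Assumption~\ref{genparam}.
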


\begin{proof}

We start by finding expressions for the transformed phylogenetic tensors for various $4$-taxon CDMs with and without sister convergence. We prove that the CDMs with sister convergence are not distinguishable from the CDMs with the sister convergence groups suppressed. Thus, regardless of whether the $4$-taxon CDMs have sister convergence groups or not, the non-sister convergence group is inferred consistently.

Since $\alpha_l=\beta_l$, $\gamma=0$ and the transformed phylogenetic tensor for a $4$-taxon CDM of Equation~\eqref{hadphylotensor} simplifies to
\begin{align*}
\widehat{P}=&\left[\begin{array}{c}
1 \\
0 \\
0 \\
r_{0011} \\
0 \\
r_{0101} \\
r_{0110} \\
0 \\
0 \\
r_{1001} \\
r_{1010} \\
0 \\
r_{1100} \\
0 \\
0 \\
r_{1111}
\end{array}\right].
\end{align*}

\bigskip

With no loss of generality, we assume the topology of the principal tree of an arbitrary $4$-taxon CDM displayed on $\mathcal{N}$ is $\left(o\left(a,\left(b,c\right)\right)\right)$. Then of the possible convergence groups on the $4$-taxon CDM, the convergence group in the epoch closest to the root is the sister convergence group $C=\left\{\left\{a\right\},\left\{b,c\right\}\right\}$. Thus, we first consider the distinguishability of two $4$-taxon CDMs, one a tree and the other with this convergence group. For both CDMs we assume the tip epoch has epoch time $0$. The first, which we call $\mathcal{N}_{4,1}$, is the tree $\left(o,\left(a,\left(b,c\right)\right)\right)$. Since the tip epoch has epoch time $0$, taxa $b$ and $c$ are identical. The second CDM, which we call $\mathcal{N}_{4,2}$, has a single convergence group, $C=\left\{\left\{a\right\},\left\{b,c\right\}\right\}$, followed by a speciation event involving $b$ and $c$. Again, since the tip epoch has epoch time $0$, taxa $b$ and $c$ are identical. See Figure~\ref{thm4cdms} for a graphical depiction of the two CDMs. Suppose $\mathcal{N}_{4,1}$ has parameters with no apostrophes and $\mathcal{N}_{4,2}$ has parameters with apostrophes.

For $\mathcal{N}_{4,1}$ (see Mathematica file S12.nb (text version S13.txt) on \url{https://github.com/jonathanmitchell88/CDMsSI} for a derivation),
\begin{align*}
\begin{cases}
r_{0011}=&1, \\
r_{0101}=&x_2x_3, \\
r_{0110}=&x_2x_3, \\
r_{1001}=&x_1x_2, \\
r_{1010}=&x_1x_2, \\
r_{1100}=&x_1x_3, \\
r_{1111}=&x_1x_3.
\end{cases}
\end{align*}

For $\mathcal{N}_{4,2}$ (see Mathematica file S12.nb (text version S13.txt) for a derivation),
\begin{align*}
\begin{cases}
r_{0011}=&1, \\
r_{0101}=&1-x'_4\left(1-x'_2x'_3\right), \\
r_{0110}=&1-x'_4\left(1-x'_2x'_3\right), \\
r_{1001}=&x'_1x'_2x'_4, \\
r_{1010}=&x'_1x'_2x'_4, \\
r_{1100}=&x'_1x'_3x'_4, \\
r_{1111}=&x'_1x'_3x'_4.
\end{cases}
\end{align*}

For $\mathcal{N}_{4,1}$,
\begin{align*}
\begin{cases}
\phantom{r_{0101}}\nphantom{$x_1$}x_1=&\sqrt{\frac{r_{1001}r_{1100}}{r_{0101}}}, \\
\phantom{r_{0101}}\nphantom{$x_2$}x_2=&\sqrt{\frac{r_{0101}r_{1001}}{r_{1100}}}, \\
\phantom{r_{0101}}\nphantom{$x_3$}x_3=&\sqrt{\frac{r_{0101}r_{1100}}{r_{1001}}}, \\
r_{0101}=&r_{0110}, \\
r_{1001}=&r_{1010}, \\
r_{1100}=&r_{1111}.
\end{cases}
\end{align*}

Since $x_1,x_2,x_3\in\left(0,1\right)$, for $\mathcal{N}_{4,1}$,
\begin{align*}
\begin{cases}
\phantom{r_{0101}r_{1001}}\nphantom{$r_{0101}$}r_{0101}=&r_{0110}, \\
\phantom{r_{0101}r_{1001}}\nphantom{$r_{1001}$}r_{1001}=&r_{1010}, \\
\phantom{r_{0101}r_{1001}}\nphantom{$r_{1100}$}r_{1100}=&r_{1111}, \\
r_{0101}r_{1001}<&r_{1100}, \\
r_{0101}r_{1100}<&r_{1001}, \\
r_{1001}r_{1100}<&r_{0101}.
\end{cases}
\end{align*}

Similarly, for $\mathcal{N}_{4,2}$,
\begin{align*}
\begin{cases}
r_{0101}=&r_{0110}, \\
r_{1001}=&r_{1010}, \\
r_{1100}=&r_{1111}.
\end{cases}
\end{align*}

Since we are assuming that all convergence parameters of $\mathcal{N}$ are less than some constant $c>0$, we can assume that $x'_4=1-\epsilon$, where $\epsilon>0$ is some small positive constant. Then for $\mathcal{N}_{4,2}$,
\begin{align*}
\begin{cases}
r_{1100}-r_{0101}r_{1001}=&x'_{1}x'_{3}\left(1-x_{2}^{'2}\right)+O\left(\epsilon\right), \\
r_{1001}-r_{0101}r_{1100}=&x'_{1}x'_{2}\left(1-x_{3}^{'2}\right)+O\left(\epsilon\right), \\
r_{0101}-r_{1001}r_{1100}=&x'_{2}x'_{3}\left(1-x_{1}^{'2}\right)+O\left(\epsilon\right).
\end{cases}
\end{align*}

Since $c>0$ can be chosen, there exists some choice of $\epsilon>0$ sufficiently small such that for $\mathcal{N}_{4,2}$,
\begin{align*}
\begin{cases}
\phantom{r_{0101}r_{1001}}\nphantom{$r_{0101}$}r_{0101}=&r_{0110}, \\
\phantom{r_{0101}r_{1001}}\nphantom{$r_{1001}$}r_{1001}=&r_{1010}, \\
\phantom{r_{0101}r_{1001}}\nphantom{$r_{1100}$}r_{1100}=&r_{1111}, \\
r_{0101}r_{1001}<&r_{1100}, \\
r_{0101}r_{1100}<&r_{1001}, \\
r_{1001}r_{1100}<&r_{0101}.
\end{cases}
\end{align*}

Thus, $\mathcal{N}_{4,1}$ and $\mathcal{N}_{4,2}$ are not distinguishable for this choice of $c>0$. Thus, any $4$-taxon CDM with $\alpha_l=\beta_l$ and this sister convergence group is not distinguishable from the CDM that results from suppressing the sister convergence. Thus, to determine the transformed phylogenetic tensor of any $4$-taxon CDM with $\alpha_l=\beta_l$, we can assume there is no sister convergence in this epoch.

\bigskip

The next closest epoch to the root that could have a convergence group is the epoch just after taxa $b$ and $c$ have diverged. Thus, we compare the tree $\left(o,\left(a,\left(b,c\right)\right)\right)$, which we call $\mathcal{N}_{4,3}$, and the CDM with topology of principal tree $\left(o,\left(a,\left(b,c\right)\right)\right)$ and sister convergence group $\left\{\left\{b\right\},\left\{c\right\}\right\}$ in the tip epoch, which we call $\mathcal{N}_{4,4}$. See Figure~\ref{thm4cdms2} for a graphical depiction of the two CDMs. Again, suppose $\mathcal{N}_{4,3}$ has parameters with no apostrophes and $\mathcal{N}_{4,4}$ has parameters with apostrophes.

For $\mathcal{N}_{4,3}$ (see Mathematica file S12.nb (text version S13.txt) for a derivation),
\begin{align*}
\begin{cases}
r_{0011}=&x_4x_5, \\
r_{0101}=&x_2x_3x_4, \\
r_{0110}=&x_2x_3x_5, \\
r_{1001}=&x_1x_2x_4, \\
r_{1010}=&x_1x_2x_5, \\
r_{1100}=&x_1x_3, \\
r_{1111}=&x_1x_3x_4x_5.
\end{cases}
\end{align*}

For $\mathcal{N}_{4,4}$ (see Mathematica file S12.nb (text version S13.txt) for a derivation),
\begin{align*}
\begin{cases}
r_{0011}=&1-x'_6\left(1-x'_4x'_5\right), \\
r_{0101}=&x'_2x'_3x'_4x'_6, \\
r_{0110}=&x'_2x'_3x'_5x'_6, \\
r_{1001}=&x'_1x'_2x'_4x'_6, \\
r_{1010}=&x'_1x'_2x'_5x'_6, \\
r_{1100}=&x'_1x'_3, \\
r_{1111}=&x'_1x'_3\left(1-x'_6\left(x'_4x'_5\right)\right).
\end{cases}
\end{align*}

For $\mathcal{N}_{4,3}$,
\begin{align*}
\begin{cases}
\phantom{r_{0101}r_{1010}}\nphantom{$x_1$}x_1=&\sqrt{\frac{r_{1001}r_{1100}}{r_{0101}}}, \\
\phantom{r_{0101}r_{1010}}\nphantom{$x_2$}x_2=&\sqrt{\frac{r_{0110}r_{1001}}{r_{0011}r_{1100}}}, \\
\phantom{r_{0101}r_{1010}}\nphantom{$x_3$}x_3=&\sqrt{\frac{r_{0101}r_{1100}}{r_{1001}}}, \\
\phantom{r_{0101}r_{1010}}\nphantom{$x_4$}x_4=&\sqrt{\frac{r_{0011}r_{0101}}{r_{0110}}}, \\
\phantom{r_{0101}r_{1010}}\nphantom{$x_5$}x_5=&\sqrt{\frac{r_{0011}r_{0110}}{r_{0101}}}, \\
r_{0101}r_{1010}=&r_{0110}r_{1001}, \\
r_{0011}r_{1100}=&r_{1111}.
\end{cases}
\end{align*}

Since $x_1,x_2,x_3,x_4,x_5\in\left(0,1\right)$, for $\mathcal{N}_{4,3}$,
\begin{align*}
\begin{cases}
r_{0101}r_{1010}=&r_{0110}r_{1001}, \\
r_{0011}r_{1100}=&r_{1111}, \\
r_{0011}r_{0101}<&r_{0110}, \\
r_{0011}r_{0110}<&r_{0101}, \\
r_{0101}r_{1100}<&r_{1001}, \\
r_{0110}r_{1001}<&r_{0011}r_{1100}, \\
r_{1001}r_{1100}<&r_{0101}.
\end{cases}
\end{align*}

Similarly, for $\mathcal{N}_{4,4}$,
\begin{align*}
\begin{cases}
r_{0101}r_{1010}=&r_{0110}r_{1001}, \\
r_{0011}r_{1100}=&r_{1111}.
\end{cases}
\end{align*}

Since we are assuming that all convergence parameters of $\mathcal{N}$ are less than some constant $c>0$, we can assume that $x'_6=1-\epsilon$, where $\epsilon>0$ is some small positive constant. Then for $\mathcal{N}_{4,4}$,
\begin{align*}
\begin{cases}
\phantom{r_{0011}r_{1100}-r_{0110}r_{1001}}\nphantom{$r_{0110}-r_{0011}r_{0101}$}r_{0110}-r_{0011}r_{0101}=&x'_2x'_3x'_5\left(1-x_4^{'2}\right)+O\left(\epsilon\right), \\
\phantom{r_{0011}r_{1100}-r_{0110}r_{1001}}\nphantom{$r_{0101}-r_{0011}r_{0110}$}r_{0101}-r_{0011}r_{0110}=&x'_2x'_3x'_4\left(1-x_5^{'2}\right)+O\left(\epsilon\right), \\
\phantom{r_{0011}r_{1100}-r_{0110}r_{1001}}\nphantom{$r_{1001}-r_{0101}r_{1100}$}r_{1001}-r_{0101}r_{1100}=&x'_1x'_2x'_4\left(1-x_3^{'2}\right)+O\left(\epsilon\right), \\
r_{0011}r_{1100}-r_{0110}r_{1001}=&x'_1x'_3x'_4x'_5\left(1-x_2^{'2}\right)+O\left(\epsilon\right), \\
\phantom{r_{0011}r_{1100}-r_{0110}r_{1001}}\nphantom{$r_{0101}-r_{1001}r_{1100}$}r_{0101}-r_{1001}r_{1100}=&x'_2x'_3x'_4\left(1-x_1^{'2}\right)+O\left(\epsilon\right).
\end{cases}
\end{align*}

Since $c>0$ can be chosen, there exists some choice of $\epsilon>0$ sufficiently small such that for $\mathcal{N}_{4,4}$,
\begin{align*}
\begin{cases}
r_{0101}r_{1010}=&r_{0110}r_{1001}, \\
r_{0011}r_{1100}=&r_{1111}, \\
r_{0011}r_{0101}<&r_{0110}, \\
r_{0011}r_{0110}<&r_{0101}, \\
r_{0101}r_{1100}<&r_{1001}, \\
r_{0110}r_{1001}<&r_{0011}r_{1100}, \\
r_{1001}r_{1100}<&r_{0101}.
\end{cases}
\end{align*}

Thus, $\mathcal{N}_{4,3}$ and $\mathcal{N}_{4,4}$ are not distinguishable for this choice of $c>0$. Thus, any $4$-taxon CDM with $\alpha_l=\beta_l$ and this sister convergence group is not distinguishable from the CDM that results from suppressing the sister convergence. Thus, to determine the transformed phylogenetic tensor of any $4$-taxon CDM with $\alpha_l=\beta_l$, we can again assume there is no sister convergence in this epoch.

\bigskip

By the assumption that no leaf taxa belong to more than one convergence group, there can be no more than one convergence group on any arbitrary $4$-taxon CDM displayed on $\mathcal{N}$. Thus, taking into consideration $\mathcal{N}_{4,1}$ and $\mathcal{N}_{4,2}$ not being distinguishable and $\mathcal{N}_{4,3}$ and $\mathcal{N}_{4,4}$ not being distinguishable, we can conclude that any arbitrary $4$-taxon CDM displayed on $\mathcal{N}$ is not distinguishable from the $4$-taxon CDM that results from suppressing any sister convergence group, which is one of CDM $1-3$ of Figure~\ref{4taxonCDMs2}.

Next, we establish that CDM $3$ is identifiable under these assumptions. For this CDM, which we call $\mathcal{N}_{4,5}$ (see Mathematica file S12.nb (text version S13.txt)),
\begin{align*}
\begin{cases}
r_{0011}=&x_4x_5x_6x_7, \\
r_{0101}=&x_2x_3x_4x_6x_8, \\
r_{0110}=&x_7x_8\left(1-x_6\left(1-x_2x_3x_5\right)\right), \\
r_{1001}=&x_1x_2x_4, \\
r_{1010}=&x_1x_2x_5x_6x_7, \\
r_{1100}=&x_1x_3x_6x_8, \\
r_{1111}=&x_1x_4x_7x_8\left(x_2\left(1-x_6\right)+x_3x_5x_6\right).
\end{cases}
\end{align*}

In terms of the set of parameters $\left\{y_1,y_2,y_3,y_4,y_5,y_6,y_7,y_8,y_9\right\}$ of Appendix~\ref{CDM5appendix},
\begin{align*}
\begin{cases}
r_{0011}=&y_4y_5y_6, \\
r_{0101}=&y_2y_3y_4y_6, \\
r_{0110}=&y_7\left(1-y_6\right)+y_2y_3y_5y_6, \\
r_{1001}=&y_1y_2y_4, \\
r_{1010}=&y_1y_2y_5y_6, \\
r_{1100}=&y_1y_3y_6, \\
r_{1111}=&y_1\left(y_2y_4y_7\left(1-y_6\right)+y_3y_5y_5y_6\right).
\end{cases}
\end{align*}

In S14.m2 (output file S15.txt) on \url{https://github.com/jonathanmitchell88/CDMsSI}, we see that the set of parameters $\left\{y_1,y_2,y_3,y_4,y_5,y_6,y_7\right\}$ is identifiable. It follows that CDMs $1$~and~$2$ are also identifiable.

\bigskip

Thus, using similar arguments to those of the proof of Theorem~\ref{principaltreeconsistent}, with probability converging to $1$, step $2$ of Algorithm~\ref{algorithmtopCDM} infers all the $4$-taxon CDMs with the outgroup that are displayed on $\mathcal{N}$ after suppressing sister convergence groups.

If $\mathcal{N}$ is a tree, then $s=0$ in step $4$ of Algorithm~\ref{algorithmtopCDM}, the algorithm terminates and the tree is returned. If $\mathcal{N}$ is not a tree, since $u=1$, a potential convergence group on $\mathcal{N}$ is only considered if, for all pairs of converging taxa in the convergence group, the inferred $4$-taxon CDMs with that pair of taxa as non-sisters all have the pair converging. Thus, asymptotically with probability $1$, only convergence groups on $\mathcal{N}$ can be on the inferred $N$-taxon CDM. If not all convergence groups of $\mathcal{N}$ have been included on the inferred CDM, then there are some elements of $\boldsymbol{O}$ that are non-zero corresponding to elements of $\boldsymbol{E}$ that are zero. These elements correspond to the pairs of converging taxa in convergence groups of $\mathcal{N}$ that are not yet on the inferred CDM. Including these convergence groups on the inferred CDM makes these elements of $\boldsymbol{E}$ equal to the corresponding elements of $\boldsymbol{O}$, decreasing the sum of squared differences. Once all convergence groups of $\mathcal{N}$ have been appended to the inferred CDM, $\boldsymbol{O}=\boldsymbol{E}$. Thus, no more convergence groups can be appended to the inferred CDM to decrease the sum of squared differences and the algorithm terminates.

\end{proof}

\begin{figure}[!htb]
\centering
\hspace*{\fill}
\begin{subfigure}{0.49\linewidth}
\centering
\begin{tikzpicture}[scale=1]
\draw[thick] (0,0) -- (-2.1,-2.1);
\draw[thick] (0,0) -- (2.1,-2.1);
\draw[thick] (1,-1) -- (-0.1,-2.1);
\draw[thick] (2,-2) -- (1.9,-2.1);
\node[below] at (-2.1,-2.1) {\strut{$o$}};
\node[below] at (-0.1,-2.1) {\strut{$a$}};
\node[below] at (1.9,-2.1) {\strut{$b$}};
\node[below] at (2.1,-2.1) {\strut{$c$}};
\draw[dashed] (-2.6,0) -- (2.6,0);
\draw[dashed] (-2.6,-1) -- (2.6,-1);
\draw[dashed] (-2.6,-2) -- (2.6,-2);
\draw[dashed] (-2.6,-2.1) -- (2.6,-2.1);
\node[left] at (-0.5,-0.5) {$x_1$};
\node[right] at (1.5,-1.5) {$x_2$};
\node[left] at (0.5,-1.5) {$x_3$};
\node[right] at (2.6,-2.05) {\footnotesize{$0$}};
\end{tikzpicture}
\caption{$\mathcal{N}_{4,1}$ --- $x_1$ represents the entire outgroup edge when unrooted.}
\label{N41}
\end{subfigure}
\hfill
\begin{subfigure}{0.49\linewidth}
\centering
\begin{tikzpicture}[scale=1]
\draw[thick] (0,0) -- (-2.6,-2.6);
\draw[thick] (0,0) -- (2,-2);
\draw[thick] (1,-1) -- (0,-2);
\draw[thick,red] (0,-2) to[out=350,in=100] (0.25,-2.5);
\draw[thick,red] (2,-2) to[out=190,in=80] (1.75,-2.5);
\draw[thick] (0.25,-2.5) -- (0.15,-2.6);
\draw[thick] (1.75,-2.5) -- (1.65,-2.6);
\draw[thick] (1.75,-2.5) -- (1.85,-2.6);
\node[below] at (-2.6,-2.6) {\strut{$o$}};
\node[below] at (0.15,-2.6) {\strut{$a$}};
\node[below] at (1.65,-2.6) {\strut{$b$}};
\node[below] at (1.85,-2.6) {\strut{$c$}};
\draw[dashed] (-3.1,0) -- (2.35,0);
\draw[dashed] (-3.1,-1) -- (2.35,-1);
\draw[dashed] (-3.1,-2) -- (2.35,-2);
\draw[dashed] (-3.1,-2.5) -- (2.35,-2.5);
\draw[dashed] (-3.1,-2.6) -- (2.35,-2.6);
\node[left] at (-0.5,-0.5) {$x'_1$};
\node[right] at (1.5,-1.5) {$x'_2$};
\node[left] at (0.5,-1.5) {$x'_3$};
\node[right] at (1.875,-2.25) {$x'_4$};
\node[left] at (0.125,-2.25) {$x'_4$};
\node[right] at (2.35,-2.55) {\footnotesize{$0$}};
\end{tikzpicture}
\caption{$\mathcal{N}_{4,2}$ --- $x'_1$ represents the entire outgroup edge when unrooted.}
\label{N42}
\end{subfigure}
\hspace*{\fill}
\caption{Two CDMs that are not distinguishable under the assumptions of Theorem~\ref{conscongroups}}
\label{thm4cdms}
\end{figure}
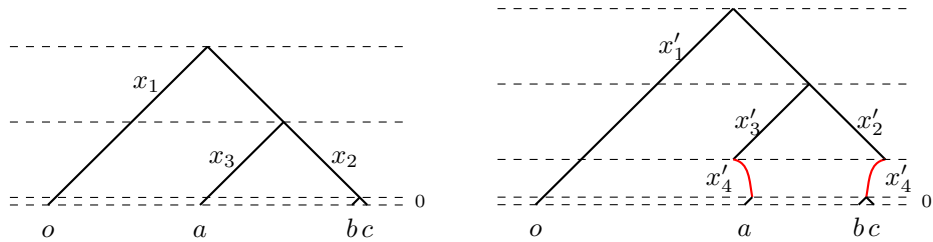

\begin{figure}[!htb]
\centering
\hspace*{\fill}
\begin{subfigure}{0.49\linewidth}
\centering
\begin{tikzpicture}[scale=0.85]
\draw[thick] (0,0) -- (-3,-3);
\draw[thick] (0,0) -- (3,-3);
\draw[thick] (1,-1) -- (-1,-3);
\draw[thick] (2,-2) -- (1,-3);
\node[left] at (-1.5,-1.5) {$x_1$};
\node[right] at (1.5,-1.5) {$x_2$};
\node[left] at (0,-2) {$x_3$};
\node[right] at (2.5,-2.5) {$x_4$};
\node[left] at (1.5,-2.5) {$x_5$};
\node[below] at (-3,-3) {\strut{$o$}};
\node[below] at (-1,-3) {\strut{$a$}};
\node[below] at (1,-3) {\strut{$b$}};
\node[below] at (3,-3) {\strut{$c$}};
\end{tikzpicture}
\caption{$\mathcal{N}_{4,3}$}
\label{N43}
\end{subfigure}
\hfill
\begin{subfigure}{0.49\linewidth}
\centering
\begin{tikzpicture}[scale=0.85]
\draw[thick] (0,0) -- (-3.5,-3.5);
\draw[thick] (0,0) -- (3,-3);
\draw[thick] (1,-1) -- (-1.5,-3.5);
\draw[thick] (2,-2) -- (1,-3);
\draw[thick,red] (1,-3) to[out=350,in=100] (1.25,-3.5);
\draw[thick,red] (3,-3) to[out=190,in=80] (2.75,-3.5);
\node[left] at (-1.75,-1.75) {$x'_1$};
\node[right] at (1.5,-1.5) {$x'_2$};
\node[left] at (-0.25,-2.25) {$x'_3$};
\node[right] at (2.5,-2.5) {$x'_4$};
\node[left] at (1.5,-2.5) {$x'_5$};
\node[right] at (2.875,-3.25) {$x'_6$};
\node[left] at (1.125,-3.25) {$x'_6$};
\node[below] at (-3.5,-3.5) {\strut{$o$}};
\node[below] at (-1.5,-3.5) {\strut{$a$}};
\node[below] at (1.25,-3.5) {\strut{$b$}};
\node[below] at (2.75,-3.5) {\strut{$c$}};
\end{tikzpicture}
\caption{$\mathcal{N}_{4,4}$}
\label{N44}
\end{subfigure}
\hspace*{\fill}
\caption{Two CDMs that are not distinguishable under the assumptions of Theorem~\ref{conscongroups}}
\label{thm4cdms2}
\end{figure}

\section{Inferring convergence group orders on \texorpdfstring{$N$}{N}-taxon CDMs}
\label{grouporders}

The next algorithms infer partial orders on the convergence groups and determine whether or not there is a convergence group in the tip epoch. CDMs $4$ and $5$ have two convergence groups and thus provide power to determine convergence group orders. Whether or not there is a convergence group in the tip epoch can also be determined from the inferred $4$-taxon CDMs. For example, CDM $2$ and CDM $3$ differ by CDM $2$ having its convergence group in the tip epoch versus CDM $3$ having its convergence group in the epoch before the tip epoch.

Suppose an inferred $4$-taxon CDM has two non-sister convergence groups. One of the edges of the $4$-taxon principal tree corresponds to a converging taxon in both convergence groups. The order of these convergence groups may not be determined by the matrix of edge partial orders from Algorithm~\ref{algorithmtopCDM}. If the convergence group order is not determined, we determine which order is best supported by selecting CDMs from those with the appropriate convergence groups with a model selection procedure.

For convergence groups $C_i$ and $C_j$, whose partial order has not been determined, we tally proportions of $4$-taxon CDMs best supported by the two orders to obtain a matrix of ``observed'' convergence group order ratios. Convergence group orders are resolved in a stepwise fashion by minimizing the sum of squared differences between matrices of observed and ``expected'' partial convergence group order ratios. The matrix of inferred convergence group order ratios is updated after each convergence group order is inferred. We discard $4$-taxon CDMs with convergence group orders not consistent with the matrix of inferred convergence group orders.

Suppose an arbitrary convergence group is $C_i=\left\{c_{1,i},c_{2,i}\right\}$. On the $N$-taxon CDM, if $\left|c_{1,i}\right|>1$ and/or $\left|c_{2,i}\right|>1$ or $C_i$ is in an epoch before another convergence group, then $C_i$ cannot be in the tip epoch. For other convergence groups, whether they are in the tip epoch or not must be inferred.

For each $4$-taxon CDM with a fixed leaf labeling with a possible convergence group in the tip epoch, we determine which CDM is best supported among the two CDMs, for example, CDM $2$ versus CDM $3$ or CDM $4$ versus CDM $5$. For $C_i$, we tally the $4$-taxon CDMs displaying the given convergence group with and without the convergence group in the tip epoch.

If $C_i$ corresponds to a greater proportion of $4$-taxon CDMs with the convergence group in the tip epoch than any other convergence group and the proportion is greater than some cutoff, for example, half, then we infer that $C_i$ is in the tip epoch. We retain only one possible CDM for each $4$-taxon set after the convergence group order has been assigned and it has been determined which, if any, convergence group is in the tip epoch.

Note that some convergence group orders may still be undefined. Suppose two convergence groups do not have an order defined by the edge partial order of the principal tree or the orders of other convergence groups. Suppose both convergence groups are only ever present on $4$-taxon CDMs where one convergence group is a sister convergence group. Then there will be no information to resolve the order of these two convergence groups. We leave these convergence group orders unresolved. Thus, we have a \emph{partial} order on the convergence groups. Algorithms~\ref{cdmconorder}~and~\ref{cdmdiv} for inferring convergence group orders and any convergence group in the tip epoch then follow.

\begin{algorithm}
\caption{Convergence group order inference}
\textbf{Input: }$N$-taxon CDM $\widehat{\mathcal{N}}$ comprising $N$-taxon topology of principal tree $\widehat{\mathcal{T}}$ and list of convergence groups $\widehat{\mathcal{G}}$, as well as $\binom{N-1}{3}\times{}27$ matrix of model selection criterion values $\boldsymbol{M}$ and matrix of partial edge orders $\boldsymbol{P}$.
\begin{enumerate}[label*=\arabic*.]
\item Initialize empty list of inferred $4$-taxon CDMs $L_Q$. Initialize $k\times{}k$ matrix of observed convergence group orders $\boldsymbol{O}$ as zero matrix, where $k$ is length of list $\widehat{\mathcal{G}}$. Initialize $k\times{}k$ matrix $\boldsymbol{E}$ of expected convergence group orders as convergence group orders defined by $\boldsymbol{P}$, with $\left[\boldsymbol{E}\right]_{ij}=1$ if convergence group $i$ before $j$ and $0$ otherwise.
\item For each $4$-taxon set that includes outgroup $o$, with model selection criterion, select CDM from those displayed on $\widehat{\mathcal{N}}$ and permitted by $\boldsymbol{E}$ and append to $L_Q$.
\item For all $i,j$, compute $\left[\boldsymbol{O}\right]_{ij}$ as proportion of inferred $4$-taxon CDMs displaying convergence groups $i$ and $j$, where $i$ is before $j$.
\item Compute initial sum of squared differences between elements of $\boldsymbol{O}$ and $\boldsymbol{E}$, $s=\sum_{i=1}^{k}\sum_{j=1}^{k}\left(\left[\boldsymbol{O}\right]_{ij}-\left[\boldsymbol{E}\right]_{ij}\right)^2$.
\item Assign new order between two convergence groups that minimizes $s$.
\item Update $\boldsymbol{E}$ and $s$ to reflect newly inferred convergence group order. Suppose new order is convergence group $x$ before $y$. Then all convergence groups above $x$ are also above $y$ and all convergence groups below $y$ are also below $x$. If no pairs of convergence groups left to assign orders to, terminate algorithm. 
\item Return to Step~$5$.
\end{enumerate}
\textbf{Output: }$N$-taxon CDM $\widehat{\mathcal{N}}$ comprising $N$-taxon topology principal tree $\widehat{\mathcal{T}}$ and list of convergence groups $\widehat{\mathcal{G}}$, as well as $\binom{N-1}{3}\times{}27$ matrix of model selection criterion values $\boldsymbol{M}$, matrix of partial edge orders $\boldsymbol{P}$ and matrix of expected convergence group orders $\boldsymbol{E}$.
\label{cdmconorder}
\end{algorithm}

\begin{algorithm}
\caption{Inference of convergence groups in tip epochs}
\textbf{Input: }$N$-taxon CDM $\widehat{\mathcal{N}}$ comprising $N$-taxon topology principal tree $\widehat{\mathcal{T}}$ and list of convergence groups $\widehat{\mathcal{G}}$, as well as $\binom{N-1}{3}\times{}27$ matrix of model selection criterion values $\boldsymbol{M}$, matrix of partial edge orders $\boldsymbol{P}$, matrix of expected convergence group orders $\boldsymbol{E}$ and tolerance $\tau\in\left[0,1\right]$.
\begin{enumerate}[label*=\arabic*.]
\item Initialize empty list of inferred $4$-taxon CDMs $L_Q$. Initialize vector $\boldsymbol{D}$ of length $k$ of convergence groups in tip epoch as zero vector, where $k$ is length of list $\widehat{\mathcal{G}}$.
\item For each $4$-taxon set that includes outgroup $o$, select CDM from those displayed on $\widehat{\mathcal{N}}$ and permitted by $\boldsymbol{E}$ with model selection criterion and append to $L_Q$.
\item For all $i$, if convergence group $C_i=\left\{c_{1,i},c_{2,i}\right\}$ satisfies $\left|c_{1,i}\right|=\left|c_{2,i}\right|=1$ and is not before any other convergence group of $\widehat{\mathcal{N}}$, compute $\left[\boldsymbol{D}\right]_i$ as proportion of inferred $4$-taxon CDMs with $C_i$ in tip epoch.
\item If $\max_{i\in\left\{1,2,\ldots{},k\right\}}\left[\boldsymbol{D}\right]_i=\left[\boldsymbol{D}\right]_j$ and $D_j>\tau$, set $\left[\boldsymbol{D}\right]_j=1$.
\end{enumerate}
\textbf{Output: }$N$-taxon CDM $\widehat{\mathcal{N}}$ comprising $N$-taxon topology principal tree $\widehat{\mathcal{T}}$ and list of convergence groups $\widehat{\mathcal{G}}$, as well as $\binom{N-1}{3}\times{}27$ matrix of model selection criterion values $\boldsymbol{M}$, matrix of partial edge orders $\boldsymbol{P}$, matrix of expected convergence group orders $\boldsymbol{E}$ and vector of convergence groups in tip epoch $\boldsymbol{D}$.
\label{cdmdiv}
\end{algorithm}

We do not prove consistency of inference of the convergence group partial orders from Algorithm~\ref{cdmconorder}. This is because Theorem~\ref{conscongroups} assumes that no leaf taxa belong to more than one convergence group. Thus, all $4$-taxon CDMs displayed on $\mathcal{N}$ have at most one non-sister convergence group and there are no convergence group orders to infer. Furthermore, we do not prove consistency of inference of the convergence groups in the tip epoch.

However, if all inferred $4$-taxon CDMs that include the outgroup are the $4$-taxon CDMs displayed on the generating $N$-taxon CDM after suppressing sister convergence groups, then it is straightforward to prove that Algorithm~\ref{cdmconorder} correctly infers all orders of convergence groups of the generating $N$-taxon CDM that can be determined from the displayed $4$-taxon CDMs. Furthermore, it is also straightforward to prove that Algorithm~\ref{cdmdiv} correctly infers which, if any, convergence group of the generating $N$-taxon CDM is in the tip epoch.

\section{Proof of Proposition~\ref{distident}}
\label{prop14}

\setcounter{prop}{13}

\begin{prop}
All edge lengths of the principal tree of each of CDM $1-5$ are identifiable.
\end{prop}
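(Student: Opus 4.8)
The plan is to prove the stronger statement that every pairwise distance between the leaf taxa of a $4$-taxon CDM is identifiable, and then recover the principal-tree edge lengths from these six distances by the standard ``four-point'' relations on the fixed quartet topology $\left(o,\left(a,\left(b,c\right)\right)\right)$.

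First I would invoke Proposition~\ref{CDMs} and the remark following it: for each of CDMs $1$--$5$ there is an explicit set of combined divergence/convergence parameters (the $y_i$ of Appendix~\ref{CDM5appendix}, together with $\gamma$) that is identifiable. Working with $x_i=\exp\left(-l_i\right)\in\left(0,1\right)$ as in Appendix~\ref{propsident}, I would then, for each CDM and each of the six unordered pairs of leaf taxa, read off from the principal tree which diverging and converging sections the connecting path traverses, and write $\exp\left(-d\left(x,y\right)\right)$ as the corresponding product of the $x_i$. The one subtlety requiring care is that when the lineages of both endpoints pass through the \emph{same} convergence group, that group's convergence parameter occurs twice, so the exponent of the corresponding $x_i$ is $2$; this is why the raw Hadamard coordinates $r_{0101},r_{0111},\dots$ (which are not monomials) do not directly yield $d\left(a,b\right)$ and $d\left(a,c\right)$. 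For CDM $5$, with the parameter labelling of Figure~\ref{4taxonCDMs2}, this gives $\exp\left(-d\left(b,c\right)\right)=x_4x_5x_6x_7x_9x_{10}$, $\exp\left(-d\left(o,a\right)\right)=x_1x_3x_6x_8x_9x_{11}$, $\exp\left(-d\left(o,b\right)\right)=x_1x_2x_5x_6x_7$, $\exp\left(-d\left(o,c\right)\right)=x_1x_2x_4x_9x_{10}$, $\exp\left(-d\left(a,b\right)\right)=x_2x_3x_5x_6^2x_7x_8x_9x_{11}$, and $\exp\left(-d\left(a,c\right)\right)=x_2x_3x_4x_6x_8x_9^2x_{10}x_{11}$.

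Next I would check that each of these monomials is in fact a monomial in the identifiable $y_i$: substituting the definitions of the $y_i$ one gets, for CDM $5$, $\exp\left(-d\left(b,c\right)\right)=y_4y_5y_6y_8$, $\exp\left(-d\left(o,a\right)\right)=y_1y_3y_6y_8$, $\exp\left(-d\left(o,b\right)\right)=y_1y_2y_5y_6$, $\exp\left(-d\left(o,c\right)\right)=y_1y_2y_4y_8$, $\exp\left(-d\left(a,b\right)\right)=y_2y_3y_5y_6^2y_8$, and $\exp\left(-d\left(a,c\right)\right)=y_2y_3y_4y_6y_8^2$, so each $d\left(x,y\right)$ is a nonnegative-integer combination of the identifiable quantities $-\log y_i$ and hence identifiable. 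Finally, since the quartet topology is $\left(o,\left(a,\left(b,c\right)\right)\right)$, the five edge lengths of the (unrooted) principal tree are explicit linear combinations of the six pairwise distances (automatically consistent, as they arise from a tree): the internal edge length is $\tfrac12\left(d\left(o,b\right)+d\left(a,c\right)-d\left(o,a\right)-d\left(b,c\right)\right)$ and each terminal edge length is $\tfrac12$ of an analogous three-term combination of distances (for instance $l_o=\tfrac12\left(d\left(o,a\right)+d\left(o,b\right)-d\left(a,b\right)\right)$). Hence all principal-tree edge lengths of CDM $5$ are identifiable. The cases of CDMs $1$--$4$ follow by the same argument applied to the relevant subsets of sections; equivalently, these CDMs are nested in CDM $5$ and the corresponding $y_i$-type parameter sets are identifiable by the remark after Appendix~\ref{CDM5appendix}.

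The main obstacle is purely bookkeeping: for each CDM one must correctly enumerate the sections on each of the six leaf-to-leaf paths — in particular tracking the doubled convergence parameters — and then verify that the resulting monomials lie in the subalgebra generated by the identifiable $y_i$. This is routine but error-prone, and may be cross-checked with the supplementary Mathematica and Macaulay2 files; no new ideas beyond Proposition~\ref{CDMs} and the structure of the Hadamard parametrisation are required.
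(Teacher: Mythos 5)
Your proposal is correct and follows essentially the same route as the paper's proof: express each of the six pairwise distances as a monomial in the identifiable $y_i$ parameters (with the convergence parameters $x_6$, $x_9$ appearing squared on the $a$--$b$ and $a$--$c$ paths, exactly as in the paper's $2l_6$ and $2l_9$ terms), then recover the five edge lengths of the unrooted principal tree by the standard half-sum combinations, with CDMs $1$--$4$ handled by nesting. The monomials and linear combinations you give match the paper's verbatim.
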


\begin{proof}
Using the parameterization of Appendix~\ref{CDM5appendix}, for CDM $5$, with principal tree $\left(o,\left(a,\left(b,c\right)\right)\right)$, the sums of edge lengths between leaf taxa are
\begin{align*}
\begin{cases}
d_{o,a}=&l_1+l_3+l_6+l_8+l_9+l_{11}=-\log\left(x_1x_3x_6x_8x_9x_{11}\right)=-\log\left(y_1y_3y_6y_8\right), \\
\phantom{d_{o,a}}\nphantom{$d_{o,b}$}d_{o,b}=&l_1+l_2+l_5+l_6+l_7=-\log\left(x_1x_2x_5x_6x_7\right)=-\log\left(y_1y_2y_5y_6\right), \\
\phantom{d_{o,a}}\nphantom{$d_{o,c}$}d_{o,c}=&l_1+l_2+l_4+l_9+l_{10}=-\log\left(x_1x_2x_4x_9x_{10}\right)=-\log\left(y_1y_2y_4y_8\right), \\
\phantom{d_{o,a}}\nphantom{$d_{a,b}$}d_{a,b}=&l_2+l_3+l_5+2l_6+l_7+l_8+l_9+l_{11}=-\log\left(x_2x_3x_5x_6^2x_7x_8x_9x_{11}\right) \\
\phantom{d_{o,a}}=&-\log\left(y_2y_3y_5y_6^2y_8\right), \\
\phantom{d_{o,a}}\nphantom{$d_{a,c}$}d_{a,c}=&l_2+l_3+l_4+l_6+l_8+2l_9+l_{10}+l_{11}=-\log\left(x_2x_3x_4x_6x_8x_9^2x_{10}x_{11}\right) \\
\phantom{d_{o,a}}=&-\log\left(y_2y_3y_4y_6y_8^2\right), \\
\phantom{d_{o,a}}\nphantom{$d_{b,c}$}d_{b,c}=&l_4+l_5+l_6+l_7+l_9+l_{10}=-\log\left(x_4x_5x_6x_7x_9x_{10}\right)=-\log\left(y_4y_5y_6y_8\right).
\end{cases}
\end{align*}

From Equations~\eqref{ysolns}, the set $\left\{y_1,y_2,y_3,y_4,y_5,y_6,y_7,y_8,y_9\right\}$ is identifiable. Thus, the set $\left\{d_{o,a},d_{o,b},d_{o,c},d_{a,b},d_{a,c},d_{b,c}\right\}$ is also identifiable for CDM $5$. Solving for the lengths of the edges of the principal tree,
\begin{align*}
\begin{cases}
\phantom{l_{bc}}\nphantom{$l_o$}l_o=&\frac{1}{2}\left(d_{o,a}+d_{o,b}-d_{a,b}\right), \\
\phantom{l_{bc}}\nphantom{$l_a$}l_a=&\frac{1}{2}\left(d_{o,a}-d_{o,b}
+d_{a,b}\right), \\
\phantom{l_{bc}}\nphantom{$l_b$}l_b=&\frac{1}{2}\left(d_{a,b}-d_{a,c}+d_{b,c}\right), \\
\phantom{l_{bc}}\nphantom{$l_c$}l_c=&\frac{1}{2}\left(-d_{a,b}+d_{a,c}+d_{b,c}\right), \\
l_{bc}=&\frac{1}{2}\left(-d_{o,a}+d_{o,b}+d_{a,c}-d_{b,c}\right),
\end{cases}
\end{align*}
where $l_o$ is the sum of divergence parameters along the two edges of the principal tree whose parent node is the root, $l_a$, $l_b$ and $l_c$ are the sums of divergence and possibly convergence parameters along the terminal edges whose descendent leaf taxa are $a$, $b$ and $c$ respectively and $l_{bc}$ is the sum of divergence parameters along the edge whose descendent leaf taxa are $b$ and $c$.

It follows that all edge lengths are also identifiable for CDMs $1-4$ since expressions for the sums of edge lengths are the same, except that some $y_i=1$.

\end{proof}

\section{Proof of Proposition~\ref{conident}}
\label{prop15}

\setcounter{prop}{14}

\begin{prop}
All convergence parameters of each of CDM $2-5$ are identifiable.
\end{prop}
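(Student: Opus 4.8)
The plan is to reduce the whole statement to the identifiability of the $y_i$-coordinates for CDM $5$ that was already established in Appendix~\ref{CDM5appendix} (Proposition~\ref{CDMs}). First I would recall from Figure~\ref{4cdms5} that CDM $5$ has exactly two convergence parameters, the parameters labelled $6$ and $9$, so the quantities entering the transformed phylogenetic tensor are $x_6 = \exp(-l_6)$ and $x_9 = \exp(-l_9)$. Next I would read off from the reparameterization of Appendix~\ref{CDM5appendix} the two identities $y_6 = x_6$ and $y_8 = x_9$; that is, two of the coordinates used to prove Proposition~\ref{CDMs} are literally the convergence parameters. Since the full set $\left\{y_1,\ldots,y_9,\gamma\right\}$ is identifiable --- indeed Equations~\eqref{ysolns} express $y_6$ and $y_8$ explicitly as rational functions of the transformed-tensor entries $r_{ijkl}$ and $\delta$ --- it follows at once that $l_6 = -\log y_6$ and $l_9 = -\log y_8$ are identifiable, settling CDM $5$.

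For CDMs $2$, $3$ and $4$ I would invoke the nesting: each is obtained from CDM $5$ by setting some of the $x_i$ equal to $1$, so each has a transformed phylogenetic tensor of the form of Equation~\eqref{hadphylotensor} with the functions $r_{ijkl}$ and $\delta$ now expressed in the surviving subset of the $x_i$, and each admits its own identifiable $y_i$-parameterization (the remark following the proof of Proposition~\ref{CDMs}, with the explicit expressions in the supplementary Mathematica and Macaulay2 files). The step I would actually carry out here is bookkeeping: for each of these three CDMs I would list its convergence parameters and match each one to the appropriate $y_i$, observing that a converging section of an edge is a single-epoch section that is never amalgamated with a neighbouring diverging section when passing to the identifiable parameter set, so each convergence parameter still appears as a standalone $-\log y_i$ (just as $y_6 = x_6$ and $y_8 = x_9$ for CDM $5$). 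Identifiability of each such $y_i$ then gives identifiability of the corresponding convergence parameter. (CDM $1$ has no convergence parameters, which is why it is excluded from the statement.)

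The main obstacle --- such as it is --- is purely this last verification: confirming that, in each of CDMs $2$, $3$, $4$, setting the appropriate $x_i$ to $1$ in the CDM $5$ expressions does not collapse a convergence parameter into an inseparable product of several $x_i$, so that it genuinely coincides with one of the identifiable $y_i$. I expect this to be routine precisely because convergence parameters correspond to converging sections, which are chosen in the parameter space to be single epochs bounded by speciation and convergence-divergence events and are therefore never merged with adjacent diverging sections; this is the structural reason the identities $y_6 = x_6$, $y_8 = x_9$ hold for CDM $5$ and their analogues persist under each nesting. I would therefore present the proof as the short argument for CDM $5$ followed by the per-CDM matching for CDMs $2$, $3$, $4$, citing the supplementary computations for the explicit reduced $y_i$-parameterizations.
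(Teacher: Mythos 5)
Your proposal is correct and follows essentially the same route as the paper: identify the convergence parameters of CDM $5$ with the coordinates $y_6=x_6$ and $y_8=x_9$, deduce $l_6=-\log y_6$ and $l_9=-\log y_8$ from Proposition~\ref{CDMs}, and then transfer to CDMs $2$--$4$ via the nesting. The paper's own proof is just a terser version of this, asserting the transfer to the nested CDMs in one sentence where you spell out the bookkeeping.
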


\begin{proof}
On CDM $5$, parameters $y_6=x_6$ and $y_8=x_9$ are identifiable. Thus, the convergence parameters $l_6=a_6+b_6=-\log\left(y_6\right)$ and $l_9=a_9+b_9=-\log\left(y_8\right)$ are identifiable. Thus, for all other CDMs with these convergence parameters, they are also identifiable.
\end{proof}

\section{Proof of Proposition~\ref{rootident}}
\label{prop16}

\setcounter{prop}{15}

\begin{prop}
The root parameter $\gamma=\left[\boldsymbol{\Pi}\right]_0-\left[\boldsymbol{\Pi}\right]_1$, where $\left[\boldsymbol{\Pi}\right]_0$ and $\left[\boldsymbol{\Pi}\right]_1$ are the probabilities of states $0$ and $1$ at the root, respectively, is identifiable on each of CDM $1-5$.
\end{prop}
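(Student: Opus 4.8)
The plan is to exhibit $\gamma$ directly as a single coordinate of the transformed phylogenetic tensor, which is a fixed invertible linear image of the observable combination-of-states distribution at the leaves. Concretely, recall from Equation~\eqref{hadphylotensor} that for CDM~$5$ the transformed phylogenetic tensor $\boldsymbol{\widehat{P}}=\boldsymbol{H}_{16}\cdot\boldsymbol{P}$ satisfies $q_{0001}=q_{0010}=q_{0100}=q_{1000}=\gamma$. Since $\boldsymbol{H}_{16}=\boldsymbol{H}_2^{\otimes 4}$ is invertible, every entry of $\boldsymbol{\widehat{P}}$ --- in particular $q_{0001}$ --- is a fixed linear function of the entries of $\boldsymbol{P}$, i.e.\ of the probabilities of combinations of states at the leaves. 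Hence $\gamma=q_{0001}$ is determined by the leaf distribution, so it is identifiable on CDM~$5$.

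For CDMs~$1$--$4$, recall from Appendix~\ref{propsident} that each of these is nested in CDM~$5$, its transformed phylogenetic tensor being obtained from Equation~\eqref{hadphylotensor} by setting some of the parameters $x_i$ equal to $1$. Such substitutions leave the coordinates $q_{0001},q_{0010},q_{0100},q_{1000}$ unchanged and equal to $\gamma$, so the identical one-line argument applies to CDMs~$1$--$4$.

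There is essentially no obstacle here; the only point requiring attention is to confirm that the coordinate $q_{0001}$ of $\boldsymbol{\widehat{P}}$ genuinely equals $\gamma$ --- rather than some more intricate expression --- for each of CDMs~$1$--$5$, and this is immediate from the explicit form in Equation~\eqref{hadphylotensor} together with the nesting just described. If one preferred a derivation-free argument, one could instead note that because the root distribution is the stationary distribution and the Markov model is (equivalent to) the $2$-state GTR model, the marginal distribution of states at any single leaf taxon equals the stationary distribution $\boldsymbol{\Pi}$, so that $\gamma=\left[\boldsymbol{\Pi}\right]_0-\left[\boldsymbol{\Pi}\right]_1$ is recovered as the difference of the two single-taxon marginal probabilities; but the identification via $q_{0001}$ is the cleanest route and is the one I would write up.
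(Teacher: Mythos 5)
Your proposal is correct and follows essentially the same route as the paper's own proof: both read $\gamma$ off directly as the coordinate $q_{0001}=q_{0010}=q_{0100}=q_{1000}$ of the transformed phylogenetic tensor in Equation~\eqref{hadphylotensor} for CDM~$5$, and then handle CDMs~$1$--$4$ by the nesting (setting some $x_i=1$), which leaves those coordinates unchanged. The additional remark about recovering $\gamma$ from single-taxon marginals under the stationary root distribution is a valid alternative but not needed.
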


\begin{proof}
From Equation~\eqref{hadphylotensor} for the phylogenetic tensor of CDM $5$, $q_{0001}=q_{0010}=q_{0100}=q_{1000}=\gamma$. Thus, $\gamma$ is identifiable for CDM $5$. Since all other CDMs are nested in CDM $5$ and none correspond to generic values of $\gamma$ --- instead they correspond to some generic values of $x_i$ or $y_i$ --- $\gamma$ is also identifiable for CDMs $1-4$.
\end{proof}

\section{Proof of Theorem~\ref{consparams}}
\label{theorem17}

\setcounter{thm}{16}

\begin{thm}

Suppose CDM $\mathcal{N}$ has topology of principal tree $\mathcal{T}$, convergence groups $\mathcal{G}$, principal tree edge lengths $\boldsymbol{l}$, root parameter $\gamma$ and convergence parameters $\boldsymbol{v}$. Suppose $\mathcal{T}$, $\mathcal{G}$, convergence group partial orders and tip epoch convergence groups of $\mathcal{N}$ are input into Algorithm~\ref{algorithmmetCDM}. Suppose in step~4 of Algorithm~\ref{algorithmmetCDM} only $4$-taxon sets for which $4$-taxon CDMs displayed on $\mathcal{N}$ have no sister convergence are considered. Suppose that for each convergence group of $\mathcal{G}$ --- say $C_a=\left\{c_{1,a},c_{2,a}\right\}$ --- there is at least one $4$-taxon CDM displayed on $\mathcal{N}$ with no sister convergence where $x\in{}c_{1,a}$, $y\in{}c_{2,a}$ are non-sister leaf taxa on the displayed CDM. Suppose further that matrix $\boldsymbol{X}$ in step $6$ of Algorithm~\ref{algorithmmetCDM} has rank $2N-3$. Suppose $\widehat{\boldsymbol{l}}$, $\widehat{\gamma}$ and $\widehat{\boldsymbol{v}}$ are the estimates of $\boldsymbol{l}$, $\gamma$ and $\boldsymbol{v}$, respectively, inferred by Algorithm~\ref{algorithmmetCDM}. Then for any $\epsilon_>0$,
\begin{align*}
\lim_{n\to\infty}\mathbb{P}\left(\left|\widehat{\boldsymbol{l}}-\boldsymbol{l}\right|>\epsilon\right)=0,\quad{}\lim_{n\to\infty}\mathbb{P}\left(\left|\widehat{\gamma}-\gamma\right|>\epsilon\right)=0,\quad{}\lim_{n\to\infty}\mathbb{P}\left(\left|\widehat{\boldsymbol{v}}-\boldsymbol{v}\right|>\epsilon\right)=0,
\end{align*}
where $\left|\widehat{\boldsymbol{l}}-\boldsymbol{l}\right|$ and $\left|\widehat{\boldsymbol{v}}-\boldsymbol{v}\right|$ involve $l^1$ norms.

\end{thm}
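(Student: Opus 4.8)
The plan is to reduce the statement to four ingredients: (i) for each of the finitely many $4$-taxon sets used in step~4 of Algorithm~\ref{algorithmmetCDM}, the maximum likelihood estimates of the root parameter, of the convergence parameters, and of the six pairwise distances of the displayed $4$-taxon CDM are consistent for their true values; (ii) a fixed finite average of consistent estimators of a common quantity is again consistent; (iii) the ordinary least squares map in step~7 is a fixed linear map of the averaged pairwise distances, and it recovers $\boldsymbol{l}$ exactly when the true distances are fed in; and (iv) the negative-truncation in step~7 and the rescaling in step~10 are asymptotically inactive. Because the principal tree $\mathcal{T}$ of $\mathcal{N}$ is binary by definition there are no polytomies, so step~3 of the algorithm is trivial and $\widehat{\mathcal{T}}'=\widehat{\mathcal{T}}=\mathcal{T}$ throughout.

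For (i): since $\mathcal{T}$, $\mathcal{G}$, the convergence-group partial orders and the tip-epoch convergence group of $\mathcal{N}$ are supplied as input, each displayed $4$-taxon CDM is completely determined; restricting to $4$-taxon sets whose displayed CDM has no sister convergence, together with the assumptions of Section~\ref{ass} on $\mathcal{N}$ (no nesting, at most one convergence group per epoch, no consecutive convergence epochs, no outgroup convergence), forces every such displayed CDM to be one of CDMs~$1$--$5$ up to leaf labeling. By Assumption~\ref{genparam} the generating parameter is a generic interior point, so each of these $4$-taxon models is a regular exponential family, and by Propositions~\ref{distident},~\ref{conident}~and~\ref{rootident} the pairwise distances, the convergence parameters and the root parameter are identifiable functions of the distribution; standard asymptotics for maximum likelihood in regular exponential families (as used in \cite{haughton1988choice}) then give consistency of the corresponding estimates. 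The point requiring genuine care is the lemma that these identifiable quantities, evaluated on a displayed $4$-taxon CDM, equal the corresponding quantities on $\mathcal{N}$: the marginal distribution of $\mathcal{N}$ on a $4$-taxon subset equals the distribution of the displayed CDM, so the (identifiable) true distance, convergence parameter and root parameter agree; and the shortest path between two leaves on the displayed tree is the restriction of the shortest path on $\mathcal{T}$, so the distance computed in step~4 as a sum of parameter estimates along a shortest path estimates the sum of the lengths of the edges of $\mathcal{T}$ traversed between those leaves.

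For (ii)--(iv): $\widehat{\gamma}$, $\widehat{\boldsymbol{v}}$ and the entries of $\widehat{\boldsymbol{d}}$ are finite averages (over admissible $4$-taxon sets, respectively over those displaying a given convergence group as a non-sister pair, respectively over those displaying a given leaf pair) of the consistent estimates from (i); the hypothesis on each convergence group of $\mathcal{G}$ guarantees at least one admissible $4$-taxon set in each of the first two cases, so every average is non-empty and converges in probability to $\gamma$, to $\boldsymbol{v}$ and to the true distance vector $\boldsymbol{d}$ (after the deletion of unestimated rows in step~6). The true distances are additive along $\mathcal{T}$, i.e.\ $\boldsymbol{X}\boldsymbol{l}=\boldsymbol{d}$; since $\boldsymbol{X}$ has rank $2N-3$, $\boldsymbol{X}^T\boldsymbol{X}$ is invertible and $\boldsymbol{l}=\left(\boldsymbol{X}^T\boldsymbol{X}\right)^{-1}\boldsymbol{X}^T\boldsymbol{d}$, so
\begin{align*}
\widehat{\boldsymbol{l}}-\boldsymbol{l}=\left(\boldsymbol{X}^T\boldsymbol{X}\right)^{-1}\boldsymbol{X}^T\left(\widehat{\boldsymbol{d}}-\boldsymbol{d}\right)\xrightarrow{\ \mathbb{P}\ }\boldsymbol{0}
\end{align*}
by the continuous mapping theorem, the prefactor being a fixed matrix. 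Finally, genericity of the generating parameter makes all entries of $\boldsymbol{l}$ and $\boldsymbol{v}$ strictly positive and, along each edge of $\mathcal{T}$, the associated convergence parameters sum to strictly less than the edge length; hence with probability converging to $1$ the truncation in step~7 and the rescaling in step~10 leave the estimates unchanged, which yields the three convergence statements.

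I expect the main obstacle to be the lemma identifying the true distances, convergence parameters and root parameter on a displayed $4$-taxon CDM with those of $\mathcal{N}$, and in particular the additive identity $\boldsymbol{X}\boldsymbol{l}=\boldsymbol{d}$ across the passage from the $4$-taxon bookkeeping of parameters (where diverging sections can merge when taxa are deleted) to the edge-length decomposition of the $N$-taxon tree; once this, together with the fact that the relevant displayed CDMs are among CDMs~$1$--$5$, is established, the remainder is a routine combination of exponential-family MLE consistency, averaging, and continuity of ordinary least squares.
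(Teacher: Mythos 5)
Your proposal is correct and follows essentially the same route as the paper's proof: restrict to displayed $4$-taxon CDMs with no sister convergence (hence among CDMs $1$--$5$), invoke Propositions~\ref{distident}, \ref{conident} and \ref{rootident} for identifiability, use maximum likelihood consistency on each quartet, average, and pass through ordinary least squares using the rank-$\left(2N-3\right)$ assumption to make $\boldsymbol{X}^T\boldsymbol{X}$ invertible. You additionally spell out two points the paper leaves implicit --- that the marginal of $\mathcal{N}$ on a quartet coincides with the displayed CDM's distribution so the identified quantities agree, and that the negative-truncation and convergence-parameter rescaling steps are asymptotically inactive at a generic parameter --- which only strengthens the argument.
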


\begin{proof}

In step $4$ of Algorithm~\ref{algorithmmetCDM}, only $4$-taxon sets that include the outgroup for which $4$-taxon CDMs displayed on $\mathcal{N}$ have no sister convergence are considered. Thus, all such $4$-taxon CDMs displayed on $\mathcal{N}$ are CDM $1-5$. Since some $4$-taxon sets may not be considered, we cannot yet assume that all parameters are identifiable. However, for a given $4$-taxon set that is considered, from the proof of Proposition~\ref{distident}, all sums of edge lengths between leaf taxa in the $4$-taxon set are identifiable. From Propositions~\ref{conident}~and~\ref{rootident}, all convergence parameters on the $4$-taxon CDM displayed on $\mathcal{N}$ and the root parameter $\gamma$ are also identifiable.

Thus, for the given $4$-taxon set, the estimates of sums of edge lengths between taxa formed from the sums of maximum likelihood estimates of parameters converge in probability to the sums of edge lengths between taxa for $\mathcal{N}$. Likewise, the maximum likelihood estimates of the convergence parameters converge in probability to the convergence parameters on $\mathcal{N}$ and the maximum likelihood estimate of $\gamma$ also converges in probability to $\gamma$. Thus, it follows that when averaging over all $4$-taxon sets that are considered, the estimates of the sums of edge lengths between taxa converge in probability to the values for $\mathcal{N}$.

Now, since the matrix $\boldsymbol{X}$ has rank $2N-3$, $\boldsymbol{X}^T\boldsymbol{X}$ is invertible. It follows that $\boldsymbol{\widehat{l}}$ also converges in probability to $\boldsymbol{l}$ in step $7$ of Algorithm~\ref{algorithmmetCDM}. By assumption, for each convergence group of $\mathcal{G}$ there is at least one $4$-taxon CDM displayed on $\mathcal{N}$ where two converging taxa of the convergence group are non-sister taxa and there is no sister convergence. Thus, each convergence parameter of $\boldsymbol{\widehat{v}}$ is estimated at least once. Thus, $\boldsymbol{\widehat{v}}$ converges in probability to $\boldsymbol{v}$. Finally, since $\gamma$ is fixed across all $4$-taxon CDMs displayed on $\mathcal{N}$, to be consistently estimated it only needs to be estimated for one $4$-taxon CDM displayed on $\mathcal{N}$. Thus, $\widehat{\gamma}$ converges in probability to $\gamma$.

\end{proof}

\putbib
\end{bibunit}

\end{appendices}

\end{document}